\tikzstyle{arrow} = [thick,->,>=stealth]
\tikzstyle{startstop} = [rectangle, rounded corners, minimum width=3cm, minimum height=1cm,text centered, draw=black, fill=white!30]
\tikzstyle{process} = [rectangle, minimum width=3cm, minimum height=1cm, text centered, draw=black, fill=orange!30]
\tikzstyle{decision} = [diamond, minimum width=3cm, minimum height=1cm, text centered, draw=black, fill=green!30]
\newcommand*{\addFileDependency}[1]{
  \typeout{(#1)}
  \@addtofilelist{#1}
  \IfFileExists{#1}{}{\typeout{No file #1.}}
}
\newcommand*{\myexternaldocument}[1]{%
    \externaldocument{#1}%
    \addFileDependency{#1.tex}%
    \addFileDependency{#1.aux}%
}
\newcommand{\beq}{\begin{eqnarray}}
	\newcommand{\eeq}{\end{eqnarray}}
\newcommand{\la}{\langle}
\newcommand{\ra}{\rangle}
\newcommand{\tr}{{\rm tr}}
\newcommand{\bsp}{\begin{aligned}}
	\newcommand{\esp}{\end{aligned}}
\newcommand{\const}{{\rm const}}
\newcommand{\ie}{{i.e., }}
\newcommand{\eg}{{e.g., }}
\newcommand{\supp}{\mathrm{supp}}
\newcommand{\diff}{\mathrm{diff}}
\newcommand{\rH}{\mathrm{H}}
\newcommand{\R}{\mathbb{R}}
\newcommand{\Aut}{\mathrm{Aut}}
\newcommand{\GNS}{\mathrm{GNS}}
\newcommand{\dist}{\mathrm{dist}}
\newcommand{\fa}{\mathfrak{a}}
\newcommand{\ad}{\mathrm{ad}}
\newcommand{\cS}{\mathcal{S}}
\newcommand{\cK}{\mathcal{K}}
\newcommand{\im}{\mathrm{im}}
\definecolor{darkblue}{rgb}{0.,0.,0.4}
\definecolor{darkred}{rgb}{0.5,0.,0.}
\definecolor{BlueViolet}{RGB}{138,43,226}
\definecolor{SkyBlue}{RGB}{30,144,255}
\definecolor{DarkGreen}{RGB}{0,100,0}
\newcommand{\z}{\mathbb{Z}}
\newcommand{\T}{\mathcal{T}}
\newcommand{\A}{\mathscr{A}}
\newcommand{\B}{\mathcal{B}}
\newcommand{\G}{\mathcal{G}}
\newcommand{\cH}{\mathcal{H}}
\newcommand{\cU}{\mathcal{U}}
\newcommand{\Ad}{\mathrm{Ad}}
\newcommand{\cC}{\mathcal{C}}
\newcommand{\bbC}{\mathbb{C}}
\newcommand{\rspan}{\mathrm{span}}
\newcommand{\dd}{\mathrm{d}}
\newcommand{\rHom}{\mathrm{Hom}}
\newcommand{\QCA}{\mathrm{QCA}}
\newcommand{\ind}{\mathrm{ind}}
\newcommand{\rk}{\mathrm{rank}}
\newcommand{\SR}{\mathrm{SR}}
\newcommand{\tphi}{\tilde{\phi}}
\newcommand{\tH}{\tilde{H}_{t}'}
\newcommand{\fg}{\mathfrak{g}}
\newcommand{\id}{\mathrm{id}}
\newcommand{\cM}{\mathscr{M}}
\newcommand{\cF}{\mathscr{F}}
\DeclareMathOperator*\lowlim{\underline{lim}}
\newcommand{\diam}{\mathrm{diam}}
\def\U{\mathrm{U}(1)}
\newtheorem{corollary}{Corollary}
\newtheorem{theorem}{Theorem}
\newtheorem{lemma}{Lemma}
\newtheorem{definition}{Definition}
\newtheorem{example}{Example}
\newtheorem{proposition}{Proposition}
\newtheorem{remark}{Remark}
\numberwithin{equation}{section}
\numberwithin{corollary}{section}
\numberwithin{theorem}{section}
\numberwithin{lemma}{section}
\numberwithin{definition}{section}
\numberwithin{example}{section}
\numberwithin{proposition}{section}
\numberwithin{remark}{section}
\begin{document}

\title{Entanglement area law and Lieb-Schultz-Mattis theorem in long-range interacting systems, and symmetry-enforced long-range entanglement}

\author{Ruizhi Liu}
\affiliation{Department of Mathematics and Statistics, Dalhousie University, Halifax, Nova Scotia, Canada, B3H 4R2}
\affiliation{Perimeter Institute for Theoretical Physics, Waterloo, Ontario, Canada N2L 2Y5}

\author{Jinmin Yi}
\affiliation{Perimeter Institute for Theoretical Physics, Waterloo, Ontario, Canada N2L 2Y5}
\affiliation{Department of Physics and Astronomy, University of Waterloo, Waterloo, Ontario, Canada N2L 3G1}

\author{Shiyu Zhou}
\affiliation{Perimeter Institute for Theoretical Physics, Waterloo, Ontario, Canada N2L 2Y5}

\author{Liujun Zou}
\affiliation{Perimeter Institute for Theoretical Physics, Waterloo, Ontario, Canada N2L 2Y5}
\affiliation{Department of Physics, National University of Singapore, Singapore 117542}

\begin{abstract}

We establish multiple interrelated, fundamental results in quantum many-body systems that can have long-range interactions. For a sufficiently long quantum spin chain, we first show that if the multi-spin interactions in the Hamiltonian decay fast enough as their ranges increase and the Hamiltonian is gapped, then the ground states satisfy the entanglement area law, even if there is a ground state degeneracy due to a spontaneously broken discrete symmetry. This area law also holds for certain excited states. Second, if such a long-range interacting Hamiltonian has an anomalous symmetry, then the Lieb-Schultz-Mattis theorem applies, \ie the Hamiltonian cannot have a unique gapped symmetric ground state. If the Hamiltonian contains only 2-spin interactions, these results hold when the interactions decay faster than $1/r^2$, with $r$ the distance between the two interacting spins. Third, we show that pure states with an anomalous symmetry, which may not be a ground state of any natural Hamiltonian, must be long-range entangled. The symmetries we consider include on-site internal symmetries combined with lattice translation symmetries, and they can also extend to purely internal but non-on-site symmetries. Moreover, these internal symmetries can be discrete or continuous. We explore the applications of these results through various examples.

\end{abstract}

\maketitle
\tableofcontents\

\section{Introduction}

The entanglement structure is an important characterization of quantum matter \cite{Zeng2015}. In this regard, a fundamental result is the {\it entanglement area law}, which asserts that the entanglement entropy of a large subregion in the ground states of many physically natural Hamiltonians increases like the boundary area of this subregion \cite{Eisert2008}. The area law and its analogs have been verified in numerous concrete examples and rigorously proved under certain general conditions \cite{Hastings2007, Masanes2009, matsui2011boundedness, Arad2011, Brandao2015, Arad2013, Brandao2013, Cho2014, Brandao2014, Arad2016, Cho2017a, Anshu2019, Kuwahara2019, Ukai2024}. Besides being a common characteristic of many physically relevant states, the area law also provides the basis of some powerful numerical algorithms, such as density-matrix renormalization group \cite{Schollwock2010}.

The area law was initially shown to hold in the ground state of a gapped local one dimensional (1D) Hamiltonian that has a unique ground state \cite{Hastings2007}. Recently, it was extended to the ground state of certain gapped long-range interacting systems, assuming no ground state degeneracy \cite{Kuwahara2019}. However, the following important question has been left open in Ref. \cite{Kuwahara2019}:

\begin{itemize}

\item In the presence of ground state degeneracy, which is a common scenario whenever there is a spontaneously broken symmetry,  does the area law hold in such gapped long-range interacting systems?{\footnote{Ref. \cite{Ukai2024} proves an area law for gapped long-range interacting systems that may have multiple ground states. However, Ref. \cite{Ukai2024} studies systems with infinite size, whereas real physical systems are all finite. So additional analysis is needed to address this question for physically relevant systems.}}

\end{itemize}

While the area law captures the {\it amount} of entanglement in a quantum many-body state, another crucial aspect of quantum many-body entanglement is whether the entanglement is short-ranged or long-ranged. Roughly speaking, long-range (short-range) entangled states cannot (can) be prepared from a product state by a finite-time evolution generated by a Hamiltonian with good locality properties \cite{Zeng2015}. States satisfying the entanglement area law may be either long-range entangled or short-range entangled. There are many interesting long-range entangled states, such as cat states with spontaneous symmetry breaking, topologically ordered states with emergent anyons, critical states described by conformal field theories, etc, and some of them are potentially useful for quantum computation \cite{Kitaev1997, Nayak2007}.

Because of the interesting properties of the long-range entangled states, it is desirable to realize them in quantum materials and quantum simulators. So a natural question follows: What is the general condition under which long-range entangled states can emerge? This question is related to another fundamental result in quantum many-body physics, the Lieb-Schultz-Mattis (LSM) theorem, which states that a system cannot have a unique gapped ground state if its Hamiltonian satisfies certain symmetry conditions \cite{Lieb1961, Oshikawa1999, Hastings2003}.  Recently, the LSM constraints have been interpreted from various perspectives and generalized to different contexts~\cite{Cheng2015, Po2017, Jian2017,  Cho2017,Watanabe2018LSM,Metlitski2018,Cheng2018a, Kobayashi2018, Ogata2019LSM,Else2020, Jiang2019,Yao2021twisted, Ogata_2021, Aksoy2021, Ye2021a,Ma2022a, Cheng2022, Kawabata2023, Aksoy2023, Seifnashri2023, Zhou2023, kapustin2024anomalous,Garre_Rubio2024anomalous, Pace2024}. 
Furthermore, these constraints are identified as a key ingredient to study the classification of quantum phases of matter in a lattice system~\cite{Zou2021, Ye2021a, Ye2023, Liu2024}. All examples of ground states of a system satisfying the symmetry conditions of the LSM theorem turn out to be long-range entangled, which suggests that one should look for long-range entangled states in such systems. Therefore, a basic question is: 

\begin{itemize}
    
    \item Do the symmetry conditions in the LSM theorem guarantee all symmetric states to be long-range entangled?
    
\end{itemize}

This question is widely believed to have an affirmative answer, but it has only been addressed in Refs. \cite{Sanz2009, Chen2010a, Else_2014, Gioia2021} for some special cases, and the argument in Ref. \cite{Else_2014} is only heuristic. Here we aim for a general and rigorous proof.

Previous studies of LSM constraints often focus on systems with local interactions. However, many systems feature long-range interactions, which usually take the form of a 2-body interaction that decays as $1/r^\mathfrak{a}$, with $r$ the distance between the two interacting objects and $\mathfrak{a}$ an exponent. As examples, electronic systems have Coulomb interaction with $\mathfrak{a}=1$, Rydberg atoms have dipolar or van der Waals interactions with $\mathfrak{a}=3$ or $\mathfrak{a}=6$, and for trapped ions $\mathfrak{a}$ can be tuned between 0 and 3~\cite{Defenu2022}. So a pertinent question is: 

\begin{itemize}

\item Is the LSM theorem applicable to long-range interacting systems?

\end{itemize}

Motivated by these questions, in this paper, we show that the entanglement area law holds in certain gapped long-range interacting systems that may have ground state degeneracy, extending the results in Refs. \cite{Hastings2007, Kuwahara2019}. Interestingly, this result immediately becomes useful and enables us to prove the LSM theorem in long-range interacting systems with certain symmetry conditions. Moreover, we show that the symmetry conditions in our LSM theorem force all pure states to be long-range entangled, no matter whether the state is a ground state of any natural Hamiltonian, rigorously generalizing the results in Refs. \cite{Sanz2009, Chen2010a, Else_2014, Gioia2021} to a wider class of states and symmetry settings.

We remark that the Hamiltonians we consider can contain generic $k$-body interactions with any finite $k$. For 2-body interactions decaying as $1/r^\mathfrak{a}$, our area law and LSM theorem hold when $\mathfrak{a}>2$ (for $k$-body interactions with $k>2$, the condition under which our theorems hold is stated in Eq.~\eqref{eq:admissible}).  The type of symmetries under consideration is also very broad, including an on-site symmetry combined with the lattice translation symmetry, as featured in the original LSM theorem. Additionally, the symmetry can be purely internal but non-on-site. Furthermore, the internal symmetries can be either discrete or continuous. Our results have wide applicability, and we will discuss some examples below.

\section{Organization of this paper}

This paper contains a concise main text which summarizes our main results, and extensive appendices which present the technical details.

In Sec.~\ref{sec:locality}, we review the definition of quantum cellular automata (QCA), which describes the relevant symmetries in the present paper. The most important feature of these symmetries is that they transform local operators into local operators. We then define the anomaly index following Ref.~\cite{kapustin2024anomalous}.

In Sec.~\ref{Sec:main_area_law}, we introduce a particular class of long-range interacting Hamiltonians we wish to work with, called {\it admissible Hamiltonians} (see Eq.~\eqref{eq:admissible}). We also state our first main results (Theorem \ref{thm:main_area_law} and Proposition \ref{prop:degenerate_area_law_main}) on the entanglement area law of gapped ground states of admissible Hamiltonians.

In Sec.~\ref{Sec:main_LSM}, we come to the next main theorems of this paper (Theorem \ref{thm:main} and Corollary \ref{corollary: LSM finite}), \ie we apply the result of the entanglement area law to derive the LSM theorem of admissible Hamiltonians with anomalous symmetries. 

Sec.~\ref{sec: symmetry-enforced long-range entanglement} is devoted to discussing the entanglement properties of symmetric states under anomalous symmetries, and we show that all pure states in sufficiently long quantum spin chains with an anomalous symmetry must be long-range entangled (Theorem \ref{thm:finite_SRE_main}).

We then present several examples and applications of our results  in Sec.~\ref{Sec:main_examples}, which are potentially relevant to the recent progress in long-range interacting Rydberg systems.

In the last Sec.~\ref{Sec:main_discussion}, we discuss several directions to generalize the current work.

The organization of the appendices is given at their beginning.

\section{Locality, symmetries and anomalies in quantum spin chains}\label{sec:locality}

Many important concepts in quantum many-body physics, such as locality, long-range entanglement and quantum phases of matter, are most cleanly defined in systems with an infinite size. In the next few sections, we always first present results about infinite systems, which can be viewed as the thermodynamic limits where a sequence of finite systems converge to. However, unlike many previous works that exclusively focus on infinite systems \cite{bratteli2013operator1, bratteli2013operator2, Landsman:2017hpa, Ogata2021}, we note that all real physical lattice systems are of finite sizes, no matter how large they are. Therefore, from the results on infinite systems, we take another step to extract their implications on finite systems. This step is not only physically relevant, but also, as we see, sometimes very interesting and nontrivial.

In the rest of this section, we introduce various notations relevant to infinite-size spin chains, which are needed to formulate our results. The most important concept here is the anomaly index defined for a symmetry action described by a quantum cellular automaton (QCA). A more detailed review of these notions is in Appendix \ref{sec: formalism review}.

One of our goals is to consider quantum spin chains with a very general class of symmetries. Nevertheless, we assume that the symmetry operations are unitary and transform a local operator to another local operator in the nearby region. QCA is exactly this type of operations by definition. More precisely, we denote the action of a QCA by $\alpha$ (\ie this QCA transforms a local operator $A$ to $\alpha(A)$), then it satisfies

\begin{enumerate}

    \item $\alpha(AB)=\alpha(A)\alpha(B)$ and $\alpha(A^\dag)=\alpha(A)^\dag$ for any local operators $A$ and $B$.
    
    \item $\alpha$ is invertible.

    \item For any local operator $A$, $\alpha(A)$ is again a local operator, and the distance between the supports of $A$ and $\alpha(A)$ cannot exceed a certain value that is independent of $A$.
    
\end{enumerate}

It is known that QCA form a group, denoted by $\G^{\QCA}$. The structure of $\G^{\QCA}$ is well-understood in 1D \cite{Gross_2012}. In essence, 1D QCA are combinations of finite-depth quantum circuits and translations (see Refs.~\cite{arrighi2019overview,Farrelly_2020} for review). Given a symmetry group $G$, by slightly abusing the notations{\footnote{Previously the notation $\alpha$ is used to represent an operation acting on operators, but here we use it to represent a map from the symmetry group $G$ to all possible QCA opetations $\G^{\QCA}$.}}, the symmetry action can be represented by a group homomorphism $\alpha: G\to \G^{\QCA}$. This symmetry may contain internal and/or translation symmetry, and the internal symmetry, which acts as a finite-depth quantum circuit, may be discrete or continuous, on-site or non-on-site. This general type of symmetry actions covers many physically relevant cases.

\begin{figure}[!t]
    \centering  \includegraphics[width=0.47\textwidth]{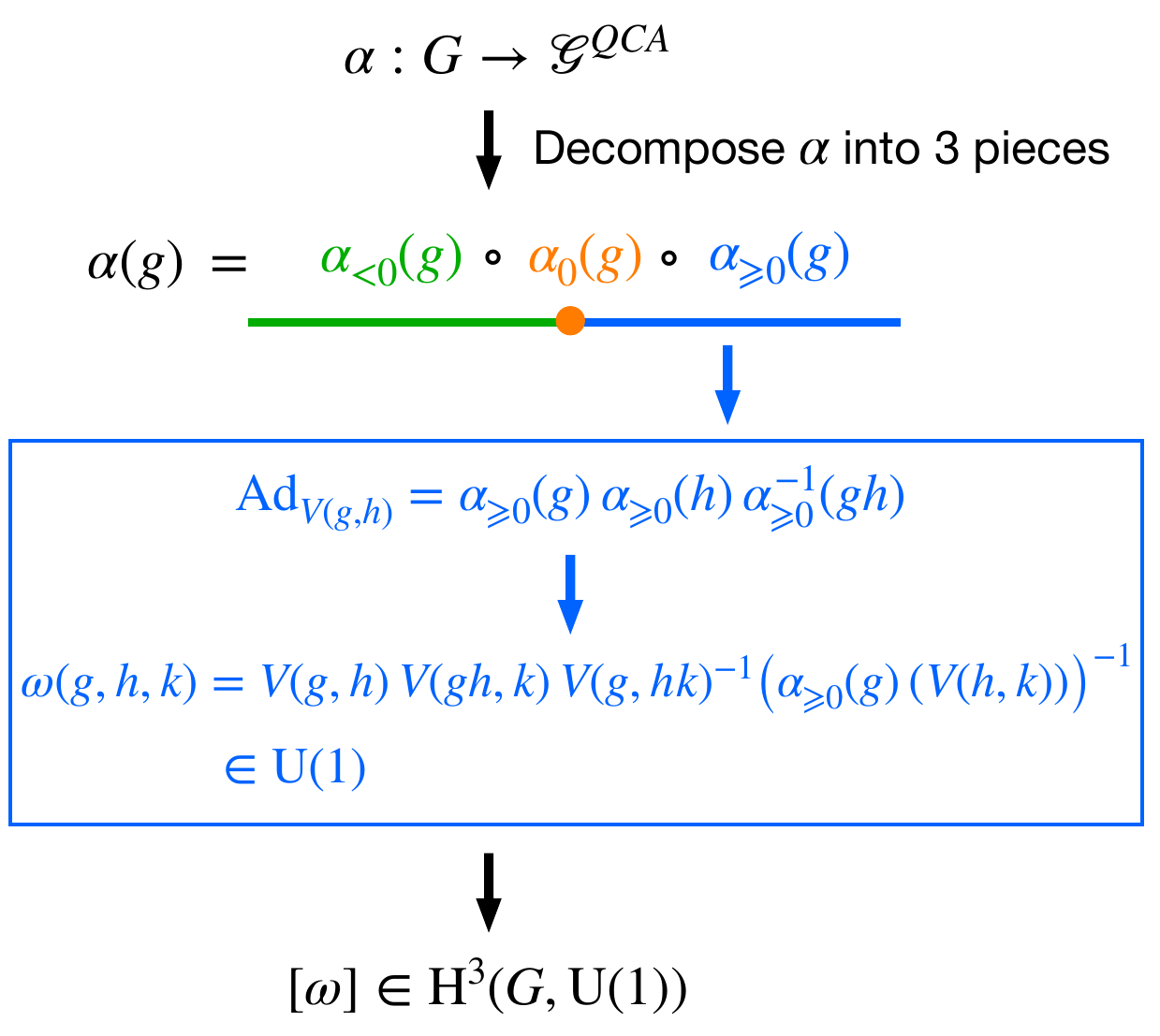}
    \caption{An illustration on how to obtain anomaly index $\omega\in\rH^{3}(G;\U)$ from the symmetry action $\alpha$.}
    \label{fig:anomaly_indx}
\end{figure}

Given such a symmetry action $\alpha: G\to\G^{\QCA}$, an important concept is the anomaly index, which takes values in $\rH^3(G, \U)$~\cite{kapustin2024anomalous} (see Appendix \ref{sec:group_cohomology} for a review of group cohomology). The construction of this anomaly index is similar to the previous work~\cite{Else_2014}, and the innovation of this new anomaly index is that it applies to translation symmetries and continuous internal symmetries. Below we sketch the definition of the anomaly index, and more details can be found in Ref.~\cite{kapustin2024anomalous} and Appendix \ref{sec:anomaly_index}.

First, suppose $\alpha$ is an internal symmetry action (\ie it is a finite-depth quantum circuit and contains no translation). For an arbitrary site, say, the origin, it can be shown that $\alpha$ can be decomposed as
\beq \label{eq: decomposition}
    \alpha=\alpha_{<0} \, \alpha_0 \, \alpha_{\geqslant 0}
    \; ,
\eeq
where $\alpha_{\geqslant 0}$ (resp. $\alpha_{< 0}$) is a finite-depth quantum circuit supported on $[0, \infty)$ (resp. $(-\infty, 0)$), and $\alpha_0$ is a finite-depth quantum circuit with a finite support near the origin (see Fig.~\ref{fig:anomaly_indx}). Although $\alpha$ is a group homomorphism, in general $\alpha_{\geqslant 0}$ is not. In fact, for any $g, h\in G$,

\beq
\alpha_{\geqslant 0}(g) \, \alpha_{\geqslant 0}(h)=\Ad_{V(g,h)} \, \alpha_{\geqslant 0}(gh)
\; ,
\eeq
where $V:G\times G\to \cU^{\ell}$ with $\cU^\ell$ the group of local unitaries is not necessarily a homomorphism, and $\Ad_V(A):=VAV^\dag$ for any local operator $A$. The associativity of $\alpha_{\geqslant 0}$, \ie $\left( \alpha_{\geqslant 0}(g) \, \alpha_{\geqslant 0}(h) \right) \, \alpha_{\geqslant 0}(k) = \alpha_{\geqslant 0}(g) \, \left( \alpha_{\geqslant 0}(h) \, \alpha_{\geqslant 0}(k) \right)$ with $g, h, k\in G$, puts further constraints on $V$: $\Ad_{\omega(g,h,k)}=1$, where 
\begin{widetext}
\beq
    \omega(g,h,k)
    =V(g,h)V(gh,k)V(g,hk)^{-1}(\alpha_{\geqslant 0}(g)(V(h,k)))^{-1}
    \; .
\eeq
\end{widetext}
This means the above $\omega$ is actually a phase since it commutes with all local operators. It can be checked that $\omega$ satisfies the 3-cocycle condition, and multiplying $V(g,h)$ by a phase $\rho(g,h)\in \U$ shifts $\omega$ by a 3-coboundary. Therefore, $\omega$ specifies an element in $\rH^3(G, \U)$, and this element is defined as the anomaly index associated with the symmetry action $\alpha$.

If $\alpha$ contains translation, because the information moves unidirectionally in this case, which cannot be achieved by finite-depth quantum circuits, the decomposition as in Eq. \eqref{eq: decomposition} is not possible. However, one can stack the system with another copy on which the translation acts oppositely. The symmetry action on this composite system (denoted by $\alpha_{\otimes}$) contains no translation, and the anomaly index of $\alpha$ is defined to be the anomaly index of $\alpha_{\otimes}$.

In Appendix \ref{sec:anomaly_index}, we prove that this anomaly index is independent of the choice of the site to decompose $\alpha$ in Eq. \eqref{eq: decomposition}, which was not explicitly proved in Refs.~\cite{Else_2014,kapustin2024anomalous}. 

With the above definition of anomaly index, we say that the $G$-symmetry is anomalous if $\omega\not =1\in \mathrm{H}^{3}(G;\U)$. Otherwise, we say it is anomaly-free or non-anomalous.

To connect the above discussion with the more familiar notions, let us discuss an example. Consider a quantum spin chain with a symmetry $G=\z\times G_{\text{int}}$, where $\z$ represents translation and  $G_{\text{int}}$ is an internal symmetry (taken as either a discrete group or a finite dimensional Lie group). Then $\rH^3(G, \U)\simeq \rH^2(G_{\text{int}}, \U)\oplus \rH^3(G_{\text{int}}, \U)$ \cite{Cheng2015}. The part $\rH^2(G_{\text{int}}, \U)$ means if the degrees of freedom in a unit cell form a projective representation under $G_{\text{int}}$, which is precisely the condition of the original LSM theorem, then the $G$-symmetry is anomalous. The part $\rH^3(G_{\text{int}}, \U)$ means that even for a purely internal symmetry $G_{\text{int}}$, the $G$-symmetry can be anomalous if its anomaly index corresponds to a nontrivial element in $\rH^3(G_{\text{int}}, \U)$. We will present an example of such internal symmetries in Sec.~\ref{Sec:main_examples}.

Before ending this section, we remark that although we focus on symmetries implemented by QCA in the main text, as QCA preserve locality in the most strict sense, in the appendices our considerations are extended to locality-preserving automorphisms, an even more general class of symmetry actions that preserve locality only in an approximate form (\ie a local operator can acquire a tail after it is acted by a locality-preserving automorphism), and our main theorems still hold.

\section{Admissible Hamiltonians and entanglement area law}\label{Sec:main_area_law}

Now we proceed to our main theorems, which accommodate long-range and many-body interactions. Denote the lattice where our quantum spin chain lives by $\Lambda$ ($\Lambda=\z$ for an infinite chain). Consider a 1D Hamiltonian with at most $k$-body interactions, $H=\sum_{|Z|\leqslant k}h_{Z}$, that satisfies
\beq\label{eq:admissible}
\begin{split}
    \max_{i\in\Lambda}  \left( \sum_{Z:Z\owns i,~\diam(Z)=r} ||h_{Z}|| \right) &< \frac{J}{r^{\fa}},\quad \mathfrak{a}>2,\,\forall\, r>0
    \; ,
    \\ 
    \text{and} ~ \max_{i\in\Lambda} \| h_{i} \| &< B
    \; ,
\end{split}
\eeq
where $h_Z$ is an interaction supported on region $Z\subset\Lambda$, $\diam(Z)=\sup_{x,y\in Z}|x-y|$, $h_{i}$ is an on-site potential at site $i$, and $J,B>0$ are constants. Such a Hamiltonian $H$ is deemed as {\it{admissible}} \cite{Kuwahara2019}. Specifically, if the Hamiltonian includes at most 2-body long-range interactions, Eq.~\eqref{eq:admissible} indicates that the interactions decay faster than $r^{-2}$, with $r$ the distance between the two interacting spins. In general, Eq.~\eqref{eq:admissible} ensures that for any disjoint intervals $X$ and $Y$ separated by $d$ (see Fig.~\ref{fig:addmissable}), their interaction
$V_{X,Y}=\sum_{Z:Z\cap X\neq\emptyset, Z\cap Y\neq\emptyset} h_{Z}$
goes to 0 as $d\to \infty$. This property is crucial in the proof of the area law.

\begin{figure}[!t]
    \centering
    \includegraphics[width=0.4\textwidth]{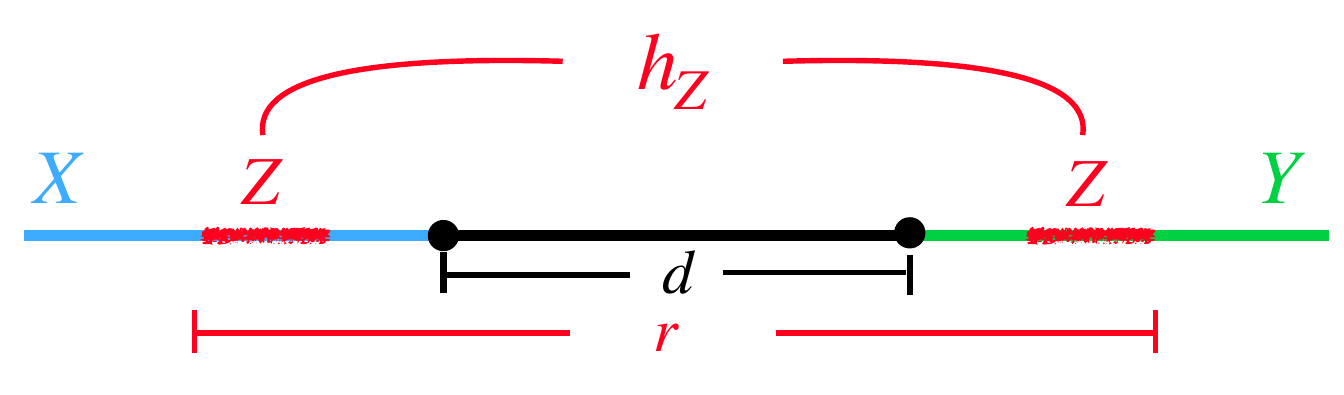}
    \caption{The interaction of two disjoint intervals $X,Y$ seprated by distance $d$. The range of interaction is denoted by $Z$ with $\diam(Z)=r$.}
    \label{fig:addmissable}
\end{figure}

It is shown in Ref.~\cite{Kuwahara2019} that for a \textit{finite-size} system with a gapped admissible Hamiltonian and a unique ground state, the ground state satisfies the entanglement area law, Eq. \eqref{eq: entanglement area law}. This result was extended to infinite-size systems that may have multiple ground states in Ref. \cite{Ukai2024}. Using an approach different from Ref. \cite{Ukai2024}, we also prove the following infinite-size version of area law (see Appendix \ref{sec:area_law} for the detailed proof and comparisons between our approach and the ones in Refs. \cite{Kuwahara2019} and \cite{Ukai2024}).
\begin{theorem}[\textbf{Theorem \ref{thm:area_law} in Appendix \ref{subapp: proving area law}}]\label{thm:main_area_law}
    Consider a locally unique gapped ground state of an infinite quantum spin chain with an admissible Hamiltonian $H$ (see Eq.~\eqref{eq:admissible}). The entanglement entropy associated with the interval $[0, n]$
\beq\label{eq: entanglement area law}
S([0,n])<S_{0}
\eeq
where $S_{0}$ is a constant which depends on $\{\fa,J,B,k,d,\Delta\}$ only, but does not depend on $n$. Here $\Delta$ is the energy gap, $d$ is the dimension of the local Hilbert space and we have at most $k$-body interactions in $H$.

\end{theorem}

Here a locally unique gapped ground state is a terminology coined in Ref. \cite{Tasaki2022topological} and it is defined in infinite-size systems, which essentially means a gapped ground state of the Hamiltonian within a superselection sector. Namely, different locally unique gapped ground states cannot be converted into each other by any local operator. Notice a locally unique gapped ground state may not be a ground state of the Hamiltonian in the standard sense (see Appendix \ref{sec: formalism review} for more details of this notion). For example, for the Ising model in a finite lattice with $H=-\sum_{i}Z_iZ_{i+1}$, there are two degenerate gapped ground states, \ie the all-up state and the all-down state. Both of these states are locally unique gapped ground states after taking the thermodynamic limit, because they cannot be coupled by any local operator and they belong to two superselection sectors. When this Hamiltonian is perturbed and becomes $H=-\sum_{i}Z_{i}Z_{i+1}-hZ_{i_0}$ with $0<h<1$ and $i_0$ an arbitrary site, although the all-up state becomes the unique gapped ground state and the all-down state is an excited state, both of them are still locally unique gapped ground states in the thermodynamic limit, as they are still gapped ground states in their respective superselection sectors.

As the infinite spin chain can be viewed as the thermodynamic limit of a sequence of finite spin chains with increasing sizes, Theorem \ref{thm:main_area_law} has important implications on finite-size systems whose (almost degenerate) ground states satisfy the superselection rules. In this case, we have
\begin{proposition}[Informal]\label{prop:degenerate_area_law_main}
    Let $\Gamma\subseteq\Lambda$ be a fixed finite interval in integers, then when the system is sufficiently large and the almost degenerate ground states satisfy a superselection rule, all states in the almost degenerate ground state subspace satisfy the entanglement area law, \ie the von Neumann entropy associated with the region $\Gamma$ is upper bounded when the size of $\Gamma$ increases.
\end{proposition}
The precise version of this statement is in Proposition~\ref{prop:degenerate_area_law}. 

One important application of this result is to systems with a spontaneously broken discrete symmetry, where the almost degenerate ground states can indeed be shown to satisfy the superselection rule, given the presence of the local order parameters associated with the broken symmetry. Moreover, when such a system is perturbed by adding a local order parameter operator with a small coefficient to the Hamiltonian, some of the almost degenerate ground states will become excited states. Since in the thermodynamic limit these excited states are still locally unique gapped ground states (similar to the above example of Ising model), they also satisfy the entanglement area law (see Lemma \ref{lemma:purity} for the details).

Therefore, our Theorem \ref{thm:main_area_law} and Proposition \ref{prop:degenerate_area_law_main} greatly extend the entanglement area law in Refs. \cite{Hastings2007, Kuwahara2019} to ground states of gapped admissible Hamiltonians that may have ground state degeneracy, and also to certain excited states. Moreover, besides being interesting on its own, we will see that Theorem \ref{thm:main_area_law} is very useful for the proof of the LSM theorem in the next section.

\section{LSM theorem in long-range interacting systems}\label{Sec:main_LSM}

To derive our Theorem \ref{thm:main}, which generalizes the standard LSM theorem to long-range interacting systems with a general anomalous symmetry described by QCA, we present a powerful lemma that connects the anomaly of a symmetry with the correlation and entanglement properties of a state.

\begin{lemma}\label{lemma:generalized_KS}
    Let $\alpha:G\to\G^{\QCA}$ be a symmetry action. If there exists a $G$-symmetric state satisfying the clustering property\footnote{The clustering property means that all connected 2-point correlation functions of local operators decay to 0 as the distance between the two operators increases to infinity. No assumption on how fast the correlation functions decay is made here.} and entanglement area law, then $\alpha$ has a vanishing anomaly index $\omega=1$.
\end{lemma}

This lemma is a simplified version of Lemma \ref{lemma:typeI_KS} in Appendix \ref{subapp: consequence of anomaly}, a generalization of the main theorem of Ref.~\cite{kapustin2024anomalous} (see Remark 4.1 therein). Its proof is nontrivial and relies on advanced mathematical tools such as von Neumann algebras. The advantage of Lemma \ref{lemma:generalized_KS} is that its statement only involves more familiar notions like the clustering property and entanglement area law, while the version in Ref.~\cite{kapustin2024anomalous} and Lemma \ref{lemma:typeI_KS} involve some relatively less familiar notions in the theory of operator algebras.

Our Theorem \ref{thm:main_area_law} shows that the locally unique gapped ground states of an admissible Hamiltonian satisfies the entanglement area law. On the other hand, these states are known to satisfy the clustering property (see Theorem 2.8 and Assumption 2.2 of Ref.~\cite{Hastings2006decay}).
Combining these results with Lemma \ref{lemma:generalized_KS}, we deduce

\begin{theorem}\label{thm:main}
If $\alpha:G\to \G^{\QCA}$ is a symmetry action on an infinite quantum spin chain with an anomaly index $\omega\not=1$, then a $G$-symmetric admissible Hamiltonian $H$ cannot have a $G$-symmetric locally unique gapped ground state.
\end{theorem}

To prove this theorem, we employ an argument by contradiction (see also Appendix \ref{sec: theorem 1} for another proof). Suppose there is a locally unique gapped ground state of a $G$-symmetric admissible Hamiltonian with $G$ an anomalous symmetry. As discussed above, this state satisfies the clustering property and entanglement area law. Also, it is $G$-symmetric by assumption. However, Lemma~\ref{lemma:generalized_KS} shows that the anomaly index $\omega=1$, which contradicts our initial assumption.

Theorem~\ref{thm:main} concerns infinite systems, but real systems are of finite size. To extract useful implications on finite systems, we utilize another theorem.

\begin{theorem}[\textbf{Theorem \ref{thm:limit_ad_gs} in Appendix \ref{sec: thermodynamic limit}}]\label{theorem: finite and infinite}
Suppose a sequence of $G$-symmetric admissible Hamiltonians, $\{H_L\}$, converges to an admissible Hamiltonian $H$ as the system size $L\rightarrow\infty$. If each $H_{L}$ has a unique $G$-symmetric gapped ground state, then this sequence of ground states converges to a $G$-symmetric locally-unique gapped ground state of $H$ as $L\rightarrow\infty$. 
\end{theorem}

Combining Theorems~\ref{thm:main} and \ref{theorem: finite and infinite}, we deduce the following corollary.

\begin{corollary} \label{corollary: LSM finite}
    If a sequence of large but finite systems described by admissible Hamiltonians with an anomalous symmetry have a well-defined thermodynamic limit, then they cannot have a unique $G$-symmetric gapped ground state.
\end{corollary}

Theorem \ref{thm:main} and Corollary \ref{corollary: LSM finite} generalize the standard LSM theorem in multiple aspects. First, they apply to systems with long-range interactions. Second, these results apply to a wide range of symmetry settings, \ie general anomalous symmetries described by QCA. In fact, the proofs of these results in the appendices apply to an even more general class of symmetry settings, known as locality-preserving automorphisms, which include QCA as special examples.

\section{Symmetry-enforced long-range entanglement} \label{sec: symmetry-enforced long-range entanglement}

Next, we move to symmetry-enforced long-range entanglement, \ie any state invariant under an anomalous symmetry must be long-range entangled.

To sharpen this result, let us first make the notions of short-range entangled (SRE) and long-range entangled (LRE) states more precise. In an infinite-size system, a state is considered as an SRE state if it can be disentangled (\ie converted to a product state) by a finite-time evolution generated by an almost local Hamiltonian, otherwise it is an LRE state. Here an almost local Hamiltonian is a Hamiltonian that can contain non-local interactions, but as the spatial sizes of these interactions increase their magnitudes decay faster than any polynomial function. We will elaborate on the motivation to invoke almost local Hamiltonians (as opposed to local Hamiltonians) in defining SRE and LRE states at the end of this section. In any case, according to this definition an LRE state naturally has a higher quantum complexity than an SRE state.

Borrowing some results from Ref. \cite{kapustin2024anomalous}, in Appendix \ref{subsec: thermodynamic limit of SRE} we show

\begin{theorem}[Corollary \ref{coro:SRE} in Appendix \ref{subsec: thermodynamic limit of SRE}]\label{thm:infinite_SRE_main}
    Given an anomalous symmetry $\alpha:G\to\G^{\QCA}$, no short-range entangled state can be invariant under this symmetry.
\end{theorem}

This result is the infinite-size version of symmetry-enforced long-range entanglement. Although it is not explicitly stated in Ref. \cite{kapustin2024anomalous}, we believe it is known to the authors of this reference.

Below we present the finite-size version of symmetry-enforced long-range entanglement, which is much more nontrivial and stronger than Theorem \ref{thm:infinite_SRE_main}. In this context, we say that a sequence of finite but large systems is SRE if each of them can be disentangled by a time evolution generated by an almost local Hamiltonian over a duration that does not diverge as the system size goes to infinity, otherwise it is LRE. Then we have

\begin{theorem}[Theorem \ref{thm:finite_SRE} in Appendix \ref{subsec: thermodynamic limit of SRE}]\label{thm:finite_SRE_main}
    Let $\alpha_L$ be an operation on a system with size $L$, and suppose $\alpha_L$ converges to an anomalous symmetry $\alpha: G\to\G^{\QCA}$ as $L\rightarrow\infty$. Let $\{|\psi_L\ra\}$ be a sequence of SRE states defined on a system of size $L$. Then $|\psi_{L}\ra$ is symmetric under $\alpha_L$ for at most finitely many $L$.
\end{theorem}
To prove this theorem (see Appendix \ref{subsec: thermodynamic limit of SRE} for details), we first show that such a sequence of states must have a subsequence that has a well-defined thermodynamic limit, and the thermodynamic limit must satisfy the clustering property and entanglement area law. Hence a symmetric SRE can be ruled out by the powerful Lemma \ref{lemma:generalized_KS}.

Therefore, we see that an anomalous symmetry forces the system to be long-range entangled{\footnote{Our analysis actually implies a slightly stronger result, namely, not only SRE states are incompatible with an anomalous symmetry, all invertible states are also incompatible with such a symmetry. Here a sequence of states is invertible if it can become a sequence of SRE states after being stacked with another sequence of states.}}, and a natural platform to find LRE states is systems with anomalous symmetries.

Before ending this section, we elaborate on why our Theorem \ref{thm:finite_SRE_main} is a rather strong result. First, we remark that we do not need to assume that the sequence of SRE states $\{|\psi_L\ra\}$ converges to a well-defined thermodynamic limit to prove Theorem \ref{thm:finite_SRE_main}, because it always has a convergent subsequence. Second, for the convergent subsequence, we do not need to assume that its thermodynamic limit is an SRE state defined above for an infinite system. More precisely, although this sequence of states converges, we do not have to assume that their disentangling unitaries converge to an infinite-size version of finite-time evolution generated by an almost local Hamiltonian{\footnote{In fact, given a convergent sequence of finite-size SRE states, we are unable to prove or disprove the convergence of their disentangling unitaries.}}. Third, there are multiple commonly used notions of SRE and LRE states that differ by their requirements of the disentangling unitaries. Besides our definitions, one may define an SRE state to be one that can be disentangled by a finite-depth quantum circuit or by a finite-time evolution generated by a strictly local Hamiltonian. Since these two types of disentangling unitaries are special examples of the ones in our definition, we get the strongest form of symmetry-enforced long-range entanglement, which implies that an anomalous symmetry is incompatible with SRE states based on the two other definitions.

In addition to obtaining the strongest version of symmetry-enforced long-range entanglement, there are further motivations to invoke disentangling unitaries generated by almost local Hamiltonians in defining SRE states, compared to the two other definitions. First of all, although finite-depth quantum circuits are natural from the perspective of digital quantum simulation, states that can be disentangled by these circuits have strictly zero two-point correlation functions when the two operators are far enough, which are fine tuned from a condensed matter viewpoint. But why do we consider disentangling unitaries generated by almost local Hamiltonians, rather than strictly local Hamiltonians? The first reason is that effective Hamiltonians in nature often do have non-local tails and are not strictly local, and the second reason is that it is desirable that LRE states are in different quantum phases from product states, and LRE states defined using almost local Hamiltonians indeed satisfy this condition \cite{Hastings2005, Bachmann2012automorphic}. However, for states that cannot be disentangled by unitaries generated by local Hamiltonians, currently it is unknown whether they must be in a quantum phase different from a product state. Therefore, from the standpoint of quantum matter, it is more natural to define SRE and LRE states in terms of disentangling unitaries generated by almost local Hamiltonians.

\section{Examples and applications}\label{Sec:main_examples}

After presenting our main theorems, below we apply these results to various examples.

Our first example, which demonstrates the entanglement area law, is the following Ising-like model:
\beq\label{eq:LRTFI}
H=\sin\theta\sum_{i>j}Z_{i}Z_{j}\frac{1}{|i-j|^{\fa}}+\cos\theta\sum_{i}X_{i}
\eeq
where $\theta$ labels the coupling constant. This model is symmetric under the usual non-anomalous $\z_{2}$ symmetry that takes $Z_{i}\to-Z_{i},X_{i}\to X_{i}$.

This model was numerically studied in Ref.~\cite{Koffel_2012}, where it is found that when $0<\theta\ll1$, there is a unique gapped ground state, while the connected 2-point function $\la Z_{i} Z_{i+l}\ra_{c}$ decays as a power law of $l$. In the other extreme with $\theta\simeq\frac{\pi}{2}$, the system is in the Néel phase with 2 almost degenerate gapped ground states which spontaneously break the $\z_{2}$ symmetry. Ref. \cite{Koffel_2012} suggests violations of the entanglement area law for $0<\fa\leqslant 3$, although it is also commented that these results may not be conclusive due to the small system sizes in the numerical simulations. Our Proposition~\ref{prop:degenerate_area_law_main} (or more precisely, Proposition~\ref{prop:degenerate_area_law}) implies that any ground state in these two gapped phases satisfies the entanglement area law if $\fa>2$, so this model should be revisited numerically.

The second example is the spin-$1/2$ XXZ chain with long-range interactions~\cite{Geier_2021, Scholl_2022}. The Hamiltonian is 
\beq
    H = \sum_{i>j} \frac{1}{|i-j|^{\mathfrak{a}}}
        \left( J^{z}_{ij} \, S^{z}_{i} S^{z}_{j} - S^{x}_{i} S^{x}_{j} - S^{y}_{i} S^{y}_{j} 
        \right) 
    \; ,
    \label{eq:ham_2}
\eeq
where $J^z_{ij}\in[-J, J]$ can be positive or negative with $J>0$ a constant. 
Notice that $H$ satisfies the admissible condition in Eq.~\eqref{eq:admissible} when $\mathfrak{a} > 2$. This model has a $O(2)\times\z$ symmetry, where $\z$ is lattice translation and $O(2)$ includes any spin rotation around the $z$-axis and $\pi$-rotation around the $x$-axis. We have $\rH^{3}(O(2)\times\z;\U)\simeq \z_{2}$, which measures the on-site spin quantum number $S$. If $S\in\z+\frac{1}{2}$, the anomaly index is nontrivial.

From Ref.~\cite{Maghrebi_2017}, for $\mathfrak{a} > 2$ the phase diagram of Eq.~\eqref{eq:ham_2} contains a ferromagnetic phase, an anti-ferromagnetic phase and a continuous symmetry breaking phase, which all spontaneously break some symmetries, and an XY phase, which is symmetric but gapless. Indeed, none of these phases has a unique symmetric gapped ground state, agreeing with our Theorem \ref{thm:main} and Corollary \ref{corollary: LSM finite}. Moreover, the only symmetric phase (XY) is a gapless conformal field theory with long-range entanglement, agreeing with our Theorems \ref{thm:infinite_SRE_main} and \ref{thm:finite_SRE_main}.

In our last example, the only relevant symmetry is an anomalous $\z_{2}$ internal symmetry. Concretely, we place a qubit at each lattice site, and this $\z_2$ symmetry acts as \cite{Levin_2012}
\beq
\begin{split}
    \alpha(Z_{i})&=-Z_{i}
    \; , \\
    \alpha(X_{i})&=-Z_{i-1}X_{i}Z_{i+1} 
    \; ,
\end{split}
\eeq
where $X_{j},Z_{j}$ are usual Pauli matrices, and we also denote $Y_{j}=iX_{j}Z_{j}$.
Formally, this symmetry is generated by conjugation with the following infinite product
\beq
    \prod_{j\in\z}e^{\frac{i\pi}{4}Z_{j}Z_{j+1}}\prod_{k\in\z}X_{k}
    \; .
\eeq
This choice of symmetry action corresponds the nontrivial anomaly class in $\mathrm{H}^{3}(\z_{2};\U)$~\cite{kapustin2024anomalous}. We consider the following Hamiltonian, which is invariant under this $\z_2$ symmetry and may be realizable in experimental setups similar to those in Refs.~\cite{Geier_2021, Scholl_2022}:
\begin{widetext}
\beq
    H=-\sum_{i,j}J_{ij}Z_{i}Z_{j}-\sum_{i}g_{i}(X_{i}+Z_{i-1}X_{i}Z_{i+1})-\sum_{j}h_{j}Y_{j}(1-Z_{j}Z_{j+1})
    \; ,
\eeq
\end{widetext}
where $J_{ij}=O(|i-j|^{-\mathfrak{a}})$ when $|i-j|\to \infty$ with $\mathfrak{a}>2$. For example, $J_{ij}$ can be chosen as 
\beq\label{eq:coupling}
    J_{ij}=\frac{C_{ij}}{|i-j|^{\mathfrak{a}}}
    \; ,
\eeq
where $\mathfrak{a}>2$, and $C_{ij}$ depend on $i, j$ and are bounded by constant $D$ for all $i,j$. This Hamiltonian can break the translation symmetry explicitly.

In the special case where $h_{i}=0$, $g_{i}=0$  and $C_{ij}=1$, this system is the classical long-range Ising model. There are two gapped ground states (\ie the all-up state and the all-down state) and the anomalous $\z_{2}$ symmetry is spontaneously broken, agreeing with Theorem \ref{thm:main} and Corollary \ref{corollary: LSM finite}. Also, these ground states indeed satisfies the entanglement area law, consistent with Theorem \ref{thm:main_area_law} and Proposition \ref{prop:degenerate_area_law_main}. Taking the two ground states to be the linear superpositions of the all-up and all-down states that are symmetric under the anomalous $\z_2$ symmetry, we see that they do not satisfy the clustering property and are indeed long-range entangled, compatible with Theorem \ref{thm:finite_SRE_main}. In the regime where $g>0$ and $J_{ij}=h_i=0$, this model realizes a gapless symmetric long-range entangled Luttinger liquid~\cite{Levin_2012}, agreeing with Theorem \ref{thm:main}, Corollary \ref{corollary: LSM finite}, Theorem \ref{thm:infinite_SRE_main} and Theorem \ref{thm:finite_SRE_main}. For more general couplings such as Eq.~\eqref{eq:coupling}, other phases are also possible, which are left to future works.

\section{Discussions}\label{Sec:main_discussion}

In this work, we have proved three interrelated fundamental results in quantum many-body physics. Concretely, we first show that the entanglement area law holds for the ground states of a quantum spin chain described by certain gapped long-range interacting Hamiltonian, even if there is a ground state degeneracy associated with a spontaneously broken discrete symmetry. This result is then used to prove the Lieb-Schultz-Mattis theorem in long-range interacting systems with an anomalous symmetry. Moreover, we show that short-range entangled states are not compatible with an anomalous symmetry. 

We highlight the generality of our results:

\begin{itemize}

    \item The long-range interacting Hamiltonians under consideration can have multi-spin interactions, and they are only required to satisfy Eq. \eqref{eq:admissible}.

    \item The symmetries under consideration are general quantum cellular automata. They include internal symmetry and translation symmetry, and the internal symmetry can be discrete or continuous, on-site or non-on-site. In the appendices, all our main results are shown to hold for an even wider class of symmetry actions, known as locality preserving automorphism.

    \item We present results on both infinite systems and sequences of finite systems with increasing sizes. The former may be viewed as the thermodynamic limits of the latter, while all lattice systems in experimental and numerical studies belong to the latter.

    \item Short-range and long-range entanglement here are defined in terms of finite-time evolutions generated by almost local Hamiltonians, rather than finite-depth quantum circuits or finite-time evolutions generated by strictly local Hamiltonians. Therefore, our version of symmetry-enforced long-range entanglement is the most general one.

\end{itemize}

Our results are expected to be relevant to a wide class of physical systems, including, in particular, trapped ions and Rydberg atom arrays where long-range interactions are important. These results are already applied and demonstrated in various examples in Sec. \ref{Sec:main_examples}. For example, our rigorously proven entanglement area law suggests that the model in Ref. \cite{Koffel_2012} should be revisited.

Below we list some interesting and important directions for future studies.

\begin{itemize}

    \item Currently it is not rigorously proved whether the entanglement area law and Lieb-Schultz-Mattis theorem must fail in systems with Hamiltonians not obeying Eq. \eqref{eq:admissible}. It is useful to better understand this.

    \item In this paper we only consider quantum spin chains with unitary symmetries that preserve the spacetime orientation. It will be interesting and important to generalize these considerations to systems with time reversal, reflection and generalized symmetries, and to higher dimensional systems and fermionic systems. Studies in this direction, which focus on specific forms of symmetries or Hamiltonians and possess various levels of rigor, have been carried out \cite{Oshikawa1999, Hastings2003, Cheng2015, Po2017, Jian2017,  Cho2017,Watanabe2018LSM,Metlitski2018,Cheng2018a, Kobayashi2018, Ogata2019LSM,Else2020, Jiang2019,Yao2021twisted, Ogata_2021, Aksoy2021, Ye2021a,Ma2022a, Cheng2022, Kawabata2023, Aksoy2023, Seifnashri2023, Zhou2023, kapustin2024anomalous,Garre_Rubio2024anomalous, Ma2024, Pace2024, Kawagoe2025, Tu2025}, but a fully comprehensive and rigorous understanding is still lacking.
    
\end{itemize}

{\it Note added: } Two related generalizations of the Lieb-Schultz-Mattis theorem in long-range interacting systems are also discussed in Refs.~\cite{Ma2024,Zhou2024LSM}, where Ref. \cite{Ma2024} appeared on the same arXiv post as the first arXiv version of the present paper, after which Ref. \cite{Zhou2024LSM} appeared. Both of Refs.~\cite{Ma2024,Zhou2024LSM} focus on the specific case where the system has an on-site $SO(3)$ symmetry and lattice translation symmetry, and they study whether the Hamiltonian can have a unique gapped ground state. Also, in Ref.~\cite{Ma2024} results in one, two and three dimensions are reported. The proof there follows the original proof of the Lieb-Schultz-Mattis theorem \cite{Lieb1961}, \ie constructing low-energy excited states by twist operators. One byproduct of this method is a bound of the finite-size gap for such systems (see, \eg Eq. (5) of Ref.~\cite{Ma2024}).
In our work, we consider general one dimensional Hamiltonians with symmetries described by general quantum cellular automata (which are extended to more general locality preserving automorphisms in the appendices), and our proof is based on the entanglement point of view of the Lieb-Schulz-Mattis theorem and the entanglement area law. However, since we work with infinite system from the beginning, we cannot estimate the finite-size gap. Besides the Lieb-Schultz-Mattis theorem in long-range interacting systems, we also establish the entanglement area law and symmetry-enforced long-range entanglement, and these contents are absent in Refs.~\cite{Ma2024,Zhou2024LSM}.

\begin{acknowledgements}

We thank Theo Johnson-Freyd, Tomotaka Kuwahara, Michael Levin and Zhi Li for helpful discussions. LZ thanks the Kavli Institute
for Theoretical Physics for their hospitality during the
course of this research. Research at Perimeter Institute is supported in part by the Government of Canada
through the Department of Innovation, Science and Industry Canada and by the Province of Ontario through
the Ministry of Colleges and Universities. This research was supported in part by grant NSF PHY-2309135 to the Kavli Institute for Theoretical Physics (KITP). LZ is supported in part by the National University of Singapore start-up
grants A-0009991-00-00 and A-0009991-01-00. RL is also supported by the Simons Collaboration on Global Categorical Symmetries through Simons Foundation grant 888996.
    
\end{acknowledgements}

\newpage
\onecolumngrid
\appendix

\section*{Appendices}

In these appendices, we provide more details relevant to the main text. Specifically, we review the operator algebra formalism in Appendix \ref{sec: formalism review}, which aims to provide a basic introduction to this formalism to readers unfamiliar with it. This formalism is the theoretical tool we utilize to study lattice systems with an infinite size. Although most of this appendix is devoted to reviewing the known results, this appendix also contains an important new result and its proof, \ie Proposition \ref{prop:factor_entropy}, which is crucial for proving Lemma \ref{lemma:typeI_KS} and Lemma \ref{lemma:generalized_KS}. In Appendix \ref{sec:area_law}, we present the detailed proof of the entanglement area law in infinite systems (Theorem \ref{thm:main_area_law} in the main text). Theorem \ref{thm:cut_off} is an ingredient in this proof, and its own proof is quite technical, which is presented in Appendix \ref{sec:proof_cut_off}. In Appendix \ref{sec:group_cohomology}, we review the mathematical definitions of group cohomology and differentiable group cohomology, which are necessary to define the anomaly index in Appendix \ref{sec:anomaly_index}. In Appendix \ref{sec:anomaly_index}, we review the construction of the anomaly index. In particular, we prove that this anomaly index is independent of the choice of the cut, which was not explicitly proved in Ref. \cite{kapustin2024anomalous}. We also derive some consequences of an anomalous symmetry. In Appendix \ref{sec: theorem 1}, we provide the proof of Theorem \ref{thm:main} in the main text. In Appendix \ref{sec: thermodynamic limit}, we discuss the connections between infinite systems and sequences of finite systems, and also present the proofs of Proposition \ref{prop:degenerate_area_law_main}, Theorem \ref{theorem: finite and infinite}, Theorem \ref{thm:infinite_SRE_main} and Theorem \ref{thm:finite_SRE_main} in the main text. Because we will employ a large number of notations in these appendices, for the convenience of the readers, we list the frequently used symbols at the end of this document.

\section{Review of the operator algebra formalism} \label{sec: formalism review}

As explained in the main text, many important concepts in quantum many-body physics, such as locality, long-range entanglement and quantum phases of matter, are most cleanly defined in infinite-size systems. Therefore, in our analysis we start with infinite systems, and then deduce implications on sequences of finite systems with increasing sizes. 

One of the major differences between finite and infinite lattice systems is that the familiar notion of the Hilbert space of the entire system is not applicable to the latter (see Appendix \ref{subsec:quasi_local} below). The basic tool to study infinite lattice systems is the operator algebra formalism, which is a generalization of the Heisenberg picture of the usual quantum mechanics. In this Appendix, we review the operator algebra formalism. Our goal here is to introduce the essential aspects of this formalism to readers unfamiliar with it. We will not only give definitions and prove theorems, but also provide illuminating examples of various concepts. We also remark that this appendix contains a new and useful result and its proof, \ie Proposition \ref{prop:factor_entropy}.

\subsection{Algebras of local and quasi-local operators}\label{subsec:quasi_local}

In this appendix we introduce some general background of the algebras of local operators and quasi-local operators. Readers are referred to Refs. \cite{bratteli2013operator1,bratteli2013operator2,Naaijkens_2017,Landsman:2017hpa} for a more thorough treatment.

Throughout this appendix, we work on lattices of general spatial dimension $d$, \ie our lattice is $\Lambda\simeq \z^{d}$ unless otherwise specified. On the other hand, most of the rest of the appendices specifically concern quantum spin chains.

We assume that our on-site Hilbert space $\cH_{k}$ satisfies $\dim\cH_{k}\leqslant d$ for some $d>0$, where $k$ labels a site in $\Lambda$ (we do {\it not} assume that $\dim\cH_{k}$ is the same for different $k$'s). However, as we explain now, the  ``total'' Hilbert space $\cH:=\bigotimes_{k\in\Lambda}\cH_{k}$ is not well defined for infinite lattices. Especially, this naive infinite tensor space lacks a well-defined inner product, which is essential to the usual quantum mechanics. In more detail, suppose we have a ``quantum state'' in this total Hilbert space $\cH$:
\beq
|\psi\ra=\bigotimes_{k\in\Lambda}|\psi_{k}\ra
\eeq
where $|\psi_{k}\ra\in\cH_{k}$ is a normalized vector for each $k\in\Lambda$. Given an arbitrary sequence $\{a_{k}\}_{k\in \Lambda},a_{k}\in \R$, one can construct another ``quantum state''
\beq
|\psi'\ra:=\bigotimes e^{ia_{k}}|\psi_{k}\ra
\eeq
The inner product between $|\psi\ra$ and $|\psi'\ra$ is
\beq
\la\psi|\psi'\ra=\exp(i\sum_{k\in\Lambda}a_{k})
\eeq
Since the sequence $\{a_{k}\}_{k\in\Lambda}$ is arbitrary and there is no obvious regularization scheme, this inner product has no definite answer. Thus $\cH$ is not a well-defined object for quantum mechanics on infinite lattices. So one must be careful about the meaning of a quantum state on infinite lattices.

Nevertheless, the Heisenberg picture of quantum mechanics, which focuses on operators rather than states, is still applicable even in the presence of infinitely many degrees of freedom, such as spin systems on infinite lattices. To understand this formalism, let us start with the notion of {\it{local operators}}. Given a {\it finite subset} $\Gamma\subset\Lambda$, one can talk about the Hilbert space $\cH_{\Gamma}:=\bigotimes_{k\in \Gamma}\cH_{k}$ on $\Gamma$. The operators supported on $\Gamma$ are defined to be all operators on this finite dimensional Hilbert space $\cH_{\Gamma}$, including $c$-numbers (\ie scalar multiples of the identity operator). Note that any (finite) additions and multiplications of operators on $\cH_{\Gamma}$ give another operator on $\cH_{\Gamma}$, and hence these operators form an algebra, denoted by $\A_{\Gamma}$. We call $\A_{\Gamma}$ the algebra of local operators support on $\Gamma$. It is obvious that if $A\in\A_{\Gamma}$ then its Hermitian conjugate $A^{\dagger}\in\A_{\Gamma}$ as well. 

Now we introduce a norm on this algebra.
First, for a state $|\phi\ra\in\cH_\Gamma$, we define its norm as 
\beq\label{eq:vec_norm}
||\phi\ra|=\sqrt{\la\phi|\phi\ra}\geqslant 0
\eeq
This norm satisfies the usual triangle inequality
\beq\label{eq:tri_ineq_vec}
||\phi_{1}\ra+|\phi_{2}\ra|\leqslant ||\phi_{1}\ra|+||\phi_{2}\ra|
\eeq
Next, for a local operator $A\in\A^{\ell}_{\Gamma}$, we define its operator norm by 
\beq\label{eq:operator_norm}
||A||:=\sup_{\substack{|\psi\ra\in\cH_{\Gamma}\\ \la\psi|\psi\ra=1}}|A|\psi\ra|
\eeq
Namely, the norm of the operator $A$ is the square root of the largest eigenvalue of $A^\dag A$ in the case of a finite system.

By the definition in Eq. \eqref{eq:operator_norm}, for any vector $|\phi\ra$ we have
\beq\label{eq:norm_def_ineq}
|A|\phi\ra|^{2}\leqslant ||A||^{2}||\phi\ra|^{2}
\eeq
Another useful property of the operator norms is the triangle inequality,
\beq\label{eq:tri_ineq_op}
||A_{1}+A_{2}||\leqslant ||A_{1}||+||A_{2}||,\quad\forall\,A_{1},A_{2}\in\A_{\Gamma}
\eeq
for some finite subset $\Gamma$. To wit, note that for any $|\psi\ra\in\cH_{\Gamma}$, by Eq.~\eqref{eq:tri_ineq_vec},
\beq
| A_{1}|\psi\ra+A_{2}|\psi\ra|\leqslant |A_{1}|\psi\ra|+|A_{2}|\psi\ra|
\eeq
The desired inequality Eq.~\eqref{eq:tri_ineq_op} follows by taking supremes on both sides with respect to $\la\psi|\psi\ra=1$.

Given any local operator $A\in \A_{\Gamma}$, if $\Gamma'$ is another finite subset containing $\Gamma$ (\ie $\Gamma\subset \Gamma'$), then there is a natural way to extend $A$ to a local operator supported on $\Gamma'$,
\beq
A'=A\bigotimes_{k\in\Gamma'\setminus \Gamma}I_{k}
\eeq
where $A'$ is the extension of $A$ on $\Gamma'$ and $I_{k}$ is the identity operator on $\cH_{k}$. In this case, we say that $A$ acts as identity outside $\Gamma$.
It is often convenient to identify these two operators, \ie we will not distinguish $A'$ and $A$ in the following. One can easily check that $||A'||=||A||$, so this identification is unambiguous in terms of the operator norm.
We then make the following definition:

\begin{definition}\label{def:local_operators}
The algebra of local operators
    \beq\label{eq:local_operator_algebra}
\A^{l}:=\bigcup_{\Gamma\subset\Lambda,|\Gamma|<\infty}\A_{\Gamma}
\eeq
with the above identification $A'\sim A$. We also define $\A_{\emptyset}=0$.
\end{definition}

Here $|\Gamma|$ means the cardinality of $\Gamma$ and we write $|\Gamma|<\infty$ if $\Gamma$ is a finite set. More explicitly, $A\in \A^{l}$ if there is a finite subset $\Gamma$ such that $A\in \A_{\Gamma}$ (this $\Gamma$ is not unique due to the freedom to extend $A$). If $A_{i}\in\A^{l},i=1,2$, then one can find a finite subset $\Gamma$ such that $A_{i}\in\A_{\Gamma}$. Because $\lambda_{1}A_{1}+\lambda_{2}A_{2}\in\A_{\Gamma},\,\lambda_{i}\in\mathbb{C},i=1,2$ and $A_{1}A_{2}\in\A_{\Gamma}$, $\A^{l}$ is indeed an algebra. This algebra contains a special element, called the unit and denoted by $I$, which satisfies $IA=AI=A$ for any $A\in\A^{l}$. In $\A^{l}$, this unit is given by the identity operator which acts trivially at all sites.

One crucial property of $\A^{l}$ is the locality. Given $A_{i}\in \A_{\Gamma_{i}},i=1,2$, if $\Gamma_{1}\cap \Gamma_{2}=\emptyset$, then
\beq
[A_{1},A_{2}]=0
\eeq
We remark that $A_{1},A_{2}$ in above equation really mean their extensions on some $\Gamma$ which contains $\Gamma_{1}\cup\Gamma_{2}$.

However, it is often insufficient to only work with $\A^{l}$. Consider a time evolution that is generated by a local Hamiltonian
\beq\label{eq:charge_densities}
H=\sum_{|\Gamma|<\infty}h_{\Gamma}
\eeq
where $h_{\Gamma}$ is some local term supported on $\Gamma$ in the sense that $h_{\Gamma}=0$ if $\diam(\Gamma)>R$ for some fixed $R>0$, with $\diam(\Gamma):=\max_{x,y\in\Gamma}d(x,y)$ where $d$ is the distance. Given such an $H$, the evolution is denoted by $\alpha^{t}$ for $t\in\R$, which transforms a local operator $A\in \A_{\Gamma}$ into
\beq\label{eq:symLPA}
\alpha^{t}(A)=e^{it H}A e^{-it H}
\eeq

As noted above, for a given $A\in\A_\Gamma$, if $B\in \A_{\Gamma'}$ is another local operator such that $\Gamma\cap \Gamma'=\emptyset$, then $[A,B]=0$ by locality. However, $[\alpha^{t}(A),B]\not=0$ in general and this commutator is controlled by a Lieb-Robinson bound\footnote{There are many different versions of Lieb-Robinson bounds.} \cite{Lieb:1972wy,Matsuta2017LR,Else2018LR,_Anthony_Chen_2023}
\beq\label{eq:LR_bound}
||[\alpha^{t}(A),B]||\leqslant |\Gamma| C e^{-a(L-vt)}
\eeq
where $C,a,v$ are non-universal positive constants\footnote{It will be important later that all of these constants are independent of $|\Gamma'|$.} and $L:=d \, (\Gamma,\Gamma')$  is the distance between $\Gamma$ and $\Gamma'$. The constant $v$ is usually called the Lieb-Robinson velocity. Actually, one version of the Lieb-Robinson bounds implies that $\alpha^{t}$ maps a quasi-local operator to be defined below to a quasi-local operator (see Theorem 3.2 and Lemma 3.3 of Ref. \cite{Ranard_2022}).

Therefore, to ensure that time evolutions can properly act on our operator algebra, one has to consider the {\it{quasi-local}} operator algebra $\A^{ql}$ (see below for its definition), rather than $\A^{l}$ only. Intuitively, $\A^{ql}$ is obtained by taking sequential limits in $\A^{l}$. Concretely, an operator $A\in \A^{ql}$ if and only if there is a Cauchy sequence $A_{j}\in\A^{l}$ such that 
\beq\label{eq:limits}
A=\lim_{j\to \infty}A_{j}
\eeq
By a Cauchy sequence, we mean that $\forall\,\epsilon>0$, there exists $N\in\z^{>0}$ such that $||A_{j}-A_{j'}||<\epsilon$ for all $j,j'>N$, where $||\cdot||$ is the operator norm defined in Eq. \eqref{eq:operator_norm}. More precisely, one says that $\A^{ql}$ is the completion of $\A^{l}$ with respect to the operator norm $||\cdot||$. An analogue for the relation between $\A^{ql}$ and $\A^{l}$ is the relation between $\R$ and $\mathbb{Q}$. In the latter case, $\R$ can be defined to be the limits of Cauchy sequences in $\mathbb{Q}$ (or the completion of $\mathbb{Q}$ with respect to usual absolute value of $\mathbb{Q}$) \cite{Tao2022}. In essence, if $A\in\A^{ql}$, $A$ may not act as identity outside any finite region. However, it can be approximated by local operators with any desired accuracy. Besides, $\A^{ql}$ has a natural norm inheriting from the operator norm in Eq. \eqref{eq:operator_norm} on $\A^{l}$. Explicitly, let $A\in\A^{ql}$ and $A_{j}\in\A^{l}$ be a sequence convergent to $A$, we define
\beq\label{eq:limit_norm}
||A||:=\lim_{j\to\infty}||A_{j}||
\eeq
It is easy to check that $||A||$ does not depend on the choice of the sequence $\{A_{j}\}_{j=1,2,...}$. By this definition, the unit in $\A^{ql}$, \ie the identity operator $I$, has norm 1. 

To summarize,

\begin{definition}\label{def:quasi-local}
    The quasi-local operator algebra $\A^{ql}$ is defined to be the completion of $\A^{l}$ with respect to the operator norm in Eq. \eqref{eq:operator_norm}. We also denote the group of quasi-local unitary operators as $\cU^{ql}$.
\end{definition}

Given a subset $\Gamma$ (which may be either finite or infinite), we now use $\A_\Gamma$ to represent the algebra of {\it quasi-local} operators supported on $\Gamma$, \ie it acts as identity outside of $\Gamma$.

Our $\A^{ql}$ is a special example of the so-called $C^{*}$-algebra in the mathematical literature \cite{davidson1996c,arveson1998invitation,murphy2014c}. 
\begin{definition}\label{def:C*_alg}
    A $C^{*}$-algebra $\cC$ is an algebra equipped with an anti-linear involution $*$ (which models Hermitian conjugation in quantum mechanics) and a norm $||\cdot||$ (which may or may not be the operator norm defined in Eq. \eqref{eq:operator_norm}), such that
\begin{enumerate}
    \item $(A^{*})^*=A$ for all $A\in \cC$.
    \item $(AB)^{*}=B^{*}A^{*}$ for all $A,B\in \cC$.
    \item $(\lambda A+B)^{*}=\bar{\lambda}A^{*}+B^{*}$ for all $A,B\in \cC$, $\lambda\in \bbC$ and $\bar{\lambda}$ is the complex conjugate of $\lambda$.
    \item (Banach property) $||AB||\leqslant ||A||\cdot||B||$ and $||A^{*}||=||A||$ for all $A,B\in\cC$.
    \item ($C^{*}$ property) $||A^*A||=||A||^{2}$.
\end{enumerate}
Moreover, $\cC$ must be complete\footnote{The completeness of $\cC$ means that every Cauchy sequence of $\cC$ (see the paragraph under Eq.~\eqref{eq:limits}) has a limit in $\cC$. For example, $\mathbb{Q}$ is incomplete with respect to usual absolute value while $\R$ is complete.} with respect to $||\cdot||$.
\end{definition}

\begin{example}\label{example:bounded_operators}
    If $\cH$ is a Hilbert space (not necessarily finite dimensional), then its algebra of bounded operators (\ie operators with finite norms) $\B(\cH)$ is a $C^*$-algebra with $*=\dagger$. The first 3 properties are obvious. The Banach property is true according to Eq.~\eqref{eq:norm_def_ineq}
    \beq
    |AB|\psi\ra|\leqslant ||A||\cdot|B|\psi\ra|\leqslant||A||\cdot||B||\cdot ||\psi\ra|
    \eeq
    Hence $||AB||\leqslant ||A||\cdot||B||$. For the $C^*$-property, again note that for any normalized $|\psi\ra$
    \beq
    |A|\psi\ra|^{2}=\la\psi|A^{\dagger}A|\psi\ra\leqslant ||A^{\dagger}A||
    \eeq
    Thus $||A||^{2}\leqslant ||A^{\dagger}A||\leqslant ||A^{\dagger}||\cdot ||A||$ by the Banach property. So we have $||A||\leqslant ||A^{\dagger}||$. Note that $(A^{\dagger})^{\dagger}=A$. Hence $||A||=||A^{\dagger}||$. The $C^*$-property then follows. 
     
\end{example}

\begin{example}
    Our $\A^{ql}$ with $*=\dagger$ and the norm Eq. \eqref{eq:limit_norm} is a $C^{*}$-algebra (see Sec. 3.2.3 of Ref. \cite{Naaijkens_2017}). As a consequence, for any $U\in\cU^{ql}$, we have $||U||=1$ since $||U||^{2}=||U^{\dagger}U||=1$. However, the algebra of local operators $\A^{l}$ is not $C^{*}$ because it is incomplete with respect to $||\cdot||$ (\ie some Cauchy sequences do not have a limit within $\A^{l}$).
\end{example}

In fact, finite dimensional $C^*$-algebras can be completely classified.
\begin{proposition}[Theorem C.163 of Ref.~\cite{Landsman:2017hpa}]\label{prop:fd_C*}
    Let $\A$ be a finite-dimensional $C^{*}$-algebra, then $\A$ must be of the following form
    \beq
    \A=\bigoplus_{i=1}^{n}M_{d_{i}}(\bbC)
    \eeq
    where $d_{i}$ are positive integers and $M_{d_{i}}(\bbC)$ is the complex matrix algebra of size $d_{i}\times d_{i}$.
\end{proposition}

In this paper, we will focus on three examples of $C^*$-algebra, \ie the algebra of quasi-local operators $\A^{ql}$, the algebra of bounded operators $\B(\cH)$ in a Hilbert space $\cH$ and von Neumann algebras that will be introduced in Appendix \ref{sec:vN}, but the discussions presented in this section apply to general $C^{*}$-algebras.

\subsection{Quantum cellular automata and locality-preserving automorphisms}\label{subsec:QCA_LPA}

After introducing the basic notion of operator algebras, our next goal is to define a proper notion of symmetry action on local operators. This symmetry action is often required to preserve certain notion of locality. Given a symmetry group $G$, it is natural to define the symmetry action as a homomorphism $G\to \Aut(\A^{l})$, where $\Aut(\A^{l})$ is the automorphism group of $\A^{l}$. Recall that 
\begin{definition}
    We say $\alpha:\A^{l}\to\A^{l}$ is an automorphism of $\A^{l}$ if
\begin{enumerate}
    \item $\alpha(A+B)=\alpha(A)+\alpha(B)$
    \item $\alpha(AB)=\alpha(A)\alpha(B)$
    \item $\alpha(A^{\dagger})=\alpha(A)^{\dagger}$
    \item $\alpha(\lambda A)=\lambda\alpha(A),\forall\,\lambda\in\bbC$
    \item $\alpha$ is invertible.
\end{enumerate}
All automorphisms of $\A^{l}$ form a group under \textbf{finite} compositions, denoted by $\Aut(\A^{l})$
\end{definition}
There are similar automorphism groups for other algebras, \eg $\A^{ql}$ and $\B(\cH)$.

There is a special subgroup of $\Aut(\A^{l})$ called quantum cellular automata \cite{arrighi2019overview,Farrelly_2020}.

\begin{definition}\label{def:QCA}
    A quantum cellular automaton (QCA) is an automorphism $\alpha$ of $\A^{l}$ such that for a local operator $A\in\A_{X}$ (where $X$ is a finite subset), $\alpha(A)\in\A_{B(X,r_{\alpha})}$ where $r_{\alpha}>0$ does not depend on $A$. Here $B(\Gamma,r_{\alpha}):=\{x\in\Lambda|d(x,\Gamma)\leqslant r_{\alpha}\}$, where $d$ is the distance on the lattice and $0\leqslant r_{\alpha}<\infty$ does not depend on $A$ (but can depend on $\alpha$).
\end{definition}

According to the definition, QCA preserve locality in the strongest sense and QCA form a group under finite compositions, which is denoted by $\G^{\QCA}$.
\nomenclature{$\G^{QCA}$}{The group of quantum cellular automata, def. \ref{def:QCA}}

\begin{example}\label{example:circuit}

    One particular example of QCA is a (finite-depth unitary) circuit. For simplicity, we describe it in 1D. Let $\{P_{k}\}_{k\in\z}$ be a set of disjoint intervals with $|P_{k}|<l$ for a constant length $l$ ($P_{k}$ can be empty for some $k$'s). Then, we define a block-partitioned unitary (BPU) as 
    \beq
    \alpha=\prod_{k=-\infty}^{\infty}\Ad_{U_{k}}
    \eeq
    where $U_{k}$ is a unitary operator supported on $P_{k}$ and $\Ad_{U_{k}}(A)=U_{k}AU_{k}^{\dagger}$ for local operator $A$. To see that a BPU is a QCA, let us assume $\mathrm{supp}(A)\subset P_{k}$ for some $k$. Thus $\alpha(A)\in P_{k}$ again has finite support. It is easy to generalize this argument to other local operators. Hence it is a QCA by definition.

    A circuit is a finite composition of BPU's (these BPU's may be defined for different partitions). The group of all circuits are denoted by $\G^{cir}$. Later we will see that not every QCA is a circuit.
    
\end{example}
\nomenclature{$\G^{cir}$}{The group of circuits under finite composition, example \ref{example:circuit}}

As explained before in the last subsection, one needs to consider not only QCA's, but the automorphism group $\Aut(\A^{ql})$ of the quasi-local operator algebra $\A^{ql}$ is also of fundamental importance.

\begin{definition}\label{def:LPA}
    An automorphism $\alpha$ of $\A^{ql}$ is called a locality-preserving automorphism (LPA) if for each local operator $A\in\A_{X}$ with $X$ a finite subset of $\Lambda$ and any $r>0$, there exists a local operator $B$ (which depends on $A$ and $\alpha$), such that
    \beq
    ||\alpha(A)-B||<f_{\alpha}(r)||A||,
    \eeq
    where $f_{\alpha}(r)$ is a positive decreasing function independent of the choice of $A$ and $\lim_{r\to \infty}f_{\alpha}(r)=0$.

    The group of LPA's under finite composition is denoted by $\G^{lp}$.
\end{definition}
\nomenclature{$\G^{lp}$}{Group of locality-preserving automorphism}
More explicitly, if $\alpha\in \G^{lp}$ and $A\in\A_{\Gamma}$ is a local operator with $|\Gamma|<\infty$, then
\beq\label{eq:LPA}
\frac{||\alpha(A)-B||}{||A||}<f_{\alpha}(r)
\eeq
In this situation, we say that $\alpha$ has an $f_{\alpha}$-tail\footnote{The choice of $f_{\alpha}$ is not unique. Any other non-negative decreasing function $h(r)\geqslant f(r)$ with $\lim_{r\to\infty}h(r)=0$ also does the job. So for such $h_{\alpha}$, one can say $\alpha$ has $h_{\alpha}$-tail as well.}.
Intuitively, $\alpha(A)$ is {\it not} a local operator any more, but it can be approximated by another local operator $B$ defined on a larger support $B(\Gamma,r)$ with an error controlled by $f_{\alpha}(r)$.

We introduce another useful notation to deal with LPA's.
\begin{definition}\label{def:near_inclusion}
    Given an operator $A\in\A^{ql}$ and a subalgebra $\B$ of $\A^{ql}$, fixing a constant $\epsilon>0$, we write $A\stackrel{\epsilon}{\in}\B$ if there is $B\in\B$ such that $||A-B||\leqslant\epsilon||A||$. Similarly, for two subalgebras of $\A^{ql}$, we write $\A\stackrel{\epsilon}{\subset}\B$ if for $\forall\,A\in\A$ we have $A\stackrel{\epsilon}{\in}\B$.
\end{definition}
\nomenclature{$A\stackrel{\epsilon}{\subset}B$}{$A$ is nearly included in $B$, def. \ref{def:near_inclusion}}
Thus, the condition Eq. \eqref{eq:LPA} can be written as $\alpha(\A^{\ell}_{\Gamma})\stackrel{f(r)}{\in}\A^{\ell}_{B(\Gamma,r)}$.
Theorem 3.2 and Lemma 3.3 of Ref.~\cite{Ranard_2022} give a simple criterion for an automorphism to be an LPA in 1D. In particular, a finite-time evolution generated by a local (time dependent) Hamiltonian is an LPA.

In order to study symmetry actions on lattice systems in more detail, one first needs to understand the structure of $\G^{lp}$, which will be used to describe symmetry actions. In fact, the structures of $\G^{\QCA}$ and $\G^{lp}$ are rather clear in 1D,\footnote{Some classifications of higher dimensional QCA's are also proposed recently, see \eg Ref. \cite{Freedman_2020}.} thanks to the Gross-Nesme-Vogts-Werner (GNVW) index \cite{Gross_2012,Ranard_2022}. Below we first give a brief introduction of the GNVW index of QCA's, and then review the GNVW index of LPA's. We do not provide a detailed construction of the GNVW index here since it is rather technical and we will not use it in any essential way. Interested readers are referred to Refs. \cite{Gross_2012,arrighi2019overview,Farrelly_2020,Ranard_2022}.

We assume that all on-site local Hilbert spaces $V$ have the same dimension $D$. This does not lose any generality, since it can always be achieved by tensoring the original degrees of freedom with some other degrees of freedom at each site, such that our QCA acts trivially on the additional degrees of freedom.

Roughly speaking, the GNVW index is a group homomorphism, denoted by
\beq\label{eq:GNVW_index}
\ind:\G^{\QCA}\to \z[\{\log(p_{j}) \}_{j\in J}]
\eeq
\nomenclature{$\ind$}{The GNVW index map, \eqref{eq:GNVW_index}}
where $\{p_{j}\}_{j\in J}$ is the set of all prime divisors of $D$.

The index map satisfies the following properties:
\begin{enumerate}
    \item $\ind(\alpha\beta)=\ind(\alpha)+\ind(\beta),\forall\,\alpha,\beta\in\G^{\QCA}$,
    \item $\ker(\ind)=\G^{cir}$,
    \item $\ind$ is a surjection.
\end{enumerate}
It can be verified that a circuit is a QCA with a vanishing GNVW index. On the other hand, one can verify that if $\tau$ is a shift on the lattice by $+1$
\beq\label{eq:Generalized_translations}
\ind(\tau)=\log D\in \z[\{\log p_{j}\}_{j\in J}]
\eeq
\nomenclature{$\G^{T}$}{Group of generalized translations, \eqref{eq:Generalized_translations}}

To show that $\ind$ is a surjection, note that $\z[\{\log p_{j}\}_{j\in J}]$ is generated by $\log p_i$, so it suffices to identify an element in $\G^{\QCA}$ whose index is $\log p_i$ for each $p_i$. To this end, one defines a generalized translation (or partial translation) which only shifts part of the degrees of freedom at each site and which has an index $\log p_i$. To be more precise, one can fix a $p_{i}$ (where $p_{i}$ is a prime divisor of $D$) dimensional subspace $V_{i}$ of $V$. One can then factorize $V$ into $V\simeq V'\otimes V_{i}$ at each site, and a generalized translation $\tau_{i}$ only shifts the $V_{i}$-part while fixing $V'$. The GNVW index of $\tau_i$ is then verified to be
\beq
\ind(\tau_{i})=\log p_{i}
\eeq

Generalized translations form an Abelian group under finite compositions, and we denote it by $\G^{T}$. Thus, the GNVW index actually shows that $\G^{\QCA}$ is a semi-direct product
\beq\label{eq:QCA_semi_direct}
\G^{\QCA}=\G^{cir}\rtimes \G^{T}
\eeq

Now we turn to the structure theory of LPA's. Recall that $\A^{ql}$ is obtained by taking limits of $\A^{l}$, or more formally, any quasi-local operator can be approximated by local operators. Any LPA can also be approximated by QCA's. The basic strategy to study LPA's is to approximate it by a sequence of QCA's. For example, if $\alpha\in\G^{lp}$, there exists a sequence $\{\beta_{j}\}_{j=1,2...}$ of QCA's such that
\beq
\lim_{j\to\infty}\beta_{j}=\alpha
\eeq
Then one defines the GNVW index of $\alpha$ as
\beq
\ind(\alpha)=\lim_{j\to\infty}\ind(\beta_{j})
\eeq
It can be checked that this index is well-defined (finite and independent of the choice of the sequence) \cite{Ranard_2022}. Given the existence of this index map, many results of QCA's can be carried over to LPA's. For example, 
\beq
\G^{lp}=\G^{loc}\rtimes\G^{T}
\eeq
where $\G^{loc}$ is the subgroup of time evolution generated by local Hamiltonians ({\it which can be time-dependent}) over a finite duration.

\subsection{States in the operator algebra formalism}\label{Sec:states}

In the above sections, we have introduced the algebra of operators, and the concepts of QCA and LPA that can be used to describe symmetry actions on local operators. In this section, we discuss the notion of states in the operator algebra formalism, in the context of infinite systems.

In quantum mechanics, a state $|\psi\ra$ is a vector in some Hilbert space $\cH$. Equivalently, this state can be represented by a density matrix $\rho_{\psi}=|\psi\ra\la\psi|$. Another slightly unusual point of view is to think of $\rho_{\psi}$ as a linear functional $\psi$ on the algebra of operators,
\beq
\psi(A):=\tr(\rho_{\psi}A)
\eeq
for any operator $A$ in $\cH$. Once $\psi$ is determined, the density matrix $\rho_{\psi}$ (hence the vector state $|\psi\ra$) is also determined\footnote{To see this, one chooses $A$ to be projectors to basis vectors, which is enough to reconstruct $|\psi\ra$ up to an overall phase.}. The linear functional corresponding to a quantum state must satisfy further properties, such as positivity
\beq
\psi(A^{\dagger}A)=\la\psi|A^{\dagger}A|\psi\ra\geqslant 0
\eeq
for any operator $A$ in $\cH$. Besides, if $|\psi\ra\not= 0$, one requires it be normalized, that is 
\beq
\psi(I)=\la\psi|\psi\ra=1
\eeq

This motivates the following definition of states in infinite systems.
\begin{definition}\label{def:State}
    A quantum state of the quasi-local operator algebra $\A^{ql}$ is a nonzero linear functional $\psi:\A^{ql}\to \bbC$ with the following properties:
    \begin{enumerate}
        \item Positivity: $\psi(A^{\dagger}A)\geqslant 0$ for all $A\in\A^{ql}$.
        \item Normalization\footnote{For non-unital $C^{*}$-algebra, one can embed it into another unital $C^{*}$-algebra, see Sec. 2.2 of Ref. \cite{Naaijkens_2017}. Hence this definition still works.}: $\psi(I)=1$.
    \end{enumerate}
\end{definition}
\nomenclature{$\psi$}{A state of operator algebra, def. \ref{def:State}}
\begin{example}\label{example:denstiy_matrix}
    Consider the algebra $\B(\cH)$ (see example \ref{example:bounded_operators} for its definition) with $\dim\cH<\infty$. Let $\psi$ be a state of $\B(\cH)$. Pick up an orthonormal basis $|e_{i}\ra$ of $\cH$, we define 
    \beq
    \begin{split}
        P_{ij}&:=|e_{i}\ra\la e_{j}|\in\B(\cH)\\
        \lambda_{ij}&:=\psi(P_{ij})
    \end{split}
    \eeq
    Note that $P_{ij}^{\dagger}=P_{ji}$ and $P_{ij}P_{kl}=\delta_{jk}P_{il}$, where $\delta_{jk}$ is usual Kronecker delta. One can define a density matrix
    \beq
    \rho_{\psi}=\sum_{i,j}\lambda_{ij}P_{ji}.
    \eeq
    For any $A\in\B(\cH)$, we have
    \beq\label{eq:state_density_matrix}
    \psi(A)=\tr(\rho_{\psi}A).
    \eeq
    Hence the abstract state $\psi$ is represented by the  $\rho_{\psi}$ and one can check that $\rho_{\psi}$ is indeed a density matrix. 
\end{example} 

The above example shows how to connect a quantum state in the language of operator algebra with a quantum state in terms of a usual density matrix.
Below we discuss some basic properties of states, such as the norm of states, Cauchy-Schwarz inequality and orthogonality of states. This part can be safely skipped for a first reading. Readers only interested in the proof of our main theorems can move forward to definition \ref{def:mixed_pure}.

The positivity actually implies that $\psi(A)\geqslant 0$ for any positive quasi-local operator $A$ (\ie a self-adjoint operator whose eigenvalues are all non-negative). To see it, note that if $A$ is a positive local operator, then by linear algebra, there is a unique positive square root $\sqrt{A}$ of $A$. Thus $\psi(A)=\psi((\sqrt{A})^{\dagger}\sqrt{A})\geqslant 0$. If $A$ is a positive quasi-local operator, one can find a sequence of positive operators $A_{j}\in\A^{l}$ and we define $\sqrt{A}:=\lim_{j\to\infty}\sqrt{A_{j}}$. In this case, we again have $\psi(A)=\psi((\sqrt{A})^{\dagger}\sqrt{A})\geqslant 0$. More generally, any linear functional $f:\A^{ql}\to\bbC$ is \textit{positive} if $f(A)\geqslant 0$ for any positive operator $A\in\A^{ql}$.

It is also useful to define a norm on the space of linear functionals on $\A^{ql}$. Given a linear functional $f:\A^{ql}\to\bbC$, its norm is defined as
\beq \label{eq: def_functional_norm}
||f||:=\sup_{\substack{A\in\A^{ql},\\||A||=1}}|f(A)|.
\eeq
As a norm, it satisfies the triangle inequality
\beq\label{eq:tri_ineq_functional}
||f_{1}+f_{2}||\leqslant||f_{1}||+||f_{2}||
\eeq
To wit, note that $|f_{1}(A)+f_{2}(A)|\leqslant |f_{1}(A)|+|f_{2}(A)|$ for any $A\in\A^{ql}$. The desired inequality follows by taking supremes on both sides with respect to $||A||=1$. Notice this is the triangle inequality of the functional norms, and previously we have proved the triangle inequality for operator norms in Eq. \eqref{eq:tri_ineq_op}.

The corollary below will be useful.
\begin{corollary}\label{corollary:properties_state}
    The following properties about the state $\psi$ and positive linear functionals are true.
    \begin{enumerate}
        \item $\psi(A^{\dagger})=\psi(A)^{*}$ for all $A\in\A^{ql}$, where $z^{*}$ means the complex conjugation of the complex number $z$.
        \item Cauchy-Schwarz inequality: $|\psi(B^{\dagger}A)|^{2}\leqslant \psi(B^{\dagger}B)\psi(A^{\dagger}A)$.
        \item The norm $||\psi||=1$ if $\psi$ is normalized, \ie $\psi(I)=1$. More generally, if $f$ is a positive linear functional, then $||f||=f(I)$ where $I$ is the identity.
        \item If $f_{1},f_{2}$ are positive linear functionals, then $||f_{1}+f_{2}||=||f_{1}||+||f_{2}||$.
    \end{enumerate}
\end{corollary}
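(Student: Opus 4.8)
The plan is to prove the four items in order, with each one feeding into the next. For item 1, the strategy is polarization: apply positivity to $(I+\lambda A)^{\dagger}(I+\lambda A)$ for well-chosen scalars $\lambda$. Expanding gives $\psi(I) + \lambda\psi(A) + \bar\lambda\psi(A^{\dagger}) + |\lambda|^{2}\psi(A^{\dagger}A)\geqslant 0$; since this is real for all $\lambda$, taking $\lambda=1$ and $\lambda=i$ forces $\psi(A)+\psi(A^{\dagger})$ and $i(\psi(A)-\psi(A^{\dagger}))$ to be real, which together give $\psi(A^{\dagger})=\psi(A)^{*}$. (One first reduces to a state by normalizing; for a general positive functional $f$ the same computation works with $f(I)$ in place of $1$, and if $f(I)=0$ one still gets the identity by a limiting/degenerate argument, or one simply notes every positive functional is a nonnegative multiple of a state.)

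For item 2, the Cauchy--Schwarz inequality, I would consider the sesquilinear form $\langle A,B\rangle_{\psi}:=\psi(B^{\dagger}A)$, which is positive semidefinite by positivity of $\psi$ and Hermitian-symmetric by item 1. Then the standard proof of Cauchy--Schwarz for a positive semidefinite Hermitian form applies verbatim: expand $\psi\big((A-tB)^{\dagger}(A-tB)\big)\geqslant 0$ for $t = \psi(B^{\dagger}A)/\psi(B^{\dagger}B)$ (handling the case $\psi(B^{\dagger}B)=0$ separately, where one shows $\psi(B^{\dagger}A)=0$ by letting $t\to\infty$). This requires no new input beyond items already established.

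For item 3, one inequality is immediate: $||f|| \geqslant |f(I)| = f(I)$ since $||I||=1$. For the reverse, $||f||\leqslant f(I)$, take any $A$ with $||A||=1$; then $I\pm A$... more cleanly, write $A^{\dagger}A \leqslant ||A^{\dagger}A||\,I = I$, meaning $I - A^{\dagger}A$ is a positive operator (it is self-adjoint with spectrum in $[0,1]$, using that $\A^{ql}$ is a $C^{*}$-algebra so the spectral radius equals the norm), hence $f(I-A^{\dagger}A)\geqslant 0$, i.e. $f(A^{\dagger}A)\leqslant f(I)$. Combined with Cauchy--Schwarz, $|f(A)|^{2}=|f(I^{\dagger}A)|^{2}\leqslant f(I)f(A^{\dagger}A)\leqslant f(I)^{2}$, so $|f(A)|\leqslant f(I)$ and taking the supremum gives $||f||\leqslant f(I)$. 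For item 4, $\leqslant$ is just the triangle inequality~\eqref{eq:tri_ineq_functional}, and $\geqslant$ is immediate from item 3 applied to $f_{1}$, $f_{2}$, and $f_{1}+f_{2}$ (each is positive, so $||f_{1}+f_{2}|| = (f_{1}+f_{2})(I) = f_{1}(I)+f_{2}(I) = ||f_{1}||+||f_{2}||$).

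The main obstacle is item 3: it is the only place where one genuinely needs the $C^{*}$-algebra structure rather than bare positivity and linearity, specifically the fact that $A^{\dagger}A \leqslant ||A||^{2} I$ as an operator inequality and that such an inequality is respected by positive functionals. In the concrete setting of $\A^{ql}$ this is transparent (one can argue on finite-dimensional restrictions and pass to the limit), but stating it cleanly requires invoking that a self-adjoint element with norm $\leqslant 1$ has $I - A^{\dagger}A$ positive, which rests on the spectral-radius/norm identity in a $C^{*}$-algebra. Everything else is routine manipulation of the positivity axiom.
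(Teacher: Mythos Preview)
Your proof is correct and, for items 2--4, proceeds essentially as the paper does (Cauchy--Schwarz from expanding $\psi((A+\lambda B)^{\dagger}(A+\lambda B))\geqslant 0$; item 3 from the order inequality $A^{\dagger}A\leqslant \|A\|^{2}I$ combined with Cauchy--Schwarz; item 4 by sandwiching with the triangle inequality and item 3). The one genuine difference is item 1: the paper splits $A$ into Hermitian and anti-Hermitian parts and then argues that $\psi(H)\in\R$ for Hermitian $H$ by shifting $H$ to be positive and invoking the existence of a positive square root, whereas you use the polarization trick on $(I+\lambda A)^{\dagger}(I+\lambda A)$ with $\lambda=1,i$. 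Your route is a bit more elementary in that it avoids the (mild) detour through square roots of positive quasi-local operators, which the paper has to justify via a limiting argument; the paper's route, on the other hand, isolates the conceptually cleaner statement ``$\psi$ is real on self-adjoint elements'' as the core fact. Either way the content is the same standard $C^{*}$-algebra lemma.
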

\begin{proof}
    For the first property, note that each operator $A$ has the following decomposition
    \beq
    A=\frac{A+A^{\dagger}}{2}+i\frac{A-A^{\dagger}}{2i}
    \eeq
    So we only have to show that $\psi(A)\in\R$ if $A$ is Hermitian. Assume $A$ is Hermitian. Note 
    \beq
    \psi(A+\lambda I)=\psi(A)+\lambda,\,\forall\lambda\in\bbC
    \eeq
    So without loss of generality, we can assume $A$ is positive by shifting $A\to A+\lambda I$ with $\lambda\geqslant 0$. In this case, $A$ has a square root $\sqrt{A}$ which is again positive. Then
    \beq
    \psi(A)=\psi((\sqrt{A})^{\dagger}\sqrt{A})\geqslant 0
    \eeq
    In particular, $\psi(A)\in\R$ and the first property follows.

    The second property (Cauchy-Schwarz inequality) can be proved as in the usual quantum mechanics. Consider
    \beq
    F(\lambda):=\psi((A+\lambda B)^{\dagger}(A+\lambda B))=\psi(A^{\dagger}A)+\lambda\psi(A^{\dagger}B)+\lambda^{*}\psi(B^{\dagger}A)+|\lambda|^{2}\psi(B^{\dagger}B)\geqslant 0
    \eeq
    for any $\lambda\in \bbC$. This quadratic function $F(\lambda)$ has to be positive definite, which results in the desired inequality.
    
    To see the third property, we will use the definition in Eq. \eqref{eq: def_functional_norm}, which requires us to maximize $|\psi(A)|$ over all $A\in\A^{ql}$ with $||A||=1$. To this end, by the Cauchy-Schwarz inequality,
    \beq
    |\psi(A)|^{2}=|\psi(I^{\dagger}A)|^{2}\leqslant \psi(A^{\dagger}A)
    \eeq
    so we have $\sup_{||A||=1}|\psi(A)|\leqslant\sup_{||A||=1}\sqrt{\psi(A^{\dagger}A)}$.
    Below we show that $\sup_{||A||=1}\sqrt{\psi(A^{\dagger}A)}=1$. Since $1=\psi(I)\leqslant ||\psi||\leqslant 1$, we then deduce that $||\psi||=1$.

    To show that $\sup_{||A||=1}\sqrt{\psi(A^{\dagger}A)}=1$, note that $A^{\dagger}A$ is self-adjoint and $||A^{\dagger}A||=||A||^{2}=1$ by the $C^*$-property. So to maximize $\psi(A^{\dagger}A)$ with $||A||=1$ amounts to maximizing $|\psi(T)|$ for \textit{positive} $T$ with $||T||=1$, where $T=A^{\dagger}A$ is self-adjoint.
    For later convenience, we define a partial order on self-adjoint operators. Let $T$ and $B$ be two self-adjoint operators, we write $T\geqslant B$ if $T-B$ is positive. Since $\psi$ is a positive linear functional, this implies that $\psi(T-B)\geqslant 0$. Now note that for all self-adjoint $T\in\A^{ql}$, we have
    \beq
    -||T||I\leqslant T\leqslant ||T||I
    \eeq
    Applying $\psi$, we deduce
    \beq
    -||T||\leqslant \psi(T)\leqslant ||T||
    \eeq
    This means that $\sup_{||T||=1}|\psi(T)|\leqslant 1$ for positive $T$. However, $|\psi(I)|=1$, hence $\sup_{||T||=1}|\psi(T)|=1$. Therefore, we conclude that $||\psi||=1$. For a more general positive linear functional $f$, we normalize it as $\psi:=f/f(I)$ which is a state. Hence we have $||f||=f(I)||\psi||=f(I)$.

    For the last property, note that
    \beq
    f_{1}(I)+f_{2}(I)\leqslant||f_{1}+f_{2}||\leqslant||f_{1}||+||f_{2}||=f_{1}(I)+f_{2}(I)
    \eeq
    The second “$\leqslant$” above involves Eq.~\eqref{eq:tri_ineq_functional}. Hence $||f_{1}+f_{2}||=f_1(I)+f_2(I)=||f_{1}||+||f_{2}||$.
\end{proof}

Note that for any two quantum states $\psi_{1},\psi_{2}$, according to Eq.~\eqref{eq:tri_ineq_functional}, we have
\beq
||\psi_{1}-\psi_{2}||\leqslant ||\psi_{1}||+||\psi_{2}||=2
\eeq
If this bound is saturated, which means these states are separated as far as possible, then
we say that $\psi_{1}$ is orthogonal to $\psi_{2}$.

\begin{definition}\label{def:orthogonal}
    Two states $\psi_{1},\psi_{2}$ are orthogonal if
    \beq
    ||\psi_{1}-\psi_{2}||=2
    \eeq
\end{definition}

\begin{remark}
    In Ref.~\cite{bratteli2013operator1}, a different notion of orthogonality is used, which we call independence of states (see definition ~\ref{def:independence}). We avoid that usage of orthogonality because in that definition, any two different pure states are orthogonal even for $\B(\cH)$ with $\dim(\cH)<\infty$. That means, their definition cannot reduce to the usual orthogonality in quantum mechanics.
\end{remark}

\begin{example}
    Now we show that definition \ref{def:orthogonal} reduces to the usual notion of orthogonality in quantum mechanics for $\cC=\B(\cH)$ (where $\cH$ is a finite dimensional Hilbert space). In this case, we represent these states by vectors in $\cH$, such that
    \beq
    \psi_{1}(A)-\psi_{2}(A)=\la\psi_{1}|A|\psi_{1}\ra-\la\psi_{2}|A|\psi_{2}\ra.
    \eeq
    If $\la\psi_{1}|\psi_{2}\ra=0$, \ie they are orthogonal to each other in the usual sense, we take
    \beq
    A=|\psi_{1}\ra\la\psi_{1}|-|\psi_{2}\ra\la\psi_{2}|.
    \eeq
    Note that $||A||=1$ and $\psi_{1}(A)-\psi_{2}(A)=2$. So we conclude that $||\psi_{1}-\psi_{2}||=2$ if these states are orthogonal in the usual sense. More generally if $\la\psi_{1}|\psi_{2}\ra\not=0$, with some linear algebra, one can show that
    \beq
    ||\psi_{1}-\psi_{2}||=2(1-|\la\psi_{1}|\psi_{2}\ra|^{2}).
    \eeq
    So conversely, $||\psi_{1}-\psi_{2}||=2$ also implies that $\la\psi_{1}|\psi_{2}\ra=0$.
\end{example}

Given two quantum states $\psi_{0},\psi_{1}$ and a real number $t\in [0,1]$, one can construct another state
\beq\label{eq:convex_combination}
\psi_{t}=t\psi_{1}+(1-t)\psi_{0}
\eeq
It is easy to check $\psi_{t}$ is positive as well as normalized. 
\begin{definition}\label{def:mixed_pure}
    A state $\psi$ is called a mixed state if there exists $\psi_{0},\psi_{1}$ and $0<t<1$ such that $\psi=\psi_{t}$. A state is pure if it is not mixed.
\end{definition}

In fact, these definitions are equivalent to the definitions of mixed and pure states in quantum mechanics if there are only finitely many degrees of freedom. Namely, one can show that for $C^*$-algebra $\B(\cH)$ where $\cH$ is a finite dimensional Hilbert space, a pure state defined in definition \ref{def:mixed_pure} coincides with the usual pure states (see Sec. 2.3 of Ref. \cite{Landsman:2017hpa}).

There are many equivalent definitions of pure states, and the following one is often useful.
\begin{definition}[Alternative definition for pure states]\label{def:alternative_pure}
    A state $\psi$ of a $C^*$-algebra $\cC$ is called a pure state if for any positive linear functional $\rho:\cC\to \bbC$ (may not be normalized) majorized by $\psi$ (\ie $\psi-\rho$ is again positive), we have $\rho=t\psi$ for some $t\in [0,1]$.
\end{definition}

\begin{lemma}\label{lemma:equivalence_pure}
    Pure states defined by definitions \ref{def:mixed_pure} and \ref{def:alternative_pure} are equivalent.
\end{lemma}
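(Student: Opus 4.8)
The plan is to prove the two implications separately, showing that the ``convex-decomposition'' definition (Definition~\ref{def:mixed_pure}) and the ``majorization'' definition (Definition~\ref{def:alternative_pure}) cut out the same states. It is cleanest to argue the contrapositive in both directions, i.e.\ to show that a state fails to be pure under one definition if and only if it fails under the other.

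First I would show: if $\psi$ is \emph{not} pure in the sense of Definition~\ref{def:mixed_pure}, then it is not pure in the sense of Definition~\ref{def:alternative_pure}. So suppose $\psi = t\psi_1 + (1-t)\psi_0$ with $0<t<1$ and $\psi_0\neq\psi_1$. Set $\rho := t\psi_1$. Then $\rho$ is a positive linear functional (a non-negative multiple of a state), and $\psi - \rho = (1-t)\psi_0$ is also positive, so $\rho$ is majorized by $\psi$. If we had $\rho = s\psi$ for some scalar $s$, then evaluating on $I$ gives $t = s$, hence $t\psi_1 = t\psi$, so $\psi_1 = \psi$; but then $\psi_0 = \psi$ as well, contradicting $\psi_0\neq\psi_1$. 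Therefore $\rho$ is \emph{not} a scalar multiple of $\psi$, so $\psi$ violates Definition~\ref{def:alternative_pure}.

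Conversely, suppose $\psi$ is not pure in the sense of Definition~\ref{def:alternative_pure}: there is a positive linear functional $\rho$ majorized by $\psi$ which is not of the form $t\psi$. Since $\rho$ is positive, $\rho(I)\geqslant 0$; and since $\psi-\rho$ is positive, $\rho(I)\leqslant\psi(I)=1$. If $\rho(I)=0$ then positivity of $\rho$ forces $\rho=0=0\cdot\psi$ (using the Cauchy--Schwarz inequality from Corollary~\ref{corollary:properties_state}, $|\rho(A)|^2\leqslant\rho(I)\rho(A^\dagger A)=0$), contradicting that $\rho$ is not a multiple of $\psi$; similarly if $\rho(I)=1$ then $\psi-\rho$ has $(\psi-\rho)(I)=0$, hence $\psi-\rho=0$ and $\rho=\psi$, again a contradiction. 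So $t:=\rho(I)\in(0,1)$. Now define $\psi_1 := \rho/t$ and $\psi_0 := (\psi-\rho)/(1-t)$. Each is positive and normalized to $1$ on $I$, hence a state, and by construction $\psi = t\psi_1 + (1-t)\psi_0$. Finally $\psi_1\neq\psi_0$: if they were equal, both would equal $t\psi_1+(1-t)\psi_0=\psi$, forcing $\rho = t\psi_1 = t\psi$, contradicting the assumption on $\rho$. Hence $\psi$ is mixed in the sense of Definition~\ref{def:mixed_pure}.

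The routine steps are the positivity and normalization checks for the auxiliary functionals, which follow immediately from linearity and from part (3) of Corollary~\ref{corollary:properties_state} (a positive functional $f$ has $\|f\|=f(I)$, so dividing by $f(I)$ produces a state). The only place requiring a little care — the ``main obstacle'' such as it is — is ruling out the degenerate endpoints $\rho(I)\in\{0,1\}$ in the second implication, i.e.\ confirming that a positive functional majorized by a state and not proportional to it must have $\rho(I)$ strictly between $0$ and $1$; this is exactly where the Cauchy--Schwarz inequality for positive functionals is used to squeeze out the $\rho(I)=0$ case, and an identical argument applied to $\psi-\rho$ handles $\rho(I)=1$. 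With those endpoints excluded, the convex combination is genuine ($0<t<1$) and the equivalence is complete.
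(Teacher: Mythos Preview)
Your proof is correct and in fact more complete than what the paper supplies: the paper does not prove this lemma in-line but only remarks that the finite-dimensional case is obvious by diagonalizing the density matrix and defers to Bratteli--Robinson for the general $C^*$-algebra statement. Your argument, by contrast, is the standard direct proof valid for any unital $C^*$-algebra, and it goes through without modification.

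Two very small points worth noting. First, Corollary~\ref{corollary:properties_state} states Cauchy--Schwarz only for states $\psi$, whereas you apply it to the (unnormalized) positive functionals $\rho$ and $\psi-\rho$; this is harmless because the proof there uses only positivity, but strictly speaking you are invoking a slight extension of the corollary as written. Second, Definition~\ref{def:mixed_pure} as stated in the paper does not explicitly require $\psi_0\neq\psi_1$, though of course it must be read that way (otherwise every state is mixed); your proof correctly works with that intended reading.
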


One can easily check this for $\B(\cH)$ with finite dimensional $\cH$, where states are represented by density matrices. Thus lemma \ref{lemma:equivalence_pure} becomes obvious by diagonalizing this density matrix. For a more general proof applicable to infinite systems, see theorem 2.3.15 of Ref. \cite{bratteli2013operator1}.

\subsection{Gelfand-Naimark-Segal construction} \label{subsec: GNS construction}

Now we do have the notion of states in the context of operator algebra, then it is tempting to talk about the Hilbert space. Indeed, working with Hilbert spaces comes with benefits. For example, abstract states in definition~\ref{def:State} do not form a linear space, which means we lack one of the most important ingredients in quantum mechanics, \ie the coherent superposition of states (see remark \ref{remark:coherent_superposition} for details). Moreover, it is often easier to work with matrices rather than abstract algebra of operators. So we want to represent our algebra $\A^{ql}$ on some Hilbert space. Furthermore, some usual notions of representation theory, such as irreducible representations and Schur's lemma can help us further decompose these matrices into simpler pieces, \ie making these matrices block-diagonalized.

Recall what we have in hand is states and the algebra of (quasi-)local operators. In quantum many-body physics, a Hilbert space can be built up by applying local operators to a fixed reference state. There is a similar way to build up a Hilbert space in a generic $C^{*}$-algebra, known as the Gelfand-Naimark-Segal (GNS) construction. We now present this construction for a general $C^*$-algebra $\cC$, but readers can keep only the example of $\A^{ql}$ in mind.

Let us fix a state $\psi$ which can be pure or mixed. We start with a {\it pure} state for simplicity. We define the GNS ideal $N_{\psi}$ as
\beq\label{eq:GNS_ideal}
N_{\psi}:=\{A\in\cC|\psi(A^{*}A)=0\}
\eeq
Remember that $\psi$ is now our input state. Informally one can think of it as $|\psi\ra$ since it is pure, and $N_{\psi}$ includes operators which annihilate $|\psi\ra$. If $A\in N_{\psi}$ (\ie $A$ annihilates $|\psi\ra$), then $BA$ also annihilates $|\psi\ra$. This means $BA$ is again in $N_{\psi}$.  Similarly if $A,B\in N_{\psi}$ then $A+B\in N_{\psi}$. More rigorously, note that by the Cauchy-Schwarz inequality (see Corollary \ref{corollary:properties_state})
\beq
|\psi(A^*B)|^{2}=|\psi(B^{*}A)|^{2}\leqslant \psi(A^*A)\psi(B^*B)=0.
\eeq
This shows if $A,B\in N_{\psi}$, then $\psi((A+B)^*(A+B))=0$, \ie $A+B\in N_\psi$. On the other hand, if $A\in N_{\psi}$ and $B\in\cC$, we have
\beq
|\psi(A^{*}B^*BA)|^{2}\leqslant \psi((A^*B^*B)^*(A^*B^*B))\psi(A^*A)=0
\eeq
So again we see $BA\in N_{\psi}$.
Thus mathematically we say that $N_{\psi}$ is a left ideal.  We will make the relation between the abstract state $\psi$ and the vector state $|\psi\ra$ precise in Eq. \eqref{eq: GNS reference state}.

The GNS Hilbert space $\cH_{\psi}$ is defined to be\footnote{Actually, this quotient space is often incomplete (\ie it is not well-behaved when taking limits) for infinite dimensional $\cC$, so one needs to do further completion. We omit this detail.}
\beq\label{eq:GNS_Hilbert_space}
\cH_{\psi}:=\cC/N_{\psi}
\eeq 
where the equivalence relation is defined as $A\sim B$ if $A-B\in N_{\psi}$ (that is, elements in $N_{\psi}$ are identified as 0 in $\cH_\psi$). We denote the equivalence class of operator $A$ by $[A]$. The $\cH_\psi$ defined above is indeed a Hilbert space, where each vector in this space is an equivalence class of operators, the addition of vectors inherits from the addition of these operators, and the inner product is given by $\la[A],[B]\ra:=\psi(A^{*}B)$. One can check that these addition and inner product are well-defined, \ie their results do not depend on the choice of the representative.
More formally, we have defined a representation of $\cC$ on $\cH_{\psi}$ as follows
\beq\label{eq:GNS_homomorphism}
\pi_{\psi}(A)[B]:=[AB]
\eeq
The last ingredient in the GNS construction is a reference state $|\psi\ra$. Given an abstract state $\psi$, we say a state $|\psi\ra\in\cH_{\psi}$ is a reference state  representing $\psi$ if
\beq \label{eq: GNS reference state}
\psi(A)=\la\psi|\pi_{\psi}(A)|\psi\ra,\,\forall\,A\in\cC.
\eeq
We note that $|\psi\ra=[I]$ provides a possible choice satisfying this equation, where $[I]$ means the equivalence class of the identity operator in above quotient $\cH_{\psi}=\cC/N_{\psi}$. So such a representative does exist.

\begin{definition}\label{def:GNS_triple}
    The above constructed $\cH_{\psi}$ (Eq.~\eqref{eq:GNS_Hilbert_space}), $\pi_{\psi}$ (Eq.~\eqref{eq:GNS_homomorphism}) together with the state $|\psi\ra$ is called a GNS triple, denoted by $(\pi_{\psi},\cH_{\psi},|\psi\ra)$. In mathematics, this $|\psi\ra$ is called a cyclic vector.
\end{definition}
\nomenclature{$\pi_{\psi}$}{The representation homomorphism, def. \ref{def:GNS_triple}}
\nomenclature{$\cH_{\psi}$}{The GNS Hilbert space associated $\psi$, def. \ref{def:GNS_triple}}

However, it is important to note that the representative $|\psi\ra$ in the Hilbert space $\cH_\psi$ of an abstract state $\psi$ is not unique. For example, we can replace it with $|\Psi\ra=U|\psi\ra$ and work with
\beq
\pi(A):=U\pi_{\psi}(A)U^{-1}
\eeq
for any unitary operator $U\in\B(\cH_{\psi})$. We emphasize that this unitary operator $U$ is only defined on the particular GNS Hilbert space $\cH_{\psi}$. This also defines another GNS triple $(\pi,\cH_{\psi},|\Psi\ra)$. It turns out that given a state $\psi$, the GNS representation is unique up to unitary equivalence.

\begin{corollary}[Theorem 2.5.3 of Ref.~\cite{Naaijkens_2017}]\label{corollary:unique:GNS}
    GNS triple is unique in the following sense. Suppose that $\psi$ is a state of $\A^{ql}$, and $(\pi_{\psi},\cH_{\psi},|\psi\ra)$ and $(\pi,\cH_{\psi},|\Psi\ra)$ are two different GNS triples associated with $\psi$, then there exists a unitary operator $U\in\B(\cH_{\psi})$ such that
    \beq
    \begin{split}
        U|\psi\ra&=|\Psi\ra\\
        U\pi_{\psi}(A)U^{-1}&=\pi(A)
    \end{split}
    \eeq
\end{corollary}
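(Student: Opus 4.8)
The plan is to construct the unitary $U$ directly on the dense subspace obtained by acting with $\pi_\psi(\cC)$ on the cyclic vector, show it is an isometry there, and extend by continuity. First I would recall that both triples share the defining properties: $\pi_\psi$ and $\pi$ are $*$-representations on $\cH_\psi$, the vectors $|\psi\ra$ and $|\Psi\ra$ are cyclic (i.e. $\pi_\psi(\cC)|\psi\ra$ and $\pi(\cC)|\Psi\ra$ are dense in $\cH_\psi$), and both reproduce the state via Eq.~\eqref{eq: GNS reference state}, namely $\psi(A)=\la\psi|\pi_\psi(A)|\psi\ra=\la\Psi|\pi(A)|\Psi\ra$ for all $A\in\cC$. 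The candidate map is defined on the dense set $\pi_\psi(\cC)|\psi\ra$ by
\beq
U\big(\pi_\psi(A)|\psi\ra\big):=\pi(A)|\Psi\ra,\qquad A\in\cC.
\eeq

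The key steps, in order: (1) \emph{Well-definedness and isometry.} Compute, for any $A,B\in\cC$, the inner product $\la \pi(A)|\Psi\ra,\pi(B)|\Psi\ra\ra = \la\Psi|\pi(A^*B)|\Psi\ra=\psi(A^*B)$, and likewise $\la\pi_\psi(A)|\psi\ra,\pi_\psi(B)|\psi\ra\ra=\la\psi|\pi_\psi(A^*B)|\psi\ra=\psi(A^*B)$. Since the two expressions agree, $U$ preserves inner products; in particular if $\pi_\psi(A)|\psi\ra=\pi_\psi(B)|\psi\ra$ then $\pi_\psi((A-B)^*(A-B))|\psi\ra$ has zero norm, forcing $\pi(A)|\Psi\ra=\pi(B)|\Psi\ra$, so $U$ is well-defined. (2) \emph{Extension.} An isometry on a dense subspace extends uniquely to an isometry on all of $\cH_\psi$; call it $U$ again. (3) \emph{Surjectivity.} The range of $U$ contains the dense set $\pi(\cC)|\Psi\ra$ and is closed (being the isometric image of the complete space $\cH_\psi$), hence $U$ is onto, i.e. unitary. (4) \emph{Intertwining.} For $A,B\in\cC$, $U\pi_\psi(A)\big(\pi_\psi(B)|\psi\ra\big)=U\big(\pi_\psi(AB)|\psi\ra\big)=\pi(AB)|\Psi\ra=\pi(A)\pi(B)|\Psi\ra=\pi(A)U\big(\pi_\psi(B)|\psi\ra\big)$; since this holds on a dense set and all operators involved are bounded, $U\pi_\psi(A)U^{-1}=\pi(A)$. (5) \emph{Cyclic vector.} Taking $A=I$ in the definition gives $U|\psi\ra=U\big(\pi_\psi(I)|\psi\ra\big)=\pi(I)|\Psi\ra=|\Psi\ra$.

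The only genuinely delicate point is the density/cyclicity input in step (2)–(3): the argument relies on $|\psi\ra$ being cyclic for $\pi_\psi$ and $|\Psi\ra$ being cyclic for $\pi$, which is part of the definition of a GNS triple (definition \ref{def:GNS_triple}), together with the completeness of $\cH_\psi$ (the footnote after Eq.~\eqref{eq:GNS_Hilbert_space} noting one completes the quotient). Everything else is a routine $\epsilon$-free computation using only the $*$-homomorphism property of $\pi_\psi,\pi$ and the reproducing identity \eqref{eq: GNS reference state}. I expect no serious obstacle; if one wanted to be fully careful, the one thing to double-check is boundedness of $U$ before extending — but that is immediate since $U$ is an isometry on the dense domain. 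Since this is exactly Theorem 2.5.3 of Ref.~\cite{Naaijkens_2017}, one may alternatively simply cite it, but the self-contained argument above is short enough to include.
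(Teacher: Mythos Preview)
Your proposal is correct and is precisely the standard argument for uniqueness of the GNS triple up to unitary equivalence. The paper itself does not supply a proof of this corollary; it is stated with a citation to Theorem~2.5.3 of Ref.~\cite{Naaijkens_2017}, and the argument there is exactly the one you outline (define $U$ on the cyclic orbit, check it is a well-defined isometry via the reproducing identity, extend by density, and verify the intertwining relation and $U|\psi\ra=|\Psi\ra$). So there is nothing to contrast: your self-contained write-up simply fills in what the paper leaves to the reference.
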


One can think of the GNS construction in this case as a certain kind of {\it{state-operator correspondence}}, since all states in $\cH_{\psi}$ can be obtained by applying $\pi_{\psi}(A)$ to the reference state $|\psi\ra$ for some $A\in\cC$. Thus, each state corresponds to certain equivalence class of states in $\cC/N_{\psi}$.

Notice the above definition of the inner product in $\cH_\psi$ indicates what it means for two vectors in $\cH_\psi$ to be orthogonal. However, this orthogonality is in general unrelated to the orthogonality defined in definition \ref{def:orthogonal}.

\begin{remark}
    One may wonder, if we start with a pure state of $\A^{ql}$, as explained in the beginning of Sec. \ref{subsec:quasi_local}, there is no Hilbert space for an infinite spin chain, then what is the GNS Hilbert space on earth? As we will see in Appendix \ref{subsec:Superselection_sectors} (in particular, example \ref{example:Superselection_Ising}), the GNS Hilbert space only describes a superselection sector of our model.
\end{remark}

\begin{remark}\label{remark:coherent_superposition}
    So far we have only defined the convex combinations of abstract states (\ie normalized positive linear functionals), see Eq.~\eqref{eq:convex_combination}, which model classical mixtures, rather than quantum coherent superpositions in quantum mechanics. In the operator algebra formalism applied to infinite systems, we can only talk about coherent superpositions within a GNS Hilbert space, which is given by the usual vector addition.
\end{remark}

The definition of GNS triple (definition ~\ref{def:GNS_triple}) also applies to mixed states. To see the structure of the GNS Hilbert space $\cH_{\psi}$ for a mixed state $\psi$ intuitively, note that one can always decompose a mixed state into a convex linear combinations of pure states (see Eq.~\eqref{eq:decomposition_states} below) and then we build up the GNS Hilbert space for each pure state separately. The space $\cH_{\psi}$ is the direct sum of these Hilbert spaces. Below we discuss the detailed structure of the GNS Hilbert space for mixed states. Readers can skip this discussion for a first reading and move forward to example \ref{example:QM_GNS}.

To study the GNS Hilbert space of $\cH_{\psi}$ for a mixed state $\psi$ in more detail, we need a notion called independent decomposition. To begin, we need the notion of independence of positive linear functionals (see the paragraph below definition~\ref{def:State} for the definition of positive linear functionals).
\begin{definition}\label{def:independence}
    Given two positive linear functionals $f_{1},f_{2}:\A^{ql}\to\bbC$, if there exists no nonzero positive linear functional $f$ such that $f_{1}-f$ and $f_{2}-f$ are again positive, then we say $f_{1}$ and $f_{2}$ are independent.
\end{definition}
\begin{example}\label{example:indenpendent_states}
    By definition~\ref{def:alternative_pure}, any two different nonzero pure states $\psi_{1},\psi_{2}$ are independent. To wit, suppose they are not independent, then there is a nonzero positive linear functional $f$ in definition \ref{def:independence}. But if $\psi_{1}-f$ is positive, then $f=\lambda_{1}\psi_{1},0<\lambda_{1}\leqslant 1$ since $\psi_{1}$ is pure. Similarly, $f=\lambda_{2}\psi_{2},0<\lambda_{2}\leqslant 1$. Taking norms of these equations and noting that $||\psi_{1}||=||\psi_{2}||=1$ (see the third property of corollary \ref{corollary:properties_state}), we find $\lambda_{1}=\lambda_{2}$ and hence $\psi_{1}=\psi_{2}$, contradicting to our assumption.
    
    Moreover, orthogonal states (in the sense of definition~\ref{def:orthogonal}) are independent. To see it, let $\psi_{1}$ and $\psi_{2}$ be orthogonal states (not necessarily pure) and $f$ be a positive linear functional such that $\psi_{1}-f$ and $\psi_{2}-f$ are positive. By the third property of Corollary \ref{corollary:properties_state},
    \beq
    ||\psi_{1}-f||=\psi_{1}(I)-f(I)\leqslant ||\psi_{1}||=1
    \eeq
    Similarly, $||\psi_{2}-f||\leqslant ||\psi_{2}||=1$. On the other hand,
    \beq
    2=||\psi_{1}-\psi_{2}||\leqslant ||\psi_{1}-f||+||f-\psi_{2}||\leqslant ||\psi_{1}||+||\psi_{2}||=2
    \eeq
    Note the equality holds only if $||\psi_{1}-f||=||\psi_{2}-f||=1$. However, by the fourth property in Corollary \ref{corollary:properties_state},
    \beq
    1=||\psi_{1}||=||\psi_{1}-f||+||f||
    \eeq
    which implies $||f||=0$ and thus $f=0$. Therefore, $\psi_{1}$ and $\psi_{2}$ are independent.
    
\end{example}

\begin{example}\label{example:dependent_states}
    We also give an example where two linear functionals are \textbf{not} independent. By abusing terms, we say that two density matrices $\rho_{1},\rho_{2}$ are independent if the associated linear functionals $\tr(\rho_{1}\cdot),\tr(\rho_{2}\cdot)$ are independent. Consider the following two density matrices in a qubit system
    \beq
    \begin{split}
        \rho_{1}&=\frac{1}{2}(|\uparrow\ra\la\uparrow|+|\downarrow\ra\la\downarrow|),\\
        \rho_{2}&=|+\ra\la +|,
    \end{split}
    \eeq
    where $|\uparrow\ra$, $|\downarrow\ra$ are orthonormal basis on $\cH=\bbC^{2}$ and $|+\ra:=\frac{1}{\sqrt{2}}(|\uparrow\ra+|\downarrow\ra)$. Note
    \beq
    \rho_{1}-\frac{1}{2}|+\ra\la+|,\,\rho_{2}-\frac{1}{2}|+\ra\la+|
    \eeq
    are positive and $\frac{1}{2}|+\ra\la+|$ corresponds to a positive linear function. Hence by definition \ref{def:independence} $\rho_{1}$ and $\rho_{2}$ are not independent.
\end{example}

The importance of independent states lies in the following theorem.
\begin{theorem}[Lemma 4.1.19 of Ref.~\cite{bratteli2013operator1}]
    If $\psi_{0},\psi_{1}$ are independent states, and 
    \beq
    \psi:=t\psi_{1}+(1-t)\psi_{0},0<t<1
    \eeq
    then the GNS representation decomposes as
    \beq
    \pi_{\psi}=\pi_{\psi_{0}}\oplus\pi_{\psi_{1}}
    \eeq
    The converse is also true.
\end{theorem}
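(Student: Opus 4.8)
The plan is to exhibit $\pi_{\psi_0}$ and $\pi_{\psi_1}$ as the two summands of $\pi_\psi$ cut out by a pair of complementary projections in the commutant $\pi_\psi(\A^{ql})'$, and to recognize the independence hypothesis of Definition~\ref{def:independence} as precisely the statement that these commutant operators are \emph{projections} rather than merely positive contractions. The one analytic input I would record first is a Radon--Nikodym-type lemma: if $\rho$ is a positive linear functional on $\A^{ql}$ with $0\leqslant\rho\leqslant\psi$ (\ie $\psi-\rho$ is positive), the sesquilinear form $([A],[B])\mapsto\rho(A^{*}B)$ on $\cH_\psi$ is bounded --- this follows from Cauchy--Schwarz for $\rho$ together with $\rho(A^{*}A)\leqslant\psi(A^{*}A)=||[A]||^{2}$ --- hence it is implemented by a unique $T_\rho\in\B(\cH_\psi)$ with $0\leqslant T_\rho\leqslant I$; a short computation with $\pi_\psi(C)[B]=[CB]$ (Eq.~\eqref{eq:GNS_homomorphism}) shows $T_\rho\in\pi_\psi(\A^{ql})'$, and setting $A=I$ gives $\rho(A)=\la\psi|\pi_\psi(A)T_\rho|\psi\ra$ for all $A$. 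Applying this to $\rho=(1-t)\psi_0$ produces $T_0\in\pi_\psi(\A^{ql})'$, and $T_1:=I-T_0$ then implements $t\psi_1$.

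For the forward direction I would show that independence forces $T_0$ (equivalently $T_1$) to be a projection. If not, $S:=T_0-T_0^{2}=T_0(I-T_0)$ is a nonzero positive element of $\pi_\psi(\A^{ql})'$, and $f(A):=\la\psi|\pi_\psi(A)S|\psi\ra$ is a nonzero positive functional: positivity follows from $S\geqslant0$ and $S\in\pi_\psi(\A^{ql})'$, and $f\neq0$ because $f(I)=||S^{1/2}|\psi\ra||^{2}=0$ would give $S^{1/2}|\psi\ra=0$, hence $S^{1/2}=0$ by cyclicity of $|\psi\ra$, contradicting $S\neq0$. Since $T_0-S=T_0^{2}\geqslant0$ and $T_1-S=T_1^{2}\geqslant0$ we get $f\leqslant(1-t)\psi_0$ and $f\leqslant t\psi_1$, so $cf$ with $c:=\min\{t^{-1},(1-t)^{-1}\}$ is a nonzero positive functional majorized by both $\psi_0$ and $\psi_1$, contradicting Definition~\ref{def:independence}. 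Hence $T_0=E_0$ and $T_1=E_1$ are complementary projections in $\pi_\psi(\A^{ql})'$, so $\cH_\psi=E_0\cH_\psi\oplus E_1\cH_\psi$ reduces $\pi_\psi$ and $\pi_\psi=\pi_\psi|_{E_0\cH_\psi}\oplus\pi_\psi|_{E_1\cH_\psi}$. To name the pieces, $E_1|\psi\ra$ is cyclic for $\pi_\psi|_{E_1\cH_\psi}$ (since $\overline{\pi_\psi(\A^{ql})E_1|\psi\ra}=E_1\overline{\pi_\psi(\A^{ql})|\psi\ra}=E_1\cH_\psi$) and $\la\psi|E_1\pi_\psi(A)E_1|\psi\ra=\la\psi|\pi_\psi(A)E_1|\psi\ra=t\psi_1(A)$, so in particular $||E_1|\psi\ra||^{2}=t$; thus $(\pi_\psi|_{E_1\cH_\psi},\,E_1\cH_\psi,\,t^{-1/2}E_1|\psi\ra)$ is a GNS triple for $\psi_1$, hence by uniqueness of the GNS construction (Corollary~\ref{corollary:unique:GNS}) unitarily equivalent to $\pi_{\psi_1}$, and symmetrically $\pi_\psi|_{E_0\cH_\psi}\cong\pi_{\psi_0}$.

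For the converse I would read $\pi_\psi=\pi_{\psi_0}\oplus\pi_{\psi_1}$ as asserting that the direct sum is realized inside $\cH_\psi$ compatibly with the canonical cyclic vector, \ie that there is a projection $E_1\in\pi_\psi(\A^{ql})'$ with $\la\psi|\pi_\psi(A)E_1|\psi\ra=t\psi_1(A)$ and $\la\psi|\pi_\psi(A)(I-E_1)|\psi\ra=(1-t)\psi_0(A)$; uniqueness of the implementing operator then identifies $T_1=E_1$, a projection. Suppose $\psi_0,\psi_1$ were not independent, witnessed by a nonzero positive $f\leqslant\psi_0,\psi_1$. Then $tf\leqslant\psi$ and $(1-t)f\leqslant\psi$, implemented by $tR$ and $(1-t)R$ for a common $R\in\pi_\psi(\A^{ql})'$, $R\geqslant0$ (again by uniqueness, since $t^{-1}R_1$ and $(1-t)^{-1}R_0$ both implement $f$). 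The sharper bounds $tf\leqslant t\psi_1$ and $(1-t)f\leqslant(1-t)\psi_0$, tested against $\pi_\psi(A)|\psi\ra$ and extended by density, upgrade to $tR\leqslant E_1$ and $(1-t)R\leqslant E_0:=I-E_1$. Compressing the first by $E_0$ gives $E_0RE_0=0$, whence $RE_0=0$ and $R=E_1RE_1$ by self-adjointness; compressing the second by $E_1$ gives $E_1RE_1=0$. Hence $R=0$, so $f=0$, a contradiction, and $\psi_0,\psi_1$ are independent.

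I expect the real obstacle to be the Radon--Nikodym lemma --- the only genuinely analytic ingredient, everything downstream being algebra in the commutant --- together with pinning down the precise sense in which $\pi_\psi$ ``equals'' $\pi_{\psi_0}\oplus\pi_{\psi_1}$ that the converse requires, namely that the decomposition is implemented by a commutant projection whose associated vector functional is $t\psi_1$. If one prefers not to invoke Radon--Nikodym, an alternative route is to define directly the isometric intertwiner $V:\cH_\psi\to\cH_{\psi_0}\oplus\cH_{\psi_1}$, $[A]\mapsto\sqrt{1-t}\,[A]_{\psi_0}\oplus\sqrt{t}\,[A]_{\psi_1}$ (well defined since $\psi(A^{*}A)=0$ forces $\psi_0(A^{*}A)=\psi_1(A^{*}A)=0$ by positivity), verify that it intertwines $\pi_\psi$ with $\pi_{\psi_0}\oplus\pi_{\psi_1}$, and then show its range exhausts $\cH_{\psi_0}\oplus\cH_{\psi_1}$ if and only if $\psi_0,\psi_1$ are independent --- a failure of surjectivity yields a vector orthogonal to the range, from which a common majorized functional is extracted exactly as in the commutant argument above.
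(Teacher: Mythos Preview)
The paper does not supply its own proof of this statement: it is quoted verbatim as Lemma~4.1.19 of Bratteli--Robinson and left uncited beyond the reference. So there is no in-paper argument to compare against. That said, your proposal is correct and is essentially the classical proof one finds in Bratteli--Robinson: realize the majorized functionals $(1-t)\psi_0$ and $t\psi_1$ by commutant operators $T_0,T_1=I-T_0$ via the Radon--Nikodym lemma, then show that independence is equivalent to $T_0$ being a projection. Your treatment of both directions is clean, and the cyclicity arguments (nonvanishing of $f$, uniqueness of the implementing operator, identification of the summands with $\pi_{\psi_0},\pi_{\psi_1}$) are all sound.

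One small remark on context: the Radon--Nikodym step you flag as ``the one analytic input'' is exactly the construction the paper itself carries out later, in the proof of Theorem~\ref{Thm:pure_gs}, via Corollary~\ref{corollary:Riesz} (Riesz representation applied to the sesquilinear form $([B],[A])\mapsto\rho(B^{*}A)$). So within this paper the ingredient is already available and your invocation of it is consistent with the surrounding exposition. Your alternative route through the explicit intertwiner $V:[A]\mapsto\sqrt{1-t}\,[A]_{\psi_0}\oplus\sqrt{t}\,[A]_{\psi_1}$ is also standard and would work equally well; the commutant-projection version you chose is slightly more economical because it avoids having to analyze the range of $V$ directly.
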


So the structure of the GNS representation of a mixed state $\psi$ is clear if we can decompose it into independent states. Such a decomposition is indeed possible due to the following theorem.

\begin{theorem}[Theorem 4.4.9 of Ref.~\cite{bratteli2013operator1}]\label{thm:independent_decomposition}
    Any state $\psi$ admits an independent decomposition\footnote{It can happen that $\psi$ is decomposed into uncountably many pure states. In that case, the direct sum on the right-hand side of Eq.~\eqref{eq:decomposition_states} is replaced by an integral on space of states while the direct sum in Eq.~\eqref{eq:mixed_state_GNS_representation} is replaced by direct integrals (see Ref.~\cite{bratteli2013operator1}). The discrete part like Eq.~\eqref{eq:decomposition_states} and Eq.~\eqref{eq:mixed_state_GNS_representation} are called atomic in Ref.~\cite{kadison1997fundamentals2}.}
    \beq\label{eq:decomposition_states}
    \psi=\sum_{i\in I}\lambda_{i}\psi_{i}
    \eeq
    where $I$ is an index set, $\psi_{i}$ is pure for each $i\in I$ and $0<\lambda_{i}\leqslant1$ with $\sum_{i\in I}\lambda_{i}=1$. This decomposition is independent in the sense that, for any subset $J\subset I$, $\psi_{J}:=\sum_{i\in J}\lambda_{i}\psi_{i}$ is independent of $\psi_{J^{c}}=\sum_{i\in I\setminus J}\lambda_{i}\psi_{i}$. Besides, the GNS representation $\pi_{\psi}$ is also decomposed as
    \beq\label{eq:mixed_state_GNS_representation}
    \pi_{\psi}=\bigoplus_{i\in I}\pi_{\psi_{i}}
    \eeq
    That means, the representation matrices $\pi_{\psi}(A)$ can be simultaneously block-diagonalized for all $A\in\A^{ql}$.
\end{theorem}

\begin{proposition}\label{prop:irrep_pure}
    One says that a GNS representation $\pi_{\psi}$ is irreducible if Eq. \eqref{eq:mixed_state_GNS_representation} has only 1 direct summand, \ie the state $\psi$ is pure.
\end{proposition}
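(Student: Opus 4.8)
The plan is to prove that the property displayed --- the decomposition \eqref{eq:mixed_state_GNS_representation} of $\pi_\psi$ has a single direct summand --- is equivalent to $\psi$ being pure. Since the independent decomposition theorem quoted just above writes $\pi_\psi=\bigoplus_{i\in I}\pi_{\psi_i}$ with each $\psi_i$ pure and each $\pi_{\psi_i}$ nonzero (its cyclic vector has norm $\psi_i(I)=1$), ``one direct summand'' means exactly $|I|=1$. The implication $|I|=1\Rightarrow\psi$ pure is immediate, as then $\psi=\psi_1$, which the theorem asserts is pure, so the content is the converse.

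For the converse I would suppose $\psi$ is pure and, for contradiction, that $|I|\geqslant 2$. Fix $i_0\in I$, put $t:=\lambda_{i_0}\in(0,1)$ and $\chi:=(1-t)^{-1}\sum_{i\neq i_0}\lambda_i\psi_i$, which is a state, so $\psi=t\psi_{i_0}+(1-t)\chi$. If $\chi\neq\psi_{i_0}$ this is a nontrivial convex combination of two distinct states, contradicting purity (definition~\ref{def:mixed_pure}). If $\chi=\psi_{i_0}$, then a convex combination of pure states equals the pure state $\psi_{i_0}$, forcing $\psi_i=\psi_{i_0}$ for all $i\neq i_0$; but then, with $J=\{i_0\}$, the positive functionals $\psi_J=\lambda_{i_0}\psi_{i_0}$ and $\psi_{J^c}=\big(\sum_{i\neq i_0}\lambda_i\big)\psi_{i_0}$ are proportional and therefore both majorize the nonzero positive functional $\min\{\lambda_{i_0},\sum_{i\neq i_0}\lambda_i\}\,\psi_{i_0}$, contradicting the independence of the decomposition (definition~\ref{def:independence}). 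Either way we reach a contradiction, so $|I|=1$. Read in reverse, the same bookkeeping shows that a mixed $\psi$ has an independent decomposition with $|I|\geqslant 2$, which one can group along a singleton $J$ and renormalize into a splitting $\psi=s\phi_1+(1-s)\phi_0$ with $\phi_0,\phi_1$ independent states, whence $\pi_\psi=\pi_{\phi_0}\oplus\pi_{\phi_1}$ splits nontrivially by Lemma~4.1.19 of Ref.~\cite{bratteli2013operator1}.

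The one point needing care is that grouping an independent decomposition and rescaling the blocks preserves independence; this is a short check from definition~\ref{def:independence}, since a nonzero positive functional majorized by both rescaled blocks would, after undoing the scaling, be majorized by both original blocks. A more self-contained route --- resting only on definition~\ref{def:alternative_pure} rather than on the quoted decomposition theorems --- is to prove instead the standard characterization ``$\pi_\psi$ has trivial commutant $\iff\psi$ pure'': given $T$ in the commutant of $\pi_\psi(\A^{ql})$, reduce to $T=T^\dagger$ with $0\leqslant T\leqslant I$; set $\rho(A):=\la\psi|T\,\pi_\psi(A)|\psi\ra$ and verify $\rho(A^\dagger A)=\la\pi_\psi(A)\psi|T|\pi_\psi(A)\psi\ra\geqslant0$ and $(\psi-\rho)(A^\dagger A)=\la\pi_\psi(A)\psi|(I-T)|\pi_\psi(A)\psi\ra\geqslant0$, so $\rho$ is a positive functional majorized by $\psi$; then definition~\ref{def:alternative_pure} gives $\rho=s\psi$, hence $(T-sI)|\psi\ra\perp\pi_\psi(A)|\psi\ra$ for all $A$, so $(T-sI)|\psi\ra=0$ by cyclicity, and finally $(T-sI)\pi_\psi(A)|\psi\ra=\pi_\psi(A)(T-sI)|\psi\ra=0$ forces $T=sI$, again by cyclicity. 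The genuinely technical ingredient, and the main obstacle if one wants the full equivalence this way, is the Radon--Nikodym-type converse that every positive functional majorized by $\psi$ is of the form $\la\psi|T\,\pi_\psi(\cdot)|\psi\ra$ for some $T$ in the commutant; the density and cyclicity steps are routine once that is in hand.
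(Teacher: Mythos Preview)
The paper provides no proof for this proposition: it is phrased as a \emph{definition} (``one says that $\pi_\psi$ is irreducible if \ldots''), with the clause ``\ie the state $\psi$ is pure'' treated as an immediate consequence of the preceding independent decomposition theorem. Your argument is correct and supplies the justification for that ``\ie'' which the paper omits.

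That said, your case split is more elaborate than needed. Once you write $\psi=t\psi_{i_0}+(1-t)\chi$ with $0<t<1$, purity of $\psi$ (in the sense of definition~\ref{def:mixed_pure}) forces $\psi_{i_0}=\chi=\psi$ directly; there is no separate ``$\chi\neq\psi_{i_0}$'' branch to analyze. From $\chi=\psi$ and the same argument applied to the remaining summands you get $\psi_i=\psi$ for all $i$, and then your independence contradiction via definition~\ref{def:independence} finishes it cleanly. The second, commutant-based route you sketch is exactly the content of Schur's lemma (lemma~\ref{lemma:Schur}), which the paper states immediately after this proposition---also without proof---so your outline is the standard argument and is correct, including your identification of the Radon--Nikodym-type converse as the only nontrivial ingredient.
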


The above proposition gives another equivalence condition for a state to be pure, \ie its GNS representation is irreducible.

Given the notion of irreducibility, the usual Schur's lemma follows.
\begin{lemma}[Schur]\label{lemma:Schur}
    For a GNS representation $(\pi_{\psi},\cH_{\psi},|\psi\ra)$ of a $C^*$-algebra $\cC$, it is irreducible if and only if, for each $T\in\B(\cH_{\psi})$ commuting with all $\pi_{\psi}(\cC)$, we have $T=\lambda I$ for some $\lambda\in \bbC$.
\end{lemma}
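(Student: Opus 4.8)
The plan is to prove both implications through the classical Radon--Nikodym-type correspondence between positive linear functionals $\rho$ majorized by $\psi$ (i.e. with $\psi-\rho$ positive) and positive operators $T\in\B(\cH_{\psi})$ with $0\leqslant T\leqslant I$ lying in the commutant $\pi_{\psi}(\cC)':=\{T:T\pi_{\psi}(A)=\pi_{\psi}(A)T\ \forall A\in\cC\}$, combined with the alternative characterization of purity in Definition~\ref{def:alternative_pure} and the Proposition above, which identifies ``$\pi_{\psi}$ irreducible'' with ``$\psi$ pure.'' Throughout I would use two routine facts: that $\pi_{\psi}$ is a $*$-representation, $\pi_{\psi}(A^{*})=\pi_{\psi}(A)^{*}$ — immediate from $\pi_{\psi}(A)[B]=[AB]$ and $\la[A],[B]\ra=\psi(A^{*}B)$ — so that $\pi_{\psi}(\cC)'$ is a $*$-closed unital subalgebra of $\B(\cH_{\psi})$; and that for self-adjoint $T\in\pi_{\psi}(\cC)'$ every element of its continuous functional calculus, in particular $\sqrt{T}$ (when $T\geqslant 0$) and $\sqrt{I-T}$ (when $0\leqslant T\leqslant I$), again lies in $\pi_{\psi}(\cC)'$, being a norm limit of polynomials in $T$.

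For the correspondence: given $T\in\pi_{\psi}(\cC)'$ with $0\leqslant T\leqslant I$, I would set $\rho(A):=\la\psi|\pi_{\psi}(A)\,T|\psi\ra$ and check positivity and majorization by commuting $S:=\sqrt{T}$ (resp. $\sqrt{I-T}$) through $\pi_{\psi}(A)$, giving $\rho(A^{*}A)=\|S\,\pi_{\psi}(A)|\psi\ra\|^{2}\geqslant 0$ and $(\psi-\rho)(A^{*}A)\geqslant 0$. Conversely, given $\rho$ positive and majorized by $\psi$, I would define a sesquilinear form on the dense subspace $\{[A]:A\in\cC\}\subset\cH_{\psi}$ by $\la[A],[B]\ra_{\rho}:=\rho(A^{*}B)$; well-definedness modulo $N_{\psi}$ and boundedness both follow from $\rho(C^{*}C)\leqslant\psi(C^{*}C)$ and the Cauchy--Schwarz inequality of Corollary~\ref{corollary:properties_state} (applied to $\rho$ after normalizing it to a state, or trivially if $\rho(I)=\|\rho\|=0$): if $A-A'\in N_{\psi}$ then $\rho((A-A')^{*}(A-A'))=0$, hence $\rho((A-A')^{*}B)=0$, and $|\rho(A^{*}B)|\leqslant\|[A]\|\,\|[B]\|$. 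The form then extends to all of $\cH_{\psi}$ and is represented by a unique $T$ with $0\leqslant T\leqslant I$; the computation $\la[A]|T\pi_{\psi}(C)|[B]\ra=\rho(A^{*}CB)=\la[C^{*}A]|T|[B]\ra=\la[A]|\pi_{\psi}(C)T|[B]\ra$ shows $T\in\pi_{\psi}(\cC)'$, and $\rho(A)=\la[I]|T|[A]\ra=\la\psi|\pi_{\psi}(A)T|\psi\ra$.

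Granting this, both directions close quickly. If $\pi_{\psi}(\cC)'=\bbC I$, then any $\rho$ majorized by $\psi$ gives $T=tI$ with $t\in[0,1]$, hence $\rho=t\psi$, so $\psi$ is pure by Definition~\ref{def:alternative_pure}, i.e. $\pi_{\psi}$ is irreducible. Conversely, if $\pi_{\psi}$ is irreducible, hence $\psi$ pure, I would take $T\in\pi_{\psi}(\cC)'$, reduce to the case $0\leqslant T\leqslant I$ by splitting off the self-adjoint parts $(T+T^{*})/2$ and $(T-T^{*})/(2i)$ and affinely rescaling (all staying in the commutant), obtain from the correspondence a $\rho$ majorized by $\psi$, conclude $\rho=t\psi$ by purity, so $\la\psi|\pi_{\psi}(A)(T-tI)|\psi\ra=0$ for all $A$, whence $(T-tI)|\psi\ra=0$ by cyclicity of $|\psi\ra$, and finally $(T-tI)\pi_{\psi}(A)|\psi\ra=\pi_{\psi}(A)(T-tI)|\psi\ra=0$ on the dense set $\pi_{\psi}(\cC)|\psi\ra$, so $T=tI$. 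The hard part will be the Radon--Nikodym step: verifying that $\la\cdot,\cdot\ra_{\rho}$ really descends to a well-defined bounded form on $\cC/N_{\psi}$ — this is where $\psi-\rho\geqslant 0$ enters essentially — and that the representing operator lies in the commutant rather than merely in $\B(\cH_{\psi})$; everything else is bookkeeping. Alternatively one could cite the standard equivalence between topological irreducibility and triviality of the commutant (e.g. in Ref.~\cite{bratteli2013operator1}) together with Lemma~\ref{lemma:equivalence_pure}, but the argument above is short enough to include in full.
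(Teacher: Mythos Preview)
Your proof is correct and complete; the Radon--Nikodym correspondence between positive functionals majorized by $\psi$ and positive contractions in the commutant is the standard route, and you have handled the well-definedness of the sesquilinear form on $\cC/N_{\psi}$ and the commutant verification properly.

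However, the paper does not actually give a proof of this lemma: it is stated as a named classical result and immediately followed by examples, with no proof environment. So there is nothing in the paper to compare against directly. It is worth noting, though, that the very mechanism you use here --- building a bounded sesquilinear form $(|B\ra,|A\ra)\mapsto\rho(B^{\dagger}A)$, invoking Corollary~\ref{corollary:Riesz} to extract an operator $T_{\rho}$, and then checking $T_{\rho}\in\pi_{\psi}(\cC)'$ via $\rho(A^{\dagger}CB)=\rho((C^{\dagger}A)^{\dagger}B)$ --- is exactly what the paper deploys later in the proof of Theorem~\ref{Thm:pure_gs}. In that sense your argument is fully in the spirit of the paper's techniques, and could serve as a self-contained proof were one desired; the paper itself simply defers to the literature (e.g.\ Ref.~\cite{bratteli2013operator1}).
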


Now we are ready to give some examples of the GNS construction.

\begin{example}\label{example:QM_GNS}
    Consider a qubit which lives in $\cH=\bbC^{2}$. The operator algebra is $M_{2}(\bbC)$, the 2-by-2 matrix algebra over $\bbC$. We begin with the following state,
    \beq
    \psi(A)=A_{11}
    \eeq
    where $A_{11}$ is the first matrix element of $A$. We denote a basis of $M_{2}(\bbC)$ as
    \beq
    \begin{split}
        E_{11}=\begin{pmatrix}
        1&&0\\
        0&&0
        \end{pmatrix}, E_{12}=\begin{pmatrix}
        0&&1\\
        0&&0
        \end{pmatrix}\\
        E_{21}=\begin{pmatrix}
        0&&0\\
        1&&0
        \end{pmatrix},E_{22}=\begin{pmatrix}
        0&&0\\
        0&&1
        \end{pmatrix}
    \end{split}
    \eeq
    Then it is easy to check
    \beq
    \begin{split}
        \psi(E_{11}^{\dagger}E_{11})=1,\psi(E_{12}^{\dagger}E_{12})=0\\
        \psi(E_{21}^{\dagger}E_{21})=1,\psi(E_{22}^{\dagger}E_{22})=0
    \end{split}
    \eeq
    Hence $N_{\psi}=\rspan\{E_{12},E_{22}\}$, or equivalently $E_{12},E_{22}$ annihilate $\psi$. Thus, reassuringly, $\cH_{\psi}=M_{2}(\bbC)/N_{\psi}\simeq \bbC^{2}$ is exactly the space we start with! Besides, under the basis $[E_{11}],[E_{21}]$ (or $|E_{11}\ra,|E_{21}\ra$ if one prefers) the representation $\pi_{\psi}$ is given by
    \beq
    \begin{split}
        \pi_{\psi}(E_{11})=\begin{pmatrix}
        1&&0\\
        0&&0
        \end{pmatrix}, \pi_{\psi}(E_{12})=\begin{pmatrix}
        0&&1\\
        0&&0
        \end{pmatrix}\\
        \pi_{\psi}(E_{21})=\begin{pmatrix}
        0&&0\\
        1&&0
        \end{pmatrix},\pi_{\psi}(E_{22})=\begin{pmatrix}
        0&&0\\
        0&&1
        \end{pmatrix}
    \end{split}
    \eeq
    That is, these operators are represented in the {\it{standard}} way.
    
    On the other hand, if we consider 
    \beq
    \rho(A)=tA_{11}+(1-t)A_{22}
    \eeq
    for some $0<t<1$. Then one can calculate
    \beq
    \begin{split}
        \rho(E_{11}^{\dagger}E_{11})=t,\rho(E_{12}^{\dagger}E_{12})=1-t\\
        \rho(E_{21}^{\dagger}E_{21})=t,\rho(E_{22}^{\dagger}E_{22})=1-t
    \end{split}
    \eeq
    Thus $N_{\rho}=0$ and $\cH_{\rho}\simeq M_{2}(\bbC)/N_\rho\simeq \bbC^{4} \simeq \bbC^{2}\oplus \bbC^{2}.$

    However, consider the following state
    \beq
    \omega=\tr(\Omega\cdot)
    \eeq
    where $\Omega=\frac{1}{3}(|\uparrow\ra\la\uparrow|+|\downarrow\ra\la\downarrow|+|+\ra\la +|)$ is the density matrix representing $\omega$ (see example \ref{example:dependent_states}). This state contains 3 pure states in the decomposition.  Naively we will get 3 copies of $\bbC^{2}$, so do we have $\cH_{\omega}=\bbC^{6}$? However, we cannot get a 6-dimensional Hilbert space since $\B(\cH)$ is 4-dimensional after all. What happens is that this decomposition of $\Omega$ above is not independent, as is checked in example \ref{example:dependent_states}, so we cannot conclude that $\cH_\omega=\bbC^6$.
     
\end{example}

Another example will be the Ising chain.
\begin{example}\label{example:Ising}
    In this example, we have an on-site Hilbert space $\cH_{k}\simeq \bbC^{2}$ for each site $k\in\z$, and $X,Y,Z$ will be the usual Pauli operators.
    We consider the all-spin-up state $\psi_{+}$. More suggestively, we write it as $|\uparrow\ra$.

    Note that $\psi_{+}((Z_{k}-1)^{\dagger}(Z_{k}-1))=0$, and it is easy to see the left ideal $N_{\psi_{+}}$ is generated by $(Z_{k}-1)$ for all $k\in\z$, which means if $A\in N_{\psi_{+}}$, there exists $B\in \A^{ql}$ such that
    \beq
    A=B(Z_{k}-1)
    \eeq
    for some $k\in\z$. We claim that
    \beq
    \cH_{\psi_{+}}\simeq\A^{ql}/N_{\psi_{+}}\simeq \mathrm{span}\{\prod_{k\in I}[X_{k}]|I\subset \z,|I|<\infty\}
    \eeq
    This is because any local operator can be written as a linear combination of Pauli basis
    \beq
    \prod_{i\in I}X_{i}\prod_{j\in J}Z_{j},\,|I|,|J|<\infty
    \eeq
    (there is no Pauli $Y_{k}$ operator above since it is not independent, \ie $Y_{k}=-i X_{k}Z_{k}$). The quotient procedure amounts to regarding $Z_{j}=1$ for all $j\in\z$. So the claim above follows. Having $\cH_{\psi_+}$ in hand, it is straightforward to obtain the representation of operators, $\pi_{+}$.
    
    We give a more intuitive explanation of the this example below, by explaining how applying operators to $|\uparrow\ra$ results in other states. More intuitively,
    \beq
    \pi_{+}(Z_{k})|\uparrow\ra=|\uparrow\ra
    \eeq
    for any $k\in \z$. On the other hand,
    \beq
    \pi_{+}(X_{k})|\uparrow\ra\not=|\uparrow\ra
    \eeq
    which can be seen by applying $Z_{k}$ at site $k$. The action of $Y_{k}$ is not independent since $Y_{k}=iZ_{k}X_{k}$, so we ignore it.
    Thus states in GNS Hilbert space $\cH_{+}$ of $|\uparrow\ra$ can be identified as
    \begin{equation}
        \prod_{j\in J}\pi_{+}(X_{j})|\uparrow\ra
    \end{equation}
    where $J$ is a \textbf{finite} subset of $\z$. That is, the states in $\cH_{+}$ are configurations where almost all spins are up but only finitely many spins are flipped\footnote{As noted before, one needs completion to make it really a Hilbert space.}. Also, it is easy to check that this representation is irreducible, hence $\psi_{+}$ is pure as expected.

    Similarly, for $\psi_{-}$, the all-spin-down state (also denoted as $|\downarrow\ra$), the GNS Hilbert space $\cH_{-}$ in this case is similar to $\cH_{+}$, but most of its spins are down while only finitely many of them are up. The associated representation is denoted by $\pi_{-}$.
     
\end{example}

\subsection{Superselection sectors}\label{subsec:Superselection_sectors}

In this subsection, we discuss an important application of the concept of GNS construction discussed in the previous subsection.

One drastic difference between infinite systems and finite systems is the notion of superselection sectors in infinite systems. Physically speaking, two states in an infinite system fall into different superselection sectors if and only if they cannot be connected by local operators (see Sec. 7.1 of Ref.~\cite{Halvorson2006AQFT} for a mathematical definition of superselection sectors).
 
For example, different ground states related to spontaneous symmetry breaking fall into different superselection sectors, and topologically degenerate ground states in the toric code are also in different superselection sectors.

A related notion in the GNS representations is as follows.

\begin{definition}\label{def:inequivalence_GNS}
    For $\A^{ql}$, the GNS representations of two different states $\psi$ and $\sigma$ are said to be inequivalent if there is no unitary map $U:\cH_{\rho}\to\cH_{\psi}$ such that $U^{\dagger}\pi_\psi(A)U=\pi_{\rho}(A)$. Otherwise, they are said to be equivalent.
\end{definition}
 
As we will see later, the (in)equivalence of irreducible GNS representations is closely related to superselection sectors. For demonstration, we consider the following example.

\begin{example}[Sec. 3.1 of Ref. \cite{Naaijkens_2017}]\label{example:Superselection_Ising}
    Let us continue our Ising model example \ref{example:Ising}. Now we show that $\pi_{+}$ and $\pi_{-}$ defined earlier are inequivalent in the sense of definition \ref{def:inequivalence_GNS}.
    
    Suppose they are equivalent, \ie there is a unitary operator $U:\cH_{+}\to \cH_{-}$ such that $\pi_{+}(A)=U^{\dagger}\pi_{-}(A)U$ for all $A\in\A^{ql}$. We define the polarization operator
    \beq
    m_{N}:=\frac{1}{2N+1}\sum_{k=-N}^{N}Z_{k}
    \eeq
    where $N$ is arbitrarily big but stays finite. We \textbf{do not} talk about the limit of $m_{N}$ as $N\to \infty$. Note that
    \beq
    \la\uparrow|\pi_{+}(m_{N})|\uparrow\ra=1
    \eeq
    On the other hand,
    \beq
    \la\uparrow|\pi_{+}(m_{N})|\uparrow\ra=\la\uparrow|U^{\dagger}\pi_{-}(m_{N})U|\uparrow\ra
    \eeq
    Note $U|\uparrow\ra\in\cH_{-}$, hence it has almost all spins being down, and only finitely many of them are up. Therefore, if $N$ is sufficiently large, one has
    \beq
    \la\uparrow|U^{\dagger}\pi_{-}(m_{N})U|\uparrow\ra\stackrel{N\gg 1}{\longrightarrow} -1
    \eeq
    Thus one obtains a contradiction, which means there cannot be such a unitary operator $U$.
     
    \end{example}
    
    The lesson from the above example is that $\pi_{+}$ is inequivalent to $\pi_{-}$ because $|\uparrow\ra$ cannot be transformed into $|\downarrow\ra$ by local operators only. Generalizing this example, one can obtain the following theorem.
    \begin{theorem}[Theorem 2.6.1 of Ref. \cite{Naaijkens_2017}]
        Let $\cC$ be a $C^*$-algebra and $\psi_{1},\psi_{2}$ be two pure states, then their GNS representations $\pi_{\psi_{1}}$ and $\pi_{\psi_{2}}$ are equivalent if and only if $\psi_{1}$ and $\psi_{2}$ fall into the same superselection sector.
    \end{theorem}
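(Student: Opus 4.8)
The plan is to show both implications by relating unitary equivalence of the (irreducible) GNS representations to the statement that $\psi_{2}$ lies in the \emph{folium} of $\pi_{\psi_{1}}$, which for $\cC=\A^{ql}$ is the precise meaning of ``$\psi_{1}$ and $\psi_{2}$ are connected by (quasi-)local operators'': namely, $\psi_{2}$ is a weak-$*$ limit of states of the form $A\mapsto\psi_{1}(B_{n}^{*}AB_{n})$ with $B_{n}\in\A^{l}$ normalized so that $\psi_{1}(B_{n}^{*}B_{n})=1$. Throughout we use that $\psi_{1}$ and $\psi_{2}$ are pure, so by the Proposition on irreducibility (equivalently by Schur, Lemma~\ref{lemma:Schur}) the representations $\pi_{\psi_{1}},\pi_{\psi_{2}}$ are irreducible, and we use the uniqueness of the GNS triple, Corollary~\ref{corollary:unique:GNS}.

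\emph{Equivalent GNS representations $\Rightarrow$ same sector.} Suppose there is a unitary $U:\cH_{\psi_{2}}\to\cH_{\psi_{1}}$ with $U\pi_{\psi_{2}}(A)U^{-1}=\pi_{\psi_{1}}(A)$. Set $|\xi\ra:=U|\psi_{2}\ra$, a unit vector of $\cH_{\psi_{1}}$; then $\psi_{2}(A)=\la\psi_{2}|\pi_{\psi_{2}}(A)|\psi_{2}\ra=\la\xi|\pi_{\psi_{1}}(A)|\xi\ra$ for all $A$. Since $|\psi_{1}\ra$ is cyclic and $\A^{l}$ is dense in $\cC=\A^{ql}$, the vectors $\pi_{\psi_{1}}(B)|\psi_{1}\ra$ with $B\in\A^{l}$ are dense in $\cH_{\psi_{1}}$, so choose $B_{n}\in\A^{l}$ with $\pi_{\psi_{1}}(B_{n})|\psi_{1}\ra\to|\xi\ra$ and (for $n$ large, since $\psi_{1}(B_{n}^{*}B_{n})\to 1\neq 0$) normalize to $\psi_{1}(B_{n}^{*}B_{n})=1$. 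Then
\[
\psi_{1}(B_{n}^{*}AB_{n})=\la\pi_{\psi_{1}}(B_{n})\psi_{1}|\pi_{\psi_{1}}(A)|\pi_{\psi_{1}}(B_{n})\psi_{1}\ra\longrightarrow\la\xi|\pi_{\psi_{1}}(A)|\xi\ra=\psi_{2}(A)
\]
for every $A$, so $\psi_{2}$ is a limit of local excitations of $\psi_{1}$, i.e.\ they lie in the same superselection sector.

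\emph{Same sector $\Rightarrow$ equivalent GNS representations.} Suppose $\psi_{2}$ lies in the folium of $\pi_{\psi_{1}}$. The key point is that a \emph{pure} state in the folium of an irreducible representation is actually a vector state of it: there is a unit vector $|\xi\ra\in\cH_{\psi_{1}}$ with $\psi_{2}(A)=\la\xi|\pi_{\psi_{1}}(A)|\xi\ra$. Granting this, the closed subspace spanned by $\{\pi_{\psi_{1}}(A)|\xi\ra:A\in\cC\}$ is nonzero and $\pi_{\psi_{1}}$-invariant, hence all of $\cH_{\psi_{1}}$ by irreducibility; thus $|\xi\ra$ is cyclic and $(\pi_{\psi_{1}},\cH_{\psi_{1}},|\xi\ra)$ is a GNS triple for $\psi_{2}$. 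By Corollary~\ref{corollary:unique:GNS} there is a unitary $V$ with $V|\psi_{2}\ra=|\xi\ra$ and $V\pi_{\psi_{2}}(A)V^{-1}=\pi_{\psi_{1}}(A)$, so $\pi_{\psi_{1}}\cong\pi_{\psi_{2}}$.

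\emph{Main obstacle.} The delicate step is the claim in the last paragraph that a pure state in the folium of an irreducible representation is a vector state of it. Writing $\psi_{2}$ as a limit of local excitations $\psi_{1}(B_{n}^{*}\cdot B_{n})$, the unit vectors $v_{n}:=\pi_{\psi_{1}}(B_{n})|\psi_{1}\ra$ have a weak cluster point $|\xi\ra$ with $\||\xi\ra\|\le 1$, but weak convergence $v_{n}\rightharpoonup|\xi\ra$ does not by itself yield $\la v_{n}|\pi_{\psi_{1}}(A)|v_{n}\ra\to\la\xi|\pi_{\psi_{1}}(A)|\xi\ra$. I would close this gap using purity: show that $A\mapsto\la\xi|\pi_{\psi_{1}}(A)|\xi\ra$ is a positive linear functional majorized by $\psi_{2}$ (up to normalization), and then invoke the alternative characterization of pure states, Definition~\ref{def:alternative_pure} together with Lemma~\ref{lemma:equivalence_pure}, to conclude it is a multiple of $\psi_{2}$ and that $\||\xi\ra\|=1$. (Alternatively one appeals directly to the structure of the folium of an irreducible representation, which is the route taken in Theorem~2.6.1 of Ref.~\cite{Naaijkens_2017} and ultimately rests on Kadison's transitivity theorem; since the purity machinery is already developed in the excerpt, I would prefer the first route.)
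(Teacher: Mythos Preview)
The paper does not actually prove this theorem; it merely states it and cites Naaijkens. So there is no paper proof to compare against, and your proposal must be evaluated on its own merits.

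Your first direction (equivalence of GNS representations $\Rightarrow$ same sector) is fine. The second direction, however, has a genuine gap at two levels.

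First, your operational definition of ``same superselection sector'' --- that $\psi_{2}$ is a weak-$*$ limit of states $A\mapsto\psi_{1}(B_{n}^{*}AB_{n})$ with local $B_{n}$ --- is too weak to be useful. The quasi-local algebra $\A^{ql}$ is simple, so every irreducible representation is faithful, and by Fell's equivalence theorem the vector states of $\pi_{\psi_{1}}$ are weak-$*$ dense in the full state space. Under your definition, \emph{every} pure state would lie in the same sector as $\psi_{1}$, which is plainly false (cf.\ Example~\ref{example:Superselection_Ising}). The correct notion is that $\psi_{2}$ belongs to the folium of $\pi_{\psi_{1}}$, i.e.\ extends to a \emph{normal} state on $\pi_{\psi_{1}}(\A^{ql})''=\B(\cH_{\psi_{1}})$; weak-$*$ approximation by vector states is strictly weaker than this.

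Second, even granting your setup, the ``main obstacle'' you identify is fatal to the argument as written. If $v_{n}\rightharpoonup\xi$ weakly, lower semicontinuity of the norm does give $\langle\xi|\pi_{\psi_{1}}(A)|\xi\rangle\le\psi_{2}(A)$ for positive $A$, so purity yields $\langle\xi|\pi_{\psi_{1}}(\cdot)|\xi\rangle=\lambda\psi_{2}$ with $\lambda=\|\xi\|^{2}$. But nothing prevents $\xi=0$: an orthonormal sequence in $\cH_{\psi_{1}}$ converges weakly to zero, and such sequences arise precisely when $\psi_{2}$ is \emph{not} a normal state of $\pi_{\psi_{1}}$. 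Your purity argument cannot force $\|\xi\|=1$. The standard route (the one you mention parenthetically) avoids this entirely: once $\psi_{2}$ is known to be normal on $\B(\cH_{\psi_{1}})$, it is given by a density operator, and purity forces that operator to have rank one; then one proceeds via irreducibility and Corollary~\ref{corollary:unique:GNS} exactly as you outline.
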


    To justify our earlier assertion that a {\it single} finite-size system has only one superselection sector, we have
    \begin{proposition}\label{prop:fd_C*_rep}
        Let $\A$ be a finite dimensional $C^*$-algebra, then $\A$ has only one irreducible representation (up to unitary equivalence).
    \end{proposition}
    \begin{proof}
        By proposition \ref{prop:fd_C*} $\A$ is the direct sum of complex matrix algebras. It is well known that the irreducible representation of a complex matrix algebra is unique, \ie its defining representation (Theorem 4.5.10 of Ref.~\cite{hassani2013mathematical}).
    \end{proof}

    We emphasize that the above discussion applies to a {\it single} finite-size system, and the above result should not be confused with the superselection rules discussed in definition \ref{def: superselection rule} in Appendix \ref{subapp: SSB area law}, which concern with sequences of finite systems with increasing sizes.
    
   As a further consequence, if two pure states are in the same superselection sector, \ie they can be related to each other by (quasi-)local operators, then they are almost the same outside a small region. Thus one concludes the following.
   \begin{proposition}[Proposition 3.2.8 of Ref. \cite{Naaijkens_2017}, Theorem 10.2.6 of Ref.~\cite{kadison1997fundamentals2}]\label{prop:equivalence_of_states}
       Given two {\bf pure} states $\psi_{1}$ and $\psi_{2}$ of $\A^{ql}$, the following statements are equivalent:
        \begin{enumerate}
            \item The GNS representation $\pi_{\psi_{1}}$ is equivalent to $\pi_{\psi_{2}}$.
            \item There exists a quasi-local unitary operator $U$ such that $\psi_{1}=\psi_{2}\circ\Ad_{U}$.
            \item For any $\epsilon>0$, there is a finite subset $\Gamma_{\epsilon}$ such that
            \beq
            |\psi_{1}(A)-\psi_{2}(A)|<\epsilon||A||
            \eeq
            for any $A\in\A^{ql}_{\Gamma^{c}_{\epsilon}}$, where $\Gamma_{\epsilon}^{c}$ means the complement of $\Gamma_{\epsilon}$ in $\Lambda$.
        \end{enumerate}
        If one of these conditions is satisfied, we say that $\psi_{1}$ is equivalent to $\psi_{2}$ and write $\psi_{1}\simeq\psi_{2}$.
   \end{proposition}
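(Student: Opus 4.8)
The plan is to prove the two implications separately; in each one I realize both states inside a single irreducible representation.

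\emph{Proof of $(1)\Rightarrow(2)$.} Suppose $\pi_{\psi_{1}}$ is equivalent to $\pi_{\psi_{2}}$. By corollary~\ref{corollary:unique:GNS} I can work inside one irreducible GNS triple $(\pi,\cH,|\Omega_{1}\ra)$ for $\psi_{1}$, together with a unit vector $|\Omega_{2}\ra\in\cH$ satisfying $\psi_{2}(A)=\la\Omega_{2}|\pi(A)|\Omega_{2}\ra$. Since $\pi$ is irreducible and $\A^{ql}$ is unital, Kadison's transitivity theorem produces a unitary $u\in\cU^{ql}$ with $\pi(u)|\Omega_{1}\ra=|\Omega_{2}\ra$, so that $\psi_{2}=\psi_{1}\circ\Ad_{u^{\dagger}}$. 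I then approximate $u$ in norm by a local operator $a\in\A^{l}_{\Gamma}$ for some finite $\Gamma$; once $\|a-u\|$ is small enough, $a$ is invertible and $a^{\dagger}a$ has spectrum near $1$, so the polar decomposition of $a$---which stays inside the finite-dimensional algebra $\A^{l}_{\Gamma}$---yields a genuine \emph{local} unitary $v\in\A^{l}_{\Gamma}$ with $\|v-u\|$ as small as we please. Finally, for any $B\in\A^{ql}_{\Gamma^{c}}$ locality gives $[v,B]=0$, so $(\psi_{1}\circ\Ad_{v^{\dagger}})(B)=\psi_{1}(v^{\dagger}vB)=\psi_{1}(B)$, while a two-term triangle estimate using $\|u\|=\|v\|=1$ gives $\|\psi_{2}-\psi_{1}\circ\Ad_{v^{\dagger}}\|\leqslant 2\|u-v\|$. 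Hence $|\psi_{1}(B)-\psi_{2}(B)|\leqslant 2\|u-v\|\,\|B\|$ for all $B\in\A^{ql}_{\Gamma^{c}}$, which is condition (2) with $\Gamma_{\epsilon}=\Gamma$ as soon as $\|u-v\|<\epsilon/2$.

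\emph{Proof of $(2)\Rightarrow(1)$.} I argue by contradiction. As $\psi_{1},\psi_{2}$ are pure, $\pi_{\psi_{1}}$ and $\pi_{\psi_{2}}$ are irreducible, so by the usual dichotomy (a consequence of Schur's lemma, lemma~\ref{lemma:Schur}) they are either equivalent or disjoint (no nonzero intertwiner); I must rule out the disjoint case. Take the finite set $\Gamma:=\Gamma_{\epsilon}$ furnished by condition (2) with, say, $\epsilon=1$, so that $\|(\psi_{1}-\psi_{2})|_{\A^{ql}_{\Gamma^{c}}}\|\leqslant 1$. Since $\A^{l}_{\Gamma}$ is a full matrix algebra on the finite-dimensional $\cH_{\Gamma}$ and $\pi_{\psi_{i}}$ is faithful on it (being nonzero on a simple algebra), $\pi_{\psi_{i}}(\A^{l}_{\Gamma})$ is a type-I factor and the GNS space splits, $\cH_{\psi_{i}}\simeq\cH_{\Gamma}\otimes\cH_{i}'$, with $\pi_{\psi_{i}}(\A^{l}_{\Gamma})=\B(\cH_{\Gamma})\otimes I$; because $\A^{ql}_{\Gamma^{c}}$ commutes with $\A^{l}_{\Gamma}$, this forces $\pi_{\psi_{i}}(\A^{ql}_{\Gamma^{c}})=I\otimes\pi_{i}'(\A^{ql}_{\Gamma^{c}})$ for an irreducible representation $\pi_{i}'$ of the tail algebra $\A^{ql}_{\Gamma^{c}}$ on $\cH_{i}'$, and $\psi_{i}|_{\A^{ql}_{\Gamma^{c}}}$ is a normal functional of $\B(\cH_{i}')$ composed with $\pi_{i}'$. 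Choosing the two factorizations compatibly (via a fixed minimal projection of $\A^{l}_{\Gamma}$), any nonzero intertwiner $w:\cH_{1}'\to\cH_{2}'$ of $\pi_{1}',\pi_{2}'$ gives $I\otimes w$ intertwining $\pi_{\psi_{1}},\pi_{\psi_{2}}$ on $\A^{l}_{\Gamma}$ (trivially) and on $\A^{ql}_{\Gamma^{c}}$, hence on all of $\A^{ql}$ because $\A^{l}_{\Gamma}\cup\A^{ql}_{\Gamma^{c}}$ generates a norm-dense subalgebra. Thus disjointness of $\pi_{\psi_{1}},\pi_{\psi_{2}}$ forces disjointness of $\pi_{1}',\pi_{2}'$, and therefore of the GNS representations of $\psi_{1}|_{\A^{ql}_{\Gamma^{c}}}$ and $\psi_{2}|_{\A^{ql}_{\Gamma^{c}}}$; but two states with disjoint GNS representations lie at the maximal norm distance $2$ from each other, contradicting $\|(\psi_{1}-\psi_{2})|_{\A^{ql}_{\Gamma^{c}}}\|\leqslant 1$. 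Hence $\pi_{\psi_{1}}$ is equivalent to $\pi_{\psi_{2}}$.

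The routine parts are the verifications that the polar decomposition in $(1)\Rightarrow(2)$ stays inside $\A^{l}_{\Gamma}$, that $\A^{l}_{\Gamma}\cup\A^{ql}_{\Gamma^{c}}$ is norm-dense in $\A^{ql}$, and the identity $\psi_{2}=\psi_{1}\circ\Ad_{u^{\dagger}}$. I expect the main obstacle to be the tensor-factorization step in $(2)\Rightarrow(1)$: one has to check that in the irreducible GNS representation of a \emph{pure} state the finite region $\Gamma$ genuinely splits off as a tensor factor, carrying a copy of $\cH_{\Gamma}$ that can be identified \emph{in both} representations at once, so that disjointness of $\pi_{\psi_{1}},\pi_{\psi_{2}}$ becomes equivalent to disjointness of the tail representations $\pi_{1}',\pi_{2}'$. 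A secondary input---which I would cite rather than prove---is the standard $C^{*}$-algebraic fact that states with disjoint GNS representations satisfy $\|\psi_{1}-\psi_{2}\|=2$. Everything else rests on the GNS machinery developed above and on Kadison's transitivity theorem.
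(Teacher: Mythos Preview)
The paper does not supply its own proof of this proposition; it is quoted verbatim from Naaijkens's book and used as a black box. Your argument is correct and follows the classical line of proof (essentially the one in Naaijkens and going back to Powers and Glimm): for $(1)\Rightarrow(2)$, Kadison transitivity plus a local-unitary approximation of the intertwining unitary; for $(2)\Rightarrow(1)$, splitting off the finite-dimensional matrix algebra $\A^{l}_{\Gamma}$ as a tensor factor in each irreducible GNS space and reducing disjointness to disjointness of the irreducible tail representations $\pi_{i}'$.

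Your self-assessment of the delicate point is accurate. The one place that deserves an extra sentence is the last inference in $(2)\Rightarrow(1)$: the restricted states $\psi_{i}|_{\A^{ql}_{\Gamma^{c}}}$ are typically mixed, so you cannot invoke proposition~\ref{prop:orthogonal_sectors} directly. What you actually need is that $\psi_{i}|_{\A^{ql}_{\Gamma^{c}}}$ is a \emph{normal} state in the irreducible representation $\pi_{i}'$ (this follows from the Schmidt decomposition of $|\psi_{i}\ra$ in $\cH_{\Gamma}\otimes\cH_{i}'$), so its GNS representation is quasi-equivalent to $\pi_{i}'$; disjointness of $\pi_{1}',\pi_{2}'$ then gives disjointness of the two restricted states, and the mixed-state version of proposition~\ref{prop:orthogonal_sectors} (the remark following it in the paper) yields $\|(\psi_{1}-\psi_{2})|_{\A^{ql}_{\Gamma^{c}}}\|=2$. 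With that clarification the argument is complete.
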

       
       \begin{remark}
        As a remark on the above example, one can similarly show that ground states exhibiting spontaneous symmetry breaking (SSB) fall into different superselection sectors if there is a local order parameter $\mathcal{O}_{k}$ ($Z_{k}$ for Ising model) and unbroken translation symmetry. Especially, we do not need $\mathcal{O}_{k}$ to be Hermitian or to commute with Hamiltonian.
    \end{remark}
    
It is generally impossible to define coherent superpositions of states on $\A^{ql}$. However, it is indeed possible to do so when two pure states fall into the same superselection sector. Let $\psi_{1}$ and $\psi_{2}$ be two pure states such that $\psi_{1}\simeq\psi_{2}$. By Proposition \ref{prop:equivalence_of_states}, there exists a quasi-local unitary $U$ such that $\psi_{2}=\psi_{1}\circ\Ad_{U}$. Therefore, the GNS vector state $|\psi_{2}\ra$ can be naturally taken as
\beq\label{eq:equivalent_representative}
|\psi_{2}\ra=\pi_{1}(U)|\psi_{1}\ra\in\cH_{1}
\eeq
and the GNS representation of $\psi_2$ can be taken as $\pi_2(a)=\pi_1(U)\pi_1(a)\pi_1(U)^{-1},\,\forall a\in\A^{ql}$, where $(\pi_{1},\cH_{1},|\psi_{1}\ra)$ is the GNS triple of $\psi_{1}$. The relative phase is determined by requiring that $\la\psi_{1}|\psi_{2}\ra\geqslant0$. Although the choice of $U$ may not be unique for a given $\psi_{2}$, one can check that the state Eq.~\eqref{eq:equivalent_representative} does not really depend on the choice of $U$. Thus, the coherent superposition of $\psi_{1}$ and $\psi_{2}$ can be formulated as follows.

\begin{definition}\label{def:coherent_superposition}
    Let $\psi_{1}$ and $\psi_{2}$ be two equivalent pure states, then their coherent superposition is defined to be the vector state $c_{1}|\psi_{1}\ra+c_{2}|\psi_{2}\ra$, where $|\psi_{1}\ra,|\psi_{2}\ra$ are defined above and $|c_{1}|^{2}+|c_{2}|^{2}=1$.
\end{definition}
 
Another lesson from the above example is that states in different superselection sectors cannot be superposed coherently, \ie the sum of two such states is a classical mixture of them. To see this, consider the Greenberger–Horne–Zeilinger (GHZ-)state defined on a finite system with size $L$,
\beq\label{eq:finite_GHZ}
|GHZ\ra:=\frac{1}{\sqrt{2}}(|\uparrow\ra_{L}+|\downarrow\ra_{L})
\eeq
where subscript $L$ means each state is defined on a chain of size $L$. Its density matrix is
\beq
\rho_{GHZ}=\frac{1}{2}(|\uparrow\ra\la\uparrow|_{L}+|\downarrow\ra\la\downarrow|_{L})+\frac{1}{2}(|\uparrow\ra\la\downarrow|_{L}+|\downarrow\ra\la\uparrow|_{L})
\eeq
Note that the cross terms (in the second parenthesis) evaluate to 0 on any local operators in the thermodynamic limit. Thus, these two terms vanish as a state of $\A^{ql}$. As a result, 
\beq\label{eq:mixed_GHZ}
\rho_{GHZ}\stackrel{L\to \infty}{\Longrightarrow}\frac{1}{2}(|\uparrow\ra\la\uparrow|+|\downarrow\ra\la\downarrow|)
\eeq
Namely, this GHZ-state becomes a mixed state in this limit. See Appendix~\ref{sec: thermodynamic limit} for a more precise meaning of thermodynamic limit here.

In Proposition \ref{prop:equivalence_of_states}, we have described the equivalence relation between pure states. As a lesson from the example of the GHZ states (Eq.~\eqref{eq:mixed_GHZ}), mixed states can appear naturally as the thermodynamic limit of some pure states. Thus it will be useful to generalize the equivalence relation in Proposition \ref{prop:equivalence_of_states} to any pair of states that can be either pure or mixed. Below we make this generalization.

To generalize Proposition \ref{prop:equivalence_of_states} to mixed states, we need to introduce the notion of quasi-equivalence of states. Let $\psi$ be a (possibly mixed) state of $\A^{ql}$. We denote its GNS triple by $(\pi_{\psi},\cH_{\psi},|\psi\ra)$ (see Definition \ref{def:GNS_triple}), where $\cH_{\psi}$ is the GNS Hilbert space and $\pi_{\psi}:\A^{ql}\to\B(\cH_{\psi})$ is the $*$-homomorphism. If $\pi_{\psi}$ is reducible with an invariant subspace $\cH_{0}$ (\ie $\pi_{\psi}$ is block-diagonal with respect to $\cH_{\psi}=\cH_{0}\oplus\cH_{0}^{\perp}$), then we say $\pi_{\psi}|_{\cH_{0}}$ (\ie the block on $\cH_{0}$) is a \textit{subrepresentation} of $\pi_{\psi}$.

\begin{definition}[Definition 8.18 of Ref.~\cite{Landsman:2017hpa}]\label{def:quasi_disjoint}
    Let $\pi_{1},\pi_{2}$ be any two representations of $\A^{ql}$, then
    \begin{enumerate}
        \item $\pi_{1}$ and $\pi_{2}$ are said to be quasi-equivalent to each other if any subrepresentation of $\pi_{1}$ has a subrepresentation which is (unitarily) equivalent to a subrepresentation of $\pi_{2}$ and vice versa.
        \item $\pi_{1}$ is said to be disjoint from $\pi_{2}$ if they do not have any equivalent subrepresentations.
    \end{enumerate}
    For two states, we say that they are quasi-equivalent (resp. disjoint) if their GNS representations are quasi-equivalent (resp. disjoint). We write $\psi\sim\phi$ if the states $\psi$ and $\phi$ are quasi-equivalent.
\end{definition}
We remark that a sufficient and necessary condition for two states to be quasi-equivalent is based on von-Neumann algebras, which we will explain later (see Proposition \ref{prop:quasi-eq}). This condition is practically easier to use than the definition above.

We note that two representations can in principle be neither quasi-equivalent nor disjoint. In general, there are four levels of (in-)equivalence between two states.

\begin{itemize}

    \item Two states can be disjoint. For example, $\psi_+$ and $\psi_-$ in example \ref{example:Ising} are disjoint.

    \item Two states can be not disjoint but also not quasi-equivalent. For example, the GHZ state defined in Eq.~\eqref{eq:mixed_GHZ} $\psi_{GHZ}=\frac{1}{2}(\psi_{\uparrow}+\psi_{\downarrow})$ is neither disjoint nor quasi-equivalent to $\psi_{\uparrow}$.

    \item Two states can be quasi-equivalent but not equivalent (the definition of equivalence is in definition \ref{def:inequivalence_GNS}). For example, consider two pure states $\psi_{1}\simeq\psi_{2}$ while $\psi_{1}\not=\psi_{2}$, then the mixed state $\frac{1}{2}(\psi_{1}+\psi_{2})$ is quasi-equivalent to $\psi_{1}$ (and $\psi_{2}$) but they are inequivalent because a mixed state can never be equivalent to a pure state.

    \item Two states can be equivalent, as defined in definition \ref{def:inequivalence_GNS}.
    
\end{itemize}

However, for two pure states, being disjoint is the same as being inequivalent, and being quasi-equivalent is the same as being equivalent.

According to Proposition \ref{prop:equivalence_of_states}, if two pure states are quasi-equivalent (resp. disjoint), it means they are in the same (resp. different) superselection sector(s). There is mixed-state generalization of Proposition \ref{prop:equivalence_of_states}, which would be helpful to verify the quasi-equivalence of states.

\begin{proposition}[Corollary 2.6.11 of Ref.~\cite{bratteli2013operator1}]\label{prop:mixed_states_equivalence}
    Let $\psi_{1}$ and $\psi_{2}$ be states of $\A^{ql}$, then the followings are equivalent.
    \begin{enumerate}
        \item $\psi_{1}$ and $\psi_{2}$ are quasi-equivalent, \ie $\psi_{1}\sim\psi_{2}$.
        \item For any $\epsilon>0$, there exists a finite region $\Gamma_{\epsilon}\subset\Lambda$ such that
        \beq
        |\psi_{1}(A)-\psi_{2}(A)|<\epsilon ||A||,\quad\forall\,A\in\A^{ql}_{\Gamma_{\epsilon}^{c}}.
        \eeq
    \end{enumerate}
\end{proposition}

\subsection{von Neumann algebras}\label{sec:vN}

The notions of disjoint states and quasi-equivalent states play an important role in the study of von Neumann algebras (see Proposition \ref{prop:quasi-eq}), which will be defined below. We will see that von Neumann algebras are very useful for proving some of our results later.

To define a von Neumann algebra, we begin with the definition of commutants.
\begin{definition}
    Let $\cH$ be a (separable) Hilbert space and $\A\subseteq\B(\cH)$ be a $*$-subalgebra. Then its commutant (a.k.a centralizer) is defined to be
    \beq
    \A':=\{x\in\B(\cH)|[x,y]=0,\,\forall\,y\in\A\}
    \eeq
\end{definition}
It is easy to check that $\A'$ is another $*$-subalgebra of $\B(\cH)$, therefore one can form the double commutant,
\beq
\A'':=(\A')'
\eeq
It is easy to see that $\A\subseteq \A''$ from the definition. For example, when $\psi$ is a pure state and $\pi_{\psi}\subseteq\B(\cH_{\psi})$, by Schur's lemma \ref{lemma:Schur}, we have $\pi_{\psi}(\A^{ql})'=\bbC\cdot\id_{\cH_{\psi}}$ and therefore $\pi_{\psi}(\A^{ql})''=\B(\cH_{\psi})$, which is consistent with the relation $\A\subseteq \A''$. On the other hand, from the definition it is also straightforward to check that $(\A')''=\A'$ for any choice of $\A$.
\begin{definition}
    A $*$-subalgebra $\A\subseteq\B(\cH)$ is called a von Neumann algebra if
    \beq
    \A''=\A
    \eeq
    Given a state $\psi$ of $\A^{ql}$, we call
    \beq
    \cM_{\psi}:=\pi_{\psi}(\A^{ql})''\subseteq\B(\cH_{\psi})
    \eeq
     the von Neumann algebra associated with $\psi$. Note that $\cM_{\psi}''=\cM_\psi$ since $(\A')''=\A'$ for any $\A\subseteq\B(\cH)$.
\end{definition}
It is also shown in Ref.~\cite{Landsman:2017hpa} (see Corollary C.129 therein) that every von Neumann algebra is also a $C^*$-algebra (\ie every Cauchy sequence with respect to operator norm must converge).
The first reason why von Neumann algebras are interesting is due to the so-called von Neumann's double commutant theorem. To state it, we need some more general notions of convergence in $\B(\cH)$.
\begin{definition}[Strong and weak operator convergence]
    Let $\cH$ be a Hilbert space and a sequence of operators $a_{n}\in \B(\cH)$. We say that $a_{n}$ converges to $a\in\B(\cH)$ in the strong operator topology if for each $|v\ra\in\cH$ we have $||(a_{n}-a)|v\ra||\stackrel{n\to\infty}{\longrightarrow}0$. In this case, we call $a$ the strong operator limit of $a_{n}$.
    
    Similarly, we say $a_{n}$ converges to $a'\in\B(\cH)$ in the weak operator topology if for any $|u\ra,|v\ra\in \cH$ we have $\la u|a_{n}|v\ra\stackrel{n\to\infty}{\longrightarrow}\la u|a'|v\ra$, \ie each “matrix element” converges. We call $a'$ the weak operator limit of $a_{n}$.
\end{definition}
\begin{remark} \label{remark: operator limits}
    If the Hilbert space $\cH$ is infinite dimensional, then the strong operator convergence is generally stronger than the weak operator convergence. That is, if a sequence of operators has a strong operator limit, then this limit must also be a weak operator limit. On the other hand, in an infinite dimensional $\cH$, both the strong and weak operator convergences are generally weaker than the ``norm convergence", \ie a sequence $a_n$ converges to $a''\in\B(\cH)$ in the sense that $||a_n-a''||\stackrel{n\to\infty}{\longrightarrow}0$. If the Hilbert space $\cH$ is finite dimensional, then all these three versions of operator convergence are equivalent.
\end{remark}
\begin{remark}\label{remark:BBA}
    One reason to consider weak operator topology is because the unit norm ball in $B(\cH)$ is compact in the weak operator topology. This means, for any sequence $x_{n}\in\B(\cH)$ with $||x_{n}||=1$, it is always possible to find a weak operator convergent subsequence. The proof of this statement relies on the Banach-Alaoglu theorem (Theorem \ref{thm:BAB}) and viewing $\B(\cH)$ as the linear dual of trace-class operators.
\end{remark}

Now we are ready to state von Neumann's double commutant theorem.

\begin{theorem}[von Neumann's double commutant theorem, Theorem C.127 of Ref. \cite{Landsman:2017hpa}]\label{thm:vN_double_commutant}
    Let $\A\subseteq\B(\cH)$ be a *-subalgebra and $a\in\B(\cH)$. Then $a\in\A''$ iff $a$ is a weak operator limit of some sequence in $\A$. This statement is still true if the ``weak operator limit" is replaced by ``strong operator limit".
\end{theorem}
In practice, a more useful version of Theorem \ref{thm:vN_double_commutant} is the following Kaplansky density theorem.
\begin{theorem}[Theorem C.131 of Ref.~\cite{Landsman:2017hpa}]\label{thm:Kaplansky_density}
    Let $\A\subseteq\B(\cH)$ and $a\in\B(\cH)$, then $a\in\A''$ with $||a||=1$ iff it is the weak operator limit of a sequence $a_{n}\in\A$ with $||a_{n}||=1$ for each $n$. This statement is still true if the ``weak operator limit" is replaced by ``strong operator limit".
\end{theorem}
As an application of Theorem \ref{thm:Kaplansky_density}, we show that disjoint states (\ie states in different superselection sectors) are orthogonal in the sense of Definition \ref{def:orthogonal}. This proposition will later be used for proving the superselection rule in Theorem \ref{thm:ssr}. Moreover, this theorem is also useful for proving Theorem \ref{thm:clustering}, which is a very profound and powerful result that will be used in proving corollary \ref{coro:SSR} later.
\begin{proposition}[Proposition 10.3.6 of Ref.~\cite{kadison1997fundamentals2}]\label{prop:orthogonal_sectors}
    Let $\psi$ and $\phi$ be disjoint states of $\A^{ql}$, then
    \beq
    ||\psi-\phi||=2
    \eeq
\end{proposition}
\begin{proof}
    Consider the representation $\pi:=\pi_{\psi}\oplus\pi_{\phi}$ on $\cH_{\psi}\oplus\cH_{\phi}$. In this block-diagonal form and use $\pi_{\psi}\not\simeq\pi_{\phi}$, we find
    \beq
    \pi(\A^{ql})'':=\begin{pmatrix}
        \cM_{\psi}&0\\0&\cM_{\phi}
    \end{pmatrix}
    \eeq
    The off-diagonal elements vanish since $\pi_{\psi}\not\simeq\pi_{\phi}$ by assumption.
    In particular, $\mathrm{diag}(1,-1)\in\pi(\A^{ql})''$. Using Theorem \ref{thm:Kaplansky_density}, there exists a sequence $a_{n}\in\A^{ql}$ with $||a_{n}||=1$ such that $\pi(a_{n})\to \mathrm{diag}(1,-1)$ in weak operator topology. In other words,
    \beq
    \psi(a_{n})-\phi(a_{n})=\la\psi|\pi(a_{n})|\psi\ra-\la\phi|\pi(a_{n})|\phi\ra\stackrel{n\to\infty}{\longrightarrow}2
    \eeq
\end{proof}
\begin{remark}
    This proof does not work if $\psi$ and $\phi$ are in the same superselection sector. In that case $\pi_{\psi}(A)=\pi_{\phi}(A)$ after a proper choice of basis, so we have
    \beq
    \pi(\A^{ql})'=M_{2}\otimes\pi_{\psi}(\A^{ql})'
    \eeq
    where $M_{2}$ is the algebra of $2\times 2$ matrices.
    And the double commutant 
    \beq
    \pi(\A^{ql})''=\id_{2\times 2}\otimes \cM_{\psi}
    \eeq
    So the above proof does not apply.
\end{remark}

Based on von Neumann algebras, the following proposition provides the intuition and a simpler way to characterize quasi-equivalent states. This proposition is used to prove Lemma \ref{lemma:typeI_KS} later.

\begin{proposition}[Def.~10.3.1 and corollary 10.3.4 of Ref. \cite{kadison1997fundamentals2}]\label{prop:quasi-eq}
    Two states $\psi$ and $\phi$ of $\A^{ql}$ are quasi-equivalent iff there exists a $*$-isomorphism $f:\cM_{\psi}\to\cM_{\phi}$ such that
    \beq\label{eq:intertwiner}
    f(\pi_{\psi}(a))=\pi_{\phi}(a),\quad\forall\,a\in\A^{ql}
    \eeq
\end{proposition}
\begin{remark}
    One should notice the importance of the map $f$ above. Let $\psi$ and $\phi$ be any two pure states of $\A^{ql}$, then one can show that $\cM_{\psi}\simeq\cM_{\phi}$ as abstract von Neumann algebras (since both of them are isomorphic to $\B(\cH)$ for some separable Hilbert space $\cH$!). Surely this does not mean that $\psi\simeq\phi$ due to the lack of $f$ satisfying Eq.~\eqref{eq:intertwiner}.
\end{remark}

There is also a proposition regarding disjoint states, which will be used to show that the GHZ state Eq.~\eqref{eq:mixed_GHZ} is not a factor state (see Definition~\ref{def:factor}) in Appendix~\ref{sec:clustering}.
\begin{proposition}[Proposition 8.19 of Ref.~\cite{Landsman:2017hpa}]\label{prop:disjoint_decomp}
    Let $\psi$ be a state and if $\psi=t\psi_{1}+(1-t)\psi_{0}$ for some $0<t<1$, then $\psi_{1}$ and $\psi_{0}$ are disjoint iff there is a projector $p\in \pi_{\psi}(\A^{ql})'\cap \cM_{\psi}$ such that
    \beq
    \begin{split}
        \pi_{\psi_{0}}&=\pi_{\psi}|_{p\cH_{\psi}},\\
        \pi_{\psi_{1}}&=\pi_{\psi}|_{(1-p)\cH_{\psi}}.
    \end{split}
    \eeq
    This means, a convex decomposition into disjoint states is always an independent decomposition in the sense of Theorem \ref{thm:independent_decomposition}.
\end{proposition}

\begin{remark}
    This proposition can be generalized to \textbf{countably many} convex combination of disjoint states. But it can fail for uncountable combinations (\eg the maximally mixed state $\la\cdot\ra_{\infty}$ defined in Eq.~\eqref{eq:tracial_state}).
\end{remark}

The following corollary justifies the intuition that disjoint states are in different superselection sectors because their GNS reference vector states cannot be related by local operators.
\begin{corollary}[Hepp's lemma, corollary 8.22 of \cite{Landsman:2017hpa}]\label{coro:Hepp's lemma}
    Let $\pi:\A^{ql}\to\B(\cH)$ be some representation of $\A^{ql}$. For two vector states $|\psi_1\ra$ and $|\psi_2\ra$, define the following two functional states,
    \beq
    \psi_{i}(A):=\la\psi_{i}|\pi(A)|\psi_{i}\ra, \quad|\psi_{i}\ra\in\cH,\,i=1,2
    \eeq
    Then functionals $\psi_1$ and $\psi_2$ are disjoint as states in $\A^{ql}$ iff
    \beq
    \la\psi_{1}|\pi(A)|\psi_{2}\ra=0,\quad\forall\,A\in\A^{ql}.
    \eeq
\end{corollary}
\begin{proof}
    Take $\psi=\frac{1}{2}(\psi_{1}+\psi_{2})$ and use Proposition \ref{prop:disjoint_decomp}.
\end{proof}

As a mathematical application of Theorem \ref{thm:vN_double_commutant} and for future reference in Lemma \ref{lemma:automoprhism_typeI}, we show that finite rank operators on a separable Hilbert space $\cH$ are dense in both the weak and strong operator topologies. Recall that an operator $a\in\B(\cH)$ is called finite rank if $\dim(\im(a))<\infty$.
\begin{lemma}\label{lemma:finite_rank}
    Finite rank operators are dense in both the weak and strong operator topologies, \ie any $x\in\B(\cH)$ is the weak (resp. strong) operator limit of a sequence $a_{n}$ with each $a_{n}$ having a finite rank.
\end{lemma}
The following proof is due to Ref. \cite{Argerami2023finiterank}.
\begin{proof}
    We denote $F\subseteq\B(\cH)$ the algebra of finite rank operators. By Theorem \ref{thm:vN_double_commutant}, it suffices to show $F'=\bbC$. Let $x\in F'$, then $x$ commutes with all rank-1 operators. Fix an orthonormal basis $\{|n\ra\}_{n=1,2,\dots}$ of $\cH$, then
    \beq\label{eq:commuting_finite_rank}
    x|n\ra\la n'|=|n\ra\la n'|x \Rightarrow \la n|x|n'\ra=\delta_{n,n'}\la n|x|n\ra
    \eeq
    which means that $x$ is diagonalized. Let us write $x|n\ra=\lambda_{n}|n\ra$, from Eq.~\eqref{eq:commuting_finite_rank},
    \beq
    \lambda_{n}=\lambda_{n'},\quad\forall n,n'\Rightarrow x=\lambda 1_{\cH},
    \eeq
    which means $F'=\bbC$.
\end{proof}

{\subsection{Clustering and factors}\label{sec:clustering}

To get prepared to prove our main theorems, it is useful to discuss some correlation and entanglement properties of states, which are the subject of this and the next two subsections.

In Appendix \ref{sec:vN}, we have introduced von Neuamnn algebra as a technical tool to discuss superselection sectors. In this section, we will explore the physical meaning of von Neumann algebras themselves. In particular, we will see that the clustering property, \ie quantum correlations decay with distance, can be derived as a natural consequence of the properties of von Neumann algebras. We will also discuss more details of the entanglement properties.

Before diving into those physical applications, let us motivate the question from a pure mathematical point of view: It is natural to ask if a given von Neumann algebra can be decomposed into smaller pieces, just like a representation of a finite group can be decomposed into irreducible representations. An “irreducible” von Neumann algebra is called a factor.
\begin{definition}\label{def:factor}
    A von Neumann algebra $\cM\subseteq\B(\cH)$ is called a factor if
    \beq
    \cM\cap\cM'\simeq\bbC\cdot\id_{\cH}
    \eeq
    Let $\psi$ be a state of $\A^{ql}$, we call $\psi$ a factor state (or primary state) if $\cM_{\psi}$ is a factor.
\end{definition}

Obviously, pure states are factor states because $\cM'\simeq\bbC\cdot\id_{\cH}\subseteq\cM$. But as we will see later, there are also mixed factor states. 

Certainly, the physical meaning of $\cM_{\psi}\cap\cM_{\psi}'$ and the importance of factor states need an explanation. It turns out that they are related to the so-called clustering property.

\begin{definition}[Clustering] \label{def: clustering}
    A state $\psi$ of $\A^{ql}$ is said to be clustering, if for any $a\in\A^{\ell}$ and $\epsilon>0$, there exists a finite $\Gamma_{\epsilon,a}\subseteq\Lambda$, such that
    \beq \label{eq: defining clustering}
    |\psi(ab)-\psi(a)\psi(b)|\leqslant \epsilon ||a||\cdot||b||,\quad\forall\,b\in \A^{ql}_{\Gamma^{c}_{\epsilon,a}}.
    \eeq
\end{definition}
The clustering property is widely used and serves as a locality condition in many areas of physics. However, we note that not all states are clustering.
\begin{example}
    The infinite-size GHZ state $\psi_{GHZ}$ defined in Eq.~\eqref{eq:mixed_GHZ} is \textbf{not} clustering, since
    \beq
    \begin{split}
        \psi_{GHZ}(Z_{k})&=0,\quad\forall k\in\Lambda\\
        \psi_{GHZ}(Z_{i}Z_{j})&=1,\quad\forall i,j\in\Lambda
    \end{split}
    \eeq
    On the contrary, it is shown in Ref.~\cite{Nachtergaele_2006} that locally unique gapped ground states of Hamiltonians (with suitable locality) satisfy the clustering property.
\end{example}
Actually, the necessary and sufficient condition for a state to be clustering is known.
\begin{theorem}[Theorem 2.6.10 of \cite{bratteli2013operator1}]\label{thm:clustering}
    A state $\psi$ on $\A^{ql}$ is clustering iff it is a factor state.
\end{theorem}
We refer the readers to Ref.~\cite{bratteli2013operator1} for the proof of this profound result. For example, $\psi_{GHZ}$ is not a factor state due to Proposition \ref{prop:disjoint_decomp}, so the projector $p\in\cM_{\psi}\cap\cM_{\psi}'$ is nontrivial.

\begin{remark}
    To understand the definition of von Neumann factors heuristically, we remark that $\cM_{\psi}\cap\cM_{\psi}'$ has the meaning of ``algebra at infinity" \cite{bratteli2013operator1,Landsman:2017hpa}. To see this, consider a local operator $a$ and we define $a_{n}:=\pi_{\psi}(\tau^{n}(a))$, where $\tau$ is the lattice translation (by one unit cell). This limit does not exist in the norm topology, but it does exist in the weak operator topology due to Remark \ref{remark:BBA}, thus we denote this “operator at infinity” by $a_{\infty}$. By Theorem \ref{thm:vN_double_commutant}, $a_{\infty}\in\cM_{\psi}$. On the other hand, $a_{\infty}$ commutes with every local operator by construction because it is at the infinity, hence it lies in $\cM_{\psi}'$. Therefore, $\cM_{\psi}\cap\cM_{\psi}'$ is the algebra generated by “operators at infinity”. From this perspective, an intuitive explanation of Theorem \ref{thm:clustering} follows. Since clustering requires the connected two-point correlation function of two local operators far away to be vanishingly small, and we can take one of the two operators to be in the ``algebra at infinity". The assumption $\cM\cap\cM'\simeq \bbC\cdot\id_{\cH}$ means that these operators at inifinity are just $c$-numbers, so the corresponding connected two-point correlation function indeed vanishes and the clustering property holds.
\end{remark}

In some sense, classifying factor states is equivalent to classifying their associated von Neumann algebras. We begin with the relation between factor states.
\begin{lemma}[Corollary 8.22 of Ref. \cite{Landsman:2017hpa}]\label{lemma:factor_states}
    \begin{enumerate}
        \item Two factor states are either disjoint or quasi-equivalent.
        \item A state $\psi$ is factor state iff it has no (countable) convex decomposition into disjoint states.
    \end{enumerate}
\end{lemma}

As a direct consequence of this lemma and Theorem \ref{thm:clustering}, a countable convex combination of disjoint states will never be clustering. For example, the GHZ state is a convex combination of two disjoint states, and it is indeed not clustering. It is important to note that this does not forbid an uncountable combination (\ie an integral) of disjoint states to be a factor state (hence clustering). This can be seen from the following example.

\begin{example}[Tracial state]\label{ex:tracial_state}
    On $\A^{ql}$, there is a unique tracial state which we denoted by $\la\cdot\ra_{\infty}$, which is a normalized (usual) trace. It is uniquely characterized by 
    \beq\label{eq:tracial_state}
    \begin{split}
        \la ab \ra_{\infty}&=\la ba\ra_{\infty},\quad\forall \,a,b\in\A^{ql}\\
        \la \id\ra_{\infty}&=1
    \end{split}
    \eeq
    In more physical terms, the tracial state is the maximally mixed state on $\A^{ql}$ which can be obtained by taking tensor product of maximally mixed states of each site, and it is also the infinite-temperature state. Clearly, 
    \beq
    \la ab\ra_{\infty}=\la a\ra_{\infty}\la b\ra_{\infty}
    \eeq
    when $a, b\in\A^{ql}$ and $a$ and $b$ are supported in non-overlapping regions.Therefore $\la\cdot\ra_{\infty}$ is also a factor state.

    Intuitively, $\la\cdot\ra_{\infty}$ is maximally mixed so it mixes all pure states from all different superselection sectors. Naively, from lemma \ref{lemma:factor_states} one should conclude $\la\cdot\ra_{\infty}$ can never be a factor state since it involves disjoint states. 
    However, $\la\cdot\ra_{\infty}$ is highly mixed such that it does not admit any countable convex decomposition into pure states. So in fact there is no such contradiction.
\end{example}
    As a lesson from the above examples, we see usually a factor state is either not quite mixed (\eg it is a countable convex combination of equivalent pure states) or highly mixed (\eg tracial state).

    The classification of factor states and von Neumann factors are intensively studied during the last century (see appendix C of Ref.~\cite{Landsman:2017hpa} for an exposition). Although a complete classification is still not available, the situation becomes much better understood for the so-called approximately finite dimensional (AFD) factors, which is exactly the case we need.
    \begin{definition}
        A von Neumann algebra $\cM\subseteq\B(\cH)$ is called approximately finite dimensional (AFD), injective or amenable, if there is a sequence of finite dimensional $*$-subalgebras $\{\cM_{n}\}_{n=1,2,\dots}$
        \beq
        \cM_{1}\subseteq\cM_{2}\subseteq\dots\subseteq\cM_{n}\subseteq\dots \subseteq\cM
        \eeq
        such that $\cM=(\bigcup_{n=1}^{\infty}\cM_{n})''$.
    \end{definition}
    The von Neumann algebras we will work with are $\cM_{\psi}$ for some state $\psi$ on $\A^{ql}$. Since $\A^{ql}$ admits a sequence of finite dimensional subalgebra whose union is dense in $\A^{ql}$, the von Neumann algebra $\cM_{\psi}$ is automatically AFD.

    To discuss the classification of von Neumann factors, let us define the notion of trace on von Neumann algebras.
    \begin{definition}
        Let 
        \beq
        \cM_{+}:=\{a\in\cM|\exists b\in\cM,\,a=bb^{*}\}
        \eeq
        be the set of positive operators in $\cM$. Then a trace on $\cM$ is a map
        \beq
        \tr:\cM_{+}\to[0,\infty]
        \eeq
        (its value on $\cM$ can be obtained by linear extension) satisfying:
        \begin{enumerate}
            \item $\tr(\lambda a+b)=\lambda\tr(a)+\tr(b)$ for any $a,b\in\cM_{+}$ and $\lambda\geqslant 0$.
            \item $\tr(u au^{*})=\tr(a),\quad\forall a\in\cM$ and any unitary $u\in\cM$.
        \end{enumerate}
        We say that $\tr$ is finite if $\tr(a)<\infty$ for all $a\in\cM_{+}
        $, or semifinite if it is not finite and for any $b\in\cM_{+}$ there exists $a\in\cM_{+}
        $ such that $b-a\in\cM_{+}$ and $\tr(a)<\infty$. A trace is called infinite if it is not finite or semifinite.
    \end{definition}
    We now introduce the type classification due to von Neuamnn and Murray \cite{Landsman:2017hpa}. 
    
    \begin{definition}
        An AFD factor $\cM$ is said to be of 
        \begin{enumerate}
            \item type-I$_{n}$ if $\cM\simeq M_{n}$ (\ie the algebra of $n\times n$ matrices) of some $n\in\z_{>0}$.
            \item type-I$_{\infty}$ if $\cM\simeq\B(\cH)$ for some (separable) infinite-dimensional Hilbert space $\cH$.
            \item type-II$_{1}$ if $\cM$ is not type-I and admits a finite trace.
            \item type-II$_{\infty}$ if $\cM$ is not type-I and admits a semifinite trace.
            \item type-III if all nonzero traces are infinite.
        \end{enumerate}
        The type of a factor state $\psi$ on $\A^{ql}$ is defined to be the type of $\cM_{\psi}$.

        Type-I$_n$ and type-I$_\infty$ factors together are called type-$I$ factors, while type-II$_1$ and type-II$_\infty$ factors together are called type-II factors.
    \end{definition}
    \begin{example}\label{ex:factors}
        Obviously, any finite-dimensional quantum system is described by type-I$_{n}$ factors for some $n$ and hence they are not interesting to us here. Let $\psi$ be a pure state of $\A^{ql}$, we know that $\cM_{\psi}=\B(\cH_{\psi})$ and hence it is of type-I$_{\infty}$ since $\cH_{\psi}$ is infinite dimensional.

        The tracial state is of type-II$_{1}$ since its GNS Hilbert space is manifestly infinite dimensional and it admits a finite trace (given by tracial state itself). We denote its associated type-II$_{1}$ von Neumann factor by $\cM_{\infty}$.
        A type-II$_{\infty}$ von Neumann factor $\cM$ can be obtained from $\cM_{\infty}\otimes \cM_{\psi}$, where $\psi$ is a pure state.

        Type-III factors are more mysterious and many physically interesting mixed states are of this type. For example, a finite temperature state (a.k.a Kubo-Martin-Schwinger state in $C^*$-algebra context) is typically of type-III (see Sec.~5.4 of Ref.~\cite{bratteli2013operator2}).
        
        Another example of type-III factor is given by the so-called Powers factor (see Sec. III.3.1.7 of Ref.~\cite{Blackadar:2006OA}). Consider an infinite qubit system. For qubit at site $i\in\Lambda$, we define the following mixed state
        \beq
        \rho_{i,\lambda}(A)=\frac{1}{1+\lambda}A_{11}+\frac{\lambda}{1+\lambda}A_{22},\quad\,0<\lambda<1,A\in \A_{i}
        \eeq
        The Powers factor state $\rho_{\lambda}$ is the product state $\bigotimes_{i\in\Lambda}\rho_{i,\lambda}$. When $\lambda=1$ this state is maximally mixed and hence reduces to the tracial state. However, for $0<\lambda<1$, the state is of type-III.
    \end{example}
    
    The above examples indicates that type-II and type-III factors are far more complicated and more entangled than type-I factors. Actually the following result completely determines the structure of type-I factor states.
    \begin{lemma}\label{lemma:typeI}
        Let $\psi$ be a type-I factor state of $\A^{ql}$, then $\psi$ is a countable convex sum,
        \beq
        \psi=\sum_{i=1}^{\infty}\lambda_{i}\psi_{i},\quad 0\leqslant\lambda_{i}\leqslant 1,\,\sum_{i=1}^{\infty}\lambda_{i}=1,
        \eeq
        where each $\psi_{i}$ is a pure state such that $\psi_{i}\simeq\psi_{j}$ but $\psi_{i}\not=\psi_{j}$.
    \end{lemma}
    \begin{proof}
        By the definition of type-I factors, we know $\pi_{\psi}(\A^{ql})\subseteq\cM_{\psi}\simeq\B(\cH)$ for some separable Hilbert space $\cH$. On the other hand, $\B(\cH)\simeq\cM_{\psi}\subseteq\B(\cH_{\psi})$, which means there is a tensor factorization $\cH_{\psi}\simeq\cH\otimes\cH'$ with $\cH'$ some Hilbert space. Note $\pi_{\psi}(\A^{ql})\subseteq\B(\cH)$, so $\A^{ql}$ only acts on $\cH$ but not on $\cH'$.
        We use the Schmidt decomposition for the GNS vector $|\psi\ra$ (see Theorem 4.4.5 of Ref.~\cite{hsing2015theoretical}), then
        \beq
        |\psi\ra=\sum_{n=1}^{\infty}\sqrt{\lambda_{i}}|\psi_{i}\ra\otimes|\xi_{i}\ra,\quad \sum_{i=1}^{\infty}\lambda_{i}=1&\lambda_{i}\geqslant0
        \eeq
        where $|\psi_{i}\ra$ and $|\xi_{i}\ra$ are orthonormal basis of $\cH$ and $\cH'$ respectively. Note that $\pi_{\psi}(\A^{ql})$ acts irreducibly on $\cH$, because $\pi_\psi(\A^{ql})'=\cM_\psi'=(\B(\cH))'=\bbC$.\footnote{Here the commutant is taken in $\B(\cH)$ instead of $\B(\cH_{\psi})$.} Therefore, each $\psi_i(A):=\la\psi_i|\pi_\psi(A)|\psi_{i}\ra$ is a pure state of $\A^{ql}$, because $(\cH, \pi_\psi, |\psi_i\ra)$ can be viewed as a GNS triple associated with $\psi_i$. Thus, we obtain
        \beq
        \begin{split}
            \psi&=\sum_{i=1}^{\infty}\lambda_{i}\psi_{i}.
        \end{split}
        \eeq
        Besides, it is easy to see for each $i$, $\psi_{i}$ must be equivalent from Lemma \ref{lemma:factor_states}.
    \end{proof}
    Before ending this section, let us mention some important properties of factors (without proof).
    \begin{lemma}[Theorem 9.13 of \cite{kadison1997fundamentals2}]
        \begin{enumerate}
        The following results are true.
            \item Let $\psi$ and $\phi$ be two quasi-equivalent factor states, then they are of the same type.
            \item Let $\cM\subseteq\B(\cH)$ be a von Neumann factor, then $\cM'$ is again a factor with the same type.
        \end{enumerate}
    \end{lemma}
    
    In general, the structure of $*$-automorphisms of a $C^{*}$-algebra can be highly complicated. However, the situation significantly simplifies when it is a type-I factor. The following lemma demonstrates this, and it will be useful in proving Lemma \ref{lemma:typeI_KS}.
    
    \begin{lemma}[example II.5.5.14 of Ref.~\cite{Blackadar:2006OA}]\label{lemma:automoprhism_typeI}
        Let $\cM$ be a factor of type-I, then all $*$-automorphisms of $\cM$ are inner, \ie they are all given by $\Ad_{U}$ for some unitary operator $U\in\cM$.
    \end{lemma}
    The following elementary proof is due to Ref. \cite{Argerami2017automorphism}.
    \begin{proof}
        We assume $\cM\simeq\B(\cH)$ and $\dim\cH=\infty$ below. The proof for the type-I$_{n}$ case is the same.
        Let $\theta:\B(\cH)\to\B(\cH)$ be a $*$-automorphism and pick up an orthonormal basis $\{|n\ra\}_{n=1,2,\dots}$ of $\cH$. Consider the projector $P_{n}:=|n\ra\la n|$. It is clear that $\theta(P_{1})$ also has rank 1. Now consider a normalized state $|\eta_{1}\ra\in\im(\theta(P_{1}))$, we then define an operator $U$ by 
        \beq
        U|n\ra=\theta(|n\ra\la 1|)|\eta_{1}\ra
        \eeq
        Below we check $U$ is indeed unitary and $\theta=\Ad_{U}$.

        Notice that
        \beq
        \la n'|U^{\dagger}U|n\ra=\la\eta_{1}|\theta(\la n'|n\ra P_{1})|\eta_{1}\ra=\delta_{n,n'}
        \eeq
        This means $U$ is a unitary operator. 
        
        By Lemma \ref{lemma:finite_rank}, finite rank operators are dense in $\B(\cH)$ with the weak operator topology. Below we show $\theta=\Ad_{U}$ for finite-rank operators first and then we extend this equality to $\B(\cH)$ by continuity{\footnote{$\theta$ is a $*$-automorphism and it is hence continuous in norm topology of $\B(\cH)$. Since weak operator topology is weaker than norm topology, we conclude $\theta$ is also continuous in weak operator topology.}}. To see $\theta=\Ad_{U}$ on finite-rank operators, it suffices to show $\theta=\Ad_{U}$ on rank 1 operators, \ie on $|n\ra\la n'|$,
        \beq
        \begin{split}
            U|n\ra\la n'|U^{\dagger}&=\theta(|n\ra\la 1|)|\eta_{1}\ra\la\eta_{1}|\theta(|1\ra\la n|)\\
            &=\theta(|n\ra\la 1|) \theta(|1\ra\la 1|)\theta(|1\ra\la n'|)\\
            &=\theta(|n\ra\la n'|)
        \end{split}
        \eeq
    In summary, $\theta=\Ad_{U}$ on $\B(\cH)$.
    \end{proof}
    
    This lemma will be used to show Lemma \ref{lemma:typeI_KS} later.

\subsection{Split property and area law}\label{sec:split}

In the previous subsection, we have discussed the correlation properties of states. In this and the next subsections, we will discuss entanglement properties.

Given a subset $\Gamma$ (finite or infinite) of our infinite lattice $\Lambda$, the algebra of quasi-local operators always decomposes as
\beq
\A^{ql}\simeq \A_{\Gamma}\otimes\A_{\Gamma^{c}}
\eeq
Recall that $\A_{\Gamma}$ is the quasi-local operator algebra supported on $\Gamma$ and $\Gamma^{c}:=\Lambda\setminus\Gamma$ is complement of $\Gamma$. Since $\Gamma$ is not necessarily finite, operators in $\A_{\Gamma}$ are not necessarily local. Nevertheless, if $\Gamma$ is finite, $\A_{\Gamma}$ is the local operator algebra supported on $\Gamma$ and hence it is finite dimensional.

Before we proceed, let us define the tensor product of states.
\begin{definition}
    For any two $C^*$-algebra $\cC_{1},\cC_{2}$ and two states $\psi_{i}:\cC_{i}\to\bbC,i=1,2$, the tensor product of states $\psi_{1}\otimes\psi_{2}$ is a state of $\cC_{1}\otimes\cC_{2}$\footnote{We suppress the subtleties in the tensor product of $C^*$-algebras (see Sec 3.2.2 of Ref. \cite{Naaijkens_2017} for details).}, defined as
    \beq
    (\psi_{1}\otimes\psi_{2})(A_{1}\otimes A_{2}):=\psi_{1}(A_{1})\psi_{2}(A_{2})
    \eeq
    for any $A_{i}\in\cC_{i}$.
\end{definition}

Given a pure state $\psi$ of $\A^{ql}$ and the above decomposition $\A^{ql}\simeq \A_{\Gamma}\otimes\A_{\Gamma^{c}}$, we may not be able to factorize $\psi$ into the form $\psi_{\Gamma}\otimes\psi_{\Gamma^{c}}$ with some pure states $\psi_{\Gamma}:\A^{ql}_{\Gamma}\to \bbC$ and $\psi_{\Gamma^{c}}:\A^{ql}_{\Gamma^{c}}\to \bbC$. In the case of 1D system and $\Gamma=(-\infty,0)$, we have the notion of the split property\footnote{The split property in two dimensions is also proposed in Ref. \cite{Naaijkens2022split}, but we do not need it in the present paper.}.
\begin{definition}\label{definition:split_property}
    In a 1D system with the following decomposition
    \beq
    \A^{ql}\simeq \A_{<0}\otimes\A_{\geqslant 0}
    \eeq
    where $\A_{<0}$ is a subalgebra of quasi-local operators which are supported on $\Gamma=(-\infty,0)$ only and similarly for $\A_{\geqslant 0}$.
    A pure state $\psi$ is said to split at the origin if
    \beq\label{eq:splits_0}
    \psi\simeq\psi_{<0}\otimes\psi_{\geqslant 0}
    \eeq
    where $\psi_{<0}$ (resp. $\psi_{\geqslant0}$) is a pure state of $\A_{<0}$ (resp. $\A_{\geqslant 0}$) and the equivalence relation ``$\simeq$" is in the sense of definition \ref{def:inequivalence_GNS}. More generally, a mixed state $\psi$ splits if there exist states $\psi_{\geqslant0}:\A_{\geqslant 0}\to\bbC$ and $\psi_{<0}:\A_{<0}\to\bbC$, such that
    \beq
    \psi\sim\psi_{<0}\otimes\psi_{\geqslant0}
    \eeq
    where the quasi-equivalence ``$\sim$" is in the sense of 
    definition \ref{def:quasi_disjoint}.
\end{definition}

Note that we only define the split property for {\it infinite} chains. Also, the split property above is defined with respect to a particular site on the chain, such as the origin. One may wonder if a state splits at a site, whether it necessarily splits at other sites. In Corollary \ref{coro:split_everywhere}, we show that this is indeed the case, so we do not have to refer to the site with respect to which a state splits.

In quantum mechanics, one can tensor different states to get a state of the composite system. Conversely, one can do the partial trace operation to reduce the degrees of freedom. What does partial trace correspond to in the operator algebra formalism? Suppose $\Gamma\subset\Lambda$ is a subset (it can be finite or infinite) and given a state $\psi:\A^{ql}\to\bbC$, note that $\A_{\Gamma}$ can be viewed as a subalgebra of $\A^{ql}$, hence one can get a state $\psi|_{\Gamma}$ of $\A_{\Gamma}$ by the restriction
\beq
\psi|_{\Gamma}(A)=\psi(A),\forall\,A\in\A_{\Gamma}
\eeq
Here $\psi|_\Gamma$ can be viewed as the state obtained from $\psi$ by partially tracing the degrees of freedom in $\Gamma^c$. Please do not confuse $\psi|_{\Gamma}$ with $\psi_{\Gamma}$ in the definition of the split property, \ie even if the state $\psi$ splits between $\Gamma$ and $\Gamma^c$, $\psi_{\Gamma}$ may {\it not} be obtained by restricting $\psi$ to $\Gamma$ except for some exceptionally special examples. In particular, states obtained by restriction (partial trace), \eg $\psi|_{\Gamma}$, are not necessarily pure even if $\psi$ is pure, but $\psi_\Gamma$ is pure by definition for a pure $\psi$.

To give an example of states with the split property, let us define \textit{product states}.
\begin{definition}\label{def:product_state}
    For any lattice system (in any spatial dimension), a state $\omega$ is said to be factorized or a product state if 
    \beq
    \omega(AB)=\omega(A)\omega(B)
    \eeq
    whenever $A$ and $B$ have disjoint supports. Besides, in the present paper, product states are always assumed to be pure unless otherwise specified.
\end{definition}

As we will see below, a product state is arguably the simplest state in quantum many-body physics, since it means that spins at different site are not entangled with each other at all. A state that is not a product state is referred to as an entangled state.

\begin{lemma}\label{lemma:product_split}
    Let $\Gamma\subset\Lambda$ (finite or infinite and $\Lambda$ is any lattice) and $\omega$ be a (pure) product state, we always have
    \beq\label{eq:product_split}
    \omega=\omega|_{\Gamma}\otimes\omega|_{\Gamma^{c}}
    \eeq
    Moreover, $\omega|_{\Gamma}$ and $\omega|_{\Gamma^{c}}$ are again pure product states.
\end{lemma}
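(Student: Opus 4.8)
The plan is to prove the factorization \eqref{eq:product_split} directly from the product-state property of $\omega$, and then deduce purity of the two factors from purity of $\omega$. First note that $\A^{ql}_\Gamma$ and $\A^{ql}_{\Gamma^c}$ are unital subalgebras of $\A^{ql}$, so the restrictions $\omega|_\Gamma$ and $\omega|_{\Gamma^c}$ inherit positivity and satisfy $\omega|_\Gamma(I)=\omega(I)=1$; hence they are states of the respective algebras, and $\omega|_\Gamma\otimes\omega|_{\Gamma^c}$ is a well-defined state of $\A^{ql}_\Gamma\otimes\A^{ql}_{\Gamma^c}\simeq\A^{ql}$ (suppressing, as the paper does, the subtleties of the $C^*$ tensor product). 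To establish \eqref{eq:product_split} I would check that the two states agree on simple tensors $A\otimes B$ with $A\in\A^{ql}_\Gamma$, $B\in\A^{ql}_{\Gamma^c}$: under the identification, $A\otimes B$ is just the product $AB$ of two operators with disjoint supports, so by definition~\ref{def:product_state}, $\omega(AB)=\omega(A)\omega(B)=\omega|_\Gamma(A)\,\omega|_{\Gamma^c}(B)=(\omega|_\Gamma\otimes\omega|_{\Gamma^c})(A\otimes B)$. Since both sides are bounded linear functionals and the linear span of such simple tensors is norm-dense in $\A^{ql}$ --- indeed, even every local operator $A\in\A^l_X$ is a finite sum $\sum_i A_i\otimes B_i$ with $\mathrm{supp}(A_i)\subset X\cap\Gamma$, $\mathrm{supp}(B_i)\subset X\cap\Gamma^c$, and $\A^l$ is dense in $\A^{ql}$ --- the two functionals coincide, which is \eqref{eq:product_split}. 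If one prefers to invoke the product-state property only for local operators with disjoint finite supports, one inserts an extra norm-continuity step at the level of $\A^l$ before passing to the completion.

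For the ``moreover'' part, the product-state property of the restrictions is immediate: if $A,B\in\A^{ql}_\Gamma$ have disjoint supports (inside $\Gamma$, hence disjoint in $\Lambda$), then $\omega|_\Gamma(AB)=\omega(AB)=\omega(A)\omega(B)=\omega|_\Gamma(A)\,\omega|_\Gamma(B)$, and symmetrically for $\Gamma^c$. To get purity, suppose $\omega|_\Gamma=t\phi_1+(1-t)\phi_0$ with $0<t<1$ and $\phi_0,\phi_1$ states of $\A^{ql}_\Gamma$. Tensoring with $\omega|_{\Gamma^c}$ and using \eqref{eq:product_split} together with bilinearity of the tensor product of states (checked on simple tensors and extended by continuity) gives $\omega=t\,(\phi_1\otimes\omega|_{\Gamma^c})+(1-t)\,(\phi_0\otimes\omega|_{\Gamma^c})$; since $\omega$ is pure, $\phi_i\otimes\omega|_{\Gamma^c}=\omega$ for $i=0,1$, and restricting both sides to $\A^{ql}_\Gamma$ (using $\omega|_{\Gamma^c}(I)=1$) yields $\phi_0=\phi_1=\omega|_\Gamma$. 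Hence $\omega|_\Gamma$ is pure, and likewise $\omega|_{\Gamma^c}$. Alternatively one may simply cite the standard fact that a product state on a minimal $C^*$ tensor product is pure if and only if both factors are.

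The main obstacle is the purity statement in the last step: the factorization itself is bookkeeping with the definitions, but purity requires either the contradiction argument above --- whose only delicate point is that $\phi_i\otimes\omega|_{\Gamma^c}$ is a genuine state and that tensoring is compatible with convex combinations --- or the $C^*$-algebra fact just mentioned. A secondary technical point, routine but worth spelling out, is the density/continuity argument used to extend an identity known only on simple tensors (or on local operators) to all of $\A^{ql}$.
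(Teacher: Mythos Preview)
Your proposal is correct and follows essentially the same route as the paper: verify the factorization on simple tensors using the product-state property, check that each restriction is again a product state by the same trick, and then lift a would-be decomposition of $\omega|_\Gamma$ to one of $\omega$ by tensoring with $\omega|_{\Gamma^c}$ and invoke purity of $\omega$. The only cosmetic difference is that the paper phrases the purity step via the majorization characterization (definition~\ref{def:alternative_pure}): given a positive $\rho\leqslant\omega|_\Gamma$, one forms $\rho\otimes\omega|_{\Gamma^c}\leqslant\omega$ and concludes $\rho\otimes\omega|_{\Gamma^c}=\lambda\omega$, hence $\rho=\lambda\,\omega|_\Gamma$; your version uses the convex-combination characterization (definition~\ref{def:mixed_pure}) instead, which amounts to the same argument.
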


Specializing to 1D, the above lemma shows that all product states split.
\begin{proof}[Proof of Lemma \ref{lemma:product_split}]
    For any $A\in\A_{\Gamma},B\in\A_{\Gamma^{c}}$, we have
    \beq
    \omega(AB)=\omega(A)\omega(B)=\omega|_{\Gamma}(A)\omega|_{\Gamma^{c}}(B)
    \eeq
    Then Eq.~\eqref{eq:product_split} follows. To show that $\omega|_{\Gamma}$ is pure, let $\rho:\A_{\Gamma}\to\bbC$ be a positive linear functional majorized by $\omega|_{\Gamma}$, \ie $\rho\leqslant\omega|_{\Gamma}$ (see definition~\ref{def:alternative_pure}). We then have
    \beq
    \rho\otimes\omega|_{\Gamma^{c}}\leqslant\omega|_{\Gamma}\otimes\omega|_{\Gamma^{c}}=\omega
    \eeq
    By assumption, $\omega$ is a pure state, hence $\rho\otimes \omega|_{\Gamma^{c}}=\lambda\omega=\lambda\omega|_{\Gamma}\otimes\omega|_{\Gamma^{c}}$ for some $\lambda\in [0,1]$. Thus for any $A\in\A_{\Gamma}$, we have
    \beq
    \rho(A)=(\rho\otimes\omega|_{\Gamma^{c}})(A\otimes I)=\lambda\omega(A\otimes I)=\lambda\omega|_{\Gamma}(A)
    \eeq
    Hence $\omega|_{\Gamma}$ is pure by definition \ref{def:alternative_pure}.

    To see that $\omega|_{\Gamma}$ is a product state, let $A,B\in\A_{\Gamma}$ be quasi-local operators with disjoint support. Then
    \beq
    \omega|_{\Gamma}(AB)=\omega(AB)=\omega(A)\omega(B)=\omega|_{\Gamma}(A)\omega|_{\Gamma}(B)
    \eeq
    Thus $\omega|_{\Gamma}$ is indeed a pure product state. So is $\omega|_{\Gamma^{c}}$ for similar reasons.
\end{proof}

\begin{remark}
     However, in general, there are non-product states which satisfy the split property. For example, as shown in Ref.~\cite{Kapustin2020invertible} (see Lemma 4.2 therein), short-range entangled states split, but they are generically not product states.
\end{remark}

In the special case where $\Gamma$ is finite, $\psi|_{\Gamma}$ is a state of $\A_{\Gamma}\simeq \otimes_{k\in\Gamma}\B(\cH_{k})$, \ie the quasi-local algebra is finite dimensional. Therefore, one can represent $\psi|_{\Gamma}$ by a density matrix $\rho_{\psi}$, as we have seen in example \ref{example:denstiy_matrix}. 
\begin{definition}\label{def:area_law}
    Let $\Gamma$ be a finite region as above and a density matrix $\rho_{\psi}$ represent a state $\psi$ on $\Gamma$ (\ie $\psi|_{\Gamma}$), then one defines the entanglement entropy of $\psi|_\Gamma$ as
    \beq
    S(\rho_{\psi})=-\tr(\rho_{\psi}\log\rho_{\psi})
    \eeq
    A state $\psi$ is said to satisfy the area law if
    \beq
    S=O(|\partial\Gamma|)
    \eeq
    for any finite region $\Gamma$. Here $\partial\Gamma$ means the boundary of $\Gamma$.
\end{definition}

\begin{remark}
    A more sophisticated way to define entropy without restricting the quantum states to a finite region is possible. They require more tools from the theory of von Neumann algebras and we will discuss these definitions in Appendix~\ref{sec:entropyI}.
\end{remark}

Especially, in 1D, the area law means that the entanglement entropy of a connected region is upper bounded by a constant that is independent of the size of this region. The following lemma is of fundamental importance.

\begin{lemma}[Theorem 1.5 of Ref. \cite{matsui2011boundedness}]\label{thm:area_law_split}
    In 1D spin chains, a factor state $\psi$ splits if it satisfies the area law.
\end{lemma}
Readers are referred to Sec. 2 of Ref. \cite{matsui2011boundedness} for a proof of Lemma \ref{thm:area_law_split}.

The lemma below characterizes the “uniqueness" of the factorization of states.
\begin{lemma}\label{lemma:separate_equivalence}
    Given a decomposition $\A^{ql}\simeq \A_{\Gamma}\otimes\A_{\Gamma^{c}}$, where $\Gamma$ is a (finite or infinite) subregion of the lattice, and given pure states $\psi_{\Gamma},\psi_{\Gamma}'$ (reps. $\psi_{\Gamma^{c}},\psi_{\Gamma^{c}}'$) of $\A_{\Gamma}$ (resp. $\A_{\Gamma^{c}}$), if 
    \beq
    \psi_{\Gamma}\otimes\psi_{\Gamma^{c}}\simeq \psi_{\Gamma}'\otimes\psi_{\Gamma^{c}}'
    \eeq
    Then $\psi_{\Gamma}\simeq \psi_{\Gamma}'$ and $\psi_{\Gamma^{c}}\simeq\psi_{\Gamma^{c}}'$. Similar results hold if these states are not necessarily pure and one replaces unitary equivalence $\simeq$ by quasi-equivalence $\sim$.
\end{lemma}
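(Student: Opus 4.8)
The plan is to leverage Proposition~\ref{prop:equivalence_of_states}, which characterizes equivalence of pure states $\psi \simeq \sigma$ by the condition that for every $\epsilon > 0$ there is a finite region $\Gamma_\epsilon$ such that $|\psi(A) - \sigma(A)| < \epsilon \|A\|$ for all $A$ supported in $\Gamma_\epsilon^c$. First I would use this to restate the hypothesis $\psi_\Gamma \otimes \psi_{\Gamma^c} \simeq \psi_\Gamma' \otimes \psi_{\Gamma^c}'$: for every $\epsilon > 0$ there is a finite $\Delta_\epsilon \subset \Lambda$ such that $|(\psi_\Gamma \otimes \psi_{\Gamma^c})(C) - (\psi_\Gamma' \otimes \psi_{\Gamma^c}')(C)| < \epsilon \|C\|$ for all $C \in \A^{ql}_{\Delta_\epsilon^c}$. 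The goal is then to extract from this the analogous statement for $\psi_\Gamma$ versus $\psi_\Gamma'$ on the factor $\A^{ql}_\Gamma$, and symmetrically for the complement.

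Next I would test against factorized operators. Take any $A \in \A^{ql}_{\Gamma \cap \Delta_\epsilon^c}$ (i.e. supported in $\Gamma$ but away from the finite set $\Delta_\epsilon$), and pair it with the identity on $\Gamma^c$; then $A \otimes I_{\Gamma^c} \in \A^{ql}_{\Delta_\epsilon^c}$ and the inequality above reads $|\psi_\Gamma(A) \psi_{\Gamma^c}(I) - \psi_\Gamma'(A)\psi_{\Gamma^c}'(I)| = |\psi_\Gamma(A) - \psi_\Gamma'(A)| < \epsilon \|A\|$, using normalization $\psi_{\Gamma^c}(I) = \psi_{\Gamma^c}'(I) = 1$. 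Since $\Gamma \cap \Delta_\epsilon^c = \Gamma \setminus (\Gamma \cap \Delta_\epsilon)$ and $\Gamma \cap \Delta_\epsilon$ is finite, this is exactly the condition in Proposition~\ref{prop:equivalence_of_states} applied to the pure states $\psi_\Gamma, \psi_\Gamma'$ of $\A^{ql}_\Gamma$, with the finite region $\Gamma \cap \Delta_\epsilon$. Hence $\psi_\Gamma \simeq \psi_\Gamma'$; the argument for $\psi_{\Gamma^c} \simeq \psi_{\Gamma^c}'$ is identical after swapping the roles of $\Gamma$ and $\Gamma^c$.

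The one subtlety to handle carefully is that Proposition~\ref{prop:equivalence_of_states} is stated for pure states of $\A^{ql}$ (the full algebra), whereas here we want it for pure states of the factor algebra $\A^{ql}_\Gamma$, which is itself a $C^*$-algebra. I would note that $\A^{ql}_\Gamma$ with $\Gamma$ infinite is again a quasi-local algebra over the sublattice $\Gamma$, so the proposition applies verbatim within that algebra; the finite subsets in its statement are finite subsets of $\Gamma$, which is exactly what $\Gamma \cap \Delta_\epsilon$ provides. I expect this identification of $\A^{ql}_\Gamma$ as a bona fide quasi-local algebra — and checking that the operator norm restricted to $\A^{ql}_\Gamma \otimes I$ agrees with its intrinsic norm, so the $\epsilon\|A\|$ bound transfers cleanly — to be the only real point requiring care; the rest is a direct substitution of factorized test operators into the near-equality guaranteed by equivalence.
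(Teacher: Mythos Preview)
Your proposal is correct and follows essentially the same route as the paper: both invoke Proposition~\ref{prop:equivalence_of_states}, test the near-equality on factorized operators supported outside the finite region, and set the second tensor factor equal to the identity to isolate $|\psi_\Gamma(A)-\psi_\Gamma'(A)|<\epsilon\|A\|$. Your additional remark that Proposition~\ref{prop:equivalence_of_states} must be applied to $\A^{ql}_\Gamma$ as a quasi-local algebra in its own right (with finite region $\Gamma\cap\Delta_\epsilon$) is a point the paper leaves implicit.
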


\begin{proof}
    We use Proposition \ref{prop:equivalence_of_states} or Proposition \ref{prop:mixed_states_equivalence} for general states. By assumption, for any $\epsilon>0$, there exists a finite region $P_{\epsilon}$ such that for any operator $S\in \A_{P^{c}_{\epsilon}}$, we have
    \beq
    |(\psi_{\Gamma}\otimes\psi_{\Gamma^{c}})(S)-(\psi'_{\Gamma}\otimes\psi'_{\Gamma^{c}})(S)|<\epsilon||S||
    \eeq
    Now we choose $S=A\otimes B$ for any $A\in \A_{\Gamma\setminus P_{\epsilon}}$ and $B\in \A_{\Gamma^{c}\setminus P_{\epsilon}}$, we have
    \beq
    |\psi_{\Gamma}(A)\psi_{\Gamma^{c}}(B)-\psi_{\Gamma}'(A)\psi_{\Gamma^{c}}'(B)|<\epsilon||A||\cdot||B||
    \eeq
    Now let $B=I$, hence $\psi_{\Gamma^{c}}(B)=\psi'_{\Gamma^{c}}(B)=1$,
    \beq
    |\psi_{\Gamma}(A)-\psi'_{\Gamma}(A)|<\epsilon||A||
    \eeq
    So the lemma is proved by using Proposition \ref{prop:equivalence_of_states} or Proposition \ref{prop:mixed_states_equivalence}.
\end{proof}

It is worth pointing out that the split property is only nontrivial if the lattice is divided into two infinite subregions. In fact, we have
\begin{proposition}\label{prop:factorize}
    Let $\Gamma$ be a \textbf{finite} subset of the lattice $\Lambda$ and $\psi$ be a pure state of $\A^{ql}$, then there exists pure states $\phi_{\Gamma}:\A_{\Gamma}\to\bbC$ and $\phi_{\Gamma^{c}}:\A_{\Gamma^{c}}\to\bbC$, such that
    \beq
    \psi\simeq \phi_{\Gamma}\otimes\phi_{\Gamma^{c}}.
    \eeq
    If $\psi$ is mixed, then there exist (possibly mixed) states $\phi_\Gamma: \A_\Gamma\to\bbC$ and $\phi_{\Gamma^c}: \A_{\Gamma^c}\to\bbC$, such that
    \beq
        \psi\sim\phi_\Gamma\otimes\phi_{\Gamma^c}.
    \eeq
    
\end{proposition}
Namely, a pure state can always be ``approximately" factorized with respect to the partition $\Gamma\bigsqcup\Gamma^{c}$ as long as $|\Gamma|<\infty$.

To prove the above proposition, we need the following lemma.
\begin{lemma}\label{lemma:factorization_of_GNS}
    Suppose $\psi:\A^{ql}\to\bbC$ is a pure state and $\Gamma$ is a finite subset of $\Lambda$. Let $\cH_{\Gamma}:=\bigotimes_{k\in\Lambda}\cH_{k}$ be the Hilbert space of $\Gamma$ obtained by tensoring the local Hilbert spaces, then there exists another Hilbert space $\cH_{\Gamma^{c}}$ such that
    \beq
    \cH_{\psi}\simeq \cH_{\Gamma}\otimes \cH_{\Gamma^{c}}
    \eeq
    where $\cH_{\psi}$ is the GNS Hilbert space associated to $\psi$. Moreover, $\cH_{\Gamma^{c}}$ carries an irreducible representation of $\A_{\Gamma^{c}}$.
\end{lemma}
\begin{proof}[Proof of Lemma \ref{lemma:factorization_of_GNS}]
    Let $\pi_{\psi}:\A^{ql}\to\B(\cH_{\psi})$ be the GNS representation of $\psi$. Note that $\cH_{\Gamma}$ is finite-dimensional, and we denote $\dim(\cH_{\Gamma})=d$.

    Consider the restriction $\pi_{\psi}|_{\A_{\Gamma}}$, which gives a representation of $\A_{\Gamma}$. Now this representation decomposes into a direct sum of irreducible representations of $\A_{\Gamma}$, so we have
    \beq
    \cH_\psi\simeq \bigoplus_{i}V_{i}
    \eeq
    where each $V_{i}$ carries an irreducible representation of $\A_{\Gamma}$. By Proposition \ref{prop:fd_C*_rep}, we have $V_{i}\simeq V_{j}\simeq\cH_{\Gamma}$ for all $i,j$. So $\cH_\psi\simeq\cH_{\Gamma}\oplus\cH_{\Gamma}\oplus\cdots$. Hence there exists another Hilbert space $\cH_{\Gamma^{c}}$, such that
    \beq
    \cH_\psi\simeq \cH_{\Gamma}\otimes\cH_{\Gamma^{c}}
    \eeq
    More explicitly, $\cH_{\Gamma^{c}}$ can be obtained as follows. Fixing any rank-1 orthogonal projector $P\in\B(\cH_{\Gamma})$, we can choose $\cH_{\Gamma^{c}}=P(\cH_\psi)$.

    It remains to show that $\cH_{\Gamma^{c}}$ is irreducible as a representation of $\A_{\Gamma^{c}}$. If it is reducible, $\cH_{\psi}$ would be reducible as a representation of $\A^{ql}=\A_{\Gamma}\otimes\A_{\Gamma^{c}}$. This is impossible because $\psi$ is pure.
\end{proof}

We remark that $\cH_{\Gamma^{c}}$ is not obtained by tensoring all local Hilbert spaces in $\Gamma^{c}$, since this infinite tensor product of Hilbert spaces does not make sense. Besides we should keep in mind that $\cH_{\Gamma^{c}}$ depends on the state $\psi$. We also remark if $\psi$ is mixed, the factorized structure $\cH_{\psi}\simeq \cH_{\Gamma}\otimes \cH_{\Gamma^{c}}$ still holds, but $\cH_{\Gamma^c}$ may carry a reducible representation of $\A_{\Gamma^c}$.

Now we turn to the proof of Proposition \ref{prop:factorize}.

\begin{proof}[Proof of Proposition \ref{prop:factorize}]
    We first assume that $\psi$ is pure. Consider the GNS Hilbert space $\cH_{\psi}$ of $\psi$ and the reference state $|\psi\ra$. By Lemma \ref{lemma:factorization_of_GNS} we always have
    \beq \label{eq: tensor product Hilbert space}
    \cH_{\psi}\simeq \cH_{\Gamma}\otimes\cH_{\Gamma^{c}}
    \eeq
    Then we choose any vector states $|\phi_{\Gamma}\ra\in\cH_{\Gamma}$ and $|\phi_{\Gamma^{c}}\ra\in\cH_{\Gamma^{c}}$. Note that $|\phi_{\Gamma}\ra$ gives rise to a linear functional $\phi_{\Gamma}:\A_{\Gamma}\to\bbC$. And similarly we get $\phi_{\Gamma^{c}}:\A_{\Gamma^{c}}\to\bbC$. Since $\cH_{\Gamma}$ (resp. $\cH_{\Gamma^{c}}$) is an irreducible representation of $\A_{\Gamma}$ (resp. $\A_{\Gamma^{c}}$), $\phi_{\Gamma}$ (resp. $\phi_{\Gamma^{c}}$) is a pure state. Since $|\phi_{\Gamma}\ra\otimes|\phi_{\Gamma^{c}}\ra\in\cH_{\psi}$ by construction, $|\phi_{\Gamma}\ra\otimes|\phi_{\Gamma^{c}}\ra$ can be obtained from $|\psi\ra$ by applying to the latter finitely many quasi-local operators. Therefore, the third condition in Proposition \ref{prop:equivalence_of_states} holds and $\psi\simeq\psi_\Gamma\otimes\psi_{\Gamma_c}$.

    If $\psi$ is mixed, the above proof still applies to show the second part of Proposition \ref{prop:factorize}, except that now we cannot conclude that $\phi_\Gamma$ and $\phi_{\Gamma^c}$ are pure.
\end{proof}
\begin{remark}
    It is straightforward to generalize both Lemma \ref{lemma:factorization_of_GNS} and Proposition \ref{prop:factorize} to higher dimensional lattice systems.
\end{remark}

\begin{corollary}\label{coro:split_everywhere}
    Consider a 1D quantum spin chain and let $\psi$ be a state of $\A^{ql}$ which splits at origin. Then $\psi$ splits at any site $n$.
\end{corollary}
\begin{proof}
    Let us start with the case where $\psi$ is pure. By definition $\psi\simeq\psi_{<0}\otimes\psi_{\geqslant0}$ for some pure states $\psi_{<0}$ and $\psi_{\geqslant0}$. We further decompose the right half chain into $[0,n)\sqcup[n,\infty)$. By the same argument as in Proposition \ref{prop:factorize}, there exist pure states $\psi_{[0,n)}$ and $\psi_{\geqslant n}$ such that
    \beq
    \psi_{\geqslant0}\simeq \psi_{[0,n)}\otimes\psi_{[n,\infty)}
    \eeq
    Hence, we see that $\psi$ splits at the site $n$ since $\psi\simeq\psi_{<0}\otimes\psi_{\geqslant n}$, with $\psi_{<n}:=\psi_{<0}\otimes \psi_{[0,n)}$ and $\psi_{\geqslant n}$ constructed above.

    If $\psi$ is mixed, a similar proof still applies.
\end{proof}
Due to this corollary, we do not need to specify at which site the state splits and we will simply say that a state splits if Eq.~\eqref{eq:splits_0} holds.

A direct consequence of Corollary \ref{coro:split_everywhere} is that the converse of Lemma \ref{thm:area_law_split} does not hold in general. To construct a counterexample, let $\psi_{<0}$ and $\psi_{\geqslant 0}$ be pure states of $\A_{<0}$ and $\A_{\geqslant0}$ respectively, and we assume that they violate the area law of entanglement entropy. Then we know that $\psi=\psi_{<0}\otimes\psi_{\geqslant0}$ splits at 0, so by Corollary \ref{coro:split_everywhere} it splits at all sites. However, of course $\psi$ does not satisfy area law by construction.

\subsection{Modular theory and entropy}\label{sec:entropyI}

This section is highly technical and can be safely skipped for the first reading. The take-home message of this section is Proposition \ref{prop:factor_entropy}.

Earlier, given a state $\psi$, we define its entanglement entropy by restricting it on a finite region (see Definition \ref{def:area_law}). This definition is intuitive and useful, but it is not conceptually satisfactory. It is natural to ask: Can we define the entropy on the whole infinite spin chain? This is indeed possible. In this section, we present one possible definition introduced by Ohya and Petz in Ref.~\cite{ohya2004quantum}. For the Ohya-Petz entropy, we show that
\begin{proposition}\label{prop:factor_entropy}
    Let $\psi$ be a factor state of a 1d quantum spin chain on lattice $\Lambda$, if $\psi$ satisfies the entanglement area law, \ie
    \beq
    S(\psi|_{\Gamma})<\const
    \eeq
    for any connected finite subset $\Gamma\subseteq\Lambda$, where $\const$ is a constant that does not depend on the choice of $\Gamma$, then $\psi$ is of type-I.
\end{proposition}

The definition of Ohya-Petz entropy relies on the relative entropy on von Neumann algebras and $C^{*}$-algebras, introduced by Araki \cite{Araki1976entropyI,Araki1977entropyII}. An accessible introduction to Araki's construction can be found in Ref.~\cite{Witten2018entanglement}. We only sketch some main ideas here. Below we begin with some definitions.
\begin{definition}
    Let $\pi:\A^{ql}\to\B(\cH)$ be a representation of $\A^{ql}$, whose associated von Neumann algebra is $\cM:=\pi(\A^{ql})''$. Then,
    \begin{enumerate}
        \item An abstract state $\psi$ is called separating or faithful if $\psi(a^*a)=0$ implies $a=0$ for $a\in\A^{ql}$.
        \item A vector state $|\phi\ra$ is called cyclic with respect to $\A^{ql}$ (resp. $\cM$) if $\pi_{\psi}(\A^{ql})|\psi\ra$ (resp. $\cM|\psi\ra$) is dense in $\cH$ under its norm topology.
    \end{enumerate}
\end{definition}
Recall that in the GNS representation, the GNS vector $|\psi\ra$ is cyclic with respect to $\A^{ql}$ or $\cM_{\psi}$. Besides, pure states are hardly separating since they typically have nonzero GNS ideals.

To define Araki's relative entropy, we need the so-called modular theory of von Neumann algebras due to Tomita and Takesaki, which plays an important role in the classification of type-III von Neumann algebras. This theory begins with defining the Tomita operator\footnote{Note that in our exposition, we use $\cS$ for modular operators and $S$ for entropies, while in Ref.~\cite{Witten2018entanglement} these meanings are reversed.} $\cS_{\psi}$. Let $\psi$ be a state of $\A^{ql}$, and we take its GNS representation with GNS triple $(\cH_\psi, \pi_\psi, |\psi\ra)$. The $*$-operation on $\A^{ql}$ should induce the following operation on $\cH_{\psi}$,
\beq
\cS_{\psi}(\pi_{\psi}(a)|\psi\ra)=\pi_{\psi}(a^{*})|\psi\ra,\quad\forall a\in\A^{ql}
\eeq
This determines the Tomita operator $\cS_{\psi}$ as an anti-linear operator on a dense subspace $\pi_{\psi}(\A^{ql})|\psi\ra$ of $\cH_{\psi}$. However, we note that $\cS_{\psi}$ is well-defined iff $\psi$ is separating for $\A^{ql}$. Otherwise, it can happen that $\pi_{\psi}(a)|\psi\ra=0$ while $\pi_{\psi}(a^*)|\psi\ra\not=0$ and $\cS_{\psi}$ does not exist. Below we assume $\psi$ is indeed separating\footnote{As is mentioned, pure states are \textbf{not} separating in general. But pure states do not interest us here.}.

Any anti-linear operator admits a unique polar decomposition,
\beq
\cS_{\psi}=J_{\psi}\Delta_{\psi}^{1/2}
\eeq
where $J_{\psi}$ is an anti-unitary operator called \textit{modular conjugation} while $\Delta_{\psi}=\cS_{\psi}^{\dagger}\cS_{\psi}$ is a self-adjoint positive operator called \textit{modular operator}\footnote{Recall that the adjoint of an anti-linear operator $\cS$ is defined as $\la\psi'|\cS\psi\ra=\overline{\la\cS^{\dagger}\psi'|\psi\ra}$ for any two states in $\cH_{\psi}$.}. Clearly $\cS_{\psi}^{2}=1$ and it is thus invertible. This implies $\Delta_{\psi}$ is also invertible. It can be shown that $\Delta_{\psi}$ must be trivial if $\psi$ is of type-I or type-II. For type-III states, certain type of spectrum derived from $\Delta_{\psi}$ gives a more refined classification to type-III factors, see \eg Sec.~C.23 of Ref.~\cite{Landsman:2017hpa} for an exposition. We will not present details of this classification.

The definition of relative entropy demands a relative version of modular operators, which will be defined now. Let $\phi$ be another separating state of $\A^{ql}$, the relative Tomita operator $\cS_{\psi|\phi}$ is defined by 
\beq
\cS_{\psi|\phi}(\pi_{\psi}(a)|\psi\ra)=\pi_{\phi}(a^*)|\phi\ra,\quad\forall\,a\in\A^{ql}
\eeq
Conversely, one can define $\cS_{\phi|\psi}$ such that
\beq
\cS_{\phi|\psi}(\pi_{\phi}(a)|\phi\ra)=\pi_{\psi}(a^*)|\psi\ra,\quad\forall\,a\in\A^{ql}
\eeq
And one can easily show that $\cS_{\psi|\phi}\cS_{\phi|\psi}=1=\cS_{\phi|\psi}\cS_{\psi|\phi}$ and thus they are invertible. We can take the polar decomposition,
\beq
\cS_{\psi|\phi}=J_{\psi|\phi}\Delta_{\psi|\phi}^{1/2}
\eeq
where $J_{\psi|\phi}$ is anti-unitary and called the relative modular conjugation while $\Delta_{\psi|\phi}:=\cS_{\psi|\phi}^{\dagger}\cS_{\psi|\phi}$ is self-adjoint, positive and called the relative modular operator.

Now we are ready to define the relative entropy.
\begin{definition}\label{def:relative_entropy}
    Let $\psi$ and $\phi$ be two states on $\A^{ql}$, then their Araki relative entropy is defined to be
    \beq\label{eq:relative_entropy}
    S(\psi,\phi):=-\la\psi|\log(\Delta_{\psi|\phi})|\psi\ra
    \eeq
    More generally, the relative entropy can be defined for any pair of positive linear functionals (\ie unnormalized states).
\end{definition}
Especially, this definition does \textbf{not} require $\psi$ and $\phi$ fall into the same superselection sector.
In general, $S(\psi,\phi)$ takes value in $\R_{\geqslant0}$ and $\infty$.
It is also nontrivial to see why this definition agrees with the relative entropy in classical information theory or quantum mechanics. The relation between Definition \ref{def:relative_entropy} and the traditional relative entropy will be illustrated in the following example.
\begin{example}[Sec.~4 of Ref.~\cite{Witten2018entanglement}]\label{ex:relative_entropy}
    In this example, we study the relative entropy Eq.~\eqref{eq:relative_entropy} in a bi-partite system $\cH_{1}\otimes \cH_{2}$, where we assume $\dim\cH_{1}=\dim\cH_{2}=d<\infty$. We will show that the Araki relative entropy between two potentially mixed states $\psi,\phi\in\cH_1$ is identical to the usually defined relative entropy between these two states. The space $\cH_2$ here is introduced to purify $\psi$ and $\phi$, which simplifies the analysis.
    
    Let $\{|\eta_{k}\ra\}_{k=1,2,\dots,d}$ and $\{|\xi_{k}\ra\}_{k=1,2,\dots,d}$ be the orthonormal basis of $\cH_{1}$ and $\cH_{2}$ respectively, then any $|\psi\ra\in\cH_{1}\otimes \cH_{2}$ can be expanded as
    \beq
    |\psi\ra=\sum_{k=1}^{d}c_{k}|\eta_{k}\ra\otimes |\xi_{k}\ra.
    \eeq
    Let us write $\A_{k}:=\B(\cH_{k}),\,k=1,2$ and $|i,j\ra:=|\eta_{i}\ra\otimes|\xi_{j}\ra$. Then for any $a\in\A_{1}$ we have
    \beq
    (a\otimes 1)|\psi\ra=\sum_{k=1}^{d}c_{k}(a|\eta_{k}\ra)\otimes|\xi_{k}\ra
    \eeq
    If $c_{k}$'s are all nonzero, then $(a\otimes 1)|\psi\ra=0$ iff $a|\eta_{k}\ra=0$ for all $k$ and thus $a=0$. Therefore $|\psi\ra$ is separating for $\A_{1}$ iff all $c_{k}$'s are nonzero. In this case, it is easy to see that $|\psi\ra$ is also separating for $\A_{2}$. Below we assume this condition.
    
    We also need to show that $|\psi\ra$ is cyclic in this case. To this end, view $\A_{1}|\psi\ra$ and $\cH_{1}\otimes\cH_{2}$ as $\A_{1}$-representations and obviously $\A_{1}|\psi\ra\subseteq\cH_{1}\otimes\cH_{2}$. Therefore, there exists an orthogonal projection $P\in\A_{1}'=\A_{2}$ such that $P:\cH_{1}\otimes\cH_{2}\to\A_{1}|\psi\ra$. Therefore $P|\psi\ra=|\psi\ra\Rightarrow (1-P)|\psi\ra=0$. We note that $|\psi\ra$ is separating for $\A_{2}$ and $1-P\in\A_{2}$, thus we conclude $1-P=0$, which means $\A_{1}|\psi\ra=\cH_{1}\otimes \cH_{2}$, \ie $|\psi\ra$ is cyclic for $\A_{1}$. Similarly, it is also cyclic for $\A_{2}$.

    Let us write down the modular operator explicitly by setting $a\in\A_{1}$ to be an elementary matrix,
    \beq
    a|i\ra=|j\ra,\quad a|k\ra=0\,\text{if $k\not=i$}
    \eeq
    Its adjoint is given by
    \beq
    a^{\dagger}|j\ra=|i\ra,\quad a^{\dagger}|k\ra=0\,\text{if $k\not=j$}
    \eeq
    Therefore,
    \beq
    \begin{split}
        (a\otimes1)|\psi\ra&=c_{i}|j,i\ra\\
        (a^{\dagger}\otimes 1)|\psi\ra&=c_{j}|i,j\ra
    \end{split}
    \eeq
    The Tomita operator $\cS_{\psi}$ is given by 
    \beq
    \cS_{\psi}(|j,i\ra)=\frac{c_{j}}{\bar{c}_{i}}|i,j\ra
    \eeq
    Its adjoint is
    \beq
    \cS_{\psi}^{\dagger}(|i,j\ra)=\frac{c_{j}}{\bar{c}_{i}}|j,i\ra
    \eeq
    Finally, the modular operator can be read out
    \beq
    \Delta_{\psi}|i,j\ra=\cS_{\psi}^{\dagger}\cS_{\psi}|i,j\ra=\frac{|c_{j}|^{2}}{|c_{i}|^{2}}|i,j\ra
    \eeq
    In terms of density matrices, let $\rho_{1,2}:=\tr_{\cH_{2,1}}(|\psi\ra\la\psi|)$, then 
    \beq
    \Delta_{\psi}=\rho_{1}^{-1}\otimes \rho_{2}
    \eeq

    Similarly, let $|\phi\ra=\sum_{\alpha=1}^{d}d_{\alpha}|\alpha,\alpha\ra\in\cH_{1}\otimes \cH_{2}$ be another separating and cyclic states (\ie all $d_{\alpha}$'s are nonzero). The relative Tomita operator can be deduced from its definition,
    \beq
    \cS_{\psi|\phi}((a\otimes 1)|\psi\ra)=(a^{\dagger}\otimes 1)|\phi\ra
    \eeq
    As before, for any given $i,\alpha\in\{1,2,\dots,d\}$, we set $a$ to be 
    \beq
    a|i\ra=|\alpha\ra,\quad a|k\ra=0,\,\text{if $k\not=i$}
    \eeq
    It follows that
    \beq
    a^{\dagger}|\alpha\ra=|i\ra,\quad a^{\dagger}|\beta\ra=0,\,\text{if $\beta\not=\alpha$}
    \eeq
    In this case we have
    \beq
    \cS_{\psi|\phi}(|\alpha,i\ra)=\frac{d_{\alpha}}{\bar{c}_{i}}|i,\alpha\ra
    \eeq
    Its adjoint is given by 
    \beq
    \cS_{\psi|\phi}^{\dagger}(|i,\alpha\ra)=\frac{d_{\alpha}}{\bar{c}_{i}}|\alpha,i\ra
    \eeq
    Thus, the relative modular operator $\Delta_{\psi|\phi}$ is given by 
    \beq
    \Delta_{\psi|\phi}|\alpha,i\ra=\frac{|d_{\alpha}|^{2}}{|c_{i}|^{2}}|\alpha,i\ra
    \eeq
    Again, making use of density matrices, $\sigma_{1,2}:=\tr_{\cH_{2,1}}(|\phi\ra\la\phi|)$, then 
    \beq
    \Delta_{\psi|\phi}=\sigma_{1}\otimes \rho_{2}^{-1}
    \eeq

    To complete the computation of relative entropy, we use
    \beq
    \log(\Delta_{\psi|\phi})=\log(\sigma_{1})\otimes 1-1\otimes \log(\rho_{2})
    \eeq
    Therefore, it is easy to see
    \beq\label{eq:qm_relative_entropy}
    \begin{split}
        S(\psi,\phi)&=-\la\psi|\log(\Delta_{\psi|\phi})|\psi\ra\\
        &=-\tr_{12}(\rho_{12}(\log(\sigma_{1})\otimes 1-1\otimes \log(\rho_{2})))\\
        &=-\tr_{1}(\rho_{1}\log(\sigma_{1}))+\tr_{2}(\rho_{2}\log(\rho_{2}))\\
        &=\tr_{1}(\rho_{1}(\log\rho_{1}-\log\sigma_{1}))
    \end{split}
    \eeq
    where we have write $\tr_{12}:=\tr_{\cH_{1}\otimes\cH_{2}}$, $\tr_{i}:=\tr_{\cH_{i}},i=1,2$ and $\rho_{12}:=|\psi\ra\la\psi|$. To derive the last line, we have used the fact that $|\psi\ra$ is pure state so that $\tr_{2}(\rho_{2}\log\rho_{2})=\tr_{1}(\rho_{1}\log\rho_{1})$.
\end{example}

From Eq.~\eqref{eq:qm_relative_entropy}, it is clear that Definition \ref{def:relative_entropy} is a natural generalization of the relative entropy in finite dimensional quantum mechanics. Araki's relaitve entropy enjoys a lot of properties shared by its finite-dimensional counterpart, such as $S(\psi,\phi)\geqslant0$ and the equality holds iff $\psi=\phi$. In fact, we have
\begin{proposition}[Proposition 5.23 of Ref.~\cite{ohya2004quantum}]\label{prop:relative_entropy}
    Let $\psi,\phi$ be any two states on $\A^{ql}$, then
    \begin{enumerate}
        \item $S(\psi,\phi)$ is jointly convex, \ie
        \beq
        S(\lambda\psi_{1}+(1-\lambda)\psi_{2},\lambda\phi_{1}+(1-\lambda)\phi_{2})\leqslant \lambda  S(\psi_{1},\phi_{1})+(1-\lambda)S(\psi_{2},\phi_{2})
        \eeq
        where $0\leqslant\lambda\leqslant 1$.
        
        \item Quantum Pinsker's inequality: $||\psi-\phi||^{2}\leqslant 2 S(\psi,\phi)$ if $\psi(1)=\phi(1)=1$ (\ie they are normalized states).
        \item Coarse-graining reduces differences: Let $\A$ be another $C^*$-algebra and $\alpha:\A^{ql}\to\A$ be a unital Schwarz map\footnote{A linear map $\alpha:\A_{1}\to\A_{2}$ between $C^*$-algebras is called a Schwarz map if $\alpha(a^*a)-\alpha(a^*)\alpha(a)\geqslant0$ and it is unital if $\alpha(1)=1$. In practice, a Schwarz map is often given by taking conditional expectation value, which is a kind of coarse-graining.}, then $S(\psi\circ\alpha,\phi\circ\alpha)\leqslant S(\psi,\phi)$. In particular, this means that the relative entropy is invariant under $*$-automorphisms.
        \item Martingale property: Let $\A_{1}\subseteq\dots\subseteq\A_{n}\subseteq_{\dots}$ be an increasing sequence of $*$-subalgebra such that $\A^{ql}$ can be obtained by completing $\cup_{n=1}^{\infty}\A_{n}$ with respect to operator norm, then
        \beq
        S(\psi,\phi)=\lim_{n\to\infty}S(\psi|_{\A_{n}},\phi|_{\A_{n}})
        \eeq
        
    \end{enumerate}
\end{proposition}

Having defined the relative entropy between 2 states, it is natural to ask if one can define the entropy of one state. One possible definition was proposed by Ohya and Petz in Ref.~\cite{ohya2004quantum}, which makes use of Araki's relative entropy Eq.~\eqref{eq:relative_entropy}. We call it the Ohya-Petz entropy below.
\begin{definition}[Ohya-Petz entropy, Eq.~(6.9) of Ref.~\cite{ohya2004quantum}] \label{def: OP entropy}
    Let $\psi$ be a state of $\A^{ql}$. Its Ohya-Petz entropy is given by 
    \beq
    S(\psi)=\sup\{\sum_{i=1}^{\infty}\lambda_{i}S(\psi_{i},\psi)|\psi=\sum_{i=1}^{\infty}\lambda_{i}\psi_{i}\}
    \eeq
    where the supreme is taken for all countable convex decomposition of $\psi$ and $S(\psi_{i},\psi)$ is the relative entropy between $\psi_{i}$ and $\psi$.
\end{definition}

Clearly, from the non-negativity of relative entropy, $S(\psi)=0$ iff $\psi$ is pure. However, it is not obvious why this definition is a natural generalization of the traditional von Neumann entropy. We illustrate this by an example below (see also Theorem 6.10 of Ref.~\cite{ohya2004quantum}).
\begin{example}\label{ex:OP_entropy}
    Let us consider a finite-dimensional Hilbert space $\cH$ ($\dim\cH=d<\infty$) and a mixed state $\psi$. Suppose $\{|i\ra\}_{i=1,2,\dots,d}$ is an orthonormal basis of $\cH$. Let us write the density matrix of $\psi$ as $\rho_{\psi}$. The spectral decomposition of $\rho_{\psi}$ is given by
    \beq
    \rho_{\psi}=\sum_{i=1}^{d}\lambda_{i}|i\ra\la i|=\sum_{i=1}^{d}\lambda_{i}\psi_{i}
    \eeq
    In this decomposition, by Eq.~\eqref{eq:qm_relative_entropy},
    \beq
    S(\psi_{i},\psi)=-\la i|\log(\rho_{\psi})|i\ra=-\log\lambda_{i}
    \eeq
     As a result,
     \beq
     S(\psi)\geqslant-\sum_{i}\lambda_{i}\log\lambda_{i}=-\tr(\rho_{\psi}\log\rho_{\psi})
     \eeq
    On the other hand, for a general convex decomposition $\psi=\sum_{j}\mu_{j}\phi_{j}$, we have
    \beq
        S(\phi_{j},\psi)=\tr(\rho_{\phi_{j}}(\log\rho_{\phi_{j}}-\log\rho_{\psi}))
    \eeq
    Thus,
    \beq
        \begin{split}
    \sum_{j}\mu_{j}S(\phi_{j},\psi)&=\sum_{j}\mu_{j}\tr(\rho_{\phi_{j}}(\log\rho_{\phi_{j}}-\log\rho_{\psi}))\\
        &=-\tr(\rho_{\psi}\log\rho_{\psi})+\sum_{j}\mu_{j}\tr(\rho_{\phi_{j}}\log\rho_{\phi_{j}})\\
        &\leqslant-\tr(\rho_{\psi}\log\rho_{\psi})
    \end{split}
    \eeq
    To derive the second line we have used $\sum_{j}\mu_{j}\rho_{\phi_{j}}=\rho_{\psi}$.
    By taking supreme, we have $S(\psi)\leqslant-\tr(\rho_{\psi}\log\rho_{\psi})$ and
    we then conclude $S(\psi)=-\tr(\rho_{\psi}\log\rho_{\psi})$.
\end{example}
By generalizing this computation, one can show that Ohya-Petz entropy reduces to von-Neumann entropy for any countable convex sum of type-I factor states (see Theorem 6.10 of Ref.~\cite{ohya2004quantum}).
Our following lemma concerns factor states with other types, which is a $C^*$-algebraic version of Lemma 6.9 in Ref.~\cite{ohya2004quantum}.
\begin{lemma}\label{lemma:divergent_OPentropy}
    Let $\psi$ be a factor state of $\A^{ql}$ of type-II or III, then $S(\psi)=\infty$.
\end{lemma}
It is this feature that makes Ohya-Petz entropy awkward to use in quantum field theories. However it will be useful for proving Proposition \ref{prop:factor_entropy}.

Another (rather unpleasant) feature of Ohya-Petz entropy is that it does not satisfy martingale property in Proposition \ref{prop:relative_entropy}. Instead, we have
\begin{lemma}\label{lemma:sub_martingale}
    Let $\{\A_{n}\}_{n=1,\dots,}$ be an increasing sequence of $*$-subalgebra of $\A^{ql}$, such that their norm completion is $\A^{ql}$. Consider a state $\psi$ on $\A^{ql}$, then
    \beq
    S(\psi)\leqslant \overline{\lim}_{n\to\infty}S(\psi|_{\A_{n}}):=\lim_{n\to\infty}\sup_{k\geqslant n}S(\psi|_{\A_{k}})
    \eeq
\end{lemma}
\begin{proof}
    Note that for any countable convex decomposition $\psi=\sum_{i}\lambda_{i}\psi_{i}$, we have
    \beq
    \sum_{i}\lambda_{i}S(\psi_{i}|_{\A_{n}},\psi|_{\A_{n}})\leqslant S(\psi|_{\A_{n}})\leqslant \sup_{k\geqslant n}S(\psi|_{\A_{k}}),\quad\forall\,n\in\z_{>0}
    \eeq
    By letting $n\to\infty$ and the martingale property of relative entropy (see Proposition~\ref{prop:relative_entropy}), we have
    \beq
    \sum_{i}\lambda_{i}S(\psi_{i},\psi)\leqslant\overline{\lim}_{n\to\infty}S(\psi|_{\A_{n}})
    \eeq
    Taking supreme, we end up with
    \beq
    S(\psi)\leqslant\overline{\lim}_{n\to\infty}S(\psi|_{\A_{n}})
    \eeq
    \begin{remark}
        The equality cannot be saturated in general. To see this, taking $\psi$ to be a pure state on $\A^{ql}$ so we have $S(\psi)=0$. On the other hand, $\psi|_{\A_{n}}$ is a mixed state for arbitrarily large $n$ and we do not expect $S(\psi|_{\A_{n}})$ goes to $0$ in general.
    \end{remark}
\end{proof}
Proposition \ref{prop:factor_entropy} can thus be obtained by combining Lemma \ref{lemma:sub_martingale} and Lemma \ref{lemma:divergent_OPentropy}. In more details,
\begin{proof}[Proof of Proposition \ref{prop:factor_entropy}]
    By assumption, 
    \beq
    S(\psi|_{\Gamma_{n}})<S_{0}=\const,\quad\forall\,n\in\z_{>0}
    \eeq
    Using Lemma \ref{lemma:sub_martingale},
    \beq
    S(\psi)\leqslant\overline{\lim}_{n\to\infty}S(\psi|_{\Gamma_{n}})\leqslant S_{0}<\infty
    \eeq
    From Lemma \ref{lemma:divergent_OPentropy}, this is impossible when $\psi$ is of type-II or type-III.
\end{proof}

\subsection{Hamiltonians and ground states}\label{Sec:hamiltonian_gs}

In the previous few subsections, we focus on the correlation and entanglement properties of states. Starting from this subsection, we will discuss dynamical properties of quantum systems with an infinite size, such as Hamiltonians, ground states, energy gaps and time evolution. In particular, in this subsection, we define Hamiltonians, ground states, and energy gaps in the formalism of operator algebra. Especially, we will explain how these definitions reduce to our more familiar notions in systems with finitely many degrees of freedom. 

Naively, one can define local Hamiltonians as
\beq\label{eq:naive_Hamiltonian}
H=\sum_{j\in\Lambda}h_{j}
\eeq
where $h_{j}\in \A_{B(j,R)}$ is a local term, and $B(j,R):=\{p\in\Lambda|d(p,j)\leqslant R\}$ for some fixed $R>0$. However, this does not work in general since it contains an infinite sum and one has to be careful about its convergence. Actually Eq. \eqref{eq:naive_Hamiltonian} does not converge in general, hence one needs a more careful definition. Nevertheless, recall that in the operator algebra formalism, we work in the Heisenberg picture. In the Heisenberg picture, the Hamiltonian $H$ governs the time evolution of a local operator $A$ as 
\beq
\frac{\dd A}{\dd t}=i[H,A]=i\sum_{j\in\Lambda}[h_{j},A]
\eeq
If $H$ is a local Hamiltonian, then the commutator $[H, A]$ contains at most finitely many non-zero terms by locality. Thus one can define $H$ by this commutator. Moreover, even if $H$ is a long-range interacting Hamiltonian and the commutator $[H, A]$ is an infinite sum, if the interactions decay fast enough as their ranges increase, this commutator may still converge. This motivates the following definition. 

\begin{definition}\label{definition:Hamiltonian}
    A Hamiltonian is a derivation $\delta_{H}$ of the form
    \beq\label{eq:derivation_Hamiltonian}
    \delta_{H}(A)=\sum_{Z}[h_{Z},A],\forall\,A\in \A^{\ell},
    \eeq
    where $h_Z$ is an operator supported on $Z$.
\end{definition}
Here a (symmetric) derivation (which characterizes a Hamiltonian) $\delta$ on $\A^{\ell}$ means the following.

\begin{definition}\label{def:local_derivation}
    A symmetric derivation $\delta$ is a densely defined\footnote{This means the domain $D(\delta)$ is a dense subset in $\A^{ql}$, \eg $D(\delta)=\A^{\ell}$. We always require $\A^{\ell}\subseteq D(\delta)$ below.} map on $D(\delta)\subseteq\A^{ql}$ with the following properties.
    \begin{enumerate}
    \item $\delta:D(\delta)\to\A^{ql}$ is $\bbC$-linear.
    \item $\delta(A^{\dagger})=-\delta(A)^{\dagger}$
    \item $\delta(AB)=\delta(A)B+A\delta(B)$
\end{enumerate}
\end{definition}

Clearly, this definition of Hamiltonian reduces to the usual one in finite quantum systems.

\begin{remark}\label{remark:domain_long_range}
    With long-range interactions, one has to make sure that the right hand side of Eq. \eqref{eq:derivation_Hamiltonian} is convergent. Also, in many cases, $\delta_{H}$ is only defined on local operators $\A^{l}$ or one says that $\delta_{H}$ is only \textbf{densely} defined on $\A^{ql}$.
\end{remark}

Given the notion of Hamiltonians, now we define ground states in the operator algebra formalism.
\begin{definition} \label{def: ground state}
    A state $\psi$ is said to be the ground state of the Hamiltonian $\delta_{H}$ if 
    \beq\label{eq:ground_states}
    \psi(A^{\dagger}\delta_{H}(A))\geqslant 0,\forall\,A\in\A^{l}
    \eeq
\end{definition}
\begin{remark}
    Let us explain why this definition agrees with the usual definition of ground states in finite-dimensional quantum mechanics. Let us assume $\psi$ is a state vector in some Hilbert space $\cH$ denoted by $|\psi\ra$. We also assume there is a Hamiltonian operator $H$ on $\cH$, which has $|\psi\ra$ as a ground state of energy $E$. Then Eq. \eqref{eq:ground_states} is equivalent to
    \beq\label{eq:gs_in_Hilbert_space}
    \la \psi|A^{\dagger}HA|\psi\ra\geqslant E \la\psi|A^{\dagger}A|\psi\ra
    \eeq
    In a finite system, any state in $\cH$ can be prepared by applying some operator to $|\psi\ra$, and hence Eq. \eqref{eq:gs_in_Hilbert_space} indeed means that $|\psi\ra$ has the lowest energy $E$ in Hilbert space $\cH$ and is a ground state.
\end{remark}
\begin{remark}
    As another remark, a drawback of this definition is that it is not obvious if a ground state is pure or not. For example, given a finite size system with the classical Ising Hamiltonian, although its ground state is pure, some of its ground states (\ie the GHZ states) become mixed after taking the thermodynamic limit.
    So given a Hamiltonian $\delta_{H}$, we have to check if its ground state is pure or not. This is drastically different from the finite dimensional cases. We will show that gapped ground states of Hamiltonians with sufficiently short-range interactions are pure indeed (see Eq. \eqref{eq:admissible_H} and Theorem \ref{Thm:pure_gs}).
\end{remark}

Now we are ready to talk about the notion of locally unique gapped ground states and the energy gaps.
\begin{definition}[Locally unique gapped ground state]\label{definition:locally-unique_gs}
    A ground state $\psi$ of Hamiltonian $\delta_{H}$ is a locally unique gapped ground state if there is a $\gamma>0$  such that
    \beq\label{eq:locally_unique_gapped_gs}
    \psi(A^{\dagger}\delta_{H}A)\geqslant\gamma\psi(A^{\dagger}A)
    \eeq
    for any $A\in \A^{l}$ with $\psi(A)=0$. The energy gap $\Delta$ is the largest possible $\gamma$ satisfying the above inequality. A locally-unique gapped ground state is unique if it is the only locally-unique gapped  ground state.
\end{definition}
\begin{remark}
    Again, let us check that this definition reduces to our familiar notions in finite systems. First, the condition $\psi(A)=0$ is to exclude the case where $A$ is a constant multiple of the identity operator, and it can always be achieved by redefining $A\to A-\psi(A)I$. Thus Eq. \eqref{eq:locally_unique_gapped_gs} means that for any
    \beq
    |A\ra:=A|\psi\ra,\,A\in\A^{ql}
    \eeq
    which is orthogonal to $|\psi\ra$, we must have
    \beq
    \frac{\la A|H|A\ra}{\la A|A\ra}\geqslant \Delta+E>E,
    \eeq
    where $E$ is the energy of $\psi$. Recall that any state in $\cH$ can be obtained by applying local operators to $|\psi\ra$. The above inequality means that $|\psi\ra$ is the \textbf{only} state that has an energy smaller than $\Delta$ in the Hilbert space $\cH$, agreeing with our usual definition of a unique gapped ground state. In the context of infinite systems, different ground states may fall into different superselection sectors, and a locally unique ground state is the ground state in a given superselection sector.
\end{remark}
\begin{remark}
    As far as we know, the terminology ``locally unique gapped ground state" is from Ref. \cite{Tasaki2022topological}. The gapped ground state defined in Ref. \cite{kapustin2024anomalous} is actually our locally unique gapped ground state.
\end{remark}
\begin{example}
    We give another example to show that not all ground states in the usual quantum mechanics are locally unique. The model is the (classical) Ising model with the following Hamiltonian
    \beq
    \delta_{H}(A)=-\sum_{j\in\z}[Z_{j}Z_{j+1},A]
    \eeq
    where $Z_{j}$ is usual Pauli operator. Using definition \ref{def: ground state}, one can choose the ground state of this model to be the infinite-system version of the GHZ state:
    \beq
    \psi_{GHZ}=\frac{1}{2}(\psi_{\uparrow}+\psi_{\downarrow})
    \eeq
    However, this state is \textbf{not} a locally-unique gapped ground state of the Ising model. To see it, let $A=Z_{k}$ supported at site $k$.
    Note that
    \beq
    \psi(A^{\dagger}\delta_{H}(A))=0
    \eeq
    since $A$ commutes with Pauli-$Z$ operators. On the other hand,
    \beq
    \begin{split}
      \psi(A)&=0\\
      \psi(A^{\dagger}A)&=1
    \end{split}
    \eeq
    hence $\psi$ is not a locally unique gapped ground state because $0=\psi(A\delta_{H}(A))<\psi(A^{\dagger}A)=1$ and Eq. \eqref{eq:locally_unique_gapped_gs} is violated.

    In fact, we will show in Theorem \ref{Thm:pure_gs} that all locally-unique gapped ground states of any local Hamiltonian are pure, but $\psi_{GHZ}$ is a mixed state.
\end{example}

\subsection{Local Hamiltonians and admissible Hamiltonians} \label{subsec: local and admissible Hamiltonians}

There are two classes of Hamiltonians that are of particular interest. The first class is local Hamiltonians, which are Hamiltonians where in Eq. \eqref{eq:derivation_Hamiltonian} each $h_{Z}$ supports on $Z$ and $h_{Z}=0$ if $\diam(Z)>R$ for some constant $R$. One of the main topics in this paper is long-range interacting systems, where the Hamiltonian is not local. We will work with a particular type of long-range interacting Hamiltonians called {\it admissible Hamiltonians}. They are defined as follows \cite{Kuwahara2019}.
\begin{definition}
    In a 1D lattice system $\Lambda=\z$ where sites are labeled by $i$, the Hamiltonian $\delta_{H}$ is admissible if $\delta_{H}=\sum_{Z:|Z|\leqslant k}[h_{Z},\bullet]$ with $h_{Z}$'s satisfying
    \beq\label{eq:admissible_H}
    \begin{split}
        \sup_{i\in\z}\sum_{\substack{Z:|Z|\leqslant k,Z\owns i\\\diam(Z)=r}}||h_{Z}||&<\frac{J}{r^{\mathfrak{a}}},\,{\rm with\ }\mathfrak{a}>2,\\
        \sup_{i\in\z}||h_{i}||&<B
    \end{split}
    \eeq
    where $\diam(Z):=\max_{x,y\in Z}d(x,y)$, k is an integer, $J$ and $B$ are positive constants and $h_{i}$ is a one-body potential at site $i$.
\end{definition}

We have not shown that the above definition is well-defined. In particular, we have not shown that the $\delta_H$ is indeed a derivation from $\A^l$ to $\A^{ql}$ (see definition \ref{def:local_derivation} and remark \ref{remark:domain_long_range}). Below we will see that this is indeed the case. More precisely, if $A\in \A^{l}_\Gamma$ is a local operator, one can define a sequence
    \beq\label{eq:local_Cauchy_sequence}
     H_{n}(A):=\sum_{\substack{Z:Z\cap\Gamma\not=\emptyset\\ \diam(Z)\leqslant n}}[h_{Z},A]
    \eeq
    then its limit as $n\rightarrow\infty$ exists,
    \beq
    \delta_{H}(A):=\lim_{n\to\infty}H_{n}(A)\in\A^{ql}
    \eeq
    However, as remarked before (see remark.~\ref{remark:domain_long_range}), $\delta_{H}$ cannot be defined on the whole $\A^{ql}$.

\begin{lemma}\label{lemma:existence_Hamiltonian}
    Admissible Hamiltonians are well-defined, that is, if $\delta_H$ is admissible and $A\in\A^{l}$, then $\delta_{H}(A)\in\A^{ql}$, \ie the limit in the above definition exists in $\A^{ql}$.
\end{lemma}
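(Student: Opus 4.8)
The plan is to show that the sequence $\{H_n^A\}_{n\geqslant 1}$ defined in Eq.~\eqref{eq:local_Cauchy_sequence} is a Cauchy sequence in the operator norm on $\A^{l}$; since $\A^{ql}$ is complete (it is the completion of $\A^{l}$), the limit $\delta_H(A)$ then exists as an element of $\A^{ql}$. Fix $A\in\A^{l}_{\Gamma}$ with $|\Gamma|<\infty$. For $n'>n$ we have, by the triangle inequality Eq.~\eqref{eq:tri_ineq_op},
\beq
||H^A_{n'}-H^A_n||=\Big|\Big|\sum_{\substack{Z:Z\cap\Gamma\neq\emptyset\\ n<\diam(Z)\leqslant n'}}[h_Z,A]\Big|\Big|\leqslant\sum_{\substack{Z:Z\cap\Gamma\neq\emptyset\\ \diam(Z)>n}}||[h_Z,A]||\leqslant 2||A||\sum_{\substack{Z:Z\cap\Gamma\neq\emptyset\\ \diam(Z)>n}}||h_Z||,
\eeq
using the Banach property $||[h_Z,A]||\leqslant 2||h_Z||\,||A||$. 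So it suffices to prove that the tail sum $\sum_{Z:Z\cap\Gamma\neq\emptyset,\ \diam(Z)>n}||h_Z||$ tends to $0$ as $n\to\infty$, which amounts to showing the full sum $\sum_{Z:Z\cap\Gamma\neq\emptyset}||h_Z||$ converges.

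The key step is to bound this sum using the admissibility condition Eq.~\eqref{eq:admissible_H}. I would organize the sum by a chosen site $i\in Z\cap\Gamma$ and by the diameter $r=\diam(Z)$: every $Z$ meeting $\Gamma$ contains some site of $\Gamma$, so
\beq
\sum_{\substack{Z:Z\cap\Gamma\neq\emptyset}}||h_Z||\leqslant\sum_{i\in\Gamma}\sum_{r\geqslant 0}\ \sum_{\substack{Z:|Z|\leqslant k,\ Z\owns i\\ \diam(Z)=r}}||h_Z||\leqslant|\Gamma|\,\sup_{i\in\z}\sum_{r\geqslant 0}\sum_{\substack{Z:|Z|\leqslant k,\ Z\owns i\\ \diam(Z)=r}}||h_Z||.
\eeq
The $r=0$ term is controlled by $\sup_i||h_i||<B$, and for $r\geqslant 1$ the first line of Eq.~\eqref{eq:admissible_H} gives $\sum_{Z\owns i,\ \diam(Z)=r}||h_Z||<J/r^{\mathfrak a}$ with $\mathfrak a>2$. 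Hence the whole sum is bounded by $|\Gamma|\big(B+J\sum_{r\geqslant 1}r^{-\mathfrak a}\big)=|\Gamma|\big(B+J\,\zeta(\mathfrak a)\big)<\infty$, since $\mathfrak a>2>1$ makes the series converge. (Only $\mathfrak a>1$ is needed here; the stronger requirement $\mathfrak a>2$ is presumably used elsewhere, e.g. for Lieb–Robinson bounds.) Consequently the tail $\sum_{Z\cap\Gamma\neq\emptyset,\ \diam(Z)>n}||h_Z||\to 0$, so $\{H_n^A\}$ is Cauchy and $\delta_H(A):=\lim_{n\to\infty}H_n^A\in\A^{ql}$.

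Finally I would note that the limit does not depend on the exhaustion: any reordering of the absolutely summable family $\{[h_Z,A]\}_{Z\cap\Gamma\neq\emptyset}$ converges to the same element, so $\delta_H(A)$ is well-defined, and linearity in $A$ together with the Leibniz rule are inherited termwise from the finite sums $H_n^A$, confirming $\delta_H$ is a densely-defined derivation $\A^{l}\to\A^{ql}$. The main obstacle is essentially bookkeeping: making sure each $Z$ meeting $\Gamma$ is counted through a site of $\Gamma$ so that the per-site bound in Eq.~\eqref{eq:admissible_H} applies with the factor $|\Gamma|$, and checking that the commutator bound $||[h_Z,A]||\leqslant 2||h_Z||\,||A||$ is uniform (in particular independent of the size of $\supp(h_Z)$), which it is because the operator norm is unchanged under ampliation.
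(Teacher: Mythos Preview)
Your proof is correct and follows essentially the same route as the paper's: both show that $\{H_n^A\}$ is Cauchy by bounding $||H_{n'}^A-H_n^A||\leqslant 2||A||\sum_{Z\cap\Gamma\neq\emptyset,\,\diam(Z)>n}||h_Z||$ and then using the admissibility condition Eq.~\eqref{eq:admissible_H} together with convergence of $\sum_{r\geqslant 1} r^{-\mathfrak a}$. You are slightly more explicit than the paper about why the factor $|\Gamma|$ appears (via choosing a site $i\in Z\cap\Gamma$), and your observation that only $\mathfrak a>1$ is needed here is correct; the paper's $\mathfrak a>2$ is used later for the existence of dynamics.
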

\begin{proof}
    We first show that $H_{n}(A)$ defined in Eq. \eqref{eq:local_Cauchy_sequence} is a local operator. Note for each $n>0$, there are only finitely many subsets $Z$ such that $Z\cap \Gamma\not=\emptyset$, hence $H_{n}(A)\in \A_{B(\Gamma,n)}$.
    
    Below we show $\{H_{n}(A)\}_{n=1,2,...}$ is a Cauchy sequence hence it is convergent in $\A^{ql}$. To this end, consider $m\geqslant n$,
    \beq
    ||H_{m}(A)-H_{n}(A)||=||\sum_{r=n}^{m}\sum_{\substack{Z:Z\cap\Gamma\not=\emptyset\\ \diam(Z)=r}}[h_{Z},A]||\leqslant 2|\Gamma|\cdot||A||\sum_{r=n}^{m}\frac{J}{r^{\mathfrak{a}}}
    \eeq
    where we have used $||\sum_{i}B_{i}||\leqslant \sum_{i}||B_{i}||$ and $||[B,C]||\leqslant 2||B||\cdot ||C||$ for any operators $B_{i},B$ and $C$ in $\A^{ql}$.

    Note for $m\geqslant n\geqslant n_{0}$,
    \beq
    \sum_{r=n}^{m}\frac{1}{r^{\mathfrak{a}}}<\int_{n_{0}}^{\infty}r^{-\mathfrak{a}}dr=\frac{n_{0}^{1-\mathfrak{a}}}{\mathfrak{a}-1}
    \eeq
    which goes to 0 as $n_{0}\to \infty$ since $\mathfrak{a}>2$. Therefore, $\{H_n(A)\}$ is indeed a Cauchy sequence and so it converges to some element in $\A^{ql}$.
\end{proof}

\subsection{Time evolution}\label{subsec:LPA&LR}

In the above, we have introduced the concepts of operators, states and Hamiltonians in the operator algebra formalism. In this subsection, we introduce the notions of time evolution. Because local Hamiltonians can be viewed as special examples of admissible Hamiltonians, below we will focus on admissible Hamiltonians.

Defining the time evolution generated by an admissible Hamiltonian is tricky. Naively, one can use the following exponential 
\beq
\alpha_{t}(A)\stackrel{?}{:=}\exp(it\delta_{H})(A)=\sum_{k=0}^{\infty}\frac{(it)^{k}}{k!}\delta_{H}^{k}(A)
\eeq
However, this definition does not work in general since $\delta_{H}$ is only defined on $D(\delta_{H})$ as defined in Def.~\ref{def:local_derivation}, a dense subset of $A^{ql}$, and typically $\delta_H(A)\in\A^{ql}$ even if $A\in\A^l$! Thus, $\delta_{H}^{2}(A)=\delta_{H}(\delta_{H}(A))$ is ill-defined.

Note that the proof of Lemma \ref{lemma:existence_Hamiltonian} hints us that we can define the $n$-truncated Hamiltonian
\beq
H_{n}:=\sum_{Z:Z\subset [-n,n]}h_{Z}
\eeq
This is a local operator supported on $[-n,n]$ hence it is finite dimensional. One can exponentiate it to define
\beq \label{eq: sequence of evolution}
\alpha_{n}^{t}(A):=e^{iH_{n}t}Ae^{-iH_{n}t}
\eeq
where $A$ is a local operator and $\alpha_{n}^{t}$ is an automorphism of local operators (actually of $\A^{ql}$). Our goal is to show that the limit 
\beq\label{eq:limit_dynamics}
\alpha^{t}(A):=\lim_{n\to\infty}\alpha^{t}_{n}(A)
\eeq
exists and it defines the time evolution of $\delta_{H}$.

\begin{proposition}
    [Existence of dynamics for admissible Hamiltonians]\label{theorem:existence_dynamics}
    The limit defined by Eq. \eqref{eq:limit_dynamics} exists and it defines the time evolution generated by the admissible Hamiltonian $\delta_{H}$, which is a strongly continuous one-parameter subgroup of automorphisms on $\A^{ql}$.\footnote{By strongly continuous, we mean that $\lim_{t\to 0}\alpha_{t}(A)=A$ for any $A\in \A^{ql}$.}
\end{proposition}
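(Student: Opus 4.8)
The plan is to fix a local observable $A\in\A^{l}_{\Gamma}$ with $|\Gamma|<\infty$ and prove that the sequence $\{\alpha^{t}_{n}(A)\}_{n}$ from Eq.~\eqref{eq: sequence of evolution} is Cauchy in operator norm, \emph{uniformly} for $t$ in any compact interval; then set $\alpha^{t}(A):=\lim_{n\to\infty}\alpha^{t}_{n}(A)$ on $\A^{l}$ and extend it to $\A^{ql}$ by the standard $\epsilon/3$ argument, since each $\alpha^{t}_{n}$ is an isometry. Granting the Cauchy property, the remaining structure is inherited in the limit: linearity, the product rule $\alpha^{t}(AB)=\alpha^{t}(A)\alpha^{t}(B)$, and $*$-compatibility $\alpha^{t}(A^{\dagger})=\alpha^{t}(A)^{\dagger}$ pass from the $\alpha^{t}_{n}$ to $\alpha^{t}$ because multiplication and $\dagger$ are norm-continuous, the group law $\alpha^{t}\circ\alpha^{s}=\alpha^{t+s}$ descends from $\alpha^{t}_{n}\circ\alpha^{s}_{n}=\alpha^{t+s}_{n}$, and invertibility (inverse $\alpha^{-t}$) together with strong continuity are handled at the end.

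For the Cauchy estimate, for $m\leqslant n$ note that $H_{n}-H_{m}=\sum_{Z\subset[-n,n],\,Z\not\subset[-m,m]}h_{Z}$, and use the interpolation (Dyson) identity
\beq
\alpha^{t}_{n}(A)-\alpha^{t}_{m}(A) \;&=&\; \int_{0}^{t}\dd s\,\frac{\dd}{\dd s}\Big(\alpha^{t-s}_{m}\big(\alpha^{s}_{n}(A)\big)\Big) \;=\; i\int_{0}^{t}\dd s\,\alpha^{t-s}_{m}\big([H_{n}-H_{m},\,\alpha^{s}_{n}(A)]\big).
\eeq
Since $\alpha^{t-s}_{m}$ is conjugation by a unitary it preserves the norm, so
\beq
\big\|\alpha^{t}_{n}(A)-\alpha^{t}_{m}(A)\big\| \;&\leqslant&\; \int_{0}^{|t|}\dd s\sum_{\substack{Z\subset[-n,n]\\ Z\not\subset[-m,m]}}\big\|[h_{Z},\,\alpha^{s}_{n}(A)]\big\|.
\eeq

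Each summand is then estimated with a Lieb--Robinson bound for the finite-volume evolution $\alpha^{s}_{n}$ generated by $H_{n}$, of the form $\|[h_{Z},\alpha^{s}_{n}(A)]\|\leqslant C\,|\Gamma|\,\|A\|\,\|h_{Z}\|\,\Phi(|s|)\,G\big(d(Z,\Gamma)\big)$, with $G$ decaying in the distance and with constants $C,\Phi,G$ that are \emph{independent of $n$} — this uniformity is crucial, and it holds because the admissibility condition Eq.~\eqref{eq:admissible_H} bounds the relevant interaction norm uniformly while restricting the interaction to $[-n,n]$ only decreases it (cf.\ the reproducing/F-norm Lieb--Robinson machinery of Refs.~\cite{Matsuta2017LR,_Anthony_Chen_2023,Ranard_2022} and Eq.~\eqref{eq:LR_bound}). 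The sets $Z$ contributing to the sum split into two families: those \emph{far} from $\Gamma$, whose contribution is suppressed by $G(d(Z,\Gamma))$, and those of \emph{large diameter}, whose contribution is suppressed by $\|h_{Z}\|\lesssim\diam(Z)^{-\mathfrak{a}}$ (with $|Z|\leqslant k$ bounding the number of bodies); combining these with $\sup_{i}\sum_{Z\ni i,\,\diam(Z)=r}\|h_{Z}\|<Jr^{-\mathfrak{a}}$ bounds the full double sum (over diameter and over distance) by $C'|\Gamma|\,\|A\|\,\Phi(|t|)\sum_{\ell\geqslant m-\diam(\Gamma)}\rho(\ell)$ for a summable tail $\rho$, the summability being precisely what $\mathfrak{a}>2$ secures in one dimension. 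Hence the bound $\to0$ as $m\to\infty$, uniformly in $n\geqslant m$ and for $t$ over compacta. I expect this two-regime bookkeeping, together with extracting a Lieb--Robinson bound with $n$-independent constants, to be the main obstacle; the rest is soft.

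Given the limit, $\|\alpha^{t}(A)\|=\|A\|$ for $A\in\A^{l}$, so $\alpha^{t}$ extends to an isometric $*$-homomorphism of $\A^{ql}$, and $\alpha^{t}\circ\alpha^{s}=\alpha^{t+s}$ follows by passing $\alpha^{t}_{n}\circ\alpha^{s}_{n}=\alpha^{t+s}_{n}$ to the limit, using $\|\alpha^{s}_{n}(A)-\alpha^{s}(A)\|\to0$ and norm-continuity of $\alpha^{t-s}_{m}$ to move the inner limit inside; with $\alpha^{0}=\mathrm{id}$ this makes each $\alpha^{t}$ bijective with inverse $\alpha^{-t}$, hence an automorphism. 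For strong continuity, fix $A\in\A^{l}_{\Gamma}$ and $\epsilon>0$; by the uniform Cauchy estimate above choose $n$ with $\|\alpha^{t}(A)-\alpha^{t}_{n}(A)\|<\epsilon$ for all $|t|\leqslant1$, and use $\|\alpha^{t}_{n}(A)-A\|\leqslant|t|\,\|[H_{n},A]\|+O(t^{2})$ to obtain $\limsup_{t\to0}\|\alpha^{t}(A)-A\|\leqslant\epsilon$; since $\epsilon$ is arbitrary and $\|\alpha^{t}\|=1$, strong continuity extends from the dense subalgebra $\A^{l}$ to all of $\A^{ql}$, which completes the proof.
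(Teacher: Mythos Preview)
Your approach is essentially correct in outline, but it takes a very different and considerably more laborious route than the paper's proof. The paper does not build the dynamics from scratch via a Cauchy estimate plus Lieb--Robinson bounds; instead it simply invokes lemma~\ref{lemma:existence_dynamics} (Theorem~2.2 of Ref.~\cite{Nachtergaele_2006}) as a black box, which says that the limit~\eqref{eq:limit_dynamics} exists and defines a strongly continuous one-parameter group whenever $\|H\|_{F}<\infty$ for some reproducing function $F$. The entire proof in the paper then reduces to verifying this hypothesis: one chooses $F(r)=(1+r)^{-1-\epsilon}$ with $0<\epsilon<\mathfrak{a}-2$, checks that this $F$ is reproducing on $\z$, and bounds $\|H\|_{F}\lesssim\sup_{d}(1+d)^{1+\epsilon}\sum_{r\geqslant d}Jr^{-\mathfrak{a}}\lesssim\sup_{d}d^{2+\epsilon-\mathfrak{a}}<\infty$ using~\eqref{eq:admissible_H}. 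That is the full argument.

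What you are sketching---the Dyson interpolation identity, the uniform-in-$n$ Lieb--Robinson bound, and the two-regime tail estimate---is precisely the content of the Nachtergaele--Sims theorem itself. So your proof is effectively reproving lemma~\ref{lemma:existence_dynamics} rather than applying it. Your approach buys self-containedness and makes the mechanism transparent; the paper's approach buys brevity (five lines instead of several paragraphs) and cleanly isolates the only model-dependent step, namely the choice of $F$ and the verification that $\mathfrak{a}>2$ is exactly what makes $\|H\|_{F}$ finite. Note also that your invocation of Eq.~\eqref{eq:LR_bound} is slightly misleading, since that equation is the exponential Lieb--Robinson bound for strictly local Hamiltonians; for admissible (power-law) interactions the relevant decay $G(\cdot)$ is itself power-law, and the correct input is the $F$-norm Lieb--Robinson bound from Ref.~\cite{Nachtergaele_2006}, which is the same machinery underpinning the lemma you could have cited directly.
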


This proposition can be easily proved with the help of Theorem 2.2 of Ref.~\cite{Nachtergaele_2006}, which we review below. To this end, we introduce the notion of reproducing functions. Consider a lattice $\Lambda$ with a metric $d$ and let $F:\R^{\geqslant0}\to\R^{\geqslant 0}$ be a decreasing function with $\lim_{r\to \infty}F(r)=0$. This function $F$ is called reproducing if
\beq\label{eq:reproducing}
\begin{split}
    \sup_{y\in\Lambda}\sum_{x\in\Lambda}F(d(x,y))&<\infty\\
    \sum_{l\in\Lambda}F(d(n,l))F(d(l,m))&<C F(d(n,m)),\,\forall\,m,n\in\Lambda
\end{split}
\eeq
for some $0<C<\infty$. Especially, if $\Lambda=\z$, then it can be checked that $F(r)=(1+r)^{-1-\epsilon}$ is reproducing for any $\epsilon>0$. Now we come to the most important lemma in this section, which, in particular, can be used to prove Proposition \ref{theorem:existence_dynamics}.

\begin{lemma}[Theorem 2.2 of Ref. \cite{Nachtergaele_2006}]\label{lemma:existence_dynamics}
    If the Hamiltonian $H=\sum_{Z}h_{Z}$ satisfies
    \beq\label{eq:F_norm}
    ||H||_{F}:=\sup_{m,n\in\Lambda}\frac{1}{F(d(m,n))}\sum_{Z:m,n\in Z}||h_{Z}||<\infty
    \eeq
    for some reproducing function $F(r)$ defined above, then the limit in Eq.~\eqref{eq:limit_dynamics} exists and defines a strongly continuous one-parameter subgroup of $\Aut(\A^{ql})$. This convergence is uniform for $t$ in some compact sets. It also does not depend on the choice of $n$-truncation.
\end{lemma}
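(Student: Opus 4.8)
The plan is to go through a Lieb--Robinson bound for the finite-volume dynamics $\alpha^t_n$ and then upgrade norm estimates to convergence of the sequence. Fix a local observable $A\in\A^l_X$ with $X$ finite. Since each $\alpha^t_n$ is an automorphism it preserves the norm, $||\alpha^t_n(A)||=||A||$; because $\A^l$ is dense in $\A^{ql}$, it suffices to show that $\{\alpha^t_n(A)\}_n$ is Cauchy in norm for every such $A$, uniformly for $t$ in a compact set, and then extend $\alpha^t$ to all of $\A^{ql}$ by continuity.

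The central step is the Lieb--Robinson bound itself. For $B\in\A^l_Y$, write $\Phi(t):=[\alpha^t_n(A),B]$ and differentiate, using that $H_n$ commutes with its own evolution; the Jacobi identity gives
\begin{equation}
\frac{d\Phi}{dt}=i[H_n,\Phi(t)]-i\,[\alpha^t_n(A),[H_n,B]].
\end{equation}
The first term generates a norm-preserving conjugation, so the Duhamel (integrating-factor) representation yields
\begin{equation}
||\Phi(t)||\le ||[A,B]||+2||B||\sum_{Z:Z\cap Y\ne\emptyset}||h_Z||\int_0^{|t|}||[\alpha^s_n(A),\hat h_Z]||\,ds,
\end{equation}
where $\hat h_Z:=h_Z/||h_Z||$. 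Iterating this inequality produces a Dyson-type series in which each step inserts one more commutator with an $h_Z$, i.e. one more spatial convolution against $||h_Z||$; the piece of the source proportional to $||[\alpha^s_n(A),B]||$ itself is absorbed by a Gronwall estimate, while the remaining piece drives the iteration. The reproducing property $\sum_{l}F(d(m,l))F(d(l,n))\le C\,F(d(m,n))$ is exactly what collapses each convolution back to a single factor of $F$, so the $k$-th term is bounded by $(2C||H||_F|t|)^k/k!$ times $\sum_{x\in X,y\in Y}F(d(x,y))$. Summing gives, uniformly in $n$,
\begin{equation}
||[\alpha^t_n(A),B]||\le \frac{||A||\,||B||}{C}\big(e^{2C||H||_F|t|}-1\big)\sum_{x\in X,\,y\in Y}F(d(x,y)),
\end{equation}
the finiteness of $||H||_F$ and the first reproducing bound keeping every sum convergent.

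With the Lieb--Robinson bound in hand, I would prove the Cauchy property by interpolating between the two finite-volume flows. For $m\ge n$ set $g(s):=\alpha^{t-s}_m(\alpha^s_n(A))$, so $g(0)=\alpha^t_m(A)$ and $g(t)=\alpha^t_n(A)$; differentiating and again using that $H_m$ commutes with its own flow gives $g'(s)=i\,\alpha^{t-s}_m([H_n-H_m,\alpha^s_n(A)])$, whence
\begin{equation}
||\alpha^t_n(A)-\alpha^t_m(A)||\le \sum_{Z\not\subset[-n,n]}\int_0^{|t|}||[h_Z,\alpha^s_n(A)]||\,ds.
\end{equation}
Each commutator on the right is controlled by the Lieb--Robinson bound with $Y=Z$, supplying a factor $\sum_{x\in X,\,z\in Z}F(d(x,z))$; since the sets $Z$ appearing recede to infinity from the fixed $X$ as $n\to\infty$, the tail of $\sum_{Z}||h_Z||\sum_{x\in X,z\in Z}F(d(x,z))$ (finite by $||H||_F<\infty$ together with reproducing summability) tends to $0$. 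The $(e^{2C||H||_F|s|}-1)$ factor is bounded on compact $t$-intervals, so the convergence is uniform there, and the same estimate applied to two arbitrary exhausting sequences of finite volumes shows the limit is independent of the truncation.

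Finally I would assemble the group structure and continuity. Setting $\alpha^t(A):=\lim_n\alpha^t_n(A)$ on $\A^l$ and extending by the uniform bound $||\alpha^t_n(A)||=||A||$, the limit is a unit-preserving $*$-homomorphism with $||\alpha^t(A)||=||A||$; the relations $\alpha^t\alpha^s=\alpha^{t+s}$ and $\alpha^0=\mathrm{id}$ pass to the limit from their finite-volume counterparts, and $\alpha^{-t}$ furnishes the inverse, so the $\alpha^t$ form a one-parameter subgroup of $\Aut(\A^{ql})$. Strong continuity on $\A^l$ follows from the finite-dimensional continuity of each $\alpha^t_n$ together with the uniform-in-$t$ convergence just established (interchange of limits), and extends to $\A^{ql}$ by an $\epsilon/3$ argument using density and $||\alpha^t||=1$. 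The main obstacle is the Lieb--Robinson step: making the Dyson iteration close with a convergent, $n$-independent, exponential-in-$t$ bound, which is precisely where the reproducing inequality for $F$ does the essential work.
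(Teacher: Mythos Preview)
The paper does not give its own proof of this lemma; it is merely quoted as Theorem~2.2 of Ref.~\cite{Nachtergaele_2006} and used as a black box in the subsequent proof of Proposition~\ref{theorem:existence_dynamics}. So there is no in-paper argument to compare against.

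That said, your sketch is essentially the Nachtergaele--Sims argument from the cited reference: derive a volume-independent Lieb--Robinson bound by iterating the Duhamel/Jacobi identity and collapsing the spatial convolutions via the reproducing property of $F$, then use the interpolation $g(s)=\alpha^{t-s}_m(\alpha^s_n(A))$ together with the bound to show that $\{\alpha^t_n(A)\}_n$ is Cauchy uniformly on compact $t$-sets, and finally extend to $\A^{ql}$ by density and read off the group law and strong continuity. The outline is sound and the places where the reproducing hypothesis enters (closing the Dyson iteration, and guaranteeing that the tail sum $\sum_{Z\not\subset[-n,n]}||h_Z||\sum_{x\in X,z\in Z}F(d(x,z))$ is finite and vanishes as $n\to\infty$) are identified correctly. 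If anything, the transition from the differential inequality to the exponential bound could be tightened---one should be explicit that only those $h_Z$ with $Z\cap Y\neq\emptyset$ and $Z\not\subset Y$ contribute nontrivially to the iteration, which is what makes the ``$-1$'' in $e^{2C||H||_F|t|}-1$ appear and gives decay when $X$ and $Y$ are far apart---but this is a matter of presentation rather than a gap.
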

\begin{proof}[Proof of Proposition \ref{theorem:existence_dynamics}]
    We only have to check that our admissible Hamiltonian satisfies Eq.~\eqref{eq:F_norm} with some proper choice of $F$. We choose $F(r)=(1+r)^{-1-\epsilon}$ with $0<\epsilon<\mathfrak{a}-2$, then $||H||_{F}$ defined in Eq.~\eqref{eq:F_norm} becomes
\beq
||H||_{F}=\sup_{m,n\in\z}(1+d)^{1+\epsilon}\sum_{Z:m,n\in Z}||h_{Z}||=\sup_{m,n\in\z}(1+d)^{1+\epsilon}\sum_{r=d}^{\infty}\sum_{\substack{Z:m,n\in Z\\ \diam(Z)=r}}||h_{Z}||
\eeq
where $d:=d(m,n)$ and note $\diam(Z):=\sup_{x,y\in Z}d(x,y)\geqslant d $. According to Eq.~\eqref{eq:admissible_H}, we have
\beq
\sum_{\substack{Z:m,n\in Z\\ \diam(Z)=r}}||h_{Z}||\leqslant \sup_{n\in Z}\sum_{\substack{Z:n\in Z\\ \diam(Z)=r}}||h_{Z}||\leqslant \frac{J}{r^{\mathfrak{a}}}
\eeq
Therefore, it only remains to note that
\beq
\sup_{d\in\z^{\geqslant0}}(1+d)^{1+\epsilon}\sum_{r=d}^{\infty}\frac{J}{r^{\mathfrak{a}}}<\sup_{d\in\z^{\geqslant 0}}(1+d)^{1+\epsilon}\int_{d}^{\infty}\frac{J}{r^{\mathfrak{a}}}\dd r=\frac{J}{\mathfrak{a}-1}\sup_{d\in\z^{\geqslant0}}(1+d)^{1+\epsilon}d^{1-\mathfrak{a}}
\eeq
For large $d$, the right hand side can be estimated as $d^{2+\epsilon-\mathfrak{a}}\to 0$ if $d\to\infty$. Hence, $||H||_{F}$ defined in Eq.~\eqref{eq:F_norm} must be bounded. By Lemma \ref{lemma:existence_dynamics}, the limit defined in Eq.~\eqref{eq:limit_dynamics} exists and defines a strongly continuous one-parameter subgroup of $\Aut(\A^{ql})$.
\end{proof}

Given the operator algebra $\A^{ql}$ with a well-defined time evolution $\alpha^{t}$, one says that they together define a $C^*$-dynamical system.

\begin{remark}
    One may wonder whether we can define the Hamiltonian on whole $\A^{ql}$ by 
    \beq
    \delta_{H}(A)\stackrel{?}{:=}\lim_{t\to0}\frac{\alpha^{t}(A)-A}{t},\forall\,A\in\A^{ql}
    \eeq
    However, despite that $\alpha_{t}(A)$ is continuous in $t$, it is in general not differentiable in $t$ if $A\not\in\A^{l}$, so the derivative above does not exist in general. Nevertheless, from the definition in Eq. \eqref{eq: sequence of evolution}, one sees that the derivative exists if $A\in\A^{l}$ and it coincides with our earlier definition of $\delta_{H}(A)$. The quickest way to see this is to write $\alpha^{t}(A)=\lim_{n\to\infty}\alpha^{t}_{n}(A)$ and change the order of limits $t\to 0$ and $n\to\infty$ (this is valid by the uniform convergence in Lemma \ref{lemma:existence_dynamics}). Note for finite $n$, $\frac{\dd}{\dd t}\alpha_{n}^{t}(A)=H_{n}(A)$ where $H_{n}(A)$ is defined in Eq.~\eqref{eq:local_Cauchy_sequence}. It is shown in Lemma \ref{lemma:existence_Hamiltonian} that the limit $\lim_{n\to\infty}H_{n}(A)$ exists if $A$ is a local operator.
\end{remark}

In fact, the above Proposition \ref{theorem:existence_dynamics} can be easily generalized to higher dimensions using exactly the same argument, but with a rather different choice of reproducing function $F$.

\begin{proposition}
[Higher dimensional version of Theorem \ref{theorem:existence_dynamics}]
    For $D$-dimensional lattice $\Lambda\simeq \z^{D}$, if the Hamiltonian satisfies the admissible condition Eq. \eqref{eq:admissible_H} with $\mathfrak{a}>2D$, then the dynamics exists.
\end{proposition}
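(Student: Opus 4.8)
The plan is to follow the proof of proposition~\ref{theorem:existence_dynamics} essentially verbatim: that argument reduces existence of the dynamics to producing a reproducing function $F$ on the lattice for which the $F$-norm $||H||_{F}$ of Eq.~\eqref{eq:F_norm} is finite, after which lemma~\ref{lemma:existence_dynamics} applies directly and gives that the limit Eq.~\eqref{eq:limit_dynamics} exists and defines a strongly continuous one-parameter subgroup of $\Aut(\A^{ql})$, independent of the truncation. So the only thing that changes in $D$ dimensions is the choice of $F$, and I would again take a pure power law $F(r)=(1+r)^{-\mathfrak{b}}$, now allowing the exponent $\mathfrak{b}$ to depend on $D$.

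First I would verify that $F(r)=(1+r)^{-\mathfrak{b}}$ is reproducing on $\Lambda\simeq\z^{D}$ whenever $\mathfrak{b}>D$. For the condition $\sup_{y}\sum_{x}F(d(x,y))<\infty$: since the number of sites at distance about $r$ from a fixed site grows like $r^{D-1}$, we have $\sum_{x}(1+d(x,y))^{-\mathfrak{b}}\leqslant\const\cdot\sum_{r\geqslant 0}r^{D-1}(1+r)^{-\mathfrak{b}}$, which converges precisely when $\mathfrak{b}-(D-1)>1$, i.e. $\mathfrak{b}>D$; this is the only place the dimension really enters. For the convolution condition $\sum_{l}F(d(n,l))F(d(l,m))\leqslant C\,F(d(n,m))$, I would split $\z^{D}$ into the sites $l$ with $d(n,l)\leqslant d(l,m)$ and its complement; on the first set the triangle inequality gives $d(l,m)\geqslant d(n,m)/2$, so $F(d(l,m))\leqslant 2^{\mathfrak{b}}F(d(n,m))$, and $\sum_{l}F(d(n,l))$ is finite by the first condition, bounding this piece by $2^{\mathfrak{b}}(\sup_{n}\sum_{l}F(d(n,l)))\,F(d(n,m))$; the other piece is symmetric. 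This is the familiar fact that polynomial tails with exponent larger than $D$ are reproducing on $\z^{D}$.

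Next I would estimate $||H||_{F}$ by repeating the computation in the proof of proposition~\ref{theorem:existence_dynamics}. Writing $d:=d(m,n)$ and using $\diam(Z)\geqslant d$ whenever $m,n\in Z$, the admissible bound Eq.~\eqref{eq:admissible_H} gives $\sum_{Z:m,n\in Z}||h_{Z}||\leqslant\sum_{r\geqslant d}\sup_{i}\sum_{Z:i\in Z,\,\diam(Z)=r}||h_{Z}||\leqslant\sum_{r\geqslant d}J r^{-\mathfrak{a}}\leqslant\const\cdot d^{1-\mathfrak{a}}$ (integral comparison, valid since $\mathfrak{a}>2D\geqslant 2$), hence $||H||_{F}=\sup_{m,n}(1+d)^{\mathfrak{b}}\sum_{Z:m,n\in Z}||h_{Z}||\leqslant\const\cdot\sup_{d\geqslant 0}(1+d)^{\mathfrak{b}+1-\mathfrak{a}}$, which is finite as soon as $\mathfrak{b}\leqslant\mathfrak{a}-1$. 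So any $\mathfrak{b}$ with $D<\mathfrak{b}\leqslant\mathfrak{a}-1$ works, and such a $\mathfrak{b}$ certainly exists under the stated hypothesis $\mathfrak{a}>2D$ (the estimate in fact only needs $\mathfrak{a}>D+1$, e.g. $\mathfrak{b}=D+\epsilon$ for small $\epsilon>0$). Everything else in the proof — the Cauchy-sequence estimate showing $\delta_{H}(A)\in\A^{ql}$ for $A\in\A^{l}$ as in lemma~\ref{lemma:existence_Hamiltonian}, and the fact that $\delta_{H}$ remains only densely defined — carries over word for word from the one-dimensional case. The only genuinely new point, and the one place to be careful, is the reproducing property of $F$ on $\z^{D}$: tracking the surface-growth factor $r^{D-1}$ is what forces $\mathfrak{b}>D$ and hence constrains $\mathfrak{a}$ from below in terms of the dimension.
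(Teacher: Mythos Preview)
Your proposal is correct and follows essentially the same approach as the paper: the paper's proof is a single line, ``apply lemma~\ref{lemma:existence_dynamics} but with $F(r)=(1+r)^{-D-\epsilon}$,'' and you have simply fleshed out the verification that this $F$ is reproducing on $\z^{D}$ and that $||H||_{F}<\infty$. Your observation that the argument in fact only requires $\mathfrak{a}>D+1$ rather than $\mathfrak{a}>2D$ is correct; the stronger hypothesis is stated because it is needed elsewhere (e.g.\ for the area-law input).
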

The proof is also to apply Lemma \ref{lemma:existence_dynamics} but with $F(r)=(1+r)^{-D-\epsilon}$.

Below we present a simple application of the concept of time evolution, which is expected from the usual quantum mechanics.

\begin{corollary}\label{corollary:inv_gs}
    Given a Hamiltonian $\delta_{H}$ which generates a time-evolution $\alpha^{t}$ and $\psi$ the ground state of this Hamiltonian (see Eq. \eqref{eq:ground_states}), then $\psi$ is invariant under $\alpha^{t}$, \ie $\psi\circ \alpha^{t}=\psi$.
\end{corollary}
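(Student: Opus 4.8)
The plan is to reduce the claim to the statement that $t\mapsto\psi(\alpha^{t}(A))$ is constant for every local $A$, and to extract that from two facts: that the ground-state condition Eq.~\eqref{eq:ground_states} forces the ``stationarity'' $\psi\circ\delta_{H}=0$ on $\A^{l}$, and that $\alpha^{t}$ is the strongly continuous group generated by (the closure of) $\delta_{H}$.

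The first step I would carry out is to show $\psi(\delta_{H}(A))=0$ for all $A\in\A^{l}$ (note $\delta_{H}(A)\in\A^{ql}$ by lemma~\ref{lemma:existence_Hamiltonian}, so this makes sense). Replacing $A$ by $A+\lambda I$ in Eq.~\eqref{eq:ground_states} and using $\delta_{H}(I)=0$ gives $\psi(A^{\dagger}\delta_{H}(A))+\bar\lambda\,\psi(\delta_{H}(A))\ge 0$ for every $\lambda\in\bbC$; since $\psi(A^{\dagger}\delta_{H}(A))$ is a fixed nonnegative real (the $\lambda=0$ case), choosing $\lambda$ so that $\bar\lambda\,\psi(\delta_{H}(A))$ is a large negative real number forces $\psi(\delta_{H}(A))=0$.

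Next I would invoke the structure of the dynamics from proposition~\ref{theorem:existence_dynamics}: $\alpha^{t}$ is a strongly continuous one-parameter group of automorphisms of $\A^{ql}$, and (by the Lieb--Robinson construction behind it, Ref.~\cite{Nachtergaele_2006}) its closed generator $\delta$ satisfies $\delta(A)=i\,\delta_{H}(A)$ for $A\in\A^{l}$ with $\A^{l}$ a core. From here: (i) using norm-continuity of $\psi$ and the core property, $\psi\circ\delta=0$ on $\A^{l}$ extends to $\psi(\delta(B))=0$ for all $B\in D(\delta)$, by approximating $B$ by $B_{n}\in\A^{l}$ with $B_{n}\to B$ and $\delta(B_{n})\to\delta(B)$ and passing to the limit; (ii) by standard $C_{0}$-group theory, for $A\in\A^{l}\subset D(\delta)$ the curve $t\mapsto\alpha^{t}(A)$ stays in $D(\delta)$ and is norm-differentiable with $\frac{\dd}{\dd t}\alpha^{t}(A)=\delta(\alpha^{t}(A))$. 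Combining (i) and (ii), $\frac{\dd}{\dd t}\psi(\alpha^{t}(A))=\psi(\delta(\alpha^{t}(A)))=0$, so $\psi(\alpha^{t}(A))=\psi(A)$ for all $A\in\A^{l}$; density of $\A^{l}$ in $\A^{ql}$ (definition~\ref{def:quasi-local}) together with norm-continuity of $\psi$ and $\psi\circ\alpha^{t}$ then gives $\psi\circ\alpha^{t}=\psi$ on all of $\A^{ql}$.

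The main obstacle is the structural input in the third paragraph: because of long-range terms, $\delta_{H}$ sends $\A^{l}$ into $\A^{ql}$ but not into $\A^{l}$ (remark~\ref{remark:domain_long_range}, lemma~\ref{lemma:existence_Hamiltonian}), so one genuinely needs to know that it closes to the generator $\delta$ of $\alpha^{t}$, that $\A^{l}$ is a core for $\delta$, and that $D(\delta)$ is $\alpha^{t}$-invariant. These are supplied by the Lieb--Robinson estimates underlying proposition~\ref{theorem:existence_dynamics}; everything else is the one-line $\lambda I$ computation and a density argument. (An alternative route is through the GNS triple of $\psi$, where the ground-state condition produces a positive implementing Hamiltonian $H_{\psi}\ge 0$ with $H_{\psi}|\psi\ra=0$, whence $\psi(\alpha^{t}(A))=\la\psi|e^{itH_{\psi}}\pi_{\psi}(A)e^{-itH_{\psi}}|\psi\ra=\psi(A)$; but establishing the existence of $H_{\psi}$ is comparable work.)
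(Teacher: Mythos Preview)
Your proof is correct and follows the same overall architecture as the paper's---first establish stationarity $\psi(\delta_{H}(A))=0$ on $\A^{l}$, then integrate to invariance under the flow, then extend by density---but the implementations of the first two steps differ.

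For stationarity, the paper reduces to Hermitian $A$, shifts to make $A$ positive, and then uses the derivation property on $A=\sqrt{A}\sqrt{A}$ together with $(\delta_{H}(B))^{\dagger}=-\delta_{H}(B^{\dagger})$ to see that $\psi(\sqrt{A}\delta_{H}(\sqrt{A}))$ and $\psi(\delta_{H}(\sqrt{A})\sqrt{A})$ are real and opposite in sign, hence their sum $\psi(\delta_{H}(A))$ vanishes. Your $A\mapsto A+\lambda I$ argument is shorter and avoids the square-root detour; it works because the ground-state inequality applied to $A+\lambda I$ forces $c+\bar\lambda\,\psi(\delta_{H}(A))\in\R_{\geqslant 0}$ for every $\lambda\in\bbC$, which is impossible unless $\psi(\delta_{H}(A))=0$.

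For the passage from stationarity to invariance, the paper simply asserts that ``since $\alpha^{t}$ is differentiable on local operators, this amounts to showing $\psi(\delta_{H}(A))=0$,'' glossing over the fact that $\alpha^{t}(A)$ need not stay in $\A^{l}$, so knowing $\psi\circ\delta_{H}=0$ only on $\A^{l}$ does not immediately give $\frac{\dd}{\dd t}\psi(\alpha^{t}(A))=0$ for all $t$. You address precisely this gap by passing to the closed generator $\delta$, using that $\A^{l}$ is a core, and invoking domain-invariance of $D(\delta)$ under the flow. This is the honest way to close the argument; the paper's version is correct in spirit but relies on the reader supplying exactly the structural facts you spell out.
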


\begin{proof}[Proof of corollary \ref{corollary:inv_gs}]
    To show that $\psi\circ\alpha^{t}=\psi$, we first show that
    \beq
    \psi(\alpha^{t}(A))=\psi(A),\forall\,A\in\A^{\ell}
    \eeq
    After this step, we extend this result to all $A\in\A^{ql}$.
    
    Since $\alpha^{t}$ is differentiable with respect to $t$ on local operators, this amounts to showing that
    \beq
    \psi(\delta_{H}(A))=0,\,\forall\,A\in\A^{\ell}
    \eeq
    To this end, note that
    \beq
    A=\frac{A+A^{\dagger}}{2}+i\frac{A-A^{\dagger}}{2i}
    \eeq
    So without any loss of generality, we can assume $A$ to be Hermitian due to the linearity of $\psi$ and $\delta_H$. Besides, we can always make a shift $A\to A+\lambda I$ where $\lambda\in\R$, so we can further assume that $A$ has only positive eigenvalues.
    Given the assumption that $A$ is positive (\ie $A$ is self-adjoint and all eigenvalues are positive), there is a positive root of square of $A$, denoted as $\sqrt{A}$, which is again local. Thus, by definition \ref{def:local_derivation},
    \beq
    \psi(\delta_{H}(A))=\psi(\delta_{H}(\sqrt{A}\sqrt{A}))=\psi(\delta_{H}(\sqrt{A})\sqrt{A})+\psi(\sqrt{A}\delta_{H}(\sqrt{A}))
    \eeq
    Note that 
    \beq
    \psi(\sqrt{A}\delta_{H}(\sqrt{A}))^{*}=\psi((\sqrt{A}\delta_{H}(\sqrt{A}))^{\dagger})=-\psi(\delta_{H}(\sqrt{A})\sqrt{A})
    \eeq
    where we have used that $(\delta_{H}(B))^{\dagger}=\sum_{Z}[h_{Z},B]^{\dagger}=-\delta_{H}(B^{\dagger})$ since $h_{Z}^{\dagger}=h_{Z}$.
    On the other hand, since $\psi$ is a ground state of $\delta_{H}$, we have $\psi(\sqrt{A}\delta_{H}(\sqrt{A}))\geqslant 0$, in particular, $\psi(\sqrt{A}\delta_{H}(\sqrt{A}))\in\R$. Therefore, $\psi(\sqrt{A}\delta_{H}(\sqrt{A}))=-\psi(\delta_{H}(\sqrt{A})\sqrt{A})$ and 
    \beq
    \psi(\delta_{H}(A))=0
    \eeq
    So we have shown that 
    \beq
    \psi(\alpha^{t}(A))=\psi(A),\forall\,A\in\A^{\ell}
    \eeq
    For general $A\in\A^{ql}$, we use the standard trick to approximate $A$ by local operators, and the result follows from the fact $\alpha^{t}(\lim_{j\to\infty}A_{j})=\lim_{j\to\infty}\alpha^{t}(A_{j})$ and $\psi$ is continuous (see Proposition 2.3.11 of Ref. \cite{bratteli2013operator1}).
\end{proof}

Notice the corollary above holds as long as the time evolution generated by the Hamiltonian $\delta_H$ exists, regardless of whether this Hamiltonian is admissible or not.

\subsection{The GNS Hamiltonian}

It is natural to ask whether there is a way to realize the abstract Hamiltonian $\delta_{H}$ as an operator on some Hilbert space (especially, the GNS Hilbert space $\cH_{\psi}$ for some state $\psi$, see Sec. \ref{subsec: GNS construction} for the relevant construction). Now we address this question in this subsection. Also, using the concepts and techniques developed so far, in this subsection we will prove Theorem \ref{Thm:pure_gs}, which plays a vital role in one of our proofs of Theorem \ref{thm:main} in the main text (see Appendix \ref{sec: theorem 1}).

We start with the following proposition.

\begin{proposition}[Corollary 2.5.8 of Ref. \cite{Naaijkens_2017}]\label{prop:uniqueness_GNS}
    Suppose $\alpha$ is an automorphism of $\A^{ql}$ and $\psi$ is a state which is invariant under $\alpha$. Then on the GNS Hilbert space $\cH_{\psi}$, there exists an operator $U_{\alpha}\in\B(\cH_{\psi})$ such that
    \beq
    \pi_{\psi}(\alpha(A))=U_{\alpha}\pi_\psi(A)U_{\alpha}^{-1}
    \eeq
\end{proposition}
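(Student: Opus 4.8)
The plan is to build $U_{\alpha}$ by hand on the dense subspace $\mathcal{D}_{\psi}:=\pi_{\psi}(\A^{ql})|\psi\ra\subset\cH_{\psi}$, which is dense precisely because the cyclic vector $|\psi\ra=[I]$ generates $\cH_{\psi}$ under the GNS representation (Definition~\ref{def:GNS_triple}). The natural guess is
\[
U_{\alpha}\big(\pi_{\psi}(A)|\psi\ra\big):=\pi_{\psi}(\alpha(A))|\psi\ra,\qquad A\in\A^{ql},
\]
and the entire proof amounts to checking that this formula is meaningful and has the claimed properties.

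First I would establish well-definedness and isometry in one stroke. Since $\alpha$ is a $*$-automorphism, $\alpha(A)^{\dagger}\alpha(A)=\alpha(A^{\dagger})\alpha(A)=\alpha(A^{\dagger}A)$, and since $\psi$ is $\alpha$-invariant,
\[
\|\pi_{\psi}(\alpha(A))|\psi\ra\|^{2}=\psi\big(\alpha(A)^{\dagger}\alpha(A)\big)=\psi\big(\alpha(A^{\dagger}A)\big)=\psi(A^{\dagger}A)=\|\pi_{\psi}(A)|\psi\ra\|^{2}.
\]
Applying this identity to $A-B$ shows that $\pi_{\psi}(A)|\psi\ra=\pi_{\psi}(B)|\psi\ra$ forces $\pi_{\psi}(\alpha(A))|\psi\ra=\pi_{\psi}(\alpha(B))|\psi\ra$, so $U_{\alpha}$ is a well-defined linear map on $\mathcal{D}_{\psi}$, and the same identity says it is norm-preserving there. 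Hence $U_{\alpha}$ extends uniquely to a bounded isometric operator on all of $\cH_{\psi}$, which I continue to denote $U_{\alpha}$.

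Next I would upgrade this isometry to a unitary and verify the intertwining relation. Running the identical construction with $\alpha^{-1}$ in place of $\alpha$ produces an isometry $U_{\alpha^{-1}}$; evaluating $U_{\alpha^{-1}}U_{\alpha}$ and $U_{\alpha}U_{\alpha^{-1}}$ on vectors $\pi_{\psi}(A)|\psi\ra$ gives the identity, so $U_{\alpha}$ is invertible, hence unitary, with $U_{\alpha}^{-1}=U_{\alpha^{-1}}$. For the intertwining relation, using $\pi_{\psi}(X)\pi_{\psi}(Y)|\psi\ra=\pi_{\psi}(XY)|\psi\ra$ and $\alpha\big(A\,\alpha^{-1}(B)\big)=\alpha(A)B$, one computes for all $A,B\in\A^{ql}$
\[
U_{\alpha}\,\pi_{\psi}(A)\,U_{\alpha}^{-1}\big(\pi_{\psi}(B)|\psi\ra\big)=U_{\alpha}\,\pi_{\psi}\big(A\,\alpha^{-1}(B)\big)|\psi\ra=\pi_{\psi}\big(\alpha(A)B\big)|\psi\ra=\pi_{\psi}(\alpha(A))\,\pi_{\psi}(B)|\psi\ra.
\]
Since $\mathcal{D}_{\psi}$ is dense and both $U_{\alpha}\pi_{\psi}(A)U_{\alpha}^{-1}$ and $\pi_{\psi}(\alpha(A))$ are bounded, they agree on all of $\cH_{\psi}$, which is the assertion. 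As a byproduct, $U_{\alpha}|\psi\ra=\pi_{\psi}(\alpha(I))|\psi\ra=|\psi\ra$, so $U_{\alpha}$ also fixes the cyclic vector.

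I do not expect a serious obstacle. The one point needing care is that $\cH_{\psi}$ is a completion of $\A^{ql}/N_{\psi}$, so $U_{\alpha}$ must be shown bounded on the dense pre-Hilbert subspace before the extension is legitimate — but that boundedness is exactly the norm-preserving identity above, so the extension is automatic. The genuinely indispensable hypothesis is the $\alpha$-invariance of $\psi$: it enters only in the displayed computation of $\|\pi_{\psi}(\alpha(A))|\psi\ra\|$, and without it $U_{\alpha}$ need not even be well-defined.
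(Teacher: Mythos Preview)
Your argument is correct, but the paper takes a slightly different (shorter) route. Rather than building $U_{\alpha}$ by hand, the paper observes that $(\pi_{\psi}\circ\alpha,\cH_{\psi},|\psi\ra)$ is itself a GNS triple for the same state $\psi$ (since $\la\psi|\pi_{\psi}(\alpha(A))|\psi\ra=\psi(\alpha(A))=\psi(A)$ by $\alpha$-invariance), and then simply invokes the uniqueness of the GNS representation up to unitary equivalence (corollary~\ref{corollary:unique:GNS}) to obtain $U_{\alpha}$. Your approach is more self-contained and makes the unitary explicit---in effect you are reproving the relevant special case of the uniqueness corollary---while the paper's approach is terser because it outsources the construction to that earlier result. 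Both yield the same $U_{\alpha}$, including the property $U_{\alpha}|\psi\ra=|\psi\ra$ that you note at the end.
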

\begin{proof}
    Suppose $(\pi_\psi, \cH_\psi, |\psi\ra)$ is a GNS triple. To show the above proposition, according to Corollary \ref{corollary:unique:GNS}, it suffices to show that $(\pi_{\psi}\circ\alpha,\cH_{\psi},|\psi\ra)$ is also a GNS triple. To this end, first notice that $\pi_\psi\circ\alpha$ is a homormorphism from $\A^{ql}$ to $\B(\cH_\psi)$. Next, notice that for any $A\in\A^{ql}$, $\la\psi|\pi_\psi\circ\alpha(A)|\psi\ra=\la\psi|\pi_\psi(\alpha(A))|\psi\ra=\psi(\alpha(A))=\psi\circ\alpha(A)=\psi(A)$. Combining these two observations, we conclude that $(\pi_{\psi}\circ\alpha,\cH_{\psi},|\psi\ra)$ is indeed also a GNS triple. Then the proposition follows due to Corollary \ref{corollary:unique:GNS}.
    \end{proof}

As a corollary, we have
\begin{corollary}
    Let $\psi$ be a ground state of a Hamiltonian $\delta_{H}$, which generates a time evolution (\ie a strongly continuous one-parameter subgroup of automorphisms on $\A^{ql}$) on the GNS Hilbert space $\cH_{\psi}$, then there exists a so-called GNS Hamiltonian $H_{\GNS}$ such that
    \beq \label{eq: temp eq}
    \pi_{\psi}(\delta_{H}(A))=[H_{\GNS},\pi_{\psi}(A)],\forall\,A\in\A^{\ell}
    \eeq
    Typically, the GNS Hamiltonian is not in $\pi_{\psi}(A^{ql})$ but in its double commutant $\pi_{\psi}(\A^{ql})''$.
\end{corollary}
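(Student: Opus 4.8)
The plan is to realise $H_{\GNS}$ as ($i$ times) the Stone generator of the canonical unitary implementation of the dynamics on $\cH_{\psi}$, and then to obtain \eqref{eq: temp eq} by differentiating that implementation at $t=0$. Since $\psi$ is a ground state of $\delta_{H}$, corollary~\ref{corollary:inv_gs} gives $\psi\circ\alpha^{t}=\psi$ for all $t$, so proposition~\ref{prop:uniqueness_GNS} applies; I would make the implementer explicit by setting $U_{t}[A]:=[\alpha^{t}(A)]$ for $A\in\A^{ql}$. Invariance of $\psi$ shows $\alpha^{t}(N_{\psi})=N_{\psi}$ and $\la U_{t}[A],U_{t}[B]\ra=\psi((\alpha^{t}A)^{\dagger}\alpha^{t}B)=\psi(A^{\dagger}B)=\la[A],[B]\ra$, so $U_{t}$ is a well-defined isometry on a dense subspace and extends to a unitary $U_{t}\in\B(\cH_{\psi})$; by construction $\pi_{\psi}(\alpha^{t}(A))=U_{t}\pi_{\psi}(A)U_{t}^{-1}$ and $U_{t}|\psi\ra=U_{t}[I]=[\alpha^{t}(I)]=[I]=|\psi\ra$.

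Next I would check $\{U_{t}\}$ is a strongly continuous one-parameter group. The law $U_{t+s}=U_{t}U_{s}$, $U_{0}=\mathbbm{1}$, is immediate from $\alpha^{t+s}=\alpha^{t}\circ\alpha^{s}$ and the explicit formula. For strong continuity it suffices to test on the dense subspace $\{[A]:A\in\A^{l}\}$ (dense since $A\mapsto[A]$ is norm-decreasing, $\la[A],[A]\ra=\psi(A^{\dagger}A)\leqslant\|A\|^{2}$, and $\A^{l}$ is norm-dense in $\A^{ql}$), where
\begin{equation}
\la U_{t}[A]-[A],\,U_{t}[A]-[A]\ra=\psi\big((\alpha^{t}(A)-A)^{\dagger}(\alpha^{t}(A)-A)\big)\leqslant\|\alpha^{t}(A)-A\|^{2}\xrightarrow{\ t\to0\ }0
\end{equation}
by strong continuity of $\alpha^{t}$ on $\A^{ql}$ (proposition~\ref{theorem:existence_dynamics}); since $\|U_{t}\|=1$ an $\epsilon/3$ argument extends strong continuity to all of $\cH_{\psi}$. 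Stone's theorem then yields a self-adjoint, densely defined, generically unbounded $H_{\GNS}$ with $U_{t}=e^{itH_{\GNS}}$, and $U_{t}|\psi\ra=|\psi\ra$ forces $|\psi\ra\in D(H_{\GNS})$, $H_{\GNS}|\psi\ra=0$.

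For the commutator identity, recall from the remark after proposition~\ref{theorem:existence_dynamics} that for $A\in\A^{l}$ the curve $t\mapsto\alpha^{t}(A)$ is norm-differentiable at $t=0$ with derivative $\delta_{H}(A)$ (differentiate $\alpha^{t}_{n}(A)=e^{iH_{n}t}Ae^{-iH_{n}t}$ and exchange the $n\to\infty$ and $t\to0$ limits via the uniform convergence in lemma~\ref{lemma:existence_dynamics}). Hence $t\mapsto\pi_{\psi}(\alpha^{t}(A))=U_{t}\pi_{\psi}(A)U_{-t}$ is norm-differentiable at $0$ with derivative $\pi_{\psi}(\delta_{H}(A))$. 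Pairing with $\eta,\xi\in D(H_{\GNS})$ and differentiating at $t=0$, using $\tfrac{\dd}{\dd t}\big|_{0}U_{-t}\zeta=-iH_{\GNS}\zeta$ for $\zeta\in D(H_{\GNS})$, gives
\begin{equation}
\la\eta,\,\pi_{\psi}(\delta_{H}(A))\,\xi\ra=i\la H_{\GNS}\eta,\,\pi_{\psi}(A)\,\xi\ra-i\la\eta,\,\pi_{\psi}(A)\,H_{\GNS}\xi\ra,
\end{equation}
which is \eqref{eq: temp eq} read as an identity of quadratic forms on $D(H_{\GNS})$, the factor $i$ being the one fixed by the convention $\tfrac{\dd A}{\dd t}=i[H,A]$. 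Making precise sense of $[H_{\GNS},\pi_{\psi}(A)]$ for unbounded $H_{\GNS}$ in this way — via matrix elements on the core $D(H_{\GNS})$ rather than via operator products — is the first delicate point.

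It remains to locate $H_{\GNS}$. From $\psi(A^{\dagger}\delta_{H}(A))\geqslant0$, the form identity above, and $H_{\GNS}|\psi\ra=0$ one gets $\la\pi_{\psi}(A)|\psi\ra,\,H_{\GNS}\,\pi_{\psi}(A)|\psi\ra\ra\geqslant0$ on a dense set of vectors, so $H_{\GNS}\geqslant0$. When $\psi$ is pure — the case of interest, since gapped ground states are pure — $\pi_{\psi}$ is irreducible and $\pi_{\psi}(\A^{ql})''=\B(\cH_{\psi})$, so affiliation is automatic. For a general $\psi$ one invokes the Borchers--Arveson theorem: a strongly continuous unitary group with non-negative generator implementing a $*$-automorphism group of the von Neumann algebra $\pi_{\psi}(\A^{ql})''$ and fixing the cyclic vector $|\psi\ra$ can be chosen inside $\pi_{\psi}(\A^{ql})''$; hence $H_{\GNS}$ is affiliated to $\pi_{\psi}(\A^{ql})''$ but, being unbounded, is not an element of the $C^{*}$-algebra $\pi_{\psi}(\A^{ql})$. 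The group law and strong continuity are routine; this last step — upgrading affiliation with $\B(\cH_{\psi})$ to affiliation with the double commutant for non-pure $\psi$ — is the second delicate point and the main obstacle of the argument.
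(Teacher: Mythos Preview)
Your argument follows essentially the same route as the paper's proof: invoke corollary~\ref{corollary:inv_gs} for invariance, apply proposition~\ref{prop:uniqueness_GNS} to obtain the implementing unitaries $U_t$, then differentiate $\pi_{\psi}(\alpha^{t}(A))=U_{t}\pi_{\psi}(A)U_{t}^{-1}$ at $t=0$ and set $H_{\GNS}=-i\,\tfrac{\dd}{\dd t}U_t\big|_{t=0}$. The paper's version is considerably terser---it simply writes ``taking derivatives on both sides'' and, for the double commutant claim, defers to corollary~3.2.48 of Bratteli--Robinson rather than arguing it out. Your proposal fills in the analytic scaffolding the paper omits: the explicit formula $U_t[A]=[\alpha^t(A)]$, the verification of strong continuity via the norm-differentiability of $t\mapsto\alpha^t(A)$ on $\A^l$, Stone's theorem to produce a self-adjoint generator, and the quadratic-form reading of \eqref{eq: temp eq} on $D(H_{\GNS})$. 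For the location of $H_{\GNS}$ you go further than the paper by extracting positivity from the ground-state inequality and invoking Borchers--Arveson, whereas the paper simply cites the literature. None of this deviates in spirit---it is the same proof, executed with more care about unbounded-operator issues that the paper leaves implicit.
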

\begin{proof}
    Corollary \ref{corollary:inv_gs} implies that
    \beq
    \psi\circ\alpha^{t}=\psi
    \eeq
    Thus by Proposition \ref{prop:uniqueness_GNS}, there exists a unitary operator $U_{t}\in\B(\cH_{\psi})$, such that
    \beq
    \pi_{\psi}(\alpha^{t}(A))=U_{t}\pi_{\psi}(A)U_{t}^{\dagger},\ \forall \,A\in\A^{ql}
    \eeq
    When $A\in\A^{\ell}$, taking derivatives on both sides and writing\footnote{This derivative exists due to Theorem 10.15 of Ref.~\cite{hall2013quantum}.} $H_{\GNS}=-i\frac{\dd}{\dd t}U_{t}$, we get
    \beq
    \pi_{\psi}(\delta_{H}(A))=[H_{\GNS},\pi_{\psi}(A)],\ \forall\,A\in D(\delta_{H})
    \eeq
    which proves Eq. \eqref{eq: temp eq}. To see that the GNS Hamiltonian is in the double commutants $\pi_\psi(\A^{ql})''$, we refer to Corollary 3.2.48 of Ref. \cite{bratteli2013operator1}.
\end{proof}

Now we are ready to establish the following important theorem.

\begin{theorem}\label{Thm:pure_gs}
    Let $\psi$ be a locally-unique gapped ground state of an admissible Hamiltonian, then
    \begin{enumerate}
        \item It is gapped and non-degenerate (\ie unique) in the GNS Hilbert space.
        \item This state $\psi$ is pure.
    \end{enumerate}
\end{theorem}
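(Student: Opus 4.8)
The plan is to realise the abstract Hamiltonian $\delta_{H}$ as the GNS Hamiltonian $H_{\GNS}$ on $\cH_{\psi}$ and then read off both claims from the spectrum of $H_{\GNS}$. First I would record the set-up: since $\delta_{H}$ is admissible, Proposition~\ref{theorem:existence_dynamics} gives the dynamics $\alpha^{t}$ on $\A^{ql}$, and Corollary~\ref{corollary:inv_gs} gives $\psi\circ\alpha^{t}=\psi$; hence the construction behind Proposition~\ref{prop:uniqueness_GNS} produces a strongly continuous one-parameter unitary group $U_{t}\in\B(\cH_{\psi})$ with $U_{t}|\psi\ra=|\psi\ra$ and $\pi_{\psi}(\alpha^{t}(A))=U_{t}\pi_{\psi}(A)U_{t}^{-1}$. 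By Stone's theorem its generator $H_{\GNS}$ is self-adjoint; since $\alpha^{t}(A)$ is differentiable in $t$ for $A\in\A^{l}$, one gets $\pi_{\psi}(A)|\psi\ra\in\mathrm{Dom}(H_{\GNS})$ with $H_{\GNS}\pi_{\psi}(A)|\psi\ra=\pi_{\psi}(\delta_{H}(A))|\psi\ra$, and from $U_{t}|\psi\ra=|\psi\ra$ one gets $H_{\GNS}|\psi\ra=0$. Moreover $H_{\GNS}$ belongs to the double commutant $\pi_{\psi}(\A^{ql})''$, as recorded just above.

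Next I would translate the gap condition into a statement about $H_{\GNS}$. For $A\in\A^{l}$ set $A_{0}:=A-\psi(A)I$, so $\psi(A_{0})=0$ and $\delta_{H}(A)=\delta_{H}(A_{0})$. Using $\psi(\delta_{H}(B))=0$ for all $B\in\A^{l}$ (shown in the proof of Corollary~\ref{corollary:inv_gs}) and Definition~\ref{definition:locally-unique_gs}, one obtains, writing $|A\ra:=\pi_{\psi}(A)|\psi\ra$,
\[
\la A|H_{\GNS}|A\ra=\psi(A^{\dagger}\delta_{H}(A))=\psi(A_{0}^{\dagger}\delta_{H}(A_{0}))\geqslant\gamma\,\psi(A_{0}^{\dagger}A_{0})=\gamma\big(\la A|A\ra-|\la\psi|A\ra|^{2}\big).
\]
Equivalently, the quadratic form of $H_{\GNS}-\gamma(I-|\psi\ra\la\psi|)$ is non-negative on the dense subspace $\mathcal{D}:=\pi_{\psi}(\A^{l})|\psi\ra\subset\mathrm{Dom}(H_{\GNS})$.

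Promoting this to the genuine operator inequality $H_{\GNS}\geqslant\gamma(I-|\psi\ra\la\psi|)$ on all of $\cH_{\psi}$ then finishes everything. It forces $\mathrm{spec}(H_{\GNS})\subset\{0\}\cup[\gamma,\infty)$, and if $H_{\GNS}|\chi\ra=0$ with $\la\psi|\chi\ra=0$ then $0=\la\chi|H_{\GNS}|\chi\ra\geqslant\gamma\|\chi\|^{2}$ gives $\chi=0$; hence $\ker H_{\GNS}=\bbC|\psi\ra$, which is claim (1) (the cyclic vector $|\psi\ra$ is the unique ground-state vector in $\cH_{\psi}$, separated by the gap $\gamma$). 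For claim (2), the spectral projection of $H_{\GNS}$ onto $\{0\}$ is $P_{0}=|\psi\ra\la\psi|$. Any $T\in\pi_{\psi}(\A^{ql})'$ commutes with $H_{\GNS}\in\pi_{\psi}(\A^{ql})''$, hence with $P_{0}$, so $T|\psi\ra=P_{0}T|\psi\ra=\la\psi|T|\psi\ra\,|\psi\ra$; since $\{\pi_{\psi}(A)|\psi\ra:A\in\A^{ql}\}$ is dense and $T\pi_{\psi}(A)|\psi\ra=\pi_{\psi}(A)T|\psi\ra$, this yields $T\in\bbC I$. Thus $\pi_{\psi}(\A^{ql})'=\bbC I$, so $\pi_{\psi}$ is irreducible by Schur's lemma (Lemma~\ref{lemma:Schur}), and therefore $\psi$ is pure.

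The hard part will be the promotion used at the beginning of the previous paragraph: deducing the operator inequality from a form inequality that a priori holds only on $\mathcal{D}$. The standard invariant-domain criteria do not apply directly, because $\alpha^{t}$ carries local operators into genuinely quasi-local ones, so $\mathcal{D}$ is not invariant under $U_{t}=e^{itH_{\GNS}}$. The route I would take is a mollification argument: the vectors $\int_{\R}f(t)\,U_{t}|A\ra\,\dd t$ with $f\in C_{c}^{\infty}(\R)$ and $|A\ra\in\mathcal{D}$ are $C^{\infty}$-vectors for $H_{\GNS}$ and span a core; approximating $\alpha^{t}(A)$ by local operators and the integral by Riemann sums shows $\mathcal{D}$ is dense in this core in the graph norm, so the form inequality passes to the closure. (Alternatively, one can first extend the gap estimate from $\A^{l}$ to all of $\A^{ql}$ using strong continuity of $\alpha^{t}$ together with the fact that $\delta_{H}$ is densely defined, and then argue as above.) Everything else is routine bookkeeping.
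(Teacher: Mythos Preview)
Your proposal is correct and follows essentially the same route as the paper: realise $\delta_{H}$ as $H_{\GNS}$ on $\cH_{\psi}$, use the gap to get non-degeneracy of $\ker H_{\GNS}$, and then exploit $H_{\GNS}\in\pi_{\psi}(\A^{ql})''$ to force every commutant element to fix $|\psi\ra$ and hence be scalar by cyclicity. The only cosmetic difference is that the paper packages the last step via Definition~\ref{def:alternative_pure} (a majorised functional $\rho$ gives a specific $T_{\rho}\in\pi_{\psi}(\A^{ql})'$ through Corollary~\ref{corollary:Riesz}) rather than invoking Schur's lemma directly; and the paper glosses over the form-to-operator promotion you flag (it simply asserts non-degeneracy and defers rigor to Tasaki), whereas you correctly isolate it as the genuine analytic content and sketch a mollification route.
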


The proof of this theorem relies on the following few lemmas.

\begin{lemma}[Riesz representation theorem]
Let $\cH$ be a Hilbert space (finite or infinite-dimensional) and $F$ be a bounded linear functional on $\cH$, then there exists a unique state $|F\ra$ in $\cH$, such that
\beq
\la F|\psi\ra=F(|\psi\ra)
\eeq
for any $|\psi\ra\in \cH$. This justifies the usual notion that bras are dual to kets.
\end{lemma}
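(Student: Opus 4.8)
The plan is to reduce the statement to the orthogonal projection theorem for Hilbert spaces and then read off $|F\ra$ from a single vector spanning the orthogonal complement of $\ker F$. If $F$ is the zero functional, take $|F\ra=0$, so assume $F\neq 0$. Since $F$ is bounded it is continuous, so its kernel $\mathcal{K}:=\{|\phi\ra\in\cH \,:\, F(|\phi\ra)=0\}$ is a \emph{closed} proper linear subspace of $\cH$. The entire content of the lemma is packaged in the claim that $\cH=\mathcal{K}\oplus\mathcal{K}^{\perp}$ with $\mathcal{K}^{\perp}\neq\{0\}$, which I would establish first; everything after that is elementary linear algebra.

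For the projection theorem I would argue in the standard way. Fix $|\chi\ra\notin\mathcal{K}$ and put $\delta:=\inf_{|k\ra\in\mathcal{K}}\bigl\||\chi\ra-|k\ra\bigr\|$, which is strictly positive because $\mathcal{K}$ is closed. Choose $|k_n\ra\in\mathcal{K}$ with $\bigl\||\chi\ra-|k_n\ra\bigr\|\to\delta$; applying the parallelogram law to $|\chi\ra-|k_n\ra$ and $|\chi\ra-|k_m\ra$ and using $\tfrac12(|k_n\ra+|k_m\ra)\in\mathcal{K}$ shows that $\{|k_n\ra\}$ is Cauchy. This is the one place where completeness of $\cH$ is indispensable: it supplies a limit $|k_\star\ra\in\mathcal{K}$, and a short variational argument shows that $|e\ra:=|\chi\ra-|k_\star\ra$ is orthogonal to all of $\mathcal{K}$, hence $|e\ra\in\mathcal{K}^{\perp}\setminus\{0\}$. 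Establishing the existence of this orthogonal complement is the main obstacle; the rest is bookkeeping.

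Granting this, fix any nonzero $|z\ra\in\mathcal{K}^{\perp}$; then $F(|z\ra)\neq 0$ because $|z\ra\notin\mathcal{K}$. For an arbitrary $|\psi\ra\in\cH$, the vector $|\psi\ra-\tfrac{F(|\psi\ra)}{F(|z\ra)}|z\ra$ is annihilated by $F$, hence lies in $\mathcal{K}$ and is therefore orthogonal to $|z\ra$; the orthogonality relation reads $\la z|\psi\ra=\tfrac{F(|\psi\ra)}{F(|z\ra)}\la z|z\ra$, i.e. $F(|\psi\ra)=\tfrac{F(|z\ra)}{\la z|z\ra}\la z|\psi\ra$. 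With the paper's convention that $\la\cdot|\cdot\ra$ is conjugate-linear in its first slot, this says exactly that $|F\ra:=\tfrac{\overline{F(|z\ra)}}{\la z|z\ra}|z\ra$ represents $F$. Uniqueness is then immediate: if $|F_1\ra$ and $|F_2\ra$ both represent $F$, then $\la F_1-F_2|\psi\ra=0$ for every $|\psi\ra$, and choosing $|\psi\ra=|F_1\ra-|F_2\ra$ forces $\bigl\||F_1\ra-|F_2\ra\bigr\|^{2}=0$, hence $|F_1\ra=|F_2\ra$.
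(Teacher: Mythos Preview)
Your proof is correct and is the standard textbook argument via the orthogonal projection theorem. The paper does not actually supply its own proof of this lemma; it simply cites a functional analysis textbook (theorem~2.E of Zeidler), so there is nothing substantive to compare---your argument is precisely the kind of proof that reference contains.
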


For a proof, we refer to standard textbooks on functional analysis \eg theorem 2.E in Ref.~\cite{Zeidler1995}. As a corollary, we have

\begin{corollary}\label{corollary:Riesz}
    Consider any bounded sesquilinear form $(\bullet,\bullet):\cH\times\cH\to\bbC$, \ie $(\bullet,\bullet)$ is linear in the second argument and anti-linear in the first argument, and furthermore, \footnote{We are not using bra-ket notations in this example because it makes everything messy.}
    \beq
    \frac{|(u,v)|}{||u||\cdot||v||}<\infty,\,\forall\,u,v\in\cH.
    \eeq
    There exists a bounded linear operator $T$ on $\cH$ such that
    \beq
    (u,v)=\la u,Tv\ra,\forall\,u,v\in \cH
    \eeq
    where $\la\bullet,\bullet\ra$ is the inner product on $\cH$.
\end{corollary}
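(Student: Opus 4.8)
The plan is to invoke the Riesz representation theorem stated above once for each fixed value of the second argument, and then to check that the resulting assignment is a bounded linear operator. Let $C$ denote the finite constant with $|(u,v)|\leqslant C\,||u||\cdot||v||$ for all $u,v\in\cH$. First I would fix $v\in\cH$ and consider the functional $F_{v}:\cH\to\bbC$, $F_{v}(u):=\overline{(u,v)}$. Because $(\bullet,\bullet)$ is anti-linear in its first slot, $F_{v}$ is genuinely $\bbC$-linear in $u$, and it is bounded since $|F_{v}(u)|\leqslant C\,||v||\cdot||u||$. The Riesz theorem then produces a unique vector, which I will name $Tv$, with $\la Tv|u\ra=F_{v}(u)=\overline{(u,v)}$ for all $u\in\cH$; conjugating and using the conjugate-symmetry of the inner product gives the desired identity $(u,v)=\la u|Tv\ra$ for all $u\in\cH$. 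This defines a map $T:\cH\to\cH$.

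Next I would verify that $T$ is linear and bounded. Linearity follows from the non-degeneracy of the inner product: for scalars $\lambda_{1},\lambda_{2}$ and vectors $v_{1},v_{2}$, linearity of $(\bullet,\bullet)$ in the second argument gives $\la u|T(\lambda_{1}v_{1}+\lambda_{2}v_{2})\ra=\lambda_{1}(u,v_{1})+\lambda_{2}(u,v_{2})=\la u|\lambda_{1}Tv_{1}+\lambda_{2}Tv_{2}\ra$ for every $u\in\cH$, so the two vectors coincide (one may equally cite the uniqueness clause of the Riesz theorem). For boundedness, setting $u=Tv$ in the defining identity yields $||Tv||^{2}=\la Tv|Tv\ra=(Tv,v)\leqslant C\,||Tv||\cdot||v||$, hence $||Tv||\leqslant C\,||v||$, so $T\in\B(\cH)$ with $||T||\leqslant C$.

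I do not expect a genuine obstacle here; the computation is entirely routine. The only point that requires care is the bookkeeping of linearity versus anti-linearity: the Riesz theorem as stated represents bounded \emph{linear} functionals, so it must be applied to the conjugated form $u\mapsto\overline{(u,v)}$ rather than to $u\mapsto(u,v)$ directly, and one must then conjugate back. Keeping the physics convention (inner product linear in the second slot, anti-linear in the first) consistent throughout is essentially the whole subtlety.
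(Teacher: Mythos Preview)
Your proof is correct and follows essentially the same idea as the paper's: apply the Riesz representation theorem once per vector to produce the operator, then check linearity. The only cosmetic difference is that the paper fixes the first argument $u$, applies Riesz to the linear functional $v\mapsto(u,v)$ to obtain a vector it calls $T^{\dagger}u$, and then passes to the adjoint to get $T$; you instead fix $v$, conjugate to make $u\mapsto\overline{(u,v)}$ linear, and obtain $Tv$ directly. Your version is slightly more self-contained in that you verify boundedness explicitly, whereas the paper tacitly uses it when asserting that the adjoint exists.
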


To understand this corollary, one defines a bounded linear functional by $(u,\bullet):\cH\to\bbC$. By the Riesz representation theorem, there exists another vector $T^{\dagger}(u)\in\cH$ such that
\beq
(u,v)=\la T^{\dagger}u,v\ra
\eeq
It is easy to see that $T^{\dagger}$ defined above is a linear operator. Hence its adjoint exists,
\beq
(u,v)=\la T^{\dagger}u,v\ra=\la u,Tv\ra
\eeq

\begin{lemma}[Theorem 9.17 of Ref.~\cite{Landsman:2017hpa}]
    The GNS time evolution operator $e^{itH_{\GNS}}\in\pi_{\psi}(\A^{ql})''$, where $\psi$ is the ground state of $\delta_{H}$ and $\pi_{\psi}(\A^{ql})''$ means the double commutant of $\pi(\A^{ql})$ in $\B(\cH_{\psi})$.
\end{lemma}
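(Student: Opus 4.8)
The plan is to argue directly with the unitaries implementing the dynamics on $\cH_\psi$. Write $U_t:=e^{itH_{\GNS}}$, $\mathcal{M}:=\pi_\psi(\A^{ql})''$ and $\mathcal{M}':=\pi_\psi(\A^{ql})'$. First I would collect the elementary inputs. By corollary~\ref{corollary:inv_gs} the ground state $\psi$ is invariant under $\alpha^{t}$, so the implementing unitaries of proposition~\ref{prop:uniqueness_GNS} can be (and are) normalized so that $U_t|\psi\ra=|\psi\ra$; differentiating at $t=0$ gives $H_{\GNS}|\psi\ra=0$. Rewriting the ground-state inequality Eq.~\eqref{eq:ground_states} in the GNS picture, for every $A\in\A^{l}$ one has $\la\psi|\pi_\psi(A)^{\dagger}H_{\GNS}\pi_\psi(A)|\psi\ra=\psi(A^{\dagger}\delta_{H}(A))\geqslant0$, and since the vectors $\pi_\psi(\A^{l})|\psi\ra$ are dense in $\cH_\psi$ this is the spectral condition $H_{\GNS}\geqslant0$. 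Finally, $U_t\pi_\psi(A)U_{-t}=\pi_\psi(\alpha^{t}(A))\in\pi_\psi(\A^{ql})$ and $\alpha^{t}$ is an automorphism of $\A^{ql}$, so $\Ad_{U_t}$ carries $\pi_\psi(\A^{ql})$ onto itself and hence preserves its weak closure $\mathcal{M}$ and its commutant $\mathcal{M}'$. By the bicommutant theorem (theorem~\ref{theorem:vN_double}) it then suffices to prove that $U_t$ commutes with every $M'\in\mathcal{M}'$.

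The core is a Liouville-type estimate powered by $H_{\GNS}\geqslant0$. Fix $M\in\mathcal{M}$ and $M'\in\mathcal{M}'$ and set $F(t):=\la\psi|M^{\dagger}U_tM'|\psi\ra$ for $t\in\R$. Since $H_{\GNS}\geqslant0$, for $z=t+is$ the operators $U_z:=e^{izH_{\GNS}}=e^{itH_{\GNS}}e^{-sH_{\GNS}}$ form a holomorphic family with $\|U_z\|\leqslant1$ on the closed upper half-plane $\{\im z\geqslant0\}$; hence $z\mapsto\la\psi|M^{\dagger}U_zM'|\psi\ra$ is bounded and holomorphic there, with boundary values $F$ on $\R$. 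On the other hand, using $U_{\pm t}|\psi\ra=|\psi\ra$, that $U_tM'U_{-t}\in\mathcal{M}'$, and that $M^{\dagger}\in\mathcal{M}$ commutes with $\mathcal{M}'$, a short computation rewrites $F(t)=\la\psi|M'U_{-t}M^{\dagger}|\psi\ra$, which by the same reasoning extends to a bounded holomorphic function on the closed lower half-plane $\{\im z\leqslant0\}$. The two extensions are continuous up to $\R$ and agree there, so by Morera's theorem they glue to a single bounded entire function, which by Liouville's theorem is constant; its value at $t=0$ is $\la\psi|M^{\dagger}M'|\psi\ra$. Therefore $\la\psi|M^{\dagger}(U_t-I)M'|\psi\ra=0$ for every $M\in\mathcal{M}$.

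To finish, observe that $|\psi\ra$ is cyclic for $\pi_\psi(\A^{ql})\subseteq\mathcal{M}$, so $\mathcal{M}|\psi\ra$ is dense and the previous line forces $(U_t-I)M'|\psi\ra=0$, \ie $U_tM'|\psi\ra=M'|\psi\ra$, for all $M'\in\mathcal{M}'$. Then for any $M\in\mathcal{M}$, using $M'M=MM'$, inserting $U_{-t}U_t=I$, using that $U_tMU_{-t}\in\mathcal{M}$ commutes with $M'\in\mathcal{M}'$, and using $U_{-t}|\psi\ra=|\psi\ra$, one gets $U_tM'M|\psi\ra=U_tMM'|\psi\ra=(U_tMU_{-t})M'|\psi\ra=M'(U_tMU_{-t})|\psi\ra=M'U_tM|\psi\ra$. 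Density of $\mathcal{M}|\psi\ra$ then gives $U_tM'=M'U_t$, hence $U_t\in\mathcal{M}''=\mathcal{M}=\pi_\psi(\A^{ql})''$, as asserted.

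I expect the genuinely delicate point to be the complex-analytic step and its interplay with the \emph{unbounded} generator $H_{\GNS}$: one must verify that $z\mapsto U_z$ really is a bounded holomorphic semigroup on the closed upper half-plane (a spectral-calculus computation resting on $H_{\GNS}\geqslant0$), that the relevant matrix elements extend continuously down to $\R$, and that a function continuous on $\mathbb{C}$ and holomorphic off the real line is entire. Upstream of all this is the standard-but-nontrivial task of promoting the quadratic-form inequality on the dense domain $\pi_\psi(\A^{l})|\psi\ra$ to the genuine operator inequality $H_{\GNS}\geqslant0$ (i.e.\ checking that this domain is a form core for $H_{\GNS}$, or citing the ground-state theory in the reference attached to this lemma). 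Everything else is formal manipulation of bounded operators on $\cH_\psi$.
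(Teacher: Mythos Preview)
Your argument is correct and is essentially the standard Borchers-type proof of this result: combine positivity of the GNS Hamiltonian with the invariance $U_t|\psi\ra=|\psi\ra$ to get a bounded entire matrix element via the upper/lower half-plane extensions, apply Liouville, and then bootstrap from $(U_t-I)M'|\psi\ra=0$ to $[U_t,M']=0$ using cyclicity and $\Ad_{U_t}(\mathcal{M})=\mathcal{M}$. The identification of the delicate points (self-adjointness and positivity of $H_{\GNS}$ from the form inequality on the dense core, and the analyticity/continuity of $z\mapsto e^{izH_{\GNS}}$ on the closed upper half-plane) is accurate.

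Note, however, that the paper does not supply its own proof of this lemma at all: it simply quotes the statement as Theorem~9.17 of Ref.~\cite{Landsman:2017hpa} and uses it as a black box in the subsequent proof of theorem~\ref{Thm:pure_gs}. So there is nothing in the paper to compare your argument against beyond the citation itself; what you have written is a self-contained proof along the lines one finds in that reference.
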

Now we present the proof to Theorem \ref{Thm:pure_gs}.
\begin{proof}[Proof of Theorem \ref{Thm:pure_gs}]
    Our proof here is rather heuristic and readers are refered to theorem A.3 of Ref. \cite{Tasaki2022topological} for a more rigorous one. The technique is exactly the same.
    
    First of all, by the definition of locally-unique ground state, other possible ground states fall into different superselection sectors. Hence $H_{\GNS}$ has a non-degnerate gapped ground state $|\psi\ra$.

    Next, we show that $\psi$ is pure. To this end, we invoke definition \ref{def:alternative_pure}. Let $\rho:\A^{ql}\to\bbC$ be another positive linear functional, such that $\psi-\rho$ is again positive. We have to show that $\rho=\lambda\psi$ for some $\lambda\in [0,1]$.

    Firstly, we define a sesquilinear form on the GNS Hilbert space $\cH_{\psi}$,
    \beq
    (|B\ra,|A\ra):=\rho(B^{\dagger}A)
    \eeq
    On the other hand, by Cauchy-Schwarz inequality and the fact that $\rho\leqslant\psi$,
    \beq
    \begin{split}
        |\rho(B^{\dagger}A)|^{2}&\leqslant \rho(B^{\dagger}B)\rho(A^{\dagger}A)\\
        &\leqslant \psi(B^{\dagger}B)\psi(A^{\dagger}A)\\
        &=||\,|B\ra||^{2}\cdot||\,|A\ra||^{2}
    \end{split}
    \eeq
    Therefore, $(|B\ra,|A\ra):=\rho(B^{\dagger}A)$ is a bounded sesquilinear form on $\cH_{\psi}$.
    By corollary \ref{corollary:Riesz}, there exists a linear operator $T_\rho$ on $\cH_{\psi}$, such that
    \beq
    (|B\ra,|A\ra)=\la B|T_\rho|A\ra=\la\psi|\pi_{\psi}(B)^{\dagger}T_\rho\pi_{\psi}(A)|\psi\ra
    \eeq
    Now note for any $A,B,C\in\A^{ql}$, we have
    \beq
    \begin{split}
    \la\psi|\pi_{\psi}(B)^{\dagger}T_\rho\pi_{\psi}(C)\pi_{\psi}(A)|\psi\ra=\rho(B^{\dagger}CA)=\rho((C^{\dagger}B)^{\dagger}A)&=\la\left(\psi|\pi_\psi(C^\dag B)\right)^\dag T_\rho\pi_\psi(A)|\psi\ra\\
    &=\la\psi|\pi_{\psi}(B)^{\dagger}\pi_{\psi}(C)T_\rho\pi_{\psi}(A)|\psi\ra
    \end{split}
    \eeq
    Hence $\pi_{\psi}(C)T_\rho=T_\rho\pi_{\psi}(C),\forall\,C\in \A^{ql}$. Equivalently, we have $T_\rho\in\pi_{\psi}(\A^{ql})'$. To show that $\psi$ is pure, it suffices to show $T_\rho=\lambda I$ for some $0\leqslant\lambda\leqslant 1$. To this end, note that $H_{\GNS}\in \pi_{\psi}(\A^{ql})''$, especially $[H_{\GNS},T_\rho]=0$, so
\beq
H_{\GNS}T_\rho|\psi\ra=T_\rho H_{\GNS}|\psi\ra=0
\eeq
Therefore, $T_\rho|\psi\ra$ is another ground state of $H_{\GNS}$, which must be proportional to $|\psi\ra$ since $H_{\GNS}$ is non-degenerate on $\cH_{\psi}$. So $T_\rho|\psi\ra=\lambda|\psi\ra$ and hence $T_\rho=\lambda I$ on $\cH_{\psi}$. So $\rho=\lambda\psi$. By assumption, $\rho$ is positive and $\rho\leqslant\psi$, thus we conclude $0\leqslant\lambda\leqslant 1$, \ie $\psi$ is pure.

\end{proof}

In the above proof, the condition that the Hamiltonian is admissible is used to show that the GNS Hamiltonian or its corresponding dynamics exists. In Sec. \ref{subsec: thermodynamic limit of SRE}, we will define a set of Hamiltonians known as nearly local Hamiltonians. These Hamiltonians are not admissible, but their GNS Hamiltonian and dynamics still exist. Hence there one can show that the locally-unique gapped ground states of nearly local Hamiltonians are also pure.

\section{Area law for infinite systems}\label{sec:area_law}

\begin{figure}
    \centering
    \begin{tikzpicture}[node distance=2cm]
\node (start) [process, align=center] {$H_{\phi}$: orgininal (GNS) Hamiltonian\\
with a unique gapped ground $|0\ra$};

\node (trunc) [process, below of=start, yshift=-0.5cm] {Truncated Hamiltonian $H_{t}$ Eq.\eqref{eq:truncated_Hamiltonian}};

\node (folding) [process, below of=trunc, yshift=-0.5cm] {Folded Hamiltonian $H_{t}'$ Eq.~\eqref{eq:folded_truncated_Hamiltonian}};

\node (effective) [process, below of=folding, yshift=-0.5cm] {Effective Hamiltonian $\tH$ Eq.~\eqref{eq:effective_H}};

\node (AGSP) [process, below of=effective, yshift=-0.5cm,align=center] {Approximate ground state projection\\ 
(AGSP) Eq.~\eqref{eq:AGSP}};

\node (low_entangled_state) [process, below of=AGSP, yshift=-1.5cm,align=center] {Prop.~\ref{prop:low_entangled_state}:\\
Existence of a state $|\varphi\ra$ with low Schmidt rank};

\node (decay) [process, right of=start, xshift=7cm, align=center] {Locally unique gapped ground state of\\
an admissible Hamiltonian obeying Eq.~\eqref{eq:admissible}
};

\node (truncatedgap) [process, right of=trunc, xshift=7cm,align=center] {$H_{t}$ again has a unique gapped \\ ground state $|0_t\ra$ with $||\,|0\ra-|0_{t}\ra||\lesssim\frac{ql^{-\bar{\fa}}}{\Delta}$ \\ and energy gap $\Delta_{t}\gtrsim\Delta-ql^{-\bar{\fa}}$};

\node (cutoff) [process, right of=effective, xshift=7cm,align=center] {For $\tau\gg\log q$, $\tH$ has a unique gapped\\ ground state $|\tilde 0_t\ra$ with
$||\,|0_{t}\ra-|\tilde{0}_{t}\ra||\lesssim \sqrt{q}e^{-O(\tau)}$\\
and energy gap $\tilde{\Delta}_{t}\geqslant\frac{1}{2}\Delta$};

\node (AGSP_tool) [process, right of=AGSP, xshift=7cm,align=center] {From AGSP to area law: Prop.~\ref{prop:overlap} and Prop.~\ref{prop:eebound}.\\ 
Schmidt rank of AGSP: Lemma \ref{lemma:schmidt_bound} and Prop.~\ref{prop:bound_SR}.\\
Erorr of AGSP: Eqs.~\eqref{eq:AGSP_delta} and \eqref{eq:AGSP_epsilon}.};

\node (entropy_bound) [process, right of=low_entangled_state, xshift=7cm,yshift=-0.5cm,align=center] {Prop.~\ref{prop:bounding_entropy}:\\
For any $|\psi\ra$ with $||\,|0\ra-|\psi\ra||<\frac{1}{2}$,\\
$S(|0\ra)\leqslant S(|\psi\ra)+S_{0}$\\
where $S_{0}$ depends on $\{k,\bar{\fa},g_{0},d,\Delta\}$.};

\node (area_law) [process, below of=low_entangled_state, yshift=-0.5cm] {Theorem \ref{thm:area_law}: Area law};

\draw [arrow] (start) -- node[anchor=east,align=center] {Keep only\\ nearest-block interactions}(trunc);

\draw [arrow] (decay) -- node[anchor=south,align=center]{$(2q+3)$ blocks\\ (see Fig.~\ref{fig:blocking})} (start);

\draw [arrow] (trunc) -- node[anchor=east,align=center]{Folding\\ (see Fig.~\ref{fig:folding})} (folding);

\draw [arrow] (truncatedgap) -- node[anchor=south,align=center]{Lemma \ref{lemma:spec_stability}} (trunc);

\draw [arrow] (cutoff) -- node[anchor=south,align=center]{Theorem \ref{thm:cut_off}} (effective);

\draw [arrow] (folding) -- node[anchor=east,align=center]{Eerngy cutoff with energy\\ scale $\tau$ below Eq.~\eqref{eq:energy_cutoff}} (effective);

\draw [arrow] (effective) -- node[anchor=east,align=center]{Chebyshev Polynomial} (AGSP);

\draw [arrow] (AGSP_tool) -- node[anchor=east,align=center]{}(AGSP);

\draw [arrow] (AGSP) -- node[anchor=east,align=center]{Prop.~\ref{prop:overlap} and Prop.~\ref{prop:bound_SR}\\ $q=O(l^{\bar{\fa}})$} (low_entangled_state);

\draw [arrow] (AGSP) -- node[anchor=south,align=center,xshift=0.2cm]{Prop.~\ref{prop:eebound}\\ Lemma \ref{lemma:schmidt_bound}\\
$q=2$}(entropy_bound);

\draw [arrow] (low_entangled_state) -- node[anchor=south,align=center]{}(area_law);
\draw [arrow] (entropy_bound) -- node[anchor=south,align=center]{}(area_law);

\end{tikzpicture}
    \caption{The flow chart for proving the entanglement area law. Some of the results are stated informally above, and the corresponding precise statements can be found in the paper.}
    \label{fig:flowchart}
\end{figure}

As explained in the main text, the area law of gapped ground states plays an important role in quantum many-body physics. The area law was first proved in 1D finite-size systems by Hastings, using his factorization lemma \cite{Hastings2007}. Later, this lemma was extended to infinite-size systems \cite{matsui2011boundedness} and higher dimensions \cite{Ukai2024}. In particular, Ref. \cite{Ukai2024} established the area law for gapped ground states of Hamiltonians satisfying Eq.~\eqref{eq:reproducing}. However, in our proof, an alternative method called approximate ground state projection (AGSP) is employed, which can lead to a better estimation and efficient algorithms for the ground states \cite{Arad2013, Landau2013AGSP}. Previous applications of the ASGP method are for finite-size systems, and part of the novelty of our proof is to extend it to infinite-size systems.

Using the so-called approximate ground state projection (AGSP) technique, it is proved in Ref.~\cite{Kuwahara2019} that, for a finite-size system with a gapped admissible Hamiltonian that has a unique ground state, its ground state satisfies the area law of entanglement entropy. In this appendix, we generalize the AGSP techniques to infinite-size systems, which can be applied to the proof of the area law for locally unique gapped ground states of an admissible Hamiltonian (see Fig. \ref{fig:flowchart} for the strategy of the proof). This generalization further allows us to prove the area law for a sequence of finite systems with increasing sizes, which can have multiple gapped ground states (see Appendix \ref{subapp: SSB area law}}).

Before diving into the details, we explain two main difficulties in this generalization. First, in the proof of Ref.~\cite{Kuwahara2019}, one needs to regroup the lattice and truncate the Hamiltonian for better locality. It is a priori not clear if this construction is well-defined in infinite-size systems, and we have to show that the truncated Hamiltonian is indeed an operator on the GNS Hilbert space. This is resolved in Appendix~\ref{sec:truncation}.
Then the effective Hamiltonian obtained by cutting off energy in Ref.~\cite{Kuwahara2019} can be defined by functional calculus (see e.g. Chapter 10 of Ref.~\cite{Hall2013}).

The second difficulty is that in the case of infinite-size systems, although one can discuss the Ohya-Petz entropy of a half infinite chain (see definition \ref{def: OP entropy}), we would like to consider the von Neumann entropies of a sequence of intervals with increasing sizes, because the latter provides a more refined characterization of the entanglement scaling behavior. To fully use the results established in Ref.~\cite{Kuwahara2019}, which divide a finite-size chain into two halves, we employ a folding trick to reduce our setup to the case of Ref.~\cite{Kuwahara2019} (see Appendix~\ref{sec:folding} and Fig.~\ref{fig:folding}).

After resolving these two difficulties, the construction of AGSP in Ref.~\cite{Kuwahara2019} largely follows, with the modifications highlighted in Appendix \ref{subapp: comparison}. Hence the area law of entanglement entropy can be proved for locally unique gapped ground states of admissible Hamiltonians in quantum spin chains with an infinite size.

\subsection{Factorization of GNS Hilbert space and Schmidt decomposition}\label{sec:factorization&Schmidt}

Let $H=\sum_{Z}h_{Z}$ be an admissible Hamiltonian with a locally unique gapped ground state $\phi$. To prove the area law, the idea is to consider the GNS Hilbert space $\cH_{\phi}$ of $\phi$. According to Theorem \ref{Thm:pure_gs}, $H$ can be represented as an unbounded operator $H_{\phi}$ on $\cH_{\phi}$, which has $|0\ra$ as the \textit{unique gapped ground state} with an energy gap $\Delta$. Because the entanglement entropy in an interval is completely determined by the reduced density matrix of this interval, which is in turn determined by the expectation values of operators supposed on this interval (see example \ref{example:denstiy_matrix}), the entanglement entropy of an interval in $\phi$ is identical to the entanglement entropy of the same interval in $|0\ra$. Therefore, in the following we will focus on the vector state $|0\ra$ in $\cH_\phi$ and show that it satisfies the entanglement area law.
We also assume $H_{\phi}|0\ra=0$. In this setup, we will see that many techniques in Ref.~\cite{Kuwahara2019} can be applied to $H_{\phi}$, since $H_\phi$ has a unique gapped ground state $|0\ra$. But we do note there are some differences in our treatment compared with the analysis in Ref.~\cite{Kuwahara2019}, since we are studying an infinite system. As we will exclusively work on the GNS Hilbert space $\cH_{\phi}$, we will often identify a quasi-local operator with its image under $\pi_{\phi}$.

Note that $\cH_{\phi}$ is factorized in the sense of Proposition \ref{prop:factorize}: Let $\Lambda=\z$ be the lattice and $\Gamma\subset \Lambda$ be a finite subset, then we have
\beq\label{eq:GNS_factorize}
\cH_{\phi}:=\cH_{\Gamma}\otimes \cH_{\Gamma^{c}}
\eeq
 where $\cH_{\Gamma}:=\bigotimes_{k\in\Gamma}\cH_{k}$ is a finite-dimensional Hilbert space. In the contrary $\cH_{\Gamma^{c}}$ is defined by Eq.~\eqref{eq:GNS_factorize} (hence it depends on the choice of $\phi$) and \textbf{not} by tensoring local Hilbert spaces.

The Schmidt decomposition works as follows \cite{matsui2011boundedness}. Given a unit vector $|\psi\ra\in\cH_{\phi}$, one has
\beq
|\psi\ra=\sum_{j=0}^{l}\sqrt{\lambda_{j}}|\xi_{j}\ra\otimes|\eta_{j}\ra
\eeq
where $0<\lambda_{l}<\lambda_{l-1}<...<\lambda_{1}$, $\sum_{l}\lambda_{l}=1$ and $|\xi_{j}\ra\in \cH_{\Gamma}$ (respectively $|\eta_{j}\ra\in\cH_{\Gamma^{c}}$) with following orthogonality condition:
\beq
\begin{split}
    \la\xi_{i}|\xi_{j}\ra&=\delta_{ij}\\
    \la\eta_{i}|\eta_{j}\ra&=\delta_{ij}
\end{split}
\eeq
The minimal choice of $l$ is called the Schmidt rank of $|\psi\ra$ (with respect to $\Gamma$), denoted by $\SR(|\psi\ra,\Gamma)$.

The entanglement entropy can be computed from the Schmidt decomposition via
\beq
S(|\psi\ra,\Gamma)=-\sum_{j=1}^{\SR(|\psi\ra,\Gamma)
}\lambda_{j}^{2}\log(\lambda_{j}^{2})
\eeq

To prove the area law, it is useful to introduce the concept of the Schmidt rank of an operator. An operator $O$ on $\cH_{\phi}$ can be written as
\beq
O=\sum_{j=1}^{m}O_{\Gamma,j}\otimes O_{\Gamma^{c},j}
\eeq
where $O_{\Gamma,j}\in\A_{\Gamma}$ and $O_{\Gamma^{c},j}$ is an operator on $\cH_{\Gamma^{c}}$. The minimal choice of $m$ is also defined as the Schmidt rank of $O$, denoted by $\SR(O,\Gamma)$.

Here we state some basic properties of the Schmidt ranks, which can be proved straightforwardly.
\begin{lemma}\label{lemma:SRproperty}
The followings properties are true for Schmidt ranks.
  \begin{enumerate}
    \item $\SR(A|\psi\ra,\Gamma)\leqslant \SR(A,\Gamma)\cdot\SR(|\psi\ra,\Gamma)$
    \item $\SR(AB,\Gamma)\leqslant \SR(A,\Gamma)\cdot\SR(B,\Gamma)$
    \item $\SR(A+B,\Gamma)\leqslant \SR(A,\Gamma)+\SR(B,\Gamma)$
    \item $\SR(O_{Y},\Gamma)\leqslant d^{|Y|}$, where $O_{Y}$ is an operator supported on $Y\subseteq\z$ and $d$ is the dimension of local Hilbert space.
    \item For any disjoint union $\Gamma=\Gamma_{1}\sqcup\Gamma_{2}$, we have
    \beq
    \SR(A,\Gamma_{1})\leqslant d^{2|\Gamma_{2}|} \SR(A,\Gamma)
    \eeq
    \item If $A$ supports only on $\Gamma$ or $\Gamma^{c}$, the we have $\SR(A,\Gamma)=1$.
\end{enumerate}
\end{lemma}

In the following, we will focus on the case $\Gamma=[0,n]$. To simplify the notations, we write $\SR(O,[0,n])=\SR(O)$.

\subsection{Approximate ground state projection} \label{subapp: defining AGSP}

A powerful tool that can be used to prove the area law is the approximate ground state projection (AGSP), which will be defined now.

For any state $|\Omega\ra\in\cH_{\phi}$, one can define the AGSP as follows.

\begin{definition}\label{def:AGSP}
    Let $K$ be a bounded self-adjoint operator on $\cH_{\phi}$ such that
    \begin{enumerate}
        \item There exists a state $|\Omega_{K}\ra$ such that $K|\Omega_{K}\ra=|\Omega_{K}\ra$.
        \item There exists $\delta_{K}\in[0,1)$ such that
        \beq
        |\,|\Omega\ra-|\Omega_{K}\ra|<\delta_{K}
        \eeq
        \item There exists $\epsilon_{K}\in [0,1)$ such that
        \beq
        ||K(1-|\Omega_{K}\ra\la\Omega_{K}|)||<\epsilon_{K}
        \eeq
        \item There exists $D_{K}>0$ such that $\SR(K)<D_{K}$.
    \end{enumerate}
    Then we say $K$ is a $(\delta_{K},\epsilon_{K},D_{K})$-approximate ground state projection (AGSP) of $|\Omega\ra$, or simply AGSP.
\end{definition}
Let us emphasize that AGSP $K$ is hardly a projection (\ie $K^{2}\not=K$ in general). In practice, the construction of AGSP often requires more input than the state $|\Omega\ra$, such as the parent Hamiltonian and the spectral gap. Of course, it is usually a nontrivial task to construct suitable AGSP.

We are not concerned with the construction of AGSP for now, and it will be constructed in Appendix \ref{subapp: AGSP}. In this and the following two sections, we answer the question: What can AGSP do for bounding the entanglement entropy? The following two propositions explain why AGSP is useful to show the entanglement area law, and their proofs are given in Appendices \ref{subapp: prop overlap} and \ref{subapp: prove eebound}, respectively.

\begin{proposition}\label{prop:overlap}
    Let $K$ be an AGSP with $\epsilon_{K}^{2}D_{K}<\frac{1}{2}$, then there exists a state $|\psi\ra$ with $\SR(|\psi\ra)<D_{K}$ and
    \beq\label{eq:low_rank_app}
    |\,|\psi\ra-|\Omega\ra|<\epsilon_{K}\sqrt{2D_{K}}+\delta_{K}
    \eeq
\end{proposition}
\begin{proposition}\label{prop:eebound}
    Let $|\psi_{D}\ra\in\cH_{\phi}$ be a state with
    \beq
        \begin{split}
        \SR(|\psi_{D}\ra)&<D,\\
        |\,|\psi_{D}\ra-|\Omega\ra|&<\nu_{0}
    \end{split}
    \eeq
    And let $\{K_{p}\}_{p=0}^{\infty}$ (with $K_{0}:=1$) be a sequence of AGSP with parameters $(\delta_{p},\epsilon_{p},D_{p})$ such that $\epsilon_{p},\delta_{p}\searrow0$ as $p\to\infty$. Then for each $K_{p}$, there exists $\theta_{p}\in\R$ such that
    \beq\label{eq:bounded_overlap}
    ||\frac{e^{-i\theta_{p}}K_{p}|\psi_{D}\ra}{||K_{p}|\psi_{D}\ra||}-|\Omega\ra||<\gamma_{p}
    \eeq
    where 
    \beq
    \gamma_{p}:=\frac{\epsilon_{p}}{1-\delta_{p}-\nu_{0}}+\delta_{p}
    \eeq
    In addition, if $\gamma_{p}\leqslant 1$ for all $p$, then the entanglement entropy of $|\Omega\ra$ is bounded by
    \beq
    S\leqslant \log(D)-\sum_{p=0}^{\infty}\gamma_{p}^{2}\log(\frac{\gamma_{p}^{2}}{3D_{p+1}})
    \eeq
    where we have defined $\gamma_{0}=1$.
\end{proposition}

Note that the convergence of the second term depends on the detailed construction of AGSP operator, and for suitable constructions it is indeed convergent (see Eq.~\eqref{eq:series_area_law})

Combining these two propositions, the desired area law can be proved as long as one can construct “good” enough AGSPs where that $D_{K} \epsilon_{K}^{2}<\frac{1}{2}$ and $\gamma_{p}<1$, as required in these propositions.

\subsection{Proof of Proposition \ref{prop:overlap}} \label{subapp: prop overlap}
To prove this proposition, we need two lemmas. The first one is the famous Eckart-Young theorem.

\begin{lemma}[Eckart-Young theorem]\label{lemma:EY}
    Let $\cH_{1}$ and $\cH_{2}$ be (possibly countably infinite-dimensional) Hilbert spaces and $|\varphi\ra\in\cH_{1}\otimes\cH_{2}$ with following Schmidt decomposition
    \beq
    |\varphi\ra=\sum_{i=1}\lambda_{i}|\eta_{i}\ra\otimes |\theta_{i}\ra
    \eeq
    where $\lambda_{1}\geqslant\lambda_{2}\geqslant...>0$, and $\{|\eta_{i}\ra\},\{|\theta_{i}\ra\}$ orthonormal basis of $\cH_{1}$ and $\cH_{2}$ respectively. Then
    \beq
    \min_{\substack{\SR(|\psi\ra)\leqslant r\\ |\,|\psi\ra|=1}}|\,|\psi\ra-|\varphi\ra|^{2}=\sum_{i>r}\lambda_{i}^{2}
    \eeq
    Or equivalently
    \beq
    \max_{\substack{\SR(|\psi\ra)\leqslant r\\ ||\,|\psi\ra||=1}}|\la\psi|\varphi\ra|^{2}=\sum_{i=1}^{r}\lambda_{i}^{2}
    \eeq
\end{lemma}
See, \eg Ref.~\cite{hsing2015theoretical} for a proof.
\begin{lemma}\label{lemma:schmidt_bound}
    Let $K$ be an AGSP with $\epsilon_K^{2}D_{K}\leqslant\frac{1}{2}$, then the Schmidt decomposition of $|\Omega_{K}\ra$
    \beq
    |\Omega_{K}\ra=\sum_{i=1}\lambda_{i}|P_{i}\ra
    \eeq
    satisfies
    \beq
    \lambda_{1}>\frac{1}{\sqrt{2D_{K}}}
    \eeq
\end{lemma}
\begin{proof}
    Let us first write
    \beq
    |P_{1}\ra=\lambda_{1}|\Omega_{K}\ra+\sqrt{1-\lambda_{1}^{2}}|\Omega_{K}^{\perp}\ra
    \eeq
    We have
    \beq
    K|P_{1}\ra=\lambda_{1}|\Omega_{K}\ra+b |\Omega_{K}^{\perp'}\ra
    \eeq
    where $|b|<\epsilon_{K}$. Consider the state
    \beq
    |\psi\ra:=\frac{K|P_{1}\ra}{|K|P_{1}\ra|}
    \eeq
    We immediately have
    \begin{enumerate}
        \item $\SR(|\psi\ra)<D_{K}$
        \item $\la\Omega_{K}|\psi\ra<\frac{\lambda_{1}}{\sqrt{\lambda_{1}^{2}+\epsilon_{K}^{2}}}$
    \end{enumerate}
    Consider the Schmidt decomposition of $|\psi\ra$,
    \beq
    |\psi\ra=\sum_{i=1}^{D_{K}}\mu_{i}|\psi_{i}\ra
    \eeq
    Then
    \beq
    \begin{split}
            \frac{\lambda_{1}}{\sqrt{\lambda_{1}^{2}+\epsilon_{K}^{2}}}&<|\la\Omega_{K}|\psi\ra|\\
            &=|\sum_{i=1}^{D_{K}}\mu_{i} \la\Omega_{K}|\psi_{i}\ra|\\
            &\leqslant\sqrt{\sum_{i=1}^{D_{K}}\mu_{i}^{2}} \sqrt{\sum_{i=1}^{D_{K}}|\la\Omega_{K}|\psi_{i}\ra|^{2}}\\
            &\leqslant \sqrt{D_{K}\lambda_{1}^{2}}
    \end{split}
    \eeq
    In the third inequality, we have used the Cauchy-Schwarz inequality. In the last inequality, we have used that fact that $\sum_{i}\mu_{i}^{2}=1$ and the Eckart-Young theorem (Lemma \ref{lemma:EY}):
    \beq
    |\la\Omega_{K}|\psi_{i}\ra|\leqslant \lambda_{1},\quad i=1,2,...,D_{K}
    \eeq
    By the assumption $\epsilon_{K}^{2}D_{K}<\frac{1}{2}$, we now deduce that
    \beq
    \lambda_{1}>\frac{1}{\sqrt{2D_{K}}}
    \eeq
\end{proof}

Now we come back to the proof of Proposition \ref{prop:overlap}.
\begin{proof}[Proof of Proposition \ref{prop:overlap}]
    Consider the Schmidt decomposition of $|\Omega_{K}\ra$,
    \beq
    |\Omega_{K}\ra=\sum_{i=1}\lambda_{i}|P_{i}\ra
    \eeq
    Let us choose 
    \beq
    |\psi\ra:=\frac{K|P_{1}\ra}{|K|P_{1}\ra|}
    \eeq
    Obviously $\SR(|\psi\ra)<D_{K}$, but we need to bound
    \beq
    |\,|\psi\ra-|\Omega\ra|\leqslant |\,|\Omega_{K}\ra-|\Omega\ra|+|\,|\Omega_{K}\ra-|\psi\ra|<\delta_{K}+|\,|\Omega_{K}\ra-|\psi\ra|
    \eeq
    We write 
    \beq
    |\psi\ra=\mu|\Omega_{K}\ra+\sqrt{1-\mu^{2}}|\Omega_{K}^{\perp}\ra
    \eeq
    In the proof of Lemma \ref{lemma:schmidt_bound} we have noticed that $\mu=\la\Omega_{K}|v\ra\geqslant\frac{\lambda_{1}}{\sqrt{\lambda_{1}^{2}+\epsilon_{K}^{2}}}$, so
    \beq
    |\,|\Omega_{K}\ra-|\psi\ra|=\sqrt{2-2\mu}\leqslant\sqrt{2-\frac{2}{\sqrt{1+(\epsilon_{K}/\lambda_{1}})^{2}}}
    \eeq
    Let us note $\sqrt{2-2(1+x^{2})^{-1/2}}<x$ for $x>0$, which can be proved by AM-GM inequality, then
    \beq
    |\,|\Omega_{K}\ra-|\psi\ra|<\frac{\epsilon_{K}}{\lambda_{1}}\leqslant \epsilon_{K}\sqrt{2D_{K}}
    \eeq
    In the second inequality, we have used Lemma \ref{lemma:schmidt_bound}.
    This completes the proof. 
\end{proof}

\subsection{Proof of Proposition \ref{prop:eebound}} \label{subapp: prove eebound}

The first part of the proof follows the same idea as in the proof of Proposition \ref{prop:overlap}. First, let $\{K_{p}\}_{p=1}^{\infty}$ be the sequence of AGSPs described in the proposition. We will not be concerned with the construction of these AGSPs for now. Let us denote $|\Omega_{p}\ra:=|\Omega_{K_{p}}\ra$ for simplicity. The phase $e^{i\theta_{p}}$ is chosen such that
\beq
\la\Omega_{p}|e^{-i\theta_{p}}|\psi_{D}\ra=|\la\Omega_{p}|\psi_{D}\ra|
\eeq
We also denote $\nu_{p}:=|\la\Omega_{p}|\psi_{D}\ra|$.
Now let us write
\beq
\Gamma_{p}:=| \frac{e^{-i\theta_{p}}K_{p}|\psi_{D}\ra}{||K_{p}|\psi_{D}\ra||}-|\Omega\ra|
\eeq
Our first goal is to show 
\beq\label{eq:gammap}
\Gamma_{p}\leqslant\gamma_{p}:=\frac{\epsilon_{p}}{1-\delta_{p}-\nu_{0}}+\delta_{p}
\eeq
To this end, note
\beq
\Gamma_{p}\leqslant | \frac{e^{-i\theta_{p}}K_{p}|\psi_{D}\ra}{|K_{p}|\psi_{D}\ra|}-|\Omega_{p}\ra|+|\,|\Omega_{p}\ra-|\Omega\ra|<| \frac{e^{-i\theta_{p}}K_{p}|\psi_{D}\ra}{|K_{p}|\psi_{D}\ra|}-|\Omega_{p}\ra|+\delta_{p}
\eeq
So it remains to show that
\beq
| \frac{e^{-i\theta_{p}}K_{p}|\psi_{D}\ra}{|K_{p}|\psi_{D}\ra|}-|\Omega_{p}\ra|<\frac{\epsilon_{p}}{1-\nu_{0}-\delta_{p}}
\eeq
We decompose $e^{-i\theta_{p}}|\psi_{D}\ra$ by following orthogonal projection,
\beq
e^{-i\theta_{p}}|\psi_{D}\ra=\nu_{p}|\Omega_{p}\ra+\sqrt{1-\nu_{p}^{2}}|\Omega_{p}^{\perp}\ra
\eeq
Then we have 
\beq
e^{-i\theta_{p}}K_{p}|\psi_{D}\ra=\nu_{p}|\Omega_{p}\ra+b_{p}|\Omega_{p}^{\perp'}\ra,\quad |b_{p}|<\sqrt{1-\nu_{p}^{2}}\epsilon_{p}\leqslant \epsilon_{p}
\eeq
Therefore,
\beq
| \frac{e^{-i\theta_{p}}K_{p}|\psi_{D}\ra}{|K_{p}|\psi_{D}\ra|}-|\Omega_{p}\ra|=\sqrt{2-\frac{2\nu_{p}}{\sqrt{\nu_{p}^{2}+|b_{p}|^{2}}}}<\frac{|b_{p}|}{\nu_{p}}<\frac{\epsilon_{p}}{\nu_{p}}
\eeq
In the first inequality, we have used $\sqrt{2-2(1+x^{2})^{-\frac{1}{2}}}<x$ for $x>0$ (this follows from AM-GM inequality). So it remains to show $\nu_{p}\geqslant 1-\delta_{p}-\nu_{0}$. This can be seen from
\beq
\begin{split}
    \nu_{p}&:=|\la\Omega_{p}|\psi_{D}\ra|\\&\geqslant 1- |\,|\psi_{D}\ra-|\Omega_{p}\ra|\\&\geqslant1-|\,|\psi_{D}\ra-|\Omega\ra|-|\,|\Omega\ra-|\Omega_{p}\ra|\\&>1-\delta_{p}-\nu_{0}
\end{split}
\eeq
The first inequality is due to the fact $2\mathrm{Re}(z)-1\leq|\mathrm{Re}(z)|\leq|z|$ for any $z\in\bbC$ with $|z|\leqslant1$ and take $z=\la\Omega_{p}|\psi_{D}\ra$. This completes the proof of Eq.~\eqref{eq:bounded_overlap}.

Our next task is to bound the entanglement entropy. Let us write the Schmidt decomposition of $|\Omega\ra$ as
\beq
|\Omega\ra:=\sum_{i=1}\mu_{i}|P_{i}\ra
\eeq
where $\mu_{1}\geqslant\mu_{2}\geqslant...>0$ and $\{|P_{i}\ra\}_{i=1,2,...}$ is an unentangled orthonormal basis. The entanglement entropy of $|\Omega\ra$ is given by
\beq
S(|\Omega\ra)=-\sum_{i=1}\mu_{i}^{2}\log(\mu_{i}^{2})
\eeq
By Eckart-Young theorem (lemma \ref{lemma:EY}), we have
\beq
\Gamma_{p}^{2}=| \frac{e^{-i\theta_{p}}K_{p}|\psi_{D}\ra}{|K_{p}|\psi_{D}\ra|}-|\Omega\ra|^{2}\geqslant \sum_{i=1}^{D_{p}'}\mu_{i}^{2}
\eeq
where $D_{p}':=\SR(K_{p}|\psi_{D}\ra)< D D_{p}$. We also define
\beq
\Gamma_{p,p+1}^{2}:=\sum_{D_{p}'<i\leqslant D_{p+1}'}\mu_{i}^{2}
\eeq
Then 
\beq
\Gamma_{p,p+1}^{2}\leqslant\Gamma_{p}^{2}\leqslant\gamma_{p}^{2}\leqslant 1
\eeq
For $D_{p}'< i\leqslant D_{p+1}'$, the contribution to the entropy can be bounded as follows. First, we define two probability distributions, for $i=D_{p}'+1,...,D_{p+1}'$
\beq
\begin{split}
    q_{i}&:=\frac{\mu_{i}^{2}}{\sum_{D_{p}'<i\leqslant D_{p+1}'}\mu_{i}^{2}}\\
    Q_{i}&:=\frac{1}{D_{p+1}'-D_{p}'}
\end{split}
\eeq
Convexity gives the following bound on the contribution from Schmidt coefficients indexed by $D_{p}'<i\leqslant D_{p+1}'$.
\beq
\begin{split}
    S_{p}&:=-\sum_{D_{p}'< i\leqslant D_{p+1}'}\mu_{i}^{2}\log{\mu_{i}^{2}}\\
    &\leqslant-\sum_{D_{p}'<i\leqslant D_{p+1}'}\frac{\Gamma_{p,p+1}^{2}}{D_{p+1}'-D_{p}'}\log\frac{\Gamma_{p,p+1}^{2}}{D_{p+1}'-D_{p}'}\\
    &=-\Gamma_{p,p+1}^{2}\log\frac{\Gamma_{p,p+1}^{2}}{D'_{p+1}-D'_{p}}\\
    &<-\Gamma_{p,p+1}^{2}\log\frac{\Gamma_{p,p+1}^{2}}{D_{p+1}}+\Gamma_{p,p+1}^{2}\log(D)
\end{split}
\eeq
To derive the last line, we have used $\log(D_{p+1}'-D_{p}')<\log(D_{p+1}')\leqslant\log(DD_{p+1})$. Now, the entropy of $|\Omega\ra$ can be bounded by
\beq
\begin{split}
    S(|\Omega\ra)&=\sum_{p=0}^{\infty}S_{p}\\&\leqslant\sum_{p=0}^{\infty}(\Gamma_{p,p+1}^{2}\log(D)-\Gamma_{p,p+1}^{2}\log\frac{\Gamma_{p,p+1}^{2}}{D_{p}})\\
    &=\log(D)-\sum_{p=0}^{\infty}\Gamma_{p,p+1}^{2}\log\frac{\Gamma_{p,p+1}^{2}}{D_{p+1}}\\
    &\leqslant\log(D)-\sum_{p=0}^{\infty}\gamma_{p}^{2}\log\frac{\gamma_{p}^{2}}{3D_{p+1}}
\end{split}
\eeq
where we have used the fact that $-x\log(x)<-x\log(x/3)<-y\log(y/3)$ for $0<x\leqslant y\leqslant 1$. This completes the proof.

\subsection{Outline of the construction of AGSP}
In this section, we briefly outline the main idea of constructing good AGSPs that are applicable for proving entanglement area laws, as is indicated in proposition \ref{prop:overlap} and \ref{prop:eebound}.

Recall that, in our case the state $|\Omega\ra$ is chosen as the unique gapped ground state $|0\ra$ of the GNS Hamiltonian $H_{\phi}$. So naively, the AGSP could be taken as $K\propto e^{-\beta H_{\phi}}$. As $\beta\to\infty$, this operator projects out all excited states while leaves the ground state $|0\ra$ invariant. However, this operator has a large Schmidt rank, which violates the last requirement in definition \ref{def:AGSP}.
In the case of local Hamiltonians \cite{Arad2013arealaw}, the Hamiltonian $H_{\phi}$ itself is of controllable Schmidt rank. So instead, one uses a polynomial function of the Hamiltonian to construct the AGSP.

However, note that in our definition, the AGSP must be a bounded self-adjoint operator. Surely any polynomial of Hamiltonian is unbounded so it cannot serve as AGSP. To resolve this problem, we need to cut-off the Hamiltonian and replace it by a bounded operator.

\subsection{Interaction Truncation of Hamiltonian}\label{sec:truncation}

We note the following convenient lemma.
\begin{lemma}\label{lemma:quasi_local_perturb}
    Let $H=\sum_{Z}h_{Z}$ be an admissible Hamiltonian and $X,Y\subseteq\z$  be any two concatenated subsets with $\dist(X,Y)=L>0$, then
    \beq
    V_{X,Y}(\Gamma):=\sum_{\substack{Z\subset\Gamma \\Z\cap X\not=\emptyset\\Z\cap Y\not=\emptyset}}h_{Z}
    \eeq
    is a well-defined element in $\A^{ql}$ for any choice of $\Gamma\subset\z$ (note $X,Y,\Gamma$ can be infinite). Besides, there exists a constant $g_{0}\geqslant 1$ such that
    \beq
    ||V_{X,Y}(\Gamma)||<g_{0}L^{-\bar{\fa}}
    \eeq
    where 
    \beq\label{eq:g0}
    g_{0}=\max\{\frac{\fa}{\fa-2}J,1\},\quad \bar{\fa}=\fa-2>0
    \eeq
\end{lemma}
\begin{proof}
    Let us define
    \beq
    V^{(n)}_{X,Y}(\Gamma):=\sum_{\substack{Z\subset \Gamma\\Z\cap X\not=\emptyset\\Z\cap Y\not=\emptyset\\\diam(Z)\leqslant n}}h_{Z}
    \eeq
    It is easy to see that for a finite $n$, each $V^{(n)}_{X,Y}(\Gamma)$ contains only finite local terms. Next, we show it is a Cauchy sequence. 

    For $m\geqslant n$, consider
    \beq
    ||V^{(m)}_{X,Y}(\Gamma)-V^{(n)}_{X,Y}(\Gamma)||\leqslant\sum_{r=n}^{m}\sum_{\substack{i\in X\\ \dist(i,Y)\leqslant r}}\sum_{\substack{Z:Z\ni i\\ \diam(Z)=r}}||h_{Z}||
    \eeq
    The RHS can be estimated as follows
    \beq
    \sum_{r=n}^{m}\sum_{\substack{i\in X\\ \dist(i,Y)\leqslant r}}\sum_{\substack{Z:Z\ni i\\ \diam(Z)=r}}||h_{Z}||<\sum_{r=n}^{m}\frac{J}{r^{\mathfrak{a}}}(r-L)<J(\frac{-L}{(\fa-1)(n-1)^{\fa-1}}+\frac{1}{(\fa-2)(n-1)^{\fa-2}})\stackrel{n\to\infty}{\to}0
    \eeq
    In the second step we note that there are at most $(r-L)$ choices of $i\in X$ such that $\dist(i,Y)\leqslant r$ since both $X,Y$ are concatenated subsets.
    The last step is obtained by replacing summation by integrals. So it is clear that $\{V^{(n)}_{X,Y}(\Gamma)\}_{n=1,2...}$ is a Cauchy sequence in $\A^{\ell}$. Hence it converges to an element in $\A^{ql}$. 

    The estimation of its norm follows from Lemma 1 of Ref.~\cite{Kuwahara2019}.
\end{proof}

Furthermore, we immediately obtain
\beq\label{eq:g_extensive}
\begin{split}
    \sup_{i\in\z}\sum_{Z:Z\ni i}||h_{Z}||&< B+\sup_{i\in\z}\sum_{r=1}^{\infty}\sum_{\substack{Z:Z\ni i\\\diam(Z)=r}}||h_{Z}||\\
    &<B+\sum_{r=1}^{\infty}\frac{J}{r^{\fa}}\\
    &\leqslant B+\frac{\fa J}{\fa-1}=:g
\end{split}
\eeq
In Ref.~\cite{Kuwahara2016exponential}, Eq.~\eqref{eq:g_extensive} is called $g$-extensive condition. We adopt this term from now on.

Now, we regroup the lattice into $(2q+3)$ blocks $\{B_{s}\}_{s=0,1...,2q+2}$, where $q$ is an even integer. We require that $|B_{s}|=l$ for $s=1,...,q,q+2,...,2q+2$ and $|B_{q+1}|=n-ql$, where $n\gg ql$ is another integer (see Fig.~\ref{fig:blocking}). The right boundary of $B_{\frac{q}{2}}$ (resp. $B_{\frac{3q}{2}+1}$) is exactly the site $0$ (resp. site $n$).
For now, $q$ and $l$ are free parameters, and their values will be specified later. We always assume $n\gg ql$.
\begin{figure}[h!]
    \centering
    \includegraphics[width=0.8\linewidth]{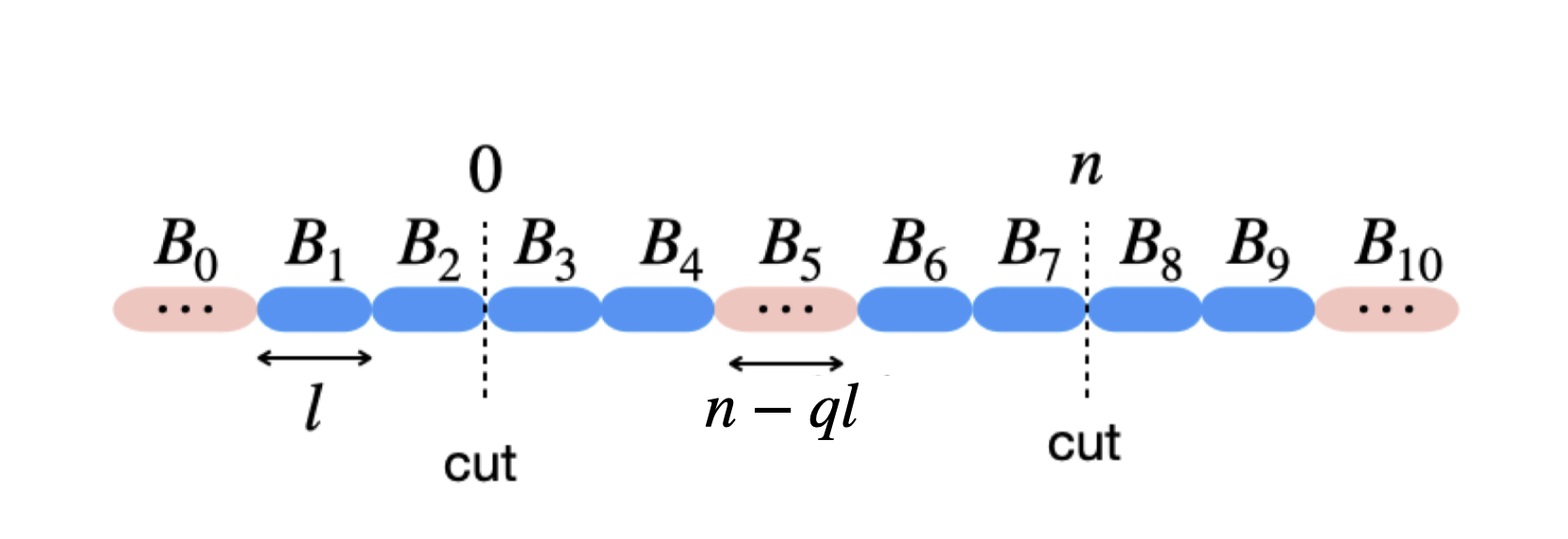}
    \caption{The original lattice $\z$ is regrouped into $(2q+3)$ blocks. The case $q=4$ is shown in the figure. The length of blue blocks is $l$ while the length of pink blocks is infinite or depends on $n$.}
    \label{fig:blocking}
\end{figure}

The interaction truncation of $H_{\phi}$ is obtained by discarding all interactions between non-adjacent blocks, \ie
\beq\label{eq:truncated_Hamiltonian}
H_{t}:=H_{\phi}-\delta H=H_{\phi}-\sum_{s=0}^{2q+2} V_{B_{s},X_{s}}(\Lambda_{s})
\eeq
where we have defined
\beq
\begin{split}
    X_{s}&:=\bigcup_{j\geqslant s+2}B_{j}\\
    \Lambda_{s}&:=\bigcup_{j\geqslant s}B_{j}\\
    \delta H&=\sum_{s=0}^{2q+2} V_{B_{s},X_{s}}(\Lambda_{s})
\end{split}
\eeq
Note that $H_{t}$ is a well defined (unbounded) self-adjoint operator on $\cH_{\phi}$ because it is obtained by subtracting finitely many quasi-local terms (by Lemma \ref{lemma:quasi_local_perturb}). It is also easy to see that
\beq\label{eq:norm_perturbation}
||\delta H||\leqslant \sum_{s=0}^{2q+3}||V_{B_{s},X_{s}}(\Lambda_{s})||\leqslant2(q+2)g_{0}l^{-\bar{\fa}}
\eeq
we have used $n\gg ql$ and lemma \ref{lemma:quasi_local_perturb}. Hence it is a bounded operator on $\cH_{\phi}$. 

Since $\cH_{\phi}$ is infinite dimensional and $H_{\phi}$ is unbounded, it is only defined\footnote{In general, $D(H_\phi)$ is smaller than $\cH_\phi$. For example consider a quantum mechanical model, say 1d harmonic oscillator. We denote its eigenstate as $|n\ra$ and $H|n\ra=n |n\ra$. Consider the state $|\psi\ra:=\sum_{n=0}^{\infty}\frac{6}{(n+1)\pi^{2}}|n\ra$, which is square-integrable. However, $H|\psi\ra=\sum_{n=0}^{\infty}\frac{6n}{(n+1)\pi^{2}}|n\ra$ does not converge, thus $|\psi\ra$ is not in the domain of $H$, although it is square-integrable hence is a valid state.} on a dense subset $D(H_{\phi})\subseteq\cH_{\phi}$ (called the domain of $H_{\phi}$) \cite{hall2013quantum}. In particular, the self-adjointness $H_{\phi}=H_{\phi}^{\dagger}$ requires that $H_{\phi}$ and $H_{\phi}^{\dagger}$ have the same domain, which can fail in general. For example, momentum operators in an infinite potential well is \textbf{not} self-adjoint (proposition 9.27 of Ref. \cite{hall2013quantum}). In our situation, the self-adjointness of $H_{t}$ follows from the following proposition,
\begin{proposition}[Proposition 9.13 of Ref.~\cite{hall2013quantum}]\label{prop:domain}
    Let $A$ be an unbounded self-adjoint operator on a separable Hilbert space $\cH$ and $B$ be a bounded self-adjoint operator, then $A+B$ is again an unbounded self-adjoint operator with the same domain as $A$.
\end{proposition}
See Ref.~\cite{hall2013quantum} for a proof. In our situation, $A=H_{\phi}$ and $B=\delta H$. Below we will show that under some mild conditions (see Lemma \ref{lemma:spec_stability}), $H_{t}$ is again uniquely gapped with the ground state $|0_{t}\ra$. Proposition \ref{prop:domain} ensures that $H_{\phi}|0_{t}\ra$ and $H_{t}|0\ra$ are well defined, which will be vital for the proof of lemma \ref{lemma:spec_stability}, especially for Eq.~\eqref{eq:truncated_gs}. 

Another complication in the infinite-size setup is that although the spectral decomposition still makes sense, the notion of spectrum and eigenvalues must be distinguished.
\begin{definition}
    Let $\cH$ be a (possibly infinite-dimensional) Hilbert space and $A$ be an operator on $\cH$, then its spectrum is defined to be the set
    \beq
    \sigma(A):=\{z\in\bbC|(z-A)\text{ is not invertible}\}
    \eeq
    On the other hand, $\lambda\in\bbC$ is called an eigenvalue of $A$ if there exists a vector $|\psi\ra\in \cH$ (called the eigenvector or eigenstate) such that
    \beq
    A|\psi\ra=\lambda |\psi\ra
    \eeq
\end{definition}
In finite-dimensional spaces, $z\in\sigma(A)$ is equivalent to saying $z$ is an eigenvalue of $A$. However, they are not equivalent in infinite-dimensional spaces. As an example, consider $\cH=L^{2}(\R)$ and $A=-i\frac{\dd}{\dd x}$. It is easy to verify that $\sigma(A)=\R$ but it has no eigenvalues since plane waves are not normalizable and do not belong to $L^{2}(\R)$. Nevertheless, for any $\epsilon>0$, one can define the $\epsilon$-almost eigenstate.
\begin{definition}[$\epsilon$-almost eigenstate]\label{def:ep_eigen}
    Let $\cH$ be a Hilbert space and $A$ be an operator on $\cH$. For $\epsilon>0$, a (normalized) state $|\psi\ra\in\cH$ is said to be an $\epsilon$-almost eigenstate with eigenvalue $\lambda$ if
    \beq
    |A|\psi\ra-\lambda|\psi\ra|<\epsilon
    \eeq
\end{definition}
For future reference, we state the spectral theorem for unbounded operators and its important corollaries.
\begin{theorem}[Theorem 10.4 of Ref.~\cite{hall2013quantum}] \label{thm: spectral decomposition of generic operator}
    Let $A$ be a (possibly unbounded) self-adjoint operator on a Hilbert space $\cH$, then there exists a unique projection-valued measure $\mu^{A}$ on the spectrum $\sigma(A)$ of $A$, such that
    \beq
    \int_{\sigma(A)}\lambda\, \dd\mu^{A}(\lambda)=A
    \eeq
\end{theorem}
In a more familiar (and less rigorous) notation, $\int_{\sigma(A)}\dd\mu^{A}(\lambda)\sim\sum_{\lambda\in\sigma(A)}|\lambda\ra\la\lambda|$. The reason to avoid objects like $|\lambda\ra$ is because $A$ may contain continuous spectrum so it can happen that $A$ does not possess a (normalizable) $\lambda$-eigenstate for $\lambda\in\sigma(A)$.

Using this spectral decomposition, we can define the so-called functional calculus below.
\begin{definition}[Functional calculus, Def.~10.5 of Ref. \cite{hall2013quantum}]
    For any measurable function $f$ on $\sigma(A)$, we can define a (possibly unbounded) operator $f(A)$,
    \beq
    f(A):=\int_{\sigma(A)}f(\lambda)\dd\mu^{A}(\lambda)
    \eeq
\end{definition}

It is useful to define the spectral projection,
\beq
\Pi_{I}^{A}:=\int_{\lambda\in I}\dd\mu^{A}(\lambda)
\eeq
where $I\subset \R$ is a Borel subset\footnote{Recall that a Borel subset of $\R$ is a subset which can be obtained from open intervals by taking countable unions, countable intersections and complements.}. It is well known that if $\lambda\in\sigma(A)$, then $\Pi_{I}^{A}$ has a nonzero image for any neighborhood $I$ of $\lambda$ (see Theorem 10.20 of Ref.~\cite{hall2013quantum} for a proof). For example, an $\epsilon$-almost eigenstate $|\psi\ra$ of $A$ for $\lambda\in\sigma(A)$ can be constructed as follows. By assumption, $\lambda\in \sigma(A)$, so $\Pi^{A}_{(\lambda-\epsilon,\lambda+\epsilon)}$ has a nonzero image, and we choose $|\psi\ra\in \im(\Pi^{A}_{(\lambda-\epsilon,\lambda+\epsilon)})$. One can check that this $|\psi\ra$ is indeed an $\epsilon$-almost eigenstate. Using a slightly different form of spectral theorem, one can also define spectral multiplicity (\eg ground state degeneracy) of an unbounded self-adjoint operator $A$, see Sec.~7.3 of Ref.~\cite{hall2013quantum}. For our purpose, we only need spectral multiplicity for point spectrum (\ie $\lambda\in\sigma(A)$ and it has a neighborhood $I$ such that $I\cap \sigma(A)=\lambda$).
\begin{definition}
    Let $\lambda\in\sigma(A)$ be a point spectrum of self-adjoint operator $A$, we define its spectral multiplicity to be 
    \beq
    M_{\lambda}:=\lim_{\epsilon\to 0^{+}}\rk(\Pi_{(\lambda-\epsilon,\lambda+\epsilon)})
    \eeq
\end{definition}
\begin{remark}
    Since $\lambda$ is a point spectrum, it follows that $\rk(\Pi_{(\lambda-\epsilon,\lambda+\epsilon)})$ becomes a constant for sufficiently small $\epsilon$. This ensures it is well-defined. Following the same argument, one can show that every point spectrum is actually an eigenvalue.
\end{remark}

Now we show that $H_{t}$ also has a unique gapped ground state for sufficiently small perturbations $||\delta H||$. The following is the infinite-
dimensional generalization of Lemma 3 and Lemma 4 in Ref.~\cite{Kuwahara2019}.
\begin{lemma}\label{lemma:spec_stability}
     If $||\delta H||<\Delta/4$, then $H_{t}$ has a unique ground state $|0_{t}\ra$ with energy gap $\Delta_{t}>\Delta-2||\delta H||$. Besides,
    \beq\label{eq:truncated_gs}
    | |0\ra-|0_{t}\ra|<\frac{||\delta H||}{\Delta-4||\delta H||},
    \eeq
    if the global phase factors of $|0\ra$ and $|0_t\ra$ are chosen such that $\la 0|0_t\ra\geqslant 0$.
\end{lemma}

The proof of this lemma in the finite dimensional case relies on the so-called Weyl's inequality, which is not directly applicable to the infinite dimensional case. Here we present a general proof that is applicable to both finite and infinite dimensional cases based on resolvents, which will be briefly reviewed below (see Sec.17.7 of Ref.~\cite{hassani2013mathematical} for more details).

Let $A$ be any self-adjoint operator (bounded or not), its resolvent is defined as (by functional calculus)
\beq\label{eq:resolvent}
    R(z,A):=\frac{1}{z-A}, \quad z\not\in \sigma(A)
\eeq
We note $R(z,A)$ is nothing but the Green's function.
Note by definition, $R(z,A)$ exists iff $z\not\in\sigma(A)$. One important property of $R(z,A)$ is that its operator norm
\beq
||R(z,A)||\leqslant\frac{1}{d(z,\sigma(A))}
\eeq
where $d(z,\sigma(A))$ is the usual distance in $\bbC$.
Besides, let $\Gamma$ be a contour which encircles exactly one point $\lambda\in\sigma(A)$, then
\beq
P_{\lambda}:=\oint_{\Gamma}\frac{\dd z}{2\pi i}R(z,A)
\eeq
is the projection to $\lambda$-eigensubspace. In general (\ie $\lambda$ may not be an eigenvalue), $P_{\lambda}$ defined via the above expression still makes sense and it still serves as a spectral projection.

\begin{proof}[Proof of Lemma \ref{lemma:spec_stability}]
    If $||\delta H||<\Delta/2$, then for any $||\delta H||<z<\Delta-||\delta H||$, 
    we have
    \beq
    ||R(z,H_{\phi})\delta H||\leqslant \frac{||\delta H||}{d(z,\sigma(H_{\phi}))}<1
    \eeq
    This means for such a choice of $z$, $(1+R(z,H_{\phi})\delta H)$ is invertible. This implies that
    \beq
    R(z,H_{t})=R(z,H_{\phi})(1+R(z,H_{\phi})\delta H)^{-1}
    \eeq
    is well-defined and hence $z\not\in\sigma(H_{t})$. The same argument is true if $z<-||\delta H||$. This means the ground state energy $E_{0}$ of $H_{t}$ falls into either $[-||\delta H||,||\delta H||]$ or $[\Delta-||\delta H||,\infty)$. 
    
    Below we show that the ground state $|0_{t}\ra$ is non-degenerate if $||\delta H||<\Delta/4$ and has energy in $(-||\delta H||,||\delta H||)$.
    
    We choose $\Gamma$ to be the circle centered at origin with radius $r$ satisfying $2||\delta H||<r\leqslant\Delta/2$. Consider
    \beq
    \begin{split}
        P_{\phi}&=\oint_{\Gamma}\frac{\dd z}{2\pi i}R(z,H_{\phi})\\
        P_{t}&=\oint_{\Gamma}\frac{\dd z}{2\pi i}R(z,H_{t})
    \end{split}
    \eeq
    The goal is to show $||P_{\phi}-P_{t}||<1$, which implies $\rk(P_{t})=\rk(P_{\phi})=1$. To see this, suppose $\rk(P_{t})>\rk(P_{\phi})$, let us denote $V:=\mathrm{im}(P_{t})$ and $W:=\mathrm{im}(P_{\phi})$. By assumption, $\dim V>\dim W=1$. The restriction of $P_{\phi}|_{V}:V\to W$ cannot be injective for dimensional reasons. Thus there exists $v\in V$ such that $P_{t}v=v$ but $P_{\phi}(v)=0$, so we have $||P_{t}-P_{\phi}||\geqslant 1$. The same argument also shows if $\rk(P_{t})<\rk(P_{\phi})$ then $||P_\phi-P_t||\geqslant 1$.

    Now we note
    \beq
    P_{\phi}-P_{t}=-\oint_{\Gamma}\frac{\dd z}{2\pi i}(z-H_{\phi})^{-1}\delta H (z-H_{t})^{-1}
    \eeq
    The norm of RHS can be estimated as follows. Note that $d(r,\sigma(H_{t}))> r-||\delta H|| $. So
    \beq\label{eq:rank_bound}
    ||\mathrm{RHS}||\leqslant \frac{r ||\delta H||}{d(r,\sigma(H_{\phi})) d(r,\sigma(H_{t}))}<\frac{||\delta H||}{r-||\delta H||}<1
    \eeq
    for any $r$ satisfying $2||\delta H||<r\leqslant\frac{\Delta}{2}$. Therefore, as long as $||\delta H||<\Delta/4$, we have $||P_{\phi}-P_{t}||<1$, which means $\rk(P_{t})=\rk(P_{\phi})=1$ and $H_{t}$ has a unique ground state $|0_{t}\ra$ with energy $E_{0}\in(-r,r)$. The interval $(-r, r)$ does not intersect with $[\Delta-||\delta H||, \infty)$, so $E_0\in[-||\delta H||, ||\delta H||]$. Since there is no other energy eigenstates with energy $E\in (||\delta H||,\Delta -||\delta H||)$, the energy gap can be estimated as $\Delta_{t}>(\Delta-||\delta H||)-||\delta H||=\Delta-2||\delta H||$.

    In the rest of Sec.~\ref{sec:area_law} and Sec.~\ref{sec:proof_cut_off}, we shift the energy by a constant such that $H_{t}|0_{t}\ra=0$. In this convention, $H_{\phi}|0\ra\not=0$ in general.

    To prove Eq.~\eqref{eq:truncated_gs}, we choose the phase of $|0_{t}\ra$ such that
    \beq
    \la0|0_{t}\ra:=\mu_{1}\geqslant0
    \eeq
    Then one can expand $|0\ra$ as
    \beq
    |0\ra=\mu_{1}|0_{t}\ra+\mu_{2}|\psi_{t,\perp}\ra
    \eeq
    where $\la0_{t}|\psi_{t,\perp}\ra=0$ and $\mu_{1}^{2}+|\mu_{2}|^{2}=1$.The vector $(\mu_{1},\mu_{2})^T$ should be an eigenvector of the following $2\times2$ matrix,
    \beq\label{eq:eigenproblem}
    \begin{pmatrix}
        \la0_{t}|H_{\phi}|0_{t}\ra\ra& \la0_{t}|H_{\phi}|\psi_{t,\perp}\ra\\
        \la\psi_{t,\perp}|H_{\phi}|0_{t}\ra&\la\psi_{t,\perp}|H_{\phi}|\psi_{t,\perp}\ra
    \end{pmatrix}=:\begin{pmatrix}
        f_{0}&f\\
        f^{*}&f_{\perp}
    \end{pmatrix}
    \eeq
    Note that clearly $|0\ra$ and $|0_{t}\ra$ are in the domain of $H_{\phi}$ (the latter is due to proposition \ref{prop:domain}), so $|\psi_{t,\perp}\ra$ is also in this domain. Therefore, the matrix in Eq.~\eqref{eq:eigenproblem} is well defined. Using the fact that $|0\ra$ is the lowest-energy eigenstate of $H$, by direct calculations we find that (we will further show that $f_\perp>f_0$ later)
    \beq
    \begin{split}
        \left|\frac{\mu_2}{\mu_1}\right|=\left(\frac{\sqrt{(\frac{f_0-f_\perp}{2})^2+|f|^2}+\frac{f_0-f_\perp}{2}}{\sqrt{(\frac{f_0-f_\perp}{2})^2+|f|^2}-\frac{f_0-f_\perp}{2}}\right)^{1/2}\leqslant\frac{|f|}{f_\perp-f_0}
    \end{split}
    \eeq

    Also
    \beq
    \mu_{1}=\frac{1}{\sqrt{1+|\mu_{2}/\mu_{1}|^{2}}}\geqslant 1-\frac{1}{2}|\frac{\mu_{2}}{\mu_{1}}|^{2}
    \eeq
    Therefore,
    \beq
    ||\,|0\ra-|0_{t}\ra||^{2}=2-2\mu_{1}\leqslant|\frac{\mu_{2}}{\mu_{1}}|^{2}<\frac{|f|^{2}}{(f_{\perp}-f_{0})^{2}}
    \eeq
    which is equivalent to
    \beq
    ||\,|0\ra-|0_{t}\ra||\leqslant\frac{|f|}{f_{\perp}-f_{0}}
    \eeq
    Now let us estimate the matrix elements in Eq.~\eqref{eq:eigenproblem}. First of all, 
    \beq
    |f|=|\la0_{t}|H_{\phi}|\psi_{t,\perp}\ra|=|\la0_{t}|\delta H|\psi_{t,\perp}\ra|\leqslant||\delta H|0_{t}\ra||\leqslant ||\delta H||
    \eeq
    To estimate $f_{\perp}$, we have
    \beq
    f_{\perp}=\la\psi_{t,\perp}|H_{\phi}|\psi_{t,\perp}\ra\geqslant\Delta_{t}+\la\psi_{t,\perp}|\delta H|\psi_{t,\perp}\ra> \Delta-3||\delta H||
    \eeq
    where we have used $\la\psi_{t,\perp}|H_{t}|\psi_{t,\perp}\ra\geqslant\Delta_{t}>\Delta-2||\delta H||$.

    At last, we have
    \beq
    f_{0}=\la0_{t}|H_{\phi}|0_{t}\ra=\la0_{t}|\delta H|0_{t}\ra\leqslant ||\delta H||
    \eeq
    So $f_{\perp}-f_{0}>\Delta-4||\delta H||>0$ by assumption and we conclude
    \beq
    ||\,|0\ra-|0_{t}\ra||\leqslant\frac{|f|}{f_{\perp}-f_{0}}<\frac{||\delta H||}{\Delta-4||\delta H||}.
    \eeq
\end{proof}

From this point on, we will always assume $||\delta H||<\frac{\Delta}{4}$ and keep the convention that $H_{t}|0_{t}\ra=0$ rather than $H_{\phi}|0\ra=0$. For future convenience, we also provide an upper bound for $\Delta$ and $\Delta_{t}$.
\begin{lemma}\label{lemma:upbound_origgap}
    Let $H$ be a gapped admissible Hamiltonian with a unique ground state $|0\ra$ and energy gap $\Delta$, then $\Delta$ is upper bounded by $2g$, where
    \beq
    g:=B+\frac{\fa J}{\fa-1}
    \eeq
\end{lemma}
\begin{proof}
    Note that
    \beq
    \sup_{i\in\z}\sum_{Z:Z\ni i}||h_{Z}||\leqslant B+\sup_{i\in\z}\sum_{r=1}^{\infty}\sum_{\substack{Z:Z\ni i\\\diam(Z)=r}}||h_{Z}||\leqslant B+\frac{\fa J}{\fa-1}=g
    \eeq
    Then, let us write $H=V_{i}+H_{\Lambda_{i}}$, where
    \beq
    V_{i}:=\sum_{Z:Z\ni i}h_{Z}, \quad i\in\z
    \eeq
     and $\Lambda_{i}:=\Lambda\setminus\{i\}$ and $H_{\Lambda_{i}}$ only supports on $\Lambda_{i}$. We have $||V_{i}||\leqslant g$. So a similar analysis used in Lemma \ref{lemma:spec_stability} shows that $H_{\Lambda_{i}}$ is again a self-adjoint operator with lower-bounded spectrum. Now we set $\min (\sigma(H_{\Lambda_{i}}))=E_{i}$ and for any $\epsilon>0$, the $\epsilon$-almost eigenstate $|0_{\Lambda_{i}}^{\epsilon}\ra$ (see definition~\ref{def:ep_eigen}) satisfies
     \beq
    E_{i}\leqslant\la0_{\Lambda_{i}}^{\epsilon}|H_{\Lambda_{i}}|0_{\Lambda_{i}}^{\epsilon}\ra<\epsilon+E_{i}
     \eeq
     We consider
    \beq
    |\psi^{\epsilon}\ra=|\psi_{i}\ra\otimes |0_{\Lambda_{i}}^{\epsilon}\ra
    \eeq
    where $|\psi_{i}\ra$ is a state on site $i$ such that $\la0|\psi^{\epsilon}\ra=0$. Then we have
    \beq
    \la\psi^{\epsilon}|H|\psi^{\epsilon}\ra-\la0|H|0\ra\geqslant \Delta
    \eeq
    On the other hand, we have
    \beq
    \la\psi^{\epsilon}|H|\psi^{\epsilon}\ra\leqslant \la 0^{\epsilon}_{\Lambda_{i}}|H_{\Lambda_{i}}|0^{\epsilon}_{\Lambda_{i}}\ra+||V_{i}||< E_{i}+\epsilon+g
    \eeq
    and
    \beq
    \la 0|H|0\ra=\la 0|H_{\Lambda_{i}}|0\ra+\la 0|V_{i}|0\ra\geqslant E_{i}-||V_{i}||\geqslant E_{i}-g
    \eeq
    Therefore,
    \beq
    \Delta<\epsilon+2g
    \eeq
    Note that $\epsilon$ is arbitrarily small, we conclude
    \beq
    \Delta\leqslant 2g
    \eeq
    \end{proof}
    The upper bound for $\Delta_{t}$ is obtained by applying this lemma to $H_{t}$.
    In the following, we rescale the Hamiltonian to ensure that $g=1$. Note that we always have $g_{0}\geqslant g=1$ by choosing a larger $g_{0}$ in Lemma \ref{lemma:quasi_local_perturb}.

\subsection{Folding trick and Schmidt rank}\label{sec:folding}

After the truncation introduced in last section, only interactions within each block and between two adjacent blocks remain. This allows us to use a folding trick (see Fig.~\ref{fig:folding}), which reduces our setup to the case of Ref.~\cite{Kuwahara2019}. More precisely, we fold the whole chain with respect to the midpoint of the interval $[0,n]$. Then there are $q+2$ blocks in total, denoted by $B'_s$ with $s=0, 1, \cdots, q+1$, and the dimension of each local Hilbert space is now $d^{2}$, where $d$ is the dimensional of local Hilbert space of original chain. By abusing notations, we again denote the folded lattice by $\Lambda$.

\begin{figure}[h!]
    \centering
    \includegraphics[width=0.8\linewidth]{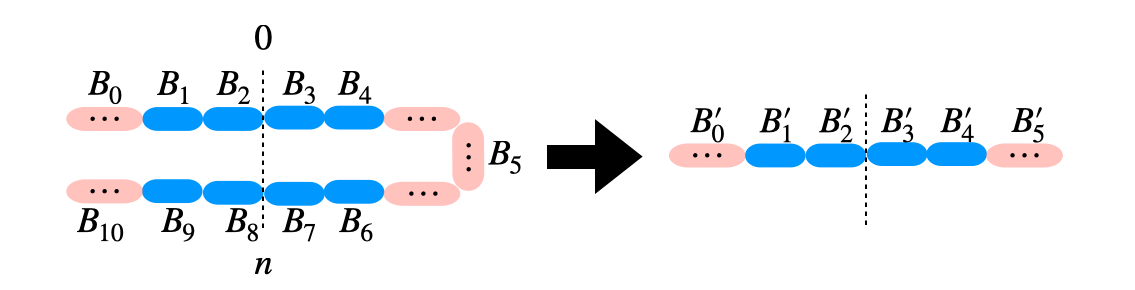}
    \caption{The whole chain is folded with respect to the midpoint of the interval $[0,n]$.}
    \label{fig:folding}
\end{figure}

Since the truncated Hamiltonian $H_{t}$ only contains interactions within each block and nearest-neighbor blocks, the folded Hamiltonian $H_{t}'$ again satisfies this property. For convenience, let us write the Hamiltonian $H_{t}'$ as 
\beq\label{eq:folded_truncated_Hamiltonian}
H_{t}':=\sum_{s=0}^{q+1}h_{s}+\sum_{s=0}^{q}h_{s,s+1}
\eeq
where $h_{s}$ is only supported on\footnote{To be more precise, for $s>0$, each $B_{s}'$ is a compact subset therefore $h_{s}:=\sum_{Z\subset B_{s}'}h_{Z}$ is a finite sum and hence well-defined. We then define $h_{0}:=H_{t}'-\sum_{s=1}^{q+1}h_{s}-\sum_{s=0}^{q}h_{s,s+1}$. By Prop.~\ref{prop:domain} $h_{0}$ has the same domain as $H_{t}'$.} $B_{s}'$ and $h_{s,s+1}$ is the interaction between $B_{s}'$ and $B_{s+1}'$. More explicitly, we have
\beq
h_{s,s+1}= V_{B_{s},B_{s+1}}(B_{s}\sqcup B_{s+1})+V_{B_{2q+2-s},B_{2q+1-s}}(B_{2q+2-s}\sqcup B_{2q+1-s})
\eeq
Using Lemma \ref{lemma:quasi_local_perturb} with $L=1$, we find
\beq\label{eq:bounding_interaction}
||h_{s,s+1}||= ||V_{B_{s},B_{s+1}}(B_{s}\sqcup B_{s+1})+V_{B_{2q+2-s},B_{2q+1-s}}(B_{2q+2-s}\sqcup B_{2q+1-s})||\leqslant 2g_{0}
\eeq

Next, we estimate the Schmidt rank of power of $H_{t}'$, which will be used for the construction of AGSP later.
\begin{lemma}\label{lemma:SR_bound_weak}
    \beq
    \SR((H_{t}')^{m})\leqslant(2+(2d^{2}l)^{k})^{m}
    \eeq
\end{lemma}
\begin{proof}
    Note that $H'_{t}$ can be decomposed into
    \beq
    H'_{t}=H'_{t,L}+h_{\frac{q}{2},\frac{q}{2}+1}+H'_{t,R}
    \eeq
    where $H'_{t,L}$ is supported only on $\bigcup_{s=0}^{\frac{q}{2}}B'_{s}$, while $H'_{t,R}$ is supported only on $\bigcup_{s=\frac{q}{2}+1}^{q+1}B'_{s}$. Therefore, $\SR(H'_{t,L})=\SR(H'_{t,R})=1$ and only $h_{\frac{q}{2},\frac{q}{2}+1}$ contributes to Schmidt rank in a nontrivial way. We note that
    \beq\label{eq:middle_block_inter}
    h_{\frac{q}{2},\frac{q}{2}+1}=\sum_{\substack{Z\cap B'_{\frac{q}{2}}\cap B'_{\frac{q}{2}+1}\not=\emptyset,\\|Z|\leqslant k}}h_{Z}
    \eeq
    For each $Z$, we have $\SR(h_{Z})\leqslant d^{2k}$ since it is supported on at most $k$-sites and each site has now $d^{2}$-dimensional local Hilbert space. If $k\geqslant 3$, the number of possible $Z$ appearing Eq.~\eqref{eq:middle_block_inter} can be estimated as\footnote{Below we have used $\sum_{j=0}^{k}\begin{pmatrix}
        2l\\j
    \end{pmatrix}<(\frac{2el}{k})^{k}$, which essentially follows from Stirling's approximation (see \cite{Jizhe2018binomial} for a detailed proof).}
    \beq
    \sum_{j=1}^{k}(\begin{pmatrix}
        2l\\j
    \end{pmatrix}-2
    \begin{pmatrix}
        l\\j
    \end{pmatrix})< (\frac{2el}{k})^{k}<(2l)^{k}
    \eeq
    On the other hand, if $k=2$, it can be checked that above summation is still bounded by $(2l)^{k}$. We then conclude that
    \beq
    \SR(H_{t}')\leqslant 2+(2d^{2}l)^{k}
    \eeq
    and the desired result follows from $\SR(AB)\leqslant\SR(A)\SR(B)$ for any operators $A$ and $B$.
\end{proof}

For large $q$, a much better bound is possible.
\begin{proposition}\label{prop:bound_SR}
The Schmidt rank of $(H_{t}')^{m}$ is bounded by
\beq
\SR((H_{t}')^{m})\leqslant d^{2ql}(q+m+1)^{q+1}[e(q+1)^{2}(2d^{2}l)^{k}]^{\frac{m}{q+1}}
\eeq
\end{proposition}
The proof is completely identical to Proposition 4 of Ref.~\cite{Kuwahara2019}, but with $d$ replaced by $d^{2}$ because of the folding trick. Later, we will fix $l$ and $m$ as power functions of $q$ (up to some logarithmic factors). In that case, we have $(q+m+1)^{q+1}\leqslant d^{2ql}$ for large $q$. By Proposition~\ref{prop:bound_SR},
\beq
\SR((H_t')^{m})\leqslant d^{4ql}[e(q+1)^{2}(2d^{2}l)^{k}]^{\frac{m}{q+1}}
\eeq
\begin{remark}
    Later we will cut-off $H_{t}'$ and obtain an effective Hamiltonian $\tH$ with a bounded norm. The Schmidt rank estimation in Proposition~\ref{prop:bound_SR} is also valid for $\tH$ since they have the same block-wise locality property. See Proposition 4 and its proof in Ref.~\cite{Kuwahara2019}.
\end{remark}

\subsection{Energy cut-off}

Recall that our AGSP must be a bounded operator in order to be defined on the whole Hilbert space $\cH_{\phi}$. It is hard to construct such an operator out of polynomials of $H_{t}$ because $H_{t}$ itself is unbounded. In this section, we briefly sketch the solution to this problem.

First, let us recall the spectral decomposition of unbounded operators, as stated in Theorem \ref{thm: spectral decomposition of generic operator}, which is based on functional calculus.
Previously, we have encountered a special example of functional calculus, \ie the resolvents defined by Eq.~\eqref{eq:resolvent}, which is used to derive the spectral gap of the truncated Hamiltonian. Similarly, if one chooses $f(x)=e^{-ix}$, then one automatically gets the exponential $e^{-iA}$ of $A$.

To cut off the spectrum of $H_{t}$, let us choose the following
\beq
f_{\lambda}(x)=\begin{cases}
    x,x<\lambda\\
    \lambda,x\geqslant\lambda
\end{cases}
\eeq
Now for each $s=0,1,...,q+1$, we denote $E_{s,0}:=\inf(\sigma(h_{s}))$. In the case where $h_{s}$ has discrete spectrum, $E_{s,0}$ reduces to the lowest eigenvalue of $h_{s}$. We then define
\beq\label{eq:energy_cutoff}
\tilde{h}_{s}:=f_{\tau_{s}}(h_{s})
\eeq
where $\tau_{s}:=\tau+E_{s,0}$ and $\tau$ is a free parameter which will be specified later. Less formally, one could think of
\beq
\tilde{h}_{s}\sim\sum_{E_{s,j}<\tau_{s}}E_{s,j}|E_{s,j}\ra\la E_{s,j}|+\sum_{E_{s,j}\geqslant\tau_{s}}\tau_{s}|E_{s,j}\ra\la E_{s,j}|
\eeq
although the RHS makes sense only for discrete spectra. After the cut-off, the new Hamiltonian
\beq\label{eq:effective_H}
\tilde{H}_{t}':=\sum_{s=0}^{q+1}\tilde{h}_{s}+\sum_{s=0}^{q}h_{s,s+1}
\eeq
The norm of this effective Hamiltonian is bounded by
\beq
||\tilde{H}_{t}'||\leqslant\sum_{s=0}^{q+1}||\tilde{h}_{s}||+\sum_{s=0}^{q}||h_{s,s+1}||\leqslant\sum_{s=0}^{q+1}(\tau+|E_{s,0}|)+2(q+1)g_{0}
\eeq
Now we utilize the following lemma.
\begin{lemma}\label{lemma:energy_shift}
    There exists a set of constants $\{\mathcal{E}_{s}\}_{s=0}^{q+1}$ such that
    \beq
    \begin{split}
        \sum_{s=0}^{q+1}\mathcal{E}_{s}&=0\\
        |E_{s,0}+\mathcal{E}_{s}|&<2g_{0}
    \end{split}
    \eeq
    Thus we can shift $\tilde{h}_{s}\to\tilde{h}_{s}+\mathcal{E}_{s}$. After this shift, $\tilde{h}_{s}$ has ground state energy $E_{s,0}+\mathcal{E}_{s}$ and this operation leaves $\tH$ invariant.
\end{lemma}
\begin{proof}[Proof of Lemma \ref{lemma:energy_shift}]
    For $\epsilon>0$, let us define $|E_{0,s}^{\epsilon}\ra\in\im(\Pi^{(s)}_{[E_{s,0},E_{s,0}+\epsilon)})$, where $E_{s,0}:=\inf(\sigma(h_{s}))$. Consider the state
    \beq
    |\psi\ra:=\bigotimes_{s=0}^{q+1}|E_{0,s}^{\epsilon}\ra
    \eeq
    Since we have set the ground state energy $E_{t,0}$ of $H_{t}'$ to be zero,
    \beq
    \la\psi|H_{t}'|\psi\ra\geqslant E_{t,0}=0
    \eeq
    On the other hand,
    \beq
    \begin{split}
        0\leqslant\la\psi|(\sum_{s=0}^{q+1}h_{s}+\sum_{s=0}^{q}h_{s,s+1})|\psi\ra&\leqslant\sum_{s=0}^{q+1}E_{s,0}+2(q+1)g_{0}
    \end{split}
    \eeq
    where we have used the fact that $\epsilon$ can be taken as any positive number and that $||h_{s,s+1}||\leqslant 2g_{0}$ (see Eq.~\eqref{eq:bounding_interaction}). Moreover,
    \beq
    0=E_{t,0}=\sum_{s=0}^{q+1}\la0_{t}|h_{s}|0_{t}\ra+\sum_{s=0}^{q}\la0_{t}|h_{s,s+1}|0_{t}\ra\geqslant \sum_{s=0}^{q+1}E_{s,0}-2(q+1)g_{0}
    \eeq
    Therefore, we conclude,
    \beq\label{eq:cut_off_gs_energy}
    -2(q+1)g_{0}\leqslant \sum_{s=0}^{q+1}E_{s,0}\leqslant 2(q+1)g_{0}
    \eeq
    We now shift $h_{s}\to h_{s}+\mathcal{E}_{s}$ such that $\sum_{s=0}^{q+1}\mathcal{E}_{s}=0$ and $E_{s,0}+\mathcal{E}_{s}=E_{s',0}+\mathcal{E}_{s'},\,\forall \,s,s'$, which implies
    \beq
    (q+2)(E_{s,0}+\mathcal{E}_{s})=\sum_{s=0}^{q+1}(E_{s,0}+\mathcal{E}_{s})=\sum_{s=0}^{q+1}E_{s,0}
    \eeq
    Equivalently,
    \beq
    \mathcal{E}_{s}=-E_{s,0}+\frac{1}{q+2}\sum_{s=0}^{q+1}E_{s,0}
    \eeq
    Consequently,
    \beq
    -\frac{q+1}{q+2}2g_{0}\leqslant E_{s,0}+\mathcal{E}_{s}\leqslant\frac{q+1}{q+2}2g_{0},\quad\forall\,s
    \eeq
\end{proof}
By Lemma \ref{lemma:energy_shift}, we have
\beq\label{eq:norm_bound}
||\tilde{H}_{t}'||\leqslant(q+2)\tau+2g_{0}(2q+3)\leqslant(q+2)(\tau+4g_{0})
\eeq
For large $\tau$, we should expect $\tilde{H}_{t}'$ should have similar spectral properties as $H_{t}$. Especially, we expect that for large $\tau$, $\tilde{H}_{t}'$ has a unique gapped ground state and its ground state should be close to $|0_{t}\ra$. This can be made precise by the following theorem.
\begin{theorem}\label{thm:cut_off}
    Suppose $\tau$ satisfies
    \beq\label{eq:critical_cut_off_main}
    \tau\geqslant \max\{16g_{0}+\frac{1}{\lambda'}\log(\frac{176g_{0}(q+1)(q+2)}{\Delta_{t}}),8g_{0}+\frac{1}{\lambda}\log(\frac{6480(q+2)g_{0}}{\lambda\Delta_{t}^{2}})\}
    \eeq
    where
    \beq
    \begin{split}
        \lambda&:=\frac{1}{8gk+16g_{0}}\\
        \lambda'&:=\min\{\frac{1}{224g_{0}},\frac{1}{8gk}\}
    \end{split}
    \eeq
    Recall that $k$ is defined by the condition that our original Hamiltonian $H$ contains at most $k$-body interactions. Then $\tilde{H}_{t}'$ is uniquely gapped whose spectral gap $\tilde{\Delta}_{t}$ satisfies
    \beq
    \tilde{\Delta}_{t}\geqslant\frac{1}{2}\Delta_{t}
    \eeq
    Moreover, the norm distance between the ground state $|\tilde{0}_{t}\ra$ of $\tilde{H}_{t}$ and $|0_{t}\ra$ is exponentially small in $\tau$,
    \beq
    |\,|0_{t}\ra-|\tilde{0}_{t}\ra|\leqslant114\sqrt{\frac{(q+2)g_{0}}{\lambda\Delta_{t}^{2}}}e^{-\frac{\lambda}{2}(\tau-8g_{0})},
    \eeq
    if the relative phase of $|0_t\ra$ and $|\tilde 0_t\ra$ are chosen such that $\la 0_t|\tilde 0_t\ra\geqslant 0$.
\end{theorem}
The proof of Theorem \ref{thm:cut_off} is postponed to Appendix~\ref{sec:proof_cut_off}.

\subsection{Construction of AGSP} \label{subapp: AGSP}

In this subsection, we aim to construct a (family of) AGSP with respect to $|0_{t}\ra$ (\ie with $|\Omega\ra=|0_{t}\ra$ in Definition~\ref{def:AGSP}).

Let us assume $\tau$ satisfies Eq.~\eqref{eq:critical_cut_off_main}, therefore by Theorem \ref{thm:cut_off}, the effective Hamiltonian Eq.~\eqref{eq:effective_H} has a unique gapped ground state $|\tilde{0}_{t}\ra$ with energy gap $\tilde{\Delta}_{t}\geqslant\frac{1}{2}\Delta_{t}$. The AGSP is constructed as a polynomial of $\tH$ of degree $m$, denoted by $K(m,\tH)$. The construction for $K(m,\tH)$ is based on the so-called Chebyshev polynomials,
\beq
T_{m}(x):=\frac{(x+\sqrt{x^{2}-1})^{m}+(x-\sqrt{x^{2}-1})^{m}}{2}
\eeq
It is shown in Ref.~\cite{Kuwahara2015local} that
\beq\label{eq:Chebeyshev_bound}
\begin{split}
    |T_{m}(x)|&\leqslant 1,\quad |x|\leqslant1\\
    \frac{1}{2}\exp(2m\sqrt{\frac{|x|-1}{|x|+1}})&\leqslant |T_{m}(x)|\leqslant\frac{1}{2}(2x)^{m},\quad |x|\geqslant 1
\end{split}
\eeq
Using functional calculus, the AGSP is constructed as \cite{Kuwahara2019},
\beq\label{eq:AGSP}
K(m,\tH):=\frac{T_{m}(\frac{2\tH-(||\tH||+\tilde{E}_{t,1})}{||\tH||-\tilde{E}_{t,1}})}{T_{m}(-\frac{||\tH||+\tilde{\Delta}_{t}-\tilde{E}_{t,0}}{||\tH||-\tilde{E}_{t,1}})}
\eeq 
where $\tilde{E}_{t,1}:=\tilde{E}_{t,0}+\tilde{\Delta}_{t}$ and the denominator is chosen to ensure $K(m,\tH)|\tilde{0}_{t}\ra=|\tilde{0}_{t}\ra$. In the rest of this subsection, we will review which parameters determine the AGSP $K(m, \tilde H_t')$ and estimate the parameters $\delta_{K(m, \tilde H_t')}$, $\epsilon_{K(m, \tilde H_t')}$ and $D_{K(m, \tilde H_t')}$ associated with $K(m, \tilde H_t')$ (see Appendix \ref{subapp: defining AGSP} for the meanings of these parameters).

First, the AGSP $K(m, \tilde H_t')$ clearly depends on $m$. It also depends on the Hamiltonian $H_\phi$ we start with, which determines the parameters $g_0$ (see Lemma \ref{lemma:quasi_local_perturb}) and $g$ (see Eq. \eqref{eq:g_extensive}). For a given $H_\phi$, $\tilde H_t'$ depends on $q$ (related to the number of blocks), $l$ (the length of the finite blocks), and $\tau$ (the energy cutoff). The AGSP $K(m, \tilde H_t')$ is ultimately determined by all these parameters.

Using the above parameters, we can bound the parameters $\delta_{K(m, \tilde H_t')}$, $\epsilon_{K(m, \tilde H_t')}$ and $D_{K(m, \tilde H_t')}$ associated with the AGSP $K(m, \tilde H_t')$.

By Theorem \ref{thm:cut_off}, we know that the parameter $\delta$ for $K(m,\tH)$ is given by
\beq\label{eq:AGSP_delta}
|\,|\tilde{0}_{t}\ra-|0_{t}\ra|\leqslant\delta_{K(m,\tH)}:=114\sqrt{\frac{(q+2)g_{0}}{\lambda\Delta_{t}^{2}}}e^{-\frac{\lambda}{2}(\tau-8g_{0})}
\eeq
and by Eq.~\eqref{eq:Chebeyshev_bound}, the parameter $\epsilon_{m}$ of this AGSP is bounded as
\beq\label{eq:AGSP_epsilon}
\begin{split}
    \epsilon_{m}&=\sup_{\tilde E_{t, 0}+\tilde{\Delta}_{t}\leqslant x\leqslant ||\tH||}|K(m,x)|\leqslant \frac{1}{|T_{m}(-\frac{||\tH||+\tilde{\Delta}_{t}-\tilde{E}_{t,0}}{||\tH||-\tilde{\Delta}_{t}-\tilde{E}_{t,0}})|}\\&\leqslant 2\exp(-2m\sqrt{\frac{\tilde{\Delta}_{t}}{||\tH||-\tilde{E}_{t,0}}})\\
    &\leqslant2\exp(-2m\sqrt{\frac{\tilde{\Delta}_{t}}{2||\tH||}})\\
    &\leqslant2\exp(-m\sqrt{\frac{\Delta_{t}}{(q+2)(\tau+4g_{0})}})
\end{split}
\eeq
To obtain third line, we have used $\tilde{E}_{t,0}\geqslant-||\tH||$. In the last line, we have used $\tilde{\Delta}_{t}\geqslant\frac
{1}{2}\Delta_{t}$ (see Theorem \ref{thm:cut_off}) and $||\tH||\leqslant (q+2)(\tau+4g_{0})$ (see Eq. \eqref{eq:norm_bound}).
This also shows that $K(m,\tH)$ is a bounded operator with $||K(m,\tH)||\leqslant 2$.

We note that the Schmidt rank for $K(m,\tH)$ is given by Proposition~\ref{prop:bound_SR}, \ie
\beq\label{eq:AGSP_SR}
\SR(K(m,\tH))\leqslant md^{4ql}[e(q+1)^{2}(2d^{2}l)^{k}]^{\frac{m}{q+1}}
\eeq

\subsection{Proof of the area law} \label{subapp: proving area law}

With all the previous results in hand, the goal of this section is to prove the following entanglement area law.

\begin{theorem}[area law for locally unique gapped ground states]\label{thm:area_law}
    Let $\phi$ be a locally unique gapped ground state of an admissible Hamiltonian $H$ (see Eq.~\eqref{eq:admissible}), then the entanglement entropy 
\beq
S(\phi|_{[0,n]})< S_{1}
\eeq
where $S_{1}$ is a constant depending only on $\{k,J,B,\fa,\Delta,d\}$, but not on $n$.
\end{theorem}

The proof is divided into two propositions. Recall that $|0\ra$ is the GNS vector state representing $\phi$ in $\cH_{\phi}$.
\begin{proposition}\label{prop:low_entangled_state}
    There exists a quantum state $|\varphi\ra$ such that
    \beq
    ||\,|0\ra-|\varphi\ra||\leqslant\frac{1}{2}
    \eeq
    with 
    \beq
    \log(\SR(|\varphi\ra))\leqslant D_{0}
    \eeq
    where $D_{0}$ is a constant that depends only on $\{k,g_{0},\bar{\fa},d,\Delta\}$, but not on $n$.
\end{proposition}

\begin{proposition}\label{prop:bounding_entropy}
    Let $|\varphi\ra$ be any quantum state such that
    \beq
    ||\,|0\ra-|\varphi\ra||\leqslant\frac{1}{2}
    \eeq
    and let $D_{\varphi}:=\SR(|\varphi\ra)$, then the entropy of $\phi$ is bounded as
    \beq
    S(\phi|_{[0,n]})\leqslant \log(D_{\varphi})+S_{0}:=S_{1}
    \eeq
    where $S_{0}$ is a constant depending only on $\{k,g_{0},\bar{\fa},\Delta,d\}$, but not on $n$.
\end{proposition}

Theorem \ref{thm:area_law} can thus be proved by combining these two propositions and noting $\bar{\fa}=\fa-2,g_{0}=\max\{B+\frac{\fa}{\fa-1}J,\frac{\fa}{\fa-2}J\}$. So in the rest of this subsection, we will prove these two propositions. We note that $q$ and $l$ in Fig. \ref{fig:blocking} can be chosen arbitrarily in these proofs.

\begin{proof}[Proof of Proposition~\ref{prop:low_entangled_state}]
    In the following, we choose $q$ and $l$ such that
    \beq
    ||\delta H||\leqslant \frac{\Delta}{8},
    \eeq
    which leads to $\Delta_{t}\geqslant \frac{3}{4}\Delta$ by Lemma \ref{lemma:spec_stability}. From Eq.~\eqref{eq:norm_perturbation}, this can be achieved by 
    \beq
    2g_{0}(q+2)l^{-\bar{\fa}}\leqslant \frac{\Delta}{8}
    \eeq
    So we can take $l=\lceil(\frac{16g_{0}(q+2)}{\Delta})^{\frac{1}{\bar{\fa}}}\rceil=O((\frac{q}{\Delta})^{\frac{1}{\bar{\fa}}})$, where $\lceil x\rceil$ with $x$ a positive number denotes the smallest integer that is not smaller than $x$. Again, by Lemma \ref{lemma:spec_stability}, this choice of $l$ ensures that
    \beq
    |\,|0\ra-|0_{t}\ra|< \frac{1}{4}.
    \eeq
    By Proposition \ref{prop:overlap}, if there exists an AGSP $K$ for $|0_{t}\ra$ with
    \beq\label{eq:AGSP_property1}
    \epsilon_{K}^{2}D_{K}< \frac{1}{2},
    \eeq
    then we obtain a quantum state $|\varphi\ra$ such that
    \beq
    \begin{split}
        |\,|0_{t}\ra-|\varphi\ra|&<\epsilon_{K}\sqrt{2D_{K}}+\delta_{K},\\
        \log(\SR(|\varphi\ra)&< \log(D_{K}).
    \end{split}
    \eeq
    This also implies
    \beq
    |\,|0\ra-|\varphi\ra|< \frac{1}{4}+\epsilon_{K}\sqrt{2D_{K}}+\delta_{K}
    \eeq
    Thus, we can finish the proof if we can find an AGSP $K$ with
    \beq \label{eq: to be satisfied}
    \begin{split}
        \epsilon_{K}\sqrt{2D_{K}}+\delta_{K}&\leqslant \frac{1}{4},\\
        \log(D_{K})&\leqslant D_{0}.
    \end{split}
    \eeq
    where $D_{0}$ is a constant depending on $\{k,g_{0},\bar{\fa},\Delta,d\}$ only, but not on $n$.
    
    We will take the AGSP to be $K=K(m, \tilde H_t')$ discussed in Appendix \ref{subapp: AGSP}, which depends on the original GNS Hamiltonian $H_\phi$ and parameters $\tau$, $m$ and $q$ (it also depends on $l$, but $l$ is fixed to be $\lceil(\frac{16g_{0}(q+2)}{\Delta})^{\frac{1}{\bar{\fa}}}\rceil$, as discussed above). Our goal below is to show that the following choices of the parameters $\tau$, $m$, and $q$ will make Eqs. \eqref{eq: to be satisfied} and \eqref{eq:critical_cut_off_main} hold:
    \beq
    \begin{split}
        \tau&=c_\tau\log(q),\\
       q&=q_{0}:=\lceil\max\{(\frac{480c_{2}}{b_{1}})^{4},(c_{3}^{-1}(3+\bar{\fa}^{-1}))^{\bar{\fa}},\frac{4c_{1}}{b_{1}},(\frac{7\log 2}{6c_{1}})^{\frac{\bar{\fa}}{\bar{\fa}+1}}\}\rceil\\
        m&=\lceil\frac{4c_{1}}{b_{1}}q_{0}^{\frac{3}{2}+\frac{1}{\bar{\fa}}}\log^{1/2}(q_{0})\rceil
    \end{split}
    \eeq
    where the parameters $c_\tau$, $c_1$, $c_2,c_{3},b_{1}$ depend on $\{\bar\fa, g_0, k, \Delta, d\}$ (parameters in the Hamiltonian and the dimension of local Hilbert spaces), but not on $n$ (the size of the subregion under consideration).

    Below we justify this statement.
    
    Recall that the parameter $\delta_{K}$ is bounded by Eq.~\eqref{eq:AGSP_delta},
    \beq
    \delta_{K}\leqslant\frac{114}{\Delta_{t}}\sqrt{\frac{(q+2)g_{0}}{\lambda}}e^{-\frac{\lambda}{2}(\tau-8g_{0})}\leqslant\frac{152}{\Delta}\sqrt{\frac{(q+2)g_{0}}{\lambda}}e^{-\frac{\lambda}{2}(\tau-8g_{0})},
    \eeq
    where we have used $\Delta_{t}\geqslant\frac{3}{4}\Delta$. From Eq.~\eqref{eq:AGSP_epsilon}, we have
    \beq\label{eq:AGSP_epsilon2}
    \epsilon_{K}\leqslant2\exp(-m\sqrt{\frac{\Delta_{t}}{(q+2)(\tau+4g_{0})}})\leqslant 2\exp(-\frac{m}{2}\sqrt{\frac{3\Delta}{(q+2)(\tau+4g_{0})}}),
    \eeq
    Finally, from Eq.~\eqref{eq:AGSP_SR} (with the assumption that $(q+m+q)^{q+1}\leqslant d^{2ql}$),
    \beq\label{eq:AGSP_SR2}
    D_{K}\leqslant md^{4ql}(e(q+1)^{2}(2d^{2}l)^{k})^{\frac{m}{q+1}}.
    \eeq
    
    We would like to choose $\tau$ such that $\delta_{K}\leqslant \frac{1}{8}$, which can be achieved if
    \beq
    \tau\geqslant 8g_{0}+\frac{2}{\lambda}\log(\frac{1216}{\Delta}\sqrt{\frac{(q+2)g_{0}}{\lambda}})
    \eeq
    In turn, the above inequality can be solved by $\tau=c_{\tau}\log(q)$, where
    \beq
    c_{\tau}:=\max\{2(8g_{0}+\frac{1}{\lambda}+\frac{2}{\lambda}\log(\frac{1216}{\Delta}\sqrt{\frac{g_{0}}{\lambda}})),4(16g_{0}+\frac{1}{\lambda'}+\frac{1}{\lambda'}\log(\frac{704g_{0}}{3\Delta}))\}
    \eeq
    One can verify that this choice of $\tau$ satisfies
    Eq.~\eqref{eq:critical_cut_off_main}.

    Next, we will choose $m$ and $q$ such that 
    \beq\label{eq:AGSP_property_2}
    \epsilon_{K}\sqrt{2D_{K}}\leqslant \frac{1}{8}.
    \eeq
    From Eq.~\eqref{eq:AGSP_epsilon2} and Eq.~\eqref{eq:AGSP_SR2},
    \beq\label{eq:AGSP_property3}
    \begin{split}
        \epsilon_{K}&\leqslant \exp(-b_{1}m\frac{1}{\sqrt{q\log(q)}})\\
        D_{K}&\leqslant\exp(c_{1}q^{1+\frac{1}{\bar{\fa}}}+c_{2}\frac{m\log(q)}{q})
    \end{split}
    \eeq
    where the constants $b_{1},c_{1},c_{2}$ only depend on $\{g_{0},k,\Delta,d\}$.
    Note we have used $l=O((\frac{q}{\Delta})^{\frac{1}{\bar{\fa}}})$ and $\log(m)\leqslant \log(q+m+1)\leqslant ql\log d$. Notice that if $\epsilon_{K}D_{K}\leqslant \epsilon_{K}^{\frac{1}{2}}$ (by choosing $m,q$ properly), then Eq.~\eqref{eq:AGSP_property1} and Eq.~\eqref{eq:AGSP_property_2} can be satisfied if $\epsilon_K\leqslant 2^{-\frac{14}{3}}$. The condition $\epsilon_{K}D_{K}\leqslant\sqrt{\epsilon_{K}}$ is satisfied if
    \beq\label{eq:epsilon&D}
    \begin{split}
        c_{1}q^{1+\frac{1}{\bar{\fa}}}&\leqslant \frac{b_{1}}{4}\frac{m}{\sqrt{q\log q}}\\
        c_{2}m \frac{\log q}{q}
        &\leqslant \frac{b_{1}}{4}\frac{m}{\sqrt{q\log q}}
    \end{split}
    \eeq
    From the second inequality, we obtain
    \beq
    \frac{q^{1/2}}{\log^{3/2}(q)}\geqslant \frac{4c_{2}}{b_{1}}
    \eeq
    This can be solved by noticing $\frac{q^{1/2}}{\log^{3/2}(q)}\geqslant \frac{1}{120}q^{1/4}$ for any $q>1$, so we require
    \beq
    q\geqslant(\frac{480c_{2}}{b_{1}})^{4}
    \eeq
    The first inequality in Eq.~\eqref{eq:epsilon&D} shows that
    \beq
    m\geqslant\frac{4c_{1}}{b_{1}}q^{\frac{3}{2}+\frac{1}{\bar{\fa}}}\log^{1/2}(q)
    \eeq
    However, we need to check the consistency that $\log(m)\leqslant ql \log(d)=c_{3}q^{1+\frac{1}{\bar{\fa}}}$, where $c_{3}$ only depends $\{\Delta,d\}$. This can be ensured by choosing
    \beq
    q\geqslant \max\{(c_{3}^{-1}(3+\frac{1}{\bar{\fa}}))^{\bar{\fa}},\frac{4c_{1}}{b_{1}}\}
    \eeq
    Finally for $\epsilon_{K}\leqslant 2^{-\frac{14}{3}}$, we need $q\geqslant(\frac{7\log2}{6c_{1}})^{\frac{\bar{\fa}}{\bar{\fa}+1}}$.

    To summarize, all requirements are satisfied by setting
    \beq
    \begin{split}
        \tau&=c_\tau \log q\\
        l&=\lceil(\frac{16g_{0}(q+2)}{\Delta})^{\frac{1}{\bar{\fa}}}\rceil\\
         q&=q_{0}:=\lceil\max\{(\frac{480c_{2}}{b_{1}})^{4},(c_{3}^{-1}(3+\bar{\fa}^{-1}))^{\bar{\fa}},\frac{4c_{1}}{b_{1}},(\frac{7\log 2}{6c_{1}})^{\frac{\bar{\fa}}{\bar{\fa}+1}}\}\rceil\\
        m&=\lceil\frac{4c_{1}}{b_{1}}q_{0}^{\frac{3}{2}+\frac{1}{\bar{\fa}}}\log^{1/2}(q_{0})\rceil
    \end{split}
    \eeq
    where $c_\tau, c_{1},c_{2},c_{3},b_{1}$ are constants only depending on $\{g_{0},k,d,\Delta\}$, but not on $n$.
    
    We then conclude from Eq.~\eqref{eq:AGSP_property3},
    \beq
    \log(D_{K})\leqslant D_{0}
    \eeq
    where $D_{0}$ is a constant depending only on $\{\bar{\fa},k,g_{0},\Delta,d\}$, but not on $n$ (the size of the interval).
\end{proof}
\begin{proof}[Proof of Proposition~\ref{prop:bounding_entropy}]
    In this proof, we set $q=2$ (recall that parameters like $q$ can be chosen arbitrarily, and we only need to prove this proposition for one choice of this set of parameters). From Eq. \ref{eq:norm_perturbation} and Lemma \ref{lemma:spec_stability}, we have
    \beq
    \begin{split}
        ||\delta H||&\leqslant 8g_{0}l^{-\bar{\fa}},\\
        \Delta_{t}&\geqslant \Delta-2||\delta H||.
    \end{split}
    \eeq
    We require $\Delta_{t}\geqslant\frac{3}{4}\Delta$. This can be achieved by $l\geqslant(\frac{64g_{0}}{\Delta})^{\frac{1}{\bar{\fa}}}$. 
    
    With $q=2$, we choose
    \beq
    \tau\geqslant\max\{16g_{0}+\frac{1}{\lambda'}\log(\frac{2112g_{0}}{\Delta_{t}}),8g_{0}+\frac{1}{\lambda}\log(\frac{25920g_{0}}{\lambda \Delta_{t}})\}
    \eeq
    so $\tau$ satisfies Eq.~\eqref{eq:critical_cut_off_main}. By Theorem \ref{thm:cut_off}, we have
    \beq
    \begin{split}
        |\,|\tilde{0}_{t}\ra-|0_{t}\ra|\leqslant\frac{228}{\Delta_{t}}\sqrt{\frac{g_{0}}{\lambda}}e^{-\frac{\lambda}{2}(\tau-8g_{0})}&\leqslant\frac{304}{\Delta}\sqrt{\frac{g_{0}}{\lambda}}e^{-\frac{\lambda}{2}(\tau-8g_{0})},\\
        \tilde{\Delta}_{t}\geqslant\frac{1}{2}\Delta_{t}&\geqslant\frac
        {3}{8}\Delta.
    \end{split}
    \eeq
    
    Below, we construct AGSP $K_{m,l,\tau}$ for $|0\ra$ (given by Eq.~\eqref{eq:AGSP}), which are parameterized by $\{m,l,\tau\}$. The corresponding $(\delta,\epsilon,D)$ of $K_{m,l,\tau}$ is denoted by $(\delta_{m,l,\tau},\epsilon_{m,l,\tau},D_{m,l,\tau})$.

    First, we note from Lemma \ref{lemma:spec_stability} and Theorem \ref{thm:cut_off}, $\delta_{m,l,\tau}$ is bounded as
    \beq\label{eq:AGSP_delta3}
    \delta_{m,l,\tau}=|\,|0\ra-|\tilde{0}_{t}\ra|\leqslant\frac{304}{\Delta}\sqrt{\frac{g_{0}}{\lambda}}e^{-\frac{\lambda}{2}(\tau-8g_{0})}+\frac{16g_{0}l^{-\bar{\fa}}}{\Delta}.
    \eeq

    Similarly, from Eq.~\eqref{eq:AGSP_epsilon},
    \beq\label{eq:AGSP_epsilon3}
    \epsilon_{m,l,\tau}\leqslant 2\exp(-\frac{m}{4}\sqrt{\frac{3\Delta}{\tau+4g_{0}}}).
    \eeq
    Also, from Lemma \ref{lemma:SR_bound_weak},
    \beq\label{eq:AGSP_SR3}
    D_{m,l,\tau}\leqslant (2+(2d^{2}l)^{k})^{m}=e^{mO(\log(dl))}
    \eeq
    In the following, consider a sequence of AGSP $\{K_{p}\}_{p=1,2,\dots}$ with parameters,
    \beq
    \begin{split}
        \delta_{p}&:=\delta_{m_{p},l_{p},\tau_{p}}\\
        \epsilon_{p}&:=\epsilon_{m_{p},l_{p},\tau_{p}}\\
        D_{p}&:=D_{m_{p},l_{p},\tau_{p}}
    \end{split}
    \eeq
    By Proposition~\ref{prop:eebound}, the entanglement entropy is bounded as
    \beq
    S(\phi,[0,n])\leqslant \log(3D_{\varphi}D_{1})-\sum_{p=1}^{\infty}\gamma_{p}^{2}\log\frac{\gamma_{p}^{2}}{3D_{p+1}},\quad\gamma_{p}:=\frac{\epsilon_{p}}{1-\nu_{0}-\delta_{p}}+\delta_{p}
    \eeq
    where $\nu_{0}:=|\,|0\ra-|\varphi\ra|\leqslant \frac{1}{2}$ and $D_{\phi}:=\SR(|\varphi\ra)$. Recall that $\gamma_{p}$ measures the norm distance between $|0\ra$ and $K_{p}|\varphi\ra$ (see Eq.~\eqref{eq:gammap}). 
    
    Below we choose $\{m_{p},l_{p},\tau_{p}\}$ such that
    \beq
    \gamma_{p}\leqslant\frac{1}{p}
    \eeq
    which can be satisfied if
    \beq\label{eq:AGSP_error}
    \begin{split}
        \delta_{p}&\leqslant \frac{1}{3p}\\
        \frac{\epsilon_{p}}{1-\nu_{0}-\delta_{p}}&\leqslant \frac{2}{3p}
    \end{split}
    \eeq
    Note the second inequality can be achieved if $\epsilon_{p}\leqslant \frac{1}{9p}$. To obtain AGSP with Eq.~\eqref{eq:AGSP_error}, we require,
    \beq
    \begin{split}
        \frac{304}{\Delta}\sqrt{\frac{g_0}{\lambda}}e^{-\frac{\lambda}{2}(\tau_{p}-8g_{0})}&\leqslant\frac{1}{6p}\\
        \frac{16g_{0}l_{p}^{-\bar{\fa}}}{\Delta}&\leqslant\frac{1}{6p}
    \end{split}
    \eeq
    This can be solved by
    \beq\label{eq:AGSP_tau_l}
    \begin{split}
        \tau_{p}&\geqslant8g_{0}+\frac{2}{\lambda}\log(\frac{1824p\sqrt{g_{0}}}{\Delta\sqrt{\lambda}})\\
        l_{p}&\geqslant(\frac{96g_{0}p}{\Delta})^{\frac{1}{\bar{\fa}}}
    \end{split}
    \eeq
    For $\epsilon_{p}\leqslant\frac{1}{9p}$, we have
    \beq
    2\exp(-\frac{m}{4}\sqrt{\frac{3\Delta}{\tau+4g_{0}}})\leqslant\frac{1}{9p}
    \eeq
    Combining with Eq.~\eqref{eq:AGSP_tau_l},
    \beq
    m_{p}\geqslant\frac{4\log(18p)}{\sqrt{3\Delta}}\sqrt{12g_{0}+\frac{2}{\lambda}\log(\frac{1824p\sqrt{g_{0}}}{\Delta\sqrt{\lambda}})}
    \eeq
    Under this choice, the Schmidt-rank is bounded by
    \beq
    \log(3D_{p})\leqslant b_{2}\log^{\frac{5}{2}}(p)
    \eeq
    where $b_{2}$ is a constant depending only on $\{k,g_{0},\bar{\fa},\Delta,d\}$. Thus, by Proposition~\ref{prop:eebound}, we conclude
    \beq\label{eq:series_area_law}
    S(\phi|_{[0,n]})\leqslant \log(3D_{\varphi}D_{1})+\sum_{p=1}^{\infty}\frac{1}{p^{2}}(\log(p^{2})+b_{2}\log^{\frac{5}{2}}(p))\leqslant \log(D_{\varphi})+S_{0}
    \eeq
    where $S_{0}$ is a constant depending only on $\{k,g_{0},\Delta,d,\bar{\fa}\}$ and we have noticed that both $\sum_{p=1}^{\infty}p^{-2}\log(p)$ and $\sum_{p=1}^{\infty}p^{-2}\log^{\frac{5}{2}}(p)$ converge.
\end{proof}

\subsection{Comparison with previous works} \label{subapp: comparison}

As mentioned before, our proof is largely inspired by Ref.~\cite{Kuwahara2019}. However, there are important technical differences between their proof and ours, which will be summarized below.

The most important difference is that our proof works for locally unique gapped ground states (\eg systems with spontaneously broken discrete symmetries). So we do not need to assume the uniqueness of gapped ground states as in Ref.~\cite{Kuwahara2019}.

The second difference is that some tools employed in Ref.~\cite{Kuwahara2019} are no longer available in infinite-size systems. For example, when establishing their Lemma 3 and Lemma 4 (\ie the finite-size counterpart of our Lemma \ref{lemma:spec_stability}), the Weyl inequality on operator spectrum is used. This inequality is not available for general self-adjoint operators on infinite-dimensional Hilbert spaces (\eg when this operator has a continuous spectrum). In this case, an alternative tool, called resolvent (\ie Green's function) becomes necessary for spectral analysis. Indeed, resolvent analysis plays an important role through out our proof.

Furthermore, in the proof, many auxiliary Hamiltonians are constructed (\eg the truncated Hamiltonian $H_{t}$ and effective Hamiltonian $\tH$). We note that in Ref.~\cite{Kuwahara2019} the uniqueness of ground states of these auxiliary Hamiltonians is often implicitly assumed in their proof. We carefully prove the uniqueness of ground states in, \eg Lemma \ref{lemma:spec_stability} and Proposition~\ref{prop:unigapped_gs}.

In addition, a key ingredient in the proof of the area law in Ref. \cite{Kuwahara2019} is Lemma 14 therein. In the proof of Lemma 14, it is assumed that the state defined in Eq. (S.242) there always takes the form of Eqs. (S.244) and (S.241) there. However, we find this is not always the case, so the statement and proof of Lemma 14 should be modified. Our modified version is our Lemma \ref{lemma:gap_exp_value}.

On the other hand, our approach is also different from the one in Ref. \cite{Ukai2024}, which also proves Theorem \ref{thm:area_law}. The approach in Ref. \cite{Ukai2024} is based on a generalization of Hasting's factorization lemma \cite{Hastings2007}, while our approach is based on AGSP, which can potentially lead to a better estimation and efficient algorithms for the ground states \cite{Arad2013, Landau2013AGSP}.

\section{Proof of Theorem \ref{thm:cut_off}}\label{sec:proof_cut_off}

In Appendix \ref{sec:area_law}, we have proved the entanglement area law for locally unique gapped ground states of an adimissible Hamiltonian in an infinite quantum spin chain, assuming that Theorem \ref{thm:cut_off} is correct. In this appendix, we prove this theorem.

Before diving into the details, let us introduce several new notations. We first denote the spectral projections of $h_{s}$ by
\beq
\Pi_{I}^{(s)}=\int_{\lambda\in I}\dd\mu^{h_{s}}(\lambda)
\eeq
where $I\subseteq\R$ is a Borel subset\footnote{Recall that a Borel subset of $\R$ is a subset which can be obtained from open intervals by taking countable unions, countable intersections and compliments.}.
In particular, we write $\Pi_{<x}^{(s)}$ (resp. $\Pi^{(s)}_{\leqslant x}$) for $\Pi_{(-\infty,x)}^{(s)}$ (resp. $\Pi_{(-\infty,x]}^{(s)}$). We also define $\Pi_{>x}^{(s)}$ and $\Pi_{\geqslant x}^{(s)}$ in a similar way.

For the truncated Hamiltonian $H_{t}'$ and the effective (\ie cutoff) Hamiltonian $\tilde{H}_t'$, we define $\Pi_{I}$ and $\tilde{\Pi}_{I}$ as follows:
\beq
\begin{split}
    \Pi_{I}&=\int_{\lambda\in I}\dd\mu^{H_{t}'}(\lambda)\\
    \tilde{\Pi}_{I}&=\int_{\lambda\in I}\dd\mu^{\tilde{H}_{t}'}(\lambda)
\end{split}
\eeq
where $I\subseteq \R$ is again a Borel subset.

Intuitively, if $\tau$ is large enough, then we should expect that $\tilde{H}_{t}$ behaves like $H_{t}$ and their ground states must have large overlap. The following two propositions are fundamental to formalize this intuition by spectral projections.
\begin{proposition}\label{prop:norm_spec_proj}
    The overlap between projection $\Pi_{\geqslant E'}^{(s)}$ and $\Pi_{\leqslant E}$ is bounded from above by
    \beq\label{eq:norm_proj}
    ||\Pi_{>E'}^{(s)}\Pi_{\leqslant E}||\leqslant\frac{4e^{3/2}}{e-1}e^{-\lambda(\delta E_{s}'-\delta E-8g_{0})},
    \eeq
    with
    \beq
    \lambda:=\frac{1}{8g k+16g_{0}},
    \eeq
    where $\delta E'_{s}:=E'-E_{s,0}$, $\delta E:=E-E_{t,0}$ with $E_{s,0}$ and $E_{t,0}$ the ground state energy of $h_{s}$ and $H_{t}'$, respectively. And $\tau_{s}:=E_{s,0}+\tau$. A similar result for $\tilde{\Pi}_{<E}$ is given by
    \beq\label{eq:tilde_norm_proj}
    ||\Pi_{>E'}^{(s)}\tilde{\Pi}_{<E}||\leqslant\frac{4e^{3/2}}{e-1}e^{-\lambda'(\min\{E',\tau_{s}\}-E_{s,0}-\delta \tilde{E}-8g_{0})}
    \eeq
    where $\delta\tilde{E}:=E-\tilde{E}_{t,0}$ with $\tilde{E}_{t,0}$ being the ground state energy of $\tH$ and
    \beq
    \lambda':=\min\{\frac{1}{224g_{0}},\frac{1}{8gk}\}
    \eeq
\end{proposition}
\begin{proposition}\label{prop:Hamiltonian_norm_difference}
    For $\tau>0$,
    \beq
    ||(H_{t}'-\tilde{H}_{t}')\Pi_{<E}||\leqslant\frac{27(q+2)}{\lambda}e^{-\lambda(\tau-\delta E-8g_{0})}
    \eeq
    where $\delta E:=E-E_{t,0}$.
\end{proposition}
Proposition \ref{prop:norm_spec_proj} and  \ref{prop:Hamiltonian_norm_difference} are the infinite-dimensional generalizations of proposition 8 and 9 of Ref.~\cite{Kuwahara2019}.
The proof of Proposition \ref{prop:norm_spec_proj} is deferred to Sec.~\ref{sec:proof_norm_spec_proj_1} and \ref{sec:proof_norm_spec_proj_2}. Below we prove Prop.~\ref{prop:Hamiltonian_norm_difference} assuming Prop.~\ref{prop:norm_spec_proj}.
\begin{proof}[Proof of Proposition~\ref{prop:Hamiltonian_norm_difference}]
    Note that
    \beq
        (H_{t}'-\tilde{H}_{t}')\Pi_{<E}= \sum_{s=0}^{q+1}(h_{s}-\tau_{s})\Pi_{\geqslant \tau_{s}}^{(s)}\Pi_{<E}
    \eeq
    For any $y>0$, we can write
    \beq
    \sum_{s=0}^{q+1}(h_{s}-\tau_{s})\Pi_{\geqslant \tau_{s}}^{(s)}\Pi_{<E}=\sum_{s=0}^{q+1}\sum_{j=0}^{\infty}(h_{s}-\tau_{s})\Pi_{[\tau_{s}+jy,\tau_{s}+(j+1)y)}^{(s)}\Pi_{<E}
    \eeq
    According to Prop.~\ref{prop:norm_spec_proj}, 
    \beq
    ||\Pi_{[\tau_{s}+jy,\tau_{s}+(j+1)y)}^{(s)}\Pi_{<E}||\leqslant\frac{4e^{3/2}}{e-1}e^{-\lambda (jy+\tau-\delta E-8g_{0})}
    \eeq
    As a consequence, we have
    \beq
    \begin{split}
        ||(H_{t}'-\tilde{H}_{t}')\Pi_{<E}||&\leqslant \sum_{s=0}^{q+1}\sum_{j=0}^{\infty}||(h_{s}-\tau_{s})\Pi_{[\tau_{s}+jy,\tau_{s}+(j+1)y)}^{(s)}\Pi_{<E}||\\
        &\leqslant\frac{4e^{3/2}}{e-1}\sum_{s=0}^{q+1}\sum_{j=0}^{\infty}(j+1)ye^{-\lambda (\tau+jy-\delta E-8g_{0})}\\
        &=(q+2)\frac{4e^{3/2}}{e-1}e^{-\lambda(\tau-\delta E-8g_{0})}\frac{ye^{2\lambda y}}{(e^{\lambda y}-1)^{2}}
    \end{split}
    \eeq
    Setting $y=\lambda^{-1}$, we obtain
    \beq
     ||(H_{t}'-\tilde{H}_{t}')\Pi_{<E}||\leqslant(q+2)\frac{4e^{3/2}}{e-1}\frac{e^{2}}{\lambda(e-1)^{2}}e^{-\lambda(\tau-\delta E-8g_{0})}\leqslant\frac{27(q+2)}{\lambda}e^{-\lambda(\tau-\delta E-8g_{0})}
    \eeq
\end{proof}

Next, we show $\tH$ is uniquely gapped for sufficiently large $\tau$. The key is the following lemma, which is a generalization of lemma 15 of Ref.~\cite{Kuwahara2019}.
\begin{lemma}\label{lemma:lower_bound_energy}
For any $E\geqslant \tilde{E}_{t,0}$ and a state $|\psi\ra$ satisfying $\tilde{\Pi}_{\leqslant E}|\psi\ra=|\psi\ra$ and $\la 0_{t}|\psi\ra=0$, we have
\beq
\la \psi|\tH|\psi\ra\geqslant E_{\perp}
\eeq
where
\beq
\begin{split}
    E_{\perp}&:=\Delta_{t}(1-\kappa(E))^{2}-4g_{0}\kappa(E)(1+\kappa(E))(q+1)\\
    \kappa(E)&:=\sum_{s=0}^{q+1}||\Pi^{(s)}_{>\tau_{s}}\tilde{\Pi}_{\leqslant E}||
\end{split}
\eeq
\end{lemma}
\begin{proof}
    Let us define
    \beq
    \begin{split}
        P_{s}&:=\Pi^{(s)}_{\leqslant \tau_{s}}\\
        Q_{s}&:=1-P_{s}
    \end{split}
    \eeq
    with $s=0,1,\dots,q+1$. We also define $P^{(-1)}:=1$ and
    \beq
    \begin{split}
        P^{(m)}&:=P_{0}P_{1}\dots P_{m}\\
        Q^{(m)}&:=1-P^{(m)}
    \end{split}
    \eeq
    By construction, we have
    \beq\label{eq:restricted_equality}
    \tH P^{(q+1)}=H_{t}'P^{(q+1)}
    \eeq
    Note that 
    \beq
    P^{(q+1)}-1=\sum_{s=0}^{q+1}(P^{(s)}-P^{(s-1)})=-\sum_{s=0}^{q+1}P^{(s-1)}Q^{(s)}
    \eeq
    Consequently, for any $|\psi\ra$ with $\tilde{\Pi}_{\leqslant E}|\psi\ra=|\psi\ra$,
    \beq\label{eq:difference_proj}
    \begin{split}
        ||P^{(q+1)}|\psi\ra-|\psi\ra||&\leqslant\sum_{s=0}^{q+1}||P^{(s)}Q^{(s)}|\psi\ra||\\
        &\leqslant \sum_{s=0}^{q+1}||Q^{(s)}|\psi\ra||\\
        &=\sum_{s=0}^{q+1}||Q^{(s)}\tilde{\Pi}_{\leqslant E}|\psi\ra||\\
        &\leqslant \sum_{s=0}^{q+1}||\Pi_{>\tau_{s}}^{(s)}\tilde{\Pi}_{\leqslant E}||=:\kappa(E)
    \end{split}
    \eeq
    As a result,
    \beq
    \begin{split}
            \la\psi|\tH|\psi\ra\geqslant \la\psi|P^{(q+1)}\tH P^{(q+1)}|\psi\ra-|\la\psi|P^{(q+1)}\tH Q^{(q+1)}|\psi\ra|\\-|\la\psi|Q^{(q+1)}\tH P^{(q+1)} |\psi\ra|+\la\psi|Q^{(q+1)}\tH Q^{(q+1)}|\psi\ra
    \end{split}
    \eeq
    To proceed, we need to lower bound each term on the right-hand side.

    For the first term,
    \beq
    \begin{split}
        \la\psi|P^{(q+1)}\tH P^{(q+1)}|\psi\ra&=\la\psi|P^{(q+1)}H_{t}'P^{(q+1)}|\psi\ra\\
        &=\la\psi|P^{(q+1)}\Pi_{\geqslant\Delta_{t}}H_{t}'\Pi_{\geqslant\Delta_{t}}P^{(q+1)}|\psi\ra\\
        &\geqslant \Delta_{t} ||\Pi_{\geqslant\Delta_{t}}P^{(q+1)}|\psi\ra||^{2}
    \end{split}
    \eeq
    where we have used Eq.~\eqref{eq:restricted_equality} in the first line and the fact that $H_{t}|0_{t}\ra=0$ in the second line. From $\la 0_{t}|\psi\ra=0$ we have $\Pi_{\geqslant\Delta_{t}}|\psi\ra=|\psi\ra$ since $H_{t}'$ is non-degenerate, hence
    \beq
    \begin{split}
            ||\Pi_{\geqslant\Delta_{t}}P^{(q+1)}|\psi\ra||&\geqslant ||\Pi_{\geqslant\Delta_{t}}|\psi\ra||-||\Pi_{\geqslant\Delta_{t}}(P^{(q+1)}-1)|\psi\ra||\\
            &\geqslant 1-||(P^{(q+1)}-1)|\psi\ra||\\
            &\geqslant 1-\kappa
    \end{split}
    \eeq
    In summary,
    \beq
     \la\psi|P^{(q+1)}\tH P^{(q+1)}|\psi\ra\geqslant \Delta_{t}(1-\kappa)^{2}
    \eeq

    For the second term, we note that $[\tilde{h}_{s},P^{(q+1)}]=[\tilde{h}_{s},Q^{(q+1)}]=0$ while $P^{(q+1)}Q^{(q+1)}=0$, we thus conclude that
    \beq
    \begin{split}
        |\la\psi|P^{(q+1)}\tH Q^{(q+1)}|\psi\ra|&\geqslant -\sum_{s=0}^{q}|\la\psi|P^{(q+1)}h_{s,s+1}Q^{(q+1)}|\psi\ra|\\
        &\geqslant -\sum_{s=0}^{q}||h_{s,s+1}P^{(q+1)}|\psi\ra||\cdot ||Q^{(q+1)}|\psi\ra||\\
        &\geqslant -\sum_{s=0}^{q}||h_{s,s+1}||\cdot ||(P^{(q+1)}-1)|\psi\ra||\\
        &\geqslant -2g_{0}(q+1)\kappa
    \end{split}
    \eeq
    In the second line, we have used Cauchy-Schwartz inequality. We have also used $||h_{s,s+1}||\leqslant 2g_{0}$ and Eq.~\eqref{eq:difference_proj} to derive the last line. The same bound holds for the third term,
    \beq
     |\la\psi|Q^{(q+1)}\tH P^{(q+1)}|\psi\ra|\geqslant -2g_{0}(q+1)\kappa
    \eeq
    For the last term, we have
    \beq
    \la\psi|Q^{(q+1)}\tH Q^{(q+1)}|\psi\ra\geqslant \tilde{E}_{t,0}||Q^{(q+1)}|\psi\ra||^{2}\geqslant -4g_{0}(q+1)\kappa^{2}
    \eeq
    To see this, we apply Eq.~\eqref{eq:cut_off_gs_energy} to
    \beq
    \tilde{E}_{t,0}\geqslant\sum_{s=0}^{q+1}E_{s,0}-\sum_{s=0}^{q}||h_{s,s+1}||=-4(q+1)g_{0}
    \eeq

    We complete the proof by combining these 4 pieces.
\end{proof}

Now we can complete the proof for ground-state non-degeneracy for $\tH$ with sufficiently large $\tau$.
 \begin{proposition}\label{prop:unigapped_gs}
     The effective Hamiltonian $\tH$ is uniquely gapped if
     \beq\label{eq:min_cut_off}
     \tau\geqslant \max\{16g_{0}+\frac{1}{\lambda'}\log(\frac{176g_{0}(q+1)(q+2)}{\Delta_{t}}),8g_{0}+\frac{1}{\lambda}\log(\frac{432(q+2)}{\lambda\Delta_{t}})\}
     \eeq
     where 
     \beq
     \begin{split}
         \lambda&:=\frac{1}{8gk+16g_{0}}\\
         \lambda'&:=\min\{\frac{1}{224g_{0}},\frac{1}{8gk}\}
     \end{split}
     \eeq
 \end{proposition}
 \begin{proof}
     Assuming the contrary. If $\tilde{E}_{t,0}$ is contained in a continuous spectrum, there exists an $\epsilon_{0}>0$ such that for all $0<\epsilon<\epsilon_{0}$, both of $\tilde{\Pi}_{[\tilde{E}_{t,0}+\epsilon,\tilde{E}_{t,0}+2\epsilon]}$ and $\tilde{\Pi}_{[\tilde{E}_{t,0}+3\epsilon,\tilde{E}_{t,0}+4\epsilon]}$ have nonzero images. Choose $|\psi_{1}\ra\in\im(\tilde{\Pi}_{[\tilde{E}_{t,0}+\epsilon,\tilde{E}_{t,0}+2\epsilon]})$ and $|\psi_{2}\ra\in\im(\tilde{\Pi}_{[\tilde{E}_{t,0}+3\epsilon,\tilde{E}_{t,0}+4\epsilon]})$, then it is easy to see $\la\psi_{1}|\psi_{2}\ra=0$. Let
     \beq
     |\psi\ra:=a_{1}|\psi_{1}\ra+a_{2}|\psi_{2}\ra
     \eeq
     The coefficients $a_{1},a_{2}$ are chosen such that $\la 0_{t}|\psi\ra=0$ and $\la\psi|\psi\ra=1$. Meanwhile, $\tilde{\Pi}_{\leqslant E}|\psi\ra=|\psi\ra$ for any $E>\tilde{E}_{t,0}+4\epsilon$ by construction.

     A similar construction for $|\psi\ra$ holds if $\tH$ is gapped with multiple ground states, where we replace $|\psi_{1}\ra,\,|\psi_{2}\ra$ with orthogonal ground states and set $\epsilon=0$ in this case.
     In either cases (with $0\leqslant \epsilon<\epsilon_{0}$), we have
     \beq\label{eq:upper_bound_energy1}
     \la\psi|\tH|\psi\ra\leqslant \tilde{E}_{t,0}+4\epsilon
     \eeq

     Choosing $\epsilon<\min\{\epsilon_{0},\frac{1}{8}\Delta_{t}\}$ and noting that $\Delta_{t}\leqslant2g\leqslant 2g_{0}$ (by Lemma \ref{lemma:spec_stability}, Lemma \ref{lemma:upbound_origgap} and the convention of $g$ and $g_0$ such that $g\leqslant g_0$), we get $\tilde{E}_{t,0}+8g_{0}>\tilde{E}_{t,0}+4\epsilon$. According to lemma \ref{lemma:lower_bound_energy}, under the condition $\tilde{E}_{t,0}+8g_{0}>\tilde{E}_{t,0}+4\epsilon$, we can conclude,
     \beq\label{eq:lower_bound_energy1}
     \la\psi|\tH|\psi\ra>E_{\perp}
     \eeq
     where
    \beq
   \begin{split}
    E_{\perp}&:=\Delta_{t}(1-\kappa)^{2}-4g_{0}\kappa(1+\kappa)(q+1)\\
    \kappa(\tilde{E}_{t,0}+8g_{0})&:=\sum_{s=0}^{q+1}||\Pi^{(s)}_{>\tau_{s}}\tilde{\Pi}_{\leqslant \tilde{E}_{t,0}+8g_{0}}||
    \end{split}
    \eeq
    By Prop.~\ref{prop:norm_spec_proj} and Eq.~\eqref{eq:min_cut_off}
    \beq
    ||\Pi^{(s)}_{>\tau_{s}}\tilde{\Pi}_{\leqslant \tilde{E}_{t,0}+8g_{0}}||\leqslant\frac{4e^{3/2}}{e-1}e^{-\lambda'(\tau-16g_{0})}\leqslant\frac{4e^{3/2}}{e-1} \frac{\Delta_{t}}{176g_{0}(q+1)(q+2)}<\frac{\Delta_{t}}{16(q+1)(q+2)g_{0}}
    \eeq
    Consequently,
    \beq
    \kappa(\tilde{E}_{t,0}+8g_{0})<\frac{\Delta_{t}}{16(q+1)g_{0}}\leqslant\frac{1}{24}
    \eeq
    where we have used $\Delta_{t}\leqslant2g\leqslant 2g_{0}$. In this case
    \beq
    E_{\perp}=\Delta_{t}(1-\kappa)^{2}-4g_{0}\kappa(1+\kappa)(q+1)>\frac{379}{576}\Delta_{t}>\frac{1}{2}\Delta_{t}>0,
    \eeq
    where we have used that $q\geqslant 2$. 
    
    On the other hand, since $H_{t}'-\tH\geqslant0$, we have $\tilde{E}_{t,0}\leqslant E_{t,0}=0$. Therefore, Eq.~\eqref{eq:lower_bound_energy1} contradicts Eq.~\eqref{eq:upper_bound_energy1}. 
 \end{proof}

    Below we always assume $\tau$ satisfies Eq.~\eqref{eq:min_cut_off} and $q$ is large enough such that $\tau\geqslant 8g_{0}$. We then denote the uniquely gapped ground state of $\tH$ by $|\tilde{0}_{t}\ra$.
    \begin{remark}
        In Ref.~\cite{Kuwahara2019}, it is implicitly assumed that $\tH$ has a unique ground state without proving it. Here we have filled this gap.
    \end{remark}
    
    Our next task is to bound its energy gap $\tilde{\Delta}_{t}$ from below.
    For later convenience, we first estimate the upper bound of the spectral gap $\tilde{\Delta}_{t}$ for the effective Hamiltonian $\tH$. We start from the following lemmas.
    \begin{lemma}\label{lemma:upper_bound_eff_gap}
        The spectral gap of $\tilde{H}_{t}'$ is bounded from above by 
        \beq
        \tilde{\Delta}_{t}\leqslant 8g_{0}
        \eeq
        where $g_{0}$ is defined by Eq.~\eqref{eq:g0}.
        \end{lemma}
        \begin{proof}
            First we focus on the block $B_{0}'$, and we then define $H_{\Lambda_{0}}:=\tH-\tilde{h}_{0}-h_{0,1}$, which is supported only on $\Lambda_{0}:=\Lambda\setminus B_{0}'$. We denote $E_{\Lambda_{0}}:=\inf(\sigma(H_{\Lambda_{0}}))$ and by the spectral theorem of self-adjoint operators, for any $\epsilon'>0$, there exists a state $|E_{\Lambda_{0}}^{\epsilon'}\ra$ such that
            \beq
            E_{\Lambda_{0}}\leqslant \la E_{\Lambda_{0}}^{\epsilon'}|H_{\Lambda_{0}}|E_{\Lambda_{0}}^{\epsilon'}\ra<E_{\Lambda_{0}}+\epsilon'
            \eeq

            Let $E_{0,0}:=\inf(\sigma(h_{0}))$, we claim that there exists $0<\epsilon<4g$ such that there are two (nonzero) states $|\psi_{0}\ra$ and $|\psi_{1}\ra$, which satisfy
            \beq
            \begin{split}
                |\psi_{0}\ra&\in\im(\Pi^{(0)}_{[E_{0,0},E_{0,0}+\epsilon)})\\
                |\psi_{1}\ra&\in\im(\Pi^{(0)}_{[E_{0,0}+\epsilon,E_{0,0}+4g]})\\
                \la\psi_{0}|\psi_{1}\ra&=\la\psi_{0}|h_{0}|\psi_{1}\ra=0
            \end{split}
            \eeq
            where $\Pi^{(0)}$ is the spectral projector of $h_{0}$.
            Assuming the contrary, if there are no such two states, then $h_{0}$ is uniquely gapped with energy gap larger than $4g$, contradicting lemma \ref{lemma:upbound_origgap} since $h_{0}$ is an admissible Hamiltonian (with an extra $2$ due to folding trick).
            Now we consider the quantum state
            \beq
            |\psi\ra=(a_{0}|\psi_{0}\ra+a_{1}|\psi_{1}\ra)\otimes |E_{\Lambda_{0}}^{\epsilon'}\ra
            \eeq
            The coefficients $a_{0},a_{1}$ are chosen such that $\la\psi|\tilde{0}_{t}\ra=0$.

            By Eq.~\eqref{eq:min_cut_off} and the assumption that $q$ is large, we know $\tau\geqslant8g_{0}>4g$. Therefore, we have $\tilde{h}_{0}|\psi_{i}\ra=\tilde{h}_{0}\Pi^{(0)}_{<E_{0,0}+\tau}|\psi_{i}\ra=h_{0}\Pi^{(0)}_{<E_{0,0}+\tau}|\psi_{i}\ra=h_{0}|\psi_{i}\ra$ for $i=0,1$. Consequently
            \beq
            \begin{split}
                E_{0,0}\leqslant\la\psi_{0}|\tilde{h}_{0}|\psi_{0}\ra&< E_{0,0}+\epsilon\\
                E_{0,0}+\epsilon\leqslant\psi_{1}|\tilde{h}_{0}|\psi_{1}\ra&\leqslant E_{0,0}+4g\\
            \end{split}
            \eeq
            As a result,
            \beq
            \begin{split}
                \la\psi|\tilde{H}_{t}'|\psi\ra&\leqslant(|a_{0}|^{2}(E_{0,0}+\epsilon)+|a_{1}|^{2}(E_{0,0}+4g))+\epsilon'+E_{\Lambda_{0}}+\la\psi|h_{0,1}|\psi\ra\\&\leqslant E_{0,0}+4g+E_{\Lambda_{0}}+||h_{0,1}||+\epsilon'\\
                &\leqslant E_{0,0}+E_{\Lambda_{0}}+4g+2g_{0}+\epsilon'
            \end{split}
            \eeq
            where we have used $||h_{0,1}||\leqslant 2g_{0}$ (see Eq. \eqref{eq:bounding_interaction}). On the other hand,
            \beq
            \la\tilde{0}_{t}|\tilde{H}_{t}|\tilde{0}_{t}\ra\geqslant E_{0,0}+E_{\Lambda_{0}}-\epsilon'-||h_{0,1}||\geqslant E_{0,0}+E_{\Lambda_{0}}-2g_{0}-\epsilon'
            \eeq
            Therefore,
            \beq
            \tilde{\Delta}_{t}\leqslant\la\psi|\tilde{H}_{t}'|\psi\ra-\la\tilde{0}_{t}|\tilde{H}_{t}|\tilde{0}_{t}\ra\leqslant 4g+4g_{0}+2\epsilon'\leqslant8g_{0}+2\epsilon'
            \eeq
            since $\epsilon'$ can be arbitrarily small, we obtain
            \beq
            \tilde{\Delta}_{t}\leqslant8g_{0}
            \eeq
        \end{proof}

         Now we bound $\tilde{\Delta}_{t}$ from below. By definition and lemma \ref{lemma:upper_bound_eff_gap}, $\forall\,\epsilon>0$, $\tilde{\Pi}_{[\tilde{E}_{t,0}+\tilde{\Delta}_{t},\tilde{E}_{t,0}+\tilde{\Delta}_{t}+\epsilon)}$ has a nonzero image, where $\tilde{E}_{t,0}:=\inf(\sigma(\tH))$. Let us choose $\epsilon<\frac{1}{2}\tilde{\Delta}_{t}$ for later convenience, and\footnote{Note that this $|\psi_{1}\ra$ and $|\psi\ra$ defined in Eq.~\eqref{eq:low_lying_excitation2} have nothing to do with $|\psi_{1}\ra$ and $|\psi\ra$ in previous proof.} $|\psi_{1}\ra\in\im(\tilde{\Pi}_{[\tilde{E}_{t,0}+\tilde{\Delta}_{t},\tilde{E}_{t,0}+\tilde{\Delta}_{t}+\epsilon)})$. This ensures $\la\tilde{0}_{t}|\psi_{1}\ra=0$. Then there is an orthogonal decomposition,
         \beq
         |0_{t}\ra=a_{0}|\tilde{0}_{t}\ra+a_{1}|\psi_{1}\ra+\sqrt{1-|a_{0}|^{2}-|a_{1}|^{2}}|\psi_{2}\ra
         \eeq
         where $a_{0},a_{1}\in\bbC$ and $|\psi_{2}\ra$ is orthogonal to $|\tilde{0}_{t}\ra$ and $|\psi_{1}\ra$.
         Consider
         \beq\label{eq:low_lying_excitation2}
         |\psi\ra:=\frac{1}{\sqrt{|a_{0}|^{2}+|a_{1}|^{2}}}(\bar{a}_{1}|\tilde{0}_{t}\ra-\bar{a}_{0}|\psi_{1}\ra)
         \eeq
         where $\bar{z}$ is the complex conjugation for any $z\in\bbC$. This state $|\psi\ra$ satisfies $\la0_{t}|\psi\ra=0$. Thus by repeating the calculation in the proof of Proposition \ref{prop:unigapped_gs}, we find, for $\epsilon\leqslant\frac{\log(2)}{\lambda'}$,
        \beq
        \kappa(\tilde{E}_{t,0}+8g_{0}+\epsilon)\leqslant\frac
        {\Delta_{t}}{16g_{0}(q+1)}e^{\lambda'\epsilon}\leqslant\frac
        {\Delta_{t}}{8g_{0}(q+1)}\leqslant \frac{1}{12},
        \eeq
        where Lemmas \ref{lemma:spec_stability} and \ref{lemma:upbound_origgap} have been used in the last inequality.
        
        The above inequalities lead to that
        \beq\label{eq:lower_bound_expectation}
        \la\psi|\tH|\psi\ra\geqslant\frac{41}{72}\Delta_{t}
        \eeq
        
        On the other hand, we apply Prop.~\ref{prop:Hamiltonian_norm_difference} with $\tau$ satisfying Eq.~\eqref{eq:min_cut_off} and $\delta E\to 0^{+}$, we obtain
        \beq\label{eq:norm_distance_1}
        ||(H_{t}'-\tH)|0_t\ra||=||\tH|0_t\ra||\leqslant
        \frac{27(q+2)}{\lambda}e^{-\lambda(\tau-8g_{0})}\leqslant\frac{\Delta_{t}}{16}
        \eeq
        
        Therefore, the lower bound for $\tilde{\Delta}_{t}$ can be obtained from the following lemma,
        \begin{lemma}\label{lemma:gap_exp_value}
            The spectral gap $\tilde{\Delta}_{t}$ is lower bounded by
            \beq\label{eq: bound on tilde Delta}
            \tilde{\Delta}_{t}\geqslant \la\psi|\tH|\psi\ra-\epsilon
            \eeq
            where $0<\epsilon<\min\{\frac{1}{2}\tilde{\Delta}_{t}
            ,\frac{\log(2)}{\lambda'}\}$ is used to define $|\psi\ra$ and $|\psi_{1}\ra$ above. Therefore,
            \beq
            \tilde{\Delta}_{t}>\frac{41}{72}\Delta_{t}>\frac{1}{2}\Delta_{t}
            \eeq
        \end{lemma}
        \begin{proof}
            Notice that $\la\tilde{0}_{t}|\tH|\psi_{1}\ra=0$ since
            \beq
            \begin{split}
                \la\tilde{0}_{t}|\tH|\psi_{1}\ra&=\la\tilde{0}_{t}|\tH\tilde{\Pi}_{[\tilde{E}_{t,0}+\tilde{\Delta}_{t}-\epsilon,\tilde{E}_{t,0}+\tilde{\Delta}_{t}+\epsilon]}|\psi_{1}\ra\\
                &=(\la\tilde{0}_{t}|\tilde{\Pi}_{[\tilde{E}_{t,0}+\tilde{\Delta}_{t}-\epsilon,\tilde{E}_{t,0}+\tilde{\Delta}_{t}+\epsilon]})\tH|\psi_{1}\ra\\
                &=0
            \end{split}
            \eeq
            where we have used $\epsilon<\frac{1}{2}\tilde{\Delta}_{t}$. We then have
            \beq
                \tilde{E}_{t,0}\leqslant\la\psi|\tH|\psi\ra&\leqslant \frac{|a_{0}|^{2}(\tilde{E}_{t,0}+\tilde{\Delta}_{t}+\epsilon)+|a_{1}|^{2}\tilde{E}_{t,0}}{|a_{0}|^{2}+|a_{1}|^{2}}\leqslant(\tilde{E}_{t,0}+\tilde{\Delta}_{t}+\epsilon),
            \eeq
            which proves Eq. \eqref{eq: bound on tilde Delta}.
            
            On the other hand, $\la\psi|\tH|\psi\ra\geqslant\frac{41}{72}\Delta_{t}$, it follows that
            \beq\label{eq:lower_bound_effective_gap1}
            \tilde\Delta_t\geqslant\la\psi|\tilde H_t'|\psi\ra-\epsilon>\frac{41}{72}\Delta_t-\epsilon>\frac{\Delta_t}{2}-\epsilon
            \eeq
            As a consequence,
            \beq
            \tilde{\Delta}_{t}>\frac{1}{2}\Delta_{t},
            \eeq
            since $\epsilon$ can be arbitrarily small.
        \end{proof}
        We also obtain
        \begin{corollary}
            The first excited energy of $\tH$ is lower bounded as
            \beq
            \tilde{E}_{t,0}+\tilde{\Delta}_{t}\geqslant\frac{41}{72}\Delta_{t}
            \eeq
        \end{corollary}
        
        Finally, we estimate the norm distance of the ground state $|0_{t}\ra$ and the effective one $|\tilde{0}_{t}\ra$.
        \begin{lemma}\label{lemma:norm_distance_effective}
            For any $\tau$ satisfies
            \beq\label{eq:critical_cut_off}
            \tau\geqslant \max\{16g_{0}+\frac{1}{\lambda'}\log(\frac{176g_{0}(q+1)(q+2)}{\Delta_{t}}),8g_{0}+\frac{1}{\lambda}\log(\frac{6480(q+2)g_{0}}{\lambda\Delta_{t}^{2}})\}
            \eeq
            we have
            \beq
            ||\,|0_{t}\ra-|\tilde{0}_{t}\ra||\leqslant114\sqrt{\frac{(q+2)g_{0}}{\lambda\Delta_{t}^{2}}}e^{-\frac{\lambda}{2}(\tau-8g_{0})}
            \eeq
        \end{lemma}
        \begin{proof}
           Note that $\Delta_{t}\leqslant2g\leqslant 2g_{0}$ (due  to Lemmas \ref{lemma:spec_stability} and \ref{lemma:upbound_origgap}), so
           \beq
            8g_{0}+\frac{1}{\lambda}\log(\frac{432(q+2)}{\lambda\Delta_{t}})<8g_{0}+\frac{1}{\lambda}\log(\frac{6480(q+2)g_{0}}{\lambda\Delta_{t}^{2}})
           \eeq
           Therefore, any $\tau$ satisfing Eq.~\eqref{eq:critical_cut_off} automatically satisfies Eq.~\eqref{eq:min_cut_off}.
           
           Let us write
           \beq
           \begin{split}
               P_{0}&:=|0_{t}\ra\la0_{t}|\\
               \tilde{P}_{0}&:=|\tilde{0}_{t}\ra\la\tilde{0}_{t}|
           \end{split}
           \eeq
           We rewrite these projectors in terms of resolvents,
           \beq
           \begin{split}
               P_{0}&=\oint_{C}\frac{\dd z}{2\pi i}\frac{1}{z-H_{t}'}\\
               \tilde{P}_{0}&=\oint_{C}\frac{\dd z}{2\pi i}\frac{1}{z-\tH}
           \end{split}
           \eeq
           where the contour is $C:=\{z\in\bbC:|z-\tilde{E}_{t,0}|=r\}$ and $r:=\min\{\tilde{\Delta}_{t}-\frac{1}{3}\Delta_{t},\frac{1}{3}\Delta_{t}-\tilde{E}_{t,0}\}$. From lemma \ref{lemma:upper_bound_eff_gap}, $\tilde\Delta_{t}>\frac{41}{72}\Delta_{t}$ (see Lemma \ref{lemma:gap_exp_value}) and $\tilde{E}_{t,0}\leqslant 0$, we conclude that $\frac{17}{72}\Delta_{t}<r<8g_{0}$.
           
           We aim to bound $|\la 0_t|\tilde 0_t\ra|$ by bounding $||\tilde{P}_{0}P_{0}-P_{0}^{2}||$. To this end,
           \beq
           \tilde{P}_{0}P_{0}-P_{0}^{2}=\oint_{C}\frac{\dd z}{2\pi i}\frac{1}{z-\tH}(\tH-H_{t}')P_{0}\frac{1}{z-H_{t}'}
           \eeq
           For any $z\in C$, we note
           \beq
           \begin{split}
               ||(z-\tH)^{-1}||&\leqslant\max\{r^{-1},(\tilde{\Delta}_{t}-r)^{-1}\}\leqslant\frac{72}{17\Delta_{t}}<\frac{5}{\Delta_{t}}\\
               ||(z-H_{t}')^{-1}||&\leqslant\max\{(r+\tilde{E}_{t,0})^{-1},(\Delta_{t}-r-\tilde{E}_{t,0})^{-1}\}\leqslant\frac{6}{\Delta_{t}}\\
               ||(\tH-H_{t}')P_{0}||&\leqslant\frac{27(q+2)}{\lambda}e^{-\lambda(\tau-8g_{0})}
           \end{split}
           \eeq
           As a consequence, for $\tau$ satisfying Eq.~\eqref{eq:critical_cut_off},
           \beq
           ||\tilde{P}_{0}P_{0}-P_{0}^{2}||\leqslant \frac{810(q+2)}{\lambda\Delta_{t}^{2}}e^{-\lambda(\tau-8g_{0})}r<\frac{6480g_{0}(q+2)}{\lambda\Delta_{t}^{2}}e^{-\lambda(\tau-8g_{0})}<1
           \eeq
           where we have used $r<\tilde{\Delta}_{t}\leqslant8g_{0}$. Note that $||\tilde{P}_{0}P||=|\la0_{t}|\tilde{0}_{t}\ra|$, we have
           \beq
            |\la0_{t}|\tilde{0}_{t}\ra|\geqslant 1-\frac{6480g_{0}(q+2)}{\lambda\Delta_{t}^{2}}e^{-\lambda(\tau-8g_{0})}
           \eeq
           Let us choose the relative phase so that $\la0_{t}|\tilde{0}_{t}\ra\geqslant 0$, then we conclude
           \beq
           | \,|0_{t}\ra-|\tilde{0}_{t}\ra|\leqslant\sqrt{\frac{12960(q+2)g_{0}}{\lambda\Delta_{t}^{2}}}e^{-\frac{\lambda}{2}(\tau-8g_{0})}\leqslant114\sqrt{\frac{(q+2)g_{0}}{\lambda\Delta_{t}^{2}}}e^{-\frac{\lambda}{2}(\tau-8g_{0})}
           \eeq
        \end{proof}
        This completes the proof of Theorem \ref{thm:cut_off}.

        Below we prove Proposition \ref{prop:norm_spec_proj}. We divide the proof into two parts, based on the two statements in this proposition. In fact, the proof of the first part can be viewed as a simpler warm-up of the second part.

\subsection{Proof of Proposition \ref{prop:norm_spec_proj}: First part}\label{sec:proof_norm_spec_proj_1}

\begin{lemma}\label{lemma:norm_proj}
    For any normalized state $|\psi\ra$, we define
    \beq
    |\phi\ra:=\Pi_{>E'}^{(s)}\Pi_{\leqslant E}|\psi\ra
    \eeq
    Then,
    \beq
    || \,|\phi\ra||<\frac{4e^{3/2}}{e-1}e^{-\lambda(\la H_{t}'\ra_{\phi}-E)}
    \eeq
    where
    \beq
    \la H_{t}'\ra_{\phi}:=\begin{cases}
        0,\quad\text{if $|\phi\ra=0$}\\
        \frac{\la\phi| H_{t}'|\phi\ra}{||\,|\phi\ra||^{2}},\quad\text{\rm otherwise}
    \end{cases}
    \eeq
\end{lemma}
The proof of this lemma will be postponed to Sec.~\ref{sec:proof_norm_proj}. We also assume $|\phi\ra\not=0$ for at least one $|\psi\ra$, otherwise the lemma is trivially true.
Below we prove the first part of proposition \ref{prop:norm_spec_proj} assuming lemma \ref{lemma:norm_proj}.

First, we rewrite $H_{t}'$ as 
\beq
H_{t}'=h_{s}+(h_{s,s+1}+h_{s-1,s})+\delta H_{s},
\eeq
where we define $h_{-1,0}:=0$ and $h_{q+1,q+2}:=0$ and 
\beq
\delta H_{s}:=H_{t}'-(h_{s,s+1}+h_{s-1,s})-h_{s}.
\eeq
Next, we note
\beq
\begin{split}
    \la h_{s-1,s}+h_{s,s+1}\ra_{\phi}&\geqslant-||h_{s-1,s}||-||h_{s,s+1}||\geqslant -4g_{0},\\
    \la h_{s}\ra_{\phi}&:=\frac{\la\phi|h_s|\phi\ra}{|||\phi\ra||}=\frac{1}{||\,|\phi\ra||^{2}}\la\psi|\Pi_{\leqslant E}\Pi_{>E'}^{(s)}h_{s}\Pi_{>E'}^{(s)}\Pi_{\leqslant E}|\psi\ra> E'.
\end{split}
\eeq
Finally, we denote $E_{\Lambda_{s},0}:=\inf(\sigma(\delta H_{s}))$ and $E_{s,0}:=\inf(\sigma(h_{s}))$. For any $\epsilon>0$, we define normalized states $|E_{s}^{\epsilon}\ra$ and $|E_{\Lambda_{s}}^{\epsilon}\ra$ such that
\beq
\begin{split}
    ||h_{s}|E_{s,0}^{\epsilon}\ra-E_{s,0}|E_{s,0}^{\epsilon}\ra||&<\epsilon\\
    ||\delta H_{s}|E_{\Lambda_{s},0}^{\epsilon}\ra-E_{\Lambda_{s},0}|E_{\Lambda_{s},0}^{\epsilon}\ra||&<\epsilon
\end{split}
\eeq
Now consider $|\varphi^{\epsilon}\ra:=|E_{s,0}^{\epsilon}\ra\otimes |E_{\Lambda_{s},0}^{\epsilon}\ra$, then
\beq
\begin{split}
    0&=E_{t,0}\leqslant \la\varphi^{\epsilon}|H_{t}'|\varphi^{\epsilon}\ra\\
    &<2\epsilon+E_{s,0}+E_{\Lambda_{s},0}+||h_{s,s+1}||+||h_{s-1,s}||\\
    &\leqslant2\epsilon+E_{s,0}+E_{\Lambda_{s},0}+4g_{0}
\end{split}
\eeq
Since $\epsilon$ is arbitrary, we conclude that
\beq
E_{\Lambda_{s},0}\geqslant E_{t,0}-E_{s,0}-4g_{0}
\eeq
So we have 
\beq
\la \delta H_{s}\ra_{\phi}\geqslant E_{\Lambda_{s},0}\geqslant E_{t,0}-E_{s,0}-4g_{0}
\eeq
Collecting all above pieces we obtain that $\la H'_t\ra_\phi\geqslant E'+E_{t, 0}-E_{s, 0}-8g_0$. Combining this result with Lemma \ref{lemma:norm_proj}, we get
\beq
||\,|\phi\ra||\leqslant\frac{4e^{3/2}}{e-1}e^{-\lambda(\delta E_{s}'-\delta E-8g_{0})},
\eeq
where $\delta E'_s=E'-E_{s, 0}$ and $\delta E=E-E_{t, 0}$.

Therefore,
\beq
||\Pi_{<E'}^{(s)}\Pi_{\geqslant E}||=\sup_{|\psi\ra}||\,|\phi\ra||\leqslant\frac{4e^{3/2}}{e-1}e^{-\lambda(\delta E'_{s}-\delta E-8g_{0})}
\eeq
So the first part of Proposition \ref{prop:norm_spec_proj} has been proved.

\subsubsection{Proof of Lemma \ref{lemma:norm_proj}}\label{sec:proof_norm_proj}

The proof is based on the following lemma, which is proved in Appendix \ref{sec:proof_projected_local}.

\begin{lemma}\label{lemma:projected_local}
    Let $O_{s}$ be an operator supported on $B_{s}'$ such that $[O_{s},h_{s}]=0$, then
    \beq
    ||\Pi_{\geqslant E'}O_{s}\Pi_{\leqslant E}||<4||O_{s}||e^{-\lambda(E'-E)}
    \eeq
    where $\Pi_{\geqslant E'}$ and $\Pi_{\leqslant E}$ are spectral projections associated to $H_{t}'$, and 
    \beq
    \lambda:=\frac{1}{8gk+16g_{0}}
    \eeq
\end{lemma}
\begin{proof}[Proof of Lemma \ref{lemma:norm_proj}]
For any $x,y$, we note that
\beq
\begin{split}
    \la\phi|H_{t}'|\phi\ra&=\la\phi|\Pi_{\leqslant x}H_{t}'\Pi_{\leqslant x}|\phi\ra+\sum_{j=1}^{\infty}\la\phi|\Pi_{[x+(j-1)y,x+jy)}H_{t}'\Pi_{[x+(j-1)y,x+jy)}|\phi\ra\\
    &\leqslant x ||\Pi_{\leqslant x}|\phi\ra||^{2}+\sum_{j=1}^{\infty}(x+jy)||\Pi_{[x+(j-1)y,x+jy)}|\phi\ra||^{2}\\
    &=x||\,|\phi\ra||^{2}+y\sum_{j=1}^{\infty}j ||\Pi_{[x+(j-1)y,x+jy)}\Pi_{>E'}^{(s)}\Pi_{\leqslant E}|\psi\ra||^{2}\\
    &=x||\,|\phi\ra||^{2}+y\sum_{j=1}^{\infty}j ||\Pi_{[x+(j-1)y,x+jy)}\Pi_{>E'}^{(s)}\Pi_{\leqslant E}||^{2}
\end{split}
\eeq
We then make use of the lemma \ref{lemma:projected_local} since $[\Pi^{(s)}_{>E'},h_{s}]=0$, we have
\beq
||\Pi_{[x+(j-1)y,x+jy)}\Pi_{>E'}^{(s)}\Pi_{\leqslant E}||\leqslant ||\Pi_{\geqslant (x+(j-1)y}\Pi_{>E'}^{(s)}\Pi_{\leqslant E}||<4 e^{-\lambda(x+(j-1)y-E)}
\eeq
Choosing $y:=\frac{1}{2\lambda}$ and using that $\sum_{j=1}^{\infty}je^{-j}=\frac{e}{(e-1)^{2}}$, we obtain
\beq 
||\,|\phi\ra||^{2}\la H_{t}'\ra_{\phi}=\la\phi|H_{t}'|\phi\ra< x||\,|\phi\ra||^{2}+\frac{8e^{2}}{\lambda(e-1)^{2}}e^{-2\lambda(x-E)}
\eeq
Choosing $x=\la H_{t}'\ra-\frac{1}{2\lambda}$, we then conclude
\beq
||\,|\phi\ra||< \frac{4e^{3/2}}{e-1}e^{-\lambda(\la H_{t}'\ra_{\phi}-E)}
\eeq
\end{proof}

\subsubsection{Proof of Lemma \ref{lemma:projected_local}}\label{sec:proof_projected_local}

The following proof is based on the following lemma.
\begin{lemma}\label{lemma:p-body}
   Let $\delta_{H'_{t}}$ be the derivation defined by $H_{t}'$, then for any $n\in\mathbb{N}$ we have
   \beq
       ||\ad_{H_{t}'}^{n}(h_{s,s+1})||\leqslant (4gk)^{n}n!(2g_{0}),\quad s=0,1,...,q.
   \eeq
\end{lemma}
The proof of Lemma \ref{lemma:p-body} is postponed to Appendix~\ref{sec:p-body}.
\begin{proof}[Proof of Lemma \ref{lemma:projected_local}]
    For any $\nu\in\R$, we have
    \beq
    ||\Pi_{\geqslant E'}O_{s}\Pi_{\leqslant E}||=||\Pi_{\geqslant E'}e^{-\nu H_{t}'}e^{\nu H_{t}'}O_{s}e^{-\nu H_{t}'}e^{\nu H_{t}'}\Pi_{\leqslant E}||
    \eeq
    where $e^{-\nu H_{t}'}$ is the exponential of $H_{t}'$ defined by functional calculus. Note that
    \beq
    \begin{split}
        ||\Pi_{\geqslant E'}e^{-\nu H_{t}'}||&\leqslant e^{-\nu E'}\\
        ||e^{\nu H_{t}'}\Pi_{\leqslant E'}||&\leqslant e^{\nu E}
    \end{split}
    \eeq
    Therefore,
    \beq\label{eq:Ad_series}
    \begin{split}
        ||\Pi_{\geqslant E'}O_{s}\Pi_{\leqslant E}||&\leqslant e^{\nu(E-E')}||\Ad_{e^{\nu H_{t}'}}(O_{s})||\\
    &=e^{\nu(E-E')}||e^{\nu\ad_{H_{t}'}}(O_{s}||\\
    &\leqslant e^{\nu(E-E')}\sum_{m=0}^{\infty}\frac{\nu^{m}}{m!}||\ad_{H_{t}}^{m}(O_{s})||
    \end{split}
    \eeq
    where we have used $\ad_{X}(Y):=[X,Y]$ and $\Ad_{e^{X}}=e^{\ad_{X}}$ for any operators $X,Y$.

    Below we show by induction that 
    \beq\label{eq:ad_bound}
    ||\ad_{H_{t}'}^{m}(O_{s})||\leqslant (4gk+8g_{0})^{m}m! ||O_{s}||
    \eeq
    For $m=0$, we have $\ad_{H_{t}'}^{0}=\mathrm{id}$, so the above inequality holds trivially. Let us assume that this inequality is true for all $m\leqslant m_{0}$. For $m=m_{0}+1$,
    \beq
    &||\ad_{H_{t}'}^{m_{0}+1}(O_{s})||=||\ad_{H_{t}'}^{m_{0}}([h_{s,s+1}+h_{s-1,s},O_{s}])||\\&\leqslant ||\ad_{H_{t}'}^{m_{0}}((h_{s-1,s}+h_{s,s+1})O_{s})||+||\ad_{H_{t}'}^{m_{0}}(O_{s}(h_{s-1,s}+h_{s,s+1}))||
    \eeq
    where we have used the facts that $O_{s}$ is supported on $B_{s}'$ only and that $[h_{s},O_{s}]=0$. 
    
    Next, we employ the following identity
    \beq\label{eq:binomial_expansion}
    \ad_{A}^{m}(BC)=\sum_{m'=0}^{m}\ad_{A}^{m'}(B)\ad_{A}^{m-m'}(C)\begin{pmatrix}
        m\\
        m'
    \end{pmatrix}
    \eeq
    where 
    \beq
    \begin{pmatrix}
        m\\
        m'
    \end{pmatrix}:=\frac{m!}{(m')!(m-m')!}
    \eeq
    is the binomial coefficient. Eq.~\eqref{eq:binomial_expansion} can be proved by induction and the Lebniz rule: $\ad_{A}(BC)=\ad_{A}(B)C+B\ad_{A}(C)$. We have
    \beq
    \begin{split}
        ||\ad_{H_{t}'}^{m_{0}}((h_{s-1,s}+h_{s,s+1})O_{s})||+||\ad_{H_{t}'}^{m_{0}}(O_{s}(h_{s-1,s}+h_{s,s+1}))||\\ \leqslant2\sum_{m_{1}+m_{2}=m_{0}}||\ad_{H_{t}'}^{m_{1}}(h_{s-1,s}+h_{s,s+1})||\,||\ad_{H_{t}'}^{m_{2}}(O_{s})||\begin{pmatrix}
        m_{0}\\
        m_{1}
    \end{pmatrix}.
    \end{split}
    \eeq
    By the inductive assumption, 
    \beq
    ||\ad_{H_{t}'}^{m_{2}}(O_{s})||\leqslant (4gk+8g_{0})^{m_{2}}(m_{2})!||O_{s}||
    \eeq
    Using Lemma \ref{lemma:p-body}, we have
    \beq
    ||\ad_{H_{t}'}^{m_{1}}(h_{s-1,s}+h_{s,s+1})||\leqslant 4g_{0} (4gk)^{m_{1}}(m_{1})!
    \eeq
    Thus we obtain
    \beq
    \begin{split}
        ||\ad_{H_{t}'}^{m_{0}+1}(O_{s})||&\leqslant 8g_{0}(m_{0})!||O_{s}||\sum_{m_{1}+m_{2}=m_{0}}(4gk)^{m_{1}}(4gk+8g_{0})^{m_{2}}\\
        &<8g_{0}(m_{0})!(4gk+8g_{0})^{m_{0}}||O_{s}||\sum_{m_{2}=0}^{m_{0}}1\\
        &<(m_{0}+1)!(4gk+8g_{0})^{m_{0}+1}||O_{s}||,\\
    \end{split}
    \eeq
    which proves Eq.~\eqref{eq:ad_bound}.

    Combining Eq.~\eqref{eq:Ad_series} and Eq.~\eqref{eq:ad_bound}, we obtain
    \beq
    \begin{split}
        ||\Pi_{\geqslant E'}O_{s}\Pi_{\leqslant E}||&\leqslant 2e^{\nu(E-E')}\sum_{m=0}^{\infty}\frac{\nu^{m}}{m!}(4gk+8g_{0})^{m}m!||O_{s}||\\
        &=\frac{2}{1-\nu(4gk+8g_{0})}e^{-\nu(E'-E)}||O_{s}||
    \end{split}
    \eeq
    We choose $\nu=\lambda:=\frac{1}{8gk+16g_{0}}$ so we have
    \beq
    ||\Pi_{\geqslant E'}O_{s}\Pi_{\leqslant E}||<4e^{-\lambda(E'-E)}||O_{s}||
    \eeq
\end{proof}

\subsubsection{Proof of Lemma \ref{lemma:p-body}}\label{sec:p-body}

We start by proving the following: For each $h_{Z}$ appearing in $H'_{t}$, we have
\beq\label{eq:ad_power}
||\ad_{H'_{t}}^{n}(h_{Z})||\leqslant (4gk)^{n}n! ||h_{Z}||
\eeq
To see this, note that
\beq
\begin{split}
    ||\ad_{H_{t}'}^{n}(h_{Z})||&=||\sum_{Z_{n}}\sum_{Z_{n_{0}}}\dots\sum_{Z_{1}}\ad_{h_{Z_{n}}}...\ad_{h_{Z_{1}}}(h_{Z})||\\
    &\leqslant\sum_{Z_{n}}\dots\sum_{Z_{1}}||\ad_{h_{Z_{n}}}...\ad_{h_{Z_{1}}}(h_{Z})||
\end{split}
\eeq
Below we prove a slightly stronger result than Eq.~\eqref{eq:ad_power} by induction:
\beq\label{eq:adjoint_power}
\sum_{Z_{n}}\dots\sum_{Z_{1}}||\ad_{h_{Z_{n}}}...\ad_{h_{Z_{1}}}(h_{Z})||\leqslant (4gk)^{n}n!||h_{Z}||
\eeq

For $n=0$, the summation becomes trivial and there is no adjoint action, so it holds trivially. Let us assume that it is true for $n\leqslant n_{0}-1$. When $n=n_{0}$, we have
\beq
\begin{split}
    \sum_{Z_{n}}\dots\sum_{Z_{1}}||\ad_{h_{Z_{n}}}...\ad_{h_{Z_{1}}}(h_{Z})||&=\sum_{Z_{n-1}}...\sum_{Z_{1}}\sum_{Z_{n}\cap(Z\cup Z_{1}...\cup Z_{n-1})\not=\emptyset}||\ad_{h_{Z_{n}}}...\ad_{h_{Z_{1}}}(h_{Z})||\\
    &\leqslant \sum_{Z_{n-1}}...\sum_{Z_{1}}\sum_{i\in Z\cup Z_{1}...\cup Z_{n-1}}\sum_{Z_{n}\ni i}||\ad_{h_{Z_{n}}}...\ad_{h_{Z_{1}}}(h_{Z})||\\
    &\leqslant \sum_{Z_{n-1}}...\sum_{Z_{1}}\sum_{i\in Z\cup Z_{1}...\cup Z_{n-1}}\sum_{Z_{n}\ni i}2||h_{Z_{n}}||\,||\ad_{h_{Z_{n-1}}}...\ad_{h_{Z_{1}}}(h_{Z})||\\
    &< \sum_{Z_{n-1}}...\sum_{Z_{1}}nk(4g)||\ad_{h_{Z_{n-1}}}...\ad_{h_{Z_{1}}}(h_{Z})||\\
    &\leqslant4ngk (4gk)^{n-1}(n-1)!||h_{Z}||\\
    &=(4gk)^{n}n! ||h_{Z}||
\end{split}
\eeq
To derive the third last line, we have used the generalization of Eq. \eqref{eq:g_extensive} to the setup after folding, and we have also used the fact that $|Z\cup Z_{1}\cup...\cup Z_{n-1}|\leqslant nk$ since each of them is supported on at most $k$-sites. In the second last line, we have used the inductive assumption.
As a result,
\beq
\begin{split}
    ||\ad_{H_{t}'}^{n}(h_{s,s+1})||&\leqslant \sum_{\substack{|Z|\leqslant k\\Z\cap B_{s}'\not=\emptyset\\Z\cap B_{s+1}'\not=\emptyset}}||\ad_{H'_{t}}^{n}(h_Z)||\\
    &<(4gk)^{n}n! \sum_{\substack{|Z|\leqslant k\\Z\cap B_{s}'\not=\emptyset\\Z\cap B_{s+1}'\not=\emptyset}}||h_{Z}||\\
    &\leqslant 2g_{0}(4gk)^{n}n! 
\end{split}
\eeq
where we have used the same estimation as in lemma \ref{lemma:quasi_local_perturb} in the second last line. This completes the proof.

\subsection{Proof of Proposition \ref{prop:norm_spec_proj}: Second part}\label{sec:proof_norm_spec_proj_2}

In the second part, we aim at bounding $||\Pi_{>E'}^{(s)}\tilde{\Pi}_{\leqslant E}||$. The idea is similar to the first part. More specifically, the proof will be established by proving an $\tilde{H}'_{t}$-version of Lemma \ref{lemma:norm_proj} and Lemma \ref{lemma:projected_local}. However, there is an important distinction. Note $\tilde{H}'_{t}$ may not be $k$-local in general, so there is no analogue of Lemma \ref{lemma:p-body}.

We start with the analogue of lemma \ref{lemma:norm_proj}. Consider any normalized state $|\psi\ra$, and we define
\beq
|\tilde{\phi}\ra:=\Pi^{(s)}_{>E'}\tilde{\Pi}_{\leqslant E}|\psi\ra
\eeq
So we have
\beq
||\Pi^{(s)}_{>E'}\tilde{\Pi}_{\leqslant E}||=\sup_{|\psi\ra}||\,|\tilde{\phi}\ra||
\eeq
\begin{lemma}\label{lemma:tilde_proj_norm}
    The norm of $|\tilde{\phi}\ra$ is bounded as follows
    \beq
    ||\,|\tilde{\phi}\ra||\leqslant \frac{4e^{3/2}}{e-1}e^{-\lambda'(\la \tilde{H}_{t}'\ra_{\tilde{\phi}}-E)}
    \eeq
    where
    \beq
    \la\tilde{H}_{t}'\ra_{\tilde{\phi}}:=\begin{cases}
        \frac{\la\tphi|\tH|\tphi\ra}{||\,|\tphi\ra||^{2}},\quad\text{if}\,|\tphi\ra\not=0\\
        0, \quad\text{\rm otherwise}
    \end{cases}
    \eeq
\end{lemma}
Assuming lemma \ref{lemma:tilde_proj_norm} for now, we rewrite $\la\tH\ra_{\tphi}$ in the following form
\beq
\la\tH\ra_{\tphi}=\la\tilde{h}_{s}\ra_{\tphi}+\la(h_{s,s+1}+h_{s-1,s})\ra_{\tphi}+\la\delta \tilde{H}_{s}\ra
\eeq
where we have defined $\delta\tilde{H}_{s}:=\tH-\tilde{h}_{s}-h_{s,s+1}-h_{s-1,s}$. We denote $\Lambda_{s}:=\Lambda\setminus B_{s}'$. Then for any $\epsilon>0$, consider the $\epsilon$-almost ground state $|\tilde{E}_{\Lambda_{s},0}^{\epsilon}\ra$ where $\tilde{E}_{\Lambda_{s},0}:=\inf(\sigma(\delta\tilde{H}_{s}))$.
We have
\beq
\begin{split}
    \la \tilde{h}_{s}\ra_{\tphi}=\frac{\la\psi|\tilde{\Pi}_{\leqslant E}\Pi^{(s)}_{>E'}\tilde{h}_{s}\Pi^{(s)}_{>E'}\tilde{\Pi}_{\leqslant E}|\psi\ra}{||\,|\tphi\ra||^{2}}&\geqslant \min(E',\tau_{s})\\
    \la(h_{s,s+1}+h_{s-1,s}\ra_{\tphi}\geqslant -||h_{s,s+1}||-||h_{s-1,s}||&\geqslant-4g_{0}\\
    \la\delta \tilde{H}_{s}\ra_{\tphi}\geqslant \tilde{E}_{\Lambda_{s},0}\geqslant \tilde{E}_{t,0}-E_{s,0}-4g_{0}
\end{split}
\eeq
where in the first inequality we have used that $||\tilde{h}_{s}||=\tau_{s}$ and in the last inequality, we have used
\beq
\begin{split}
    \tilde{E}_{t,0}&\leqslant (\la E_{s,0}^{\epsilon}|\otimes \la \tilde{E}_{\Lambda_{s},0}^{\epsilon}|)\tH(|E_{s,0}^{\epsilon}\ra\otimes|\tilde{E}_{\Lambda_{s},0}^{\epsilon}\ra)\\
    &\leqslant E_{s,0}+\epsilon+E_{\Lambda_{s},0}+\epsilon+4g_{0}
\end{split}
\eeq
where $|E_{s,0}^{\epsilon}\ra$ is the $\epsilon$-almost ground state of $h_{s}$.
Since $\epsilon$ is arbitrarily small, we have $\tilde{E}_{\Lambda_{s},0}\geqslant \tilde{E}_{t,0}-E_{s,0}-4g_{0}$. Then Eq.~\eqref{eq:tilde_norm_proj} follows from Lemma \ref{lemma:tilde_proj_norm}.

Now let us turn to the proof of Lemma \ref{lemma:tilde_proj_norm}. The proof is almost identical to the proof of Lemma \ref{lemma:norm_proj} except that lemma \ref{lemma:projected_local} does not hold for $\tH$ since it is not $k$-local, so there will not be an analogue of lemma \ref{lemma:p-body}. Nevertheless, we can establish the following analogue to lemma \ref{lemma:projected_local}, which will be sufficient to prove lemma \ref{lemma:tilde_proj_norm}.
\begin{lemma}\label{lemma:tilde_projected_local}
    Let $O_{s}$ be any operator supported on $B_{s}'$ such that $[O_{s},h_{s}]=0$, then we have
    \beq
    ||\tilde{\Pi}_{\geqslant E'}O_{s}\tilde{\Pi}_{\leqslant E}||\leqslant 4||O_{s}||e^{-\lambda'(E'-E)}
    \eeq
    where $\lambda':=\min\{\frac{1}{8gk},\frac{1}{224g_{0}}\}$.
\end{lemma}
As is explained, the proof of this lemma requires new ideas and the full proof is given in Sec.~\ref{sec:tilde_proj_local}.

\subsection{Proof of Lemma \ref{lemma:tilde_projected_local}}\label{sec:tilde_proj_local}
Following the proof of Lemma \ref{lemma:projected_local}, 
\beq
\begin{split}
    ||\tilde{\Pi}_{\geqslant E'}O_{s}\tilde{\Pi}_{\leqslant E}||&=||\tilde{\Pi}_{\geqslant E'}e^{-\nu \tH}e^{\nu\tH}O_{s}e^{-\nu \tH}e^{\nu \tH}\tilde{\Pi}_{\leqslant E}||\\
    &\leqslant e^{-\nu(E'-E)} ||\Ad_{e^{\nu\tH}}(O_{s})||\\
\end{split}
\eeq
Next, we decompose $\tH$ into the following \textit{free} and \textit{interacting} parts
\beq
\begin{split}
    G&:=\sum_{s=0}^{q+1}\tilde{h}_{s}\\
    F&:=\sum_{s=0}^{q}h_{s,s+1}
\end{split}
\eeq
Consider
\beq
U(x):=e^{x \tH}e^{-xG}
\eeq
It is easy to check $\partial_{x}U=U F(x)$ where $F(x):=e^{xG}Fe^{-xG}$. Solving this differential equation of $U(x)$, one can rewrite $U(x)$ as Dyson series by iterative integration,
\beq\label{eq:Dyson}
    U(x)=\T_{\rightarrow}e^{\int_{0}^{x}F(t)\dd t}
\eeq
Here $\T_{\rightarrow}$ is the ordering operator such that
\beq
\T_{\rightarrow}(F(x_{1})F(x_{2})...F(x_{n}))=F(x_{\sigma(1)})F(x_{\sigma(2)})\dots F(x_{\sigma(n)})
\eeq
where $\sigma\in S_{n}$ such that $x_{\sigma(1)}\leqslant x_{\sigma(2)}\leqslant\dots \leqslant x_{\sigma(n)}$. It is useful to rewrite $U(x)$ as an infinite product,
\begin{lemma}[Theorem 5.5.10 of Ref.~\cite{slavik2007product}]
    The operator $U(x)$ can be written as 
    \beq\label{eq:infintie_product}
    U(x)=\lim_{N\to\infty}\prod_{m=1}^{N}(1+F(x_{m})\Delta x)=\lim_{N\to\infty}\prod_{m=1}^{N}e^{F(x_{m})\Delta x}
    \eeq
    where $\Delta x:=\frac{x}{N}$ and $x_{m}:=\frac{mx}{N}$.
\end{lemma}
It is also easy to check that both Eq.~\eqref{eq:infintie_product} and Eq.~\eqref{eq:Dyson} converge if
\beq
\int_{0}^{\nu}||F(x)||\dd x<\infty
\eeq
for $\nu\in[0, \infty)$. The following lemma tells us that this is always true.
\begin{lemma}[(S.358) of Ref.~\cite{Kuwahara2019}]\label{lemma:interaction_norm}
    For any $0\leqslant x<\frac{1}{8gk}$, we have
    \beq
    ||h_{s,s+1}(x)||< \fg:=56g_{0}
    \eeq
    where $h_{s,s+1}(x):=e^{xG}h_{s,s+1}e^{-xG}$ is still supported on $B_{s}'\cup B_{s+1}'$.
\end{lemma}
Note $||F(x)||\leqslant\sum_{s=0}^{q}||h_{s,s+1}(x)||\leqslant 2(q+1)\fg$, therefore $\int_{0}^{\nu}||F(x)||\dd x<\infty$ for all $\nu\geqslant0$.

By assumption $[O_{s},G]=[O_{s},\tilde{h}_{s}]=0$, we have
\beq
\begin{split}
    e^{\nu \tH}O_{s}e^{-\nu\tH}&=U(\nu)O_{s}U(\nu)^{-1}\\
    &=\Ad_{U(\nu)}(O_{s})\\
    &=\lim_{N\to\infty}\prod_{m=1}^{N}\Ad_{e^{F(x_{m})\Delta x}}(O_{s})\\
    &=\lim_{N\to\infty}\prod_{m=1}^{N}e^{\ad_{F(x_{m})}\Delta x}(O_{s})\\
    &=\T_{\rightarrow}e^{\int_{0}^{\nu}\ad_{F(x)}\dd x}(O_{s})\\
    &=\sum_{m=0}^{\infty}\int_{0\leqslant x_{1}\leqslant\dots\leqslant x_{m}\leqslant \nu}\dd x_{1}\dots\dd x_{m}\ad_{F(x_{1})}\dots \ad_{F(x_{m})}(O_{s})
\end{split}
\eeq
All terms of order $o(\frac{1}{N})$ on the exponent are suppressed.
Note that $\ad_{F(x)}(O_{s})=[h_{s,s+1}(x)+h_{s-1,s}(x),O_{s}]$, by lemma \ref{lemma:interaction_norm} we have
\beq
||[h_{s,s+1}(x),O_{s}]||\leqslant 2||h_{s,s+1}(x)||\cdot ||O_{s}||\leqslant 2\fg ||O_{s}||
\eeq
Consequently, $\ad_{F}(O_{s})$, which is supported on $B_{s-1}'\cup B_{s}'\cup B_{s+1}'$, satisfies
\beq
||\ad_{F}(O_{s})||\leqslant 4\fg ||O_{s}||
\eeq
Repeating the same argument, one can show that
\beq
||\ad_{F(x_{1})}\dots \ad_{F(x_{m})}(O_{s})||\leqslant (m+1)!(2\fg)^{m}||O_{s}||
\eeq
It follows that
\beq
\begin{split}
    ||e^{\nu \tH}O_{s}e^{-\nu\tH}||&\leqslant\sum_{m=0}^{\infty}\int_{0\leqslant x_{1}\leqslant\dots\leqslant x_{m}\leqslant \nu}\dd x_{1}\dots\dd x_{m}(m+1)!(2\fg)^{m}||O_{s}||\\
    &=\sum_{m=0}^{\infty}(m+1)(2\fg \nu)^{m}||O_{s}||\\
    &=\frac{1}{(1-2\fg \nu)^{2}}||O_{s}||
\end{split}
\eeq
where we have used $\sum_{m=0}^{\infty}(m+1)r^{m}=(1-r)^{-2}$. Therefore, choosing $\nu\leqslant\lambda':=\min\{\frac{1}{8gk},\frac{1}{4\fg}\}$, we have
\beq
||e^{\nu \tH}O_{s}e^{-\nu\tH}||\leqslant 4||O_{s}||
\eeq
This completes the proof.

\subsubsection{Proof of Lemma \ref{lemma:interaction_norm}}
This proof is adapted from Ref.~\cite{Kuwahara2019} and we include it here for the convenience of readers. 
Let us begin with the following observation: for any $s=0,1,\dots,q+1$, we have
\beq
e^{x\tilde{h}_{s}}=\Pi_{<\tau_{s}}^{(s)}e^{xh_{s}}+\Pi_{\geqslant\tau_{s}}^{(s)}e^{x\tau_{s}}
\eeq
Therefore, for any operator $W$, we have
\beq
\begin{split}
    e^{x\tilde{h}_{s}}We^{-x\tilde{h}_{s}}=\Pi^{(s)}_{<\tau_{s}}e^{xh_{s}}We^{-xh_{s}}\Pi_{<\tau_{s}}^{(s)}+\Pi_{<\tau_{s}}^{(s)}e^{x(h_{s}-\tau_{s})}W\Pi_{\geqslant\tau_{s}}^{(s)}
    \\+\Pi_{\geqslant\tau_{s}}^{(s)} We^{-x(h_{s}-\tau_{s})}\Pi_{<\tau_{s}}^{(s)}+\Pi_{\geqslant\tau_{s}}^{(s)}W\Pi_{\geqslant\tau_{s}}^{(s)}
\end{split}
\eeq
For each term on the right-hand side, we note for any $x\geqslant 0$,
\beq
\begin{split}
    ||\Pi^{(s)}_{<\tau_{s}}e^{xh_{s}}We^{-xh_{s}}\Pi_{<\tau_{s}}^{(s)}||&\leqslant ||e^{xh_{s}}We^{-xh_{s}}||,\\
    ||\Pi_{<\tau_{s}}^{(s)}e^{x(h_{s}-\tau_{s})}W\Pi_{\geqslant\tau_{s}}^{(s)}||&\leqslant ||\Pi_{<\tau_{s}}^{(s)}e^{x(h_{s}-\tau_{s})}||\cdot ||W\Pi_{\geqslant\tau_{s}}^{(s)}||< ||W||,\\
    ||\Pi_{\geqslant\tau_{s}}^{(s)} We^{-x(h_{s}-\tau_{s})}\Pi_{<\tau_{s}}^{(s)}||&=||\Pi_{\geqslant\tau_{s}}^{(s)}e^{-x(h_{s}-\tau_{s})}e^{xh_{s}} We^{-xh_{s}}\Pi_{<\tau_{s}}^{(s)}||\leqslant ||e^{xh_{s}} We^{-xh_{s}}||,\\
    ||\Pi_{\geqslant\tau_{s}}^{(s)}W\Pi_{\geqslant\tau_{s}}^{(s)}||&\leqslant ||W||.
\end{split}
\eeq
Thus we obtain,
\beq\label{eq:reduced_conjugation}
||e^{x\tilde{h}_{s}}We^{-x\tilde{h}_{s}}||< 2||W||+2||e^{xh_{s}} We^{-xh_{s}}||
\eeq
Recall that our goal is to bound the norm of $h_{s,s+1}(x)=e^{x\tilde{h}_{s}}e^{x\tilde{h}_{s+1}}h_{s,s+1}e^{-x\tilde{h}_{s+1}}e^{-x\tilde{h}_{s}}$. By Eq.~\eqref{eq:reduced_conjugation}
\beq
\begin{split}
    ||e^{x\tilde{h}_{s}}h_{s,s+1}e^{-x\tilde{h}_{s}}||&< 2||h_{s,s+1}||+2||e^{xh_{s}} h_{s,s+1}e^{-xh_{s}}||\\
    ||e^{x\tilde{h}_{s+1}}h_{s,s+1}e^{-x\tilde{h}_{s+1}}||&< 2||h_{s,s+1}||+2||e^{xh_{s+1}} h_{s,s+1}e^{-xh_{s+1}}||
\end{split}
\eeq
Note that
\beq
\begin{split}
    e^{xh_{s}}h_{s,s+1}e^{-xh_{s}}&=\sum_{m=0}^{\infty}\frac{x^{m}}{m!}\ad_{h_{s}}^{m}(h_{s,s+1})\\
\end{split}
\eeq
Note that for each $Z\subset\Lambda$ with $|Z|\leqslant k$,
\beq
\begin{split}
    ||\ad_{h_{s}}^{m}(h_{Z})||&=||\sum_{Z_{1},\dots,Z_{m}\subset B_{s}'}\ad_{h_{Z_{1}}}\dots\ad_{h_{Z_{m}}}(h_{Z})||\\
    &\leqslant\sum_{Z_{1},\dots,Z_{m}\subset B_{s}'}||\ad_{h_{Z_{1}}}\dots\ad_{h_{Z_{m}}}(h_{Z})||\\
    &\leqslant\sum_{Z_{1},\dots,Z_{m}}||\ad_{h_{Z_{1}}}\dots\ad_{h_{Z_{m}}}(h_{Z})||\\
    &\leqslant (4gk)^{m}m!||h_{Z}||
\end{split}
\eeq
where we have used Eq.~\eqref{eq:adjoint_power}. Therefore, when $0\leqslant x<\frac{1}{8gk}$,
\beq
||e^{xh_{s}}h_{Z}e^{-xh_{s}}||\leqslant\frac{1}{1-4gk x}||h_{Z}||<2||h_{Z}||
\eeq

Similarly for $e^{xh_{s+1}}h_{Z}e^{-xh_{s+1}}$, we have
\beq
||e^{xh_{s+1}}h_{Z}e^{-xh_{s+1}}||\leqslant 2||h_{Z}||
\eeq
Now consider
\beq
\begin{split}
    ||e^{xh_{s}}h_{s,s+1}e^{-xh_{s}}||&=||\sum_{\substack{Z\cap B_{s}'\not=\emptyset\\Z\cap B_{s+1}'\not=\emptyset\\|Z|\leqslant k}}e^{xh_{s}}h_{Z}e^{-xh_{s}}||\\
    &\leqslant\sum_{\substack{Z\cap B_{s}'\not=\emptyset\\Z\cap B_{s+1}\not=\emptyset\\|Z|\leqslant k}}||e^{xh_{s}}h_{Z}e^{-xh_{s}}||\\
    &\leqslant 2\sum_{\substack{Z\cap B_{s}'\not=\emptyset\\Z\cap B_{s+1}\not=\emptyset\\|Z|\leqslant k}}||h_{Z}||\\
    &\leqslant 4g_{0}
\end{split}
\eeq
where we have used the definition of $g_{0}$. Similarly we have
\beq
\begin{split}
    ||e^{xh_{s+1}}h_{s,s+1}e^{-xh_{s+1}}||&\leqslant 4g_{0}\\
    ||e^{x(h_{s+1}+h_{s})}h_{s,s+1}e^{-x(h_{s}+h_{s+1})}||&\leqslant 4g_{0}
\end{split}
\eeq
We end up with 
\beq
\begin{split}
    ||e^{x\tilde{h}_{s}}e^{x\tilde{h}_{s+1}}h_{s,s+1}e^{-x\tilde{h}_{s+1}}e^{-x\tilde{h}_{s}}||&\leqslant 2||e^{x\tilde{h}_{s+1}}h_{s,s+1}e^{-x\tilde{h}_{s+1}}||+2||e^{xh_{s}}e^{x\tilde{h}_{s+1}}h_{s,s+1}e^{-x\tilde{h}_{s+1}}e^{-xh_{s}}||\\
    &\leqslant4||h_{s,s+1}||+4||e^{xh_{s+1}}h_{s,s+1}e^{-xh_{s}}||\\&+4||e^{xh_{s}}h_{s,s+1}e^{-xh_{s}}||+4||e^{x(h_{s}+h_{s+1})}h_{s,s+1}e^{-x(h_{s}+h_{s+1})}||\\
    &\leqslant56g_{0}=:\fg
\end{split}
\eeq
This completes the proof.

\section{Group cohomology and differentiable group cohomology}\label{sec:group_cohomology}

In this section, we review the basics of group cohomology and differentiable group cohomology. For group cohomology, there are many materials in the literature \cite{brown2012cohomology,Weibel_1994_group,Chen2010}. For differentiable group cohomology, see appendix A.1 of Ref. \cite{kapustin2024anomalous} and Ref. \cite{brylinski2000differentiable}. We will only cover the motivations and basics here.

\subsection{Projective representations in quantum mechanics}

To motivate group cohomology, we start with projective representations in quantum mechanics. Suppose we have a symmetry group $G$ (assumed to be unitary and discrete for simplicity) acting on a Hilbert space $\cH$. Usually this symmetry action is given by a homomorphism $\rho:G\to U(\cH)$, \ie a unitary representation of $\cH$. More explicitly, for each $g\in G$, we assign a unitary operator $\rho(g)$ such that
\beq
\rho(g)\rho(h)=\rho(gh),\,\forall\,g,h\in G
\eeq
However, in quantum mechanics, states are {\it not} really a vector in $\cH$, but a {\it{ray}}. That means a state $|\psi\ra$ is the same as $e^{i\theta}|\psi\ra$ as a quantum state. Thus, the space of states is not literally $\cH$, but the projective space $P(\cH)$. This for allows more general symmetry actions as
\beq
\rho(g)\rho(h)=\omega(g,h)\rho(gh)
\eeq
where $\omega(g,h)\in \U$.\footnote{In principle, one has to show that the phase $\omega(g,h)$ is the same on each quantum state. This relies the coherence of these states and one can find the proof in Sec. 2.2 of Ref. \cite{weinberg2005quantum}.} This $\rho$ is a representation up to a phase $\omega$ and is called a projective representation.
Moreover, the matrix multiplication is associative, so 
\beq
(\rho(g)\rho(h))\rho(k)=\rho(g)(\rho(h)\rho(k))
\eeq
This imposes the following constraint on $\omega$,
\beq\label{eq:2-cocycle_condition}
\omega(g,h)\omega(gh,k)=\omega(g,hk)\omega(h,k)
\eeq
Any function $G\times G\to \U$ satisfying Eq. \eqref{eq:2-cocycle_condition} is called a 2-cocycle. Furthermore, one can redefine the phase of $\rho(g)\to \tilde{\rho}(g)=\rho(g)\eta(g),\eta(g)\in\U$ (we do not require $\eta:G\to \U$ to be a homomorphism), and the resulting 2-cocycle is
\beq\label{eq:shift_2-coboundary}
\tilde{\omega}(g,h)=\omega(g,h)\eta(g)\eta(h)\eta(gh)^{-1}
\eeq
One can easily check that $\tilde{\omega}$ again satisfies the 2-cocycle condition, Eq. \eqref{eq:2-cocycle_condition}. If there exists $\eta(g)$ such that $\tilde{\omega}(g,h)=1$ for all $g,h\in G$, then we say that $\omega$ is a 2-coboundary or trivial. Any two 2-cocycles $\omega$ and $\tilde{\omega}$ related by Eq. \eqref{eq:shift_2-coboundary} are viewed as equivalent, since they differ only by the artificial choice of phase factors $\eta(g)$ of representation matrix $\rho(g)$. We write $\omega\sim \tilde{\omega}$ if $\omega$ and $\tilde\omega$ are equivalent. The space of 2-cocycles modulo this equivalence $\sim$ is the so-called the degree 2 group cohomology of $G$, denoted by $\rH^{2}(G;\U)$.
\begin{example}
    Let us consider $G=\z_{2}\times\z_{2}$. We write its elements as $(a,b)$ where $a,b=0,1 \mod{2}$. Then we define a projective representation $\rho$ as follows:
    \beq
    \begin{split}
        \rho(0,0)=I,\,\rho(1,0)=\sigma_{y}\\
        \rho(0,1)=\sigma_{x},\,\rho(1,1)=\sigma_{z}
    \end{split}
    \eeq
    Note that $\rho(0,1)\rho(1,0)=i\rho(1,1)$ hence $\omega((0,1),(1,0))=i$. Similarly, $\omega((1,0),(0,1))=-i$. One can show that this 2-cocycle is not a 2-coboundary and hence defines the nontrivial class in $\rH^{2}(\z_{2}\times\z_{2};\U)\simeq \z_{2}$. In the context of symmetry-protected topological phases, this projective representation describes the boundary of the cluster state \cite{Son_2011}.
     
\end{example}
\begin{example}\label{example:0+1d_SO3}
    Consider the case where $G=SO(3)$, the spin rotation symmetry \footnote{Actually, this a subtler case because $SO(3)$ is a Lie group so it requires more careful treatment, which will be left to later sections. We omit this subtlety for now.}. One can show that $\rH^{2}(SO(3);\U)\simeq \rHom(\pi_{1}(SO(3)),\U)\simeq \z_{2}$, and this class is trivial if the (total) spin quantum number $S\in \z$ and it is nontrivial if $S\in\z+\frac{1}{2}$.
\end{example}

A projective representation provides the following constraint on quantum states.
\begin{proposition}
        If $G$ acts on the Hilbert space $\cH$ via a projective representation $\rho$ whose associated 2-cocycle $\omega\not =1\in\rH^{2}(G;\U)$, then there cannot be a nonzero $G$-symmetric state.
\end{proposition}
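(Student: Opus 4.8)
The plan is to argue by contradiction. Suppose $|\psi\rangle\in\cH$ is a nonzero state that is $G$-symmetric. The first thing I would do is make precise what ``$G$-symmetric'' means here: since physical states are rays rather than vectors, invariance of $|\psi\rangle$ under $G$ means that $\rho(g)|\psi\rangle$ lies on the same ray as $|\psi\rangle$ for every $g\in G$. Because each $\rho(g)$ is unitary it preserves the norm, so there is a well-defined phase $\eta(g)\in\U$ --- not assumed to be a homomorphism --- with $\rho(g)|\psi\rangle=\eta(g)|\psi\rangle$.

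Next I would substitute this into the defining relation $\rho(g)\rho(h)=\omega(g,h)\rho(gh)$ of the projective representation. Acting with both sides on $|\psi\rangle$ and applying the relation above three times, the left-hand side produces $\eta(g)\eta(h)|\psi\rangle$ while the right-hand side produces $\omega(g,h)\eta(gh)|\psi\rangle$. Since $|\psi\rangle\neq 0$ we may cancel it, obtaining
\beq
\omega(g,h)=\eta(g)\,\eta(h)\,\eta(gh)^{-1},\qquad\forall\,g,h\in G.
\eeq

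Finally, I would recognize the right-hand side: comparing with Eq.~\eqref{eq:shift_2-coboundary}, taking $\tilde\omega\equiv 1$ there with gauge function $\eta^{-1}$, this says exactly that $\omega$ is a $2$-coboundary, \ie it represents the trivial class in $\rH^{2}(G;\U)$. This contradicts the hypothesis $\omega\neq 1$ in $\rH^{2}(G;\U)$, and the proposition follows.

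There is really no serious obstacle in this argument; the only step that needs a little care is the very first one --- extracting a genuine function $\eta:G\to\U$ from the ray-level statement of invariance --- which works precisely because $|\psi\rangle$ is a fixed nonzero vector and each $\rho(g)$ is unitary. It is worth emphasizing that the argument is reversible: if $\omega$ were a coboundary one could absorb it by rephasing $\rho(g)\to\rho(g)\eta(g)$ into a genuine linear representation, which may then admit symmetric vectors, so the cohomological obstruction identified here is exactly the right one.
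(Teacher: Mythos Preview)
Your proof is correct and follows essentially the same approach as the paper. The only cosmetic difference is that the paper first absorbs the phase $\eta$ into a rephased representation $\tilde\rho(g)=\rho(g)\eta(g)$ so that $\tilde\rho(g)|\psi\ra=|\psi\ra$, and then computes $\tilde\rho(g)\tilde\rho(h)|\psi\ra$ two ways to force $|\psi\ra=0$; you skip the rephasing and derive the coboundary identity for $\omega$ directly, which is if anything slightly cleaner.
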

\begin{proof}
    Suppose $|\psi\ra$ is a $G$-symmetric state, that is 
    \beq
    \rho(g)|\psi\ra=\eta(g)^{-1}|\psi\ra
    \eeq
    where $\eta(g)\in \U$ is any $\U$-valued function on $G$. Then one redefines $\tilde{\rho}(g)=\rho(g)\eta(g)$, this shifts $\omega$ by a 2-coboundary and the resulting $\tilde{\omega}$ (see Eq. \eqref{eq:shift_2-coboundary}) is nontrivial, \ie there exists $g,h\in G$ such that $\tilde{\omega}(g,h)\not=1$. Now
    \beq
    \tilde{\rho}(g)|\psi\ra=|\psi\ra,\,\forall \,g\in G
    \eeq
    One can calculate $\tilde{\rho}(g)\tilde{\rho}(h)|\psi\ra$ in 2 different ways
    \beq
    \begin{split}
        \tilde{\rho}(g)(\tilde{\rho}(h)|\psi\ra)&=\tilde{\rho}(g)|\psi\ra=|\psi\ra\\
        (\tilde{\rho}(g)\tilde{\rho}(h))|\psi\ra&=\tilde{\omega}(g,h)\tilde{\rho}(gh)|\psi\ra=\tilde{\omega}(g,h)|\psi\ra
    \end{split}
    \eeq
    By assumption, $\tilde{\omega}(g,h)\not =1$ for some $g,h\in G$. Hence $|\psi\ra=0$, which shows that there is no nonzero $G$-symmetric state.
\end{proof}

As a corollary, consider a $G$-symmetric Hamiltonian $H$ which has a $G$ symmetry that acts projectively. We have

\begin{corollary}
    If $\rho$ is nontrivial projective representation, then a $G$-symmetric Hamiltonian must have degenerate ground states which break the $G$-symmetry.
\end{corollary}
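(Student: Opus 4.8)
The plan is to reduce the corollary directly to the Proposition just proved. The first step is to make precise what ``$G$-symmetric Hamiltonian'' means in the projective setting: although $\rho(g)\rho(h)=\omega(g,h)\rho(gh)$ holds only up to a phase, the adjoint action $\Ad_{\rho(g)}(\cdot):=\rho(g)(\cdot)\rho(g)^{-1}$ is insensitive to that phase, and $\Ad_{\rho(g)}\Ad_{\rho(h)}=\Ad_{\rho(gh)}$ because $\omega(g,h)$ is central. Thus $\Ad_{\rho}$ is a genuine action of $G$ by automorphisms, and one calls $H$ a $G$-symmetric Hamiltonian if $\Ad_{\rho(g)}(H)=H$, equivalently $[\rho(g),H]=0$, for all $g\in G$.

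The second step is to observe that the ground-state eigenspace $W\subset\cH$ (the eigenspace of $H$ for its lowest eigenvalue) is then $G$-invariant: since each $\rho(g)$ commutes with $H$, it preserves every eigenspace of $H$, so $\rho(g)W=W$. Restricting $\rho$ to $W$ therefore yields a projective representation of $G$ on $W$ \emph{with the very same $2$-cocycle} $\omega$ --- the cocycle is a fixed function on $G\times G$ and is unchanged by passing to an invariant subspace.

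The third step is to apply the Proposition with $W$ in the role of $\cH$: since $[\omega]\neq 1$ in $\rH^{2}(G;\U)$, there is no nonzero $G$-symmetric vector in $W$. In particular no ground state of $H$ is left invariant (even up to a phase) by all of $G$, which is exactly the assertion that every ground state breaks the $G$-symmetry. For the degeneracy claim: if $\dim W=1$, then each $\rho(g)$ would map the line $W$ into itself and hence act on the unique (up to scalar) ground state by a phase, making that ground state $G$-symmetric --- contradicting what we just showed. Hence $\dim W\geqslant 2$, so the ground states are degenerate.

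The only genuinely delicate point, and the one I would be most careful to spell out, is the first step: verifying that ``$G$-symmetry of $H$'' is a well-defined, phase-independent notion in the projective case and that it really forces $[\rho(g),H]=0$; once that is settled, everything else is an immediate consequence of the Proposition. (For a Lie group such as $SO(3)$ in Example~\ref{example:0+1d_SO3} one also tacitly uses that the ground space carries an honest finite-dimensional representation, which is harmless at the heuristic level of this subsection.)
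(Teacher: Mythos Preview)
Your argument is correct and is exactly the natural deduction the paper leaves implicit: the corollary is stated without proof in the paper, as an immediate consequence of the preceding Proposition. Your three steps---clarifying that $G$-symmetry of $H$ means $[\rho(g),H]=0$ (well-defined because $\Ad_{\rho(g)}$ is phase-insensitive), restricting to the ground-state eigenspace $W$ where the same $2$-cocycle governs $\rho|_{W}$, and then invoking the Proposition on $W$ to rule out a one-dimensional ground space---are precisely what one must spell out to make the corollary rigorous, and nothing is missing.
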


This can be viewed as $(0+1)d$ version of anomaly constraints.

\begin{example}
    Consider a system made of $N$ qubits (or equivalently, spin $\frac{1}{2}$'s), whose Hamiltonian $H$ has a $G=SO(3)$ symmetry encountered in example \ref{example:0+1d_SO3}. If $N=1\mod{2}$, then this system must be at least 2-fold degenerate. For example, consider $N=1$, for the Hamiltonian $H$ to be $SO(3)$-symmetric, it has to commute with all Pauli operators. It is easy to check that $H$ must be $\lambda I$ for some $\lambda\in \bbC$ and $I$ is the identity operator. Hence the ground states are trivially 2-fold degenerate. However, for $N=2$ where the total spin is an integer, one can take 
    \beq
    H=J\vec{S}_{1}\cdot \vec{S}_{2},J>0
    \eeq
    where the ground state is non-degnerate.
\end{example}

\subsection{Group cohomology}

Now we present the definition of group cohomology in general. Let $G$ be a discrete group, one defines a space $BG$ which is a collection of spaces $\{G^{n}\}_{n=1,2,...}$ equipped with a collection of maps $d_{k}:G^{n}\to G^{n-1},\,k=0,1,...,n$ (called face maps). Explicitly,
\beq\label{eq:face_maps}
d_{k}(g_{1},g_{2},...,g_{n})=\begin{cases}
    (g_2,...,g_n),\,k=0\\
    (g_1,...,g_{k}g_{k+1},...,g_{n}),\,0<k<n\\
    (g_1,...,g_{n-1}),k=n
\end{cases}
\eeq
One can check that if $d=\sum_{k=0}^{n}(-1)^{k}d_{k}$, then $d^{2}=0$.
Let $A$ be an Abelian group (with {\it discrete topology}). For example $A$ can be $\z_{2}$, $\z$, $\R$ or $\U$. We denote all $A$-valued functions on $BG$ as $C^{\bullet}(BG,A)$. For example, one writes $\omega\in C^{2}(BG,A)$ if $\omega:G^{2}\to A$.
Consider an $A$-valued function $\omega$ on $G^{n-1}$. The maps $d_{k}:G^{n}\to G^{n-1}$ induces a pullback of $\omega$, \ie $d_{k}^{*}\omega:=\omega\circ d_{k}$ on $G^{n}$. We denote $\delta=d^{*}$ (it follows that $\delta^{2}=0$), thus $C^{\bullet}(BG,A)$ together with $\delta$ becomes a cochain complex.

\begin{definition}
    A function $\omega:G^{n}\to A$ is said to be an $n$-cocycle if $\delta \omega=0$. We denote the space of all $n$-cocycles by $\mathrm{Z}^{n}(G;A)$. Besides, if an $n$-cocycle $\omega$ satisfies $\omega=\delta \eta$ for some $\eta\in C^{n-1}(G;A)$, it is called an $n$-coboundary. The space of all $n$-coboundary is denoted as $\mathrm{B}^{n}(G;A),n>1$. Besides, $\mathrm{B}^{1}(G;A)$ is defined to be 0.
\end{definition}

\begin{definition}
    The degree $n$ group cohomology of $G$ is defined to be
    \beq\label{eq:group_coho}
    \rH^{n}(G;\U)=\frac{\mathrm{Z}^{n}(G;A)}{\mathrm{B}^{n}(G;A)}
    \eeq
    In more details, $\rH^{n}(G;A)$ are defined to be  equivalence classes of $n$-cocycles under the equivalence relation $\omega\simeq \omega+\delta\eta$ where $\omega\in\mathrm{Z}^{n}(G;A)$ and $\delta\eta\in \mathrm{B}^{n}(G;A)$.
\end{definition}

\begin{example}
    Let us consider a function $\omega:G\to A$ or equivalently $\omega$ here is a 1-cochain. Now we compute $\delta\omega$
    \beq
    \delta\omega(g_{1},g_{2})=(d_{0}^{*}\omega-d_{1}^{*}\omega+d^{*}_{2}\omega)(g_{1},g_{2})=\omega(g_{1})+\omega(g_{2})-\omega(g_{1}g_{2})
    \eeq
    where we have used Eq. \eqref{eq:face_maps}, \eg
    \beq
    d_{1}^{*}\omega(g_{1},g_{2})=\omega(d_{1}(g_{1},g_2))=\omega(g_{1}g_{2})
    \eeq
    Then $\omega$ is a 1-cocycle iff it is a homomorphism, \ie $\omega(g_1 g_2)=\omega(g_{1})+\omega(g_{2})$. We conclude
    \beq
    \rH^{1}(G;A)=\rHom(G,A)
    \eeq
\end{example}

\begin{example}
    Now we consider a 2-cochain, again denoted by $\omega:G^{2}\to \A$. Then one calculates $\delta\omega$ as follows
    \beq
    \delta\omega(g_{1},g_{2},g_{3})=\omega(g_{2},g_{3})-\omega(g_{1}g_{2},g_{3})+\omega(g_{1},g_{2}g_{3})-\omega(g_{1},g_{2})
    \eeq
    If one writes the group action in $A$ as multiplication rather than addition, one immediately recognizes $\delta\omega=0$ is exactly the 2-cocycle condition Eq. \eqref{eq:2-cocycle_condition} in projective representations. One can shift $\omega$ by a 2-coboundary $\delta\eta$. As we computed in the last example, this corresponds to
    \beq
    \omega(g_{1},g_{2})\to \tilde{\omega}(g_{1},g_{2})=\omega(g_{1},g_{2})+\eta(g_{1})+\eta(g_{2})-\eta(g_{1}g_{2})
    \eeq
    In the context of projective representation, this amounts to redefining our representation matrices by a phase Eq. \eqref{eq:shift_2-coboundary}.
\end{example}
Group cohomology of higher degrees are used to classify 't Hooft anomalies in physics. We will explain this in some more details in Sec. \ref{sec:anomaly_index}.

\begin{remark}
    The geometry behind Eqs.
    \eqref{eq:face_maps} and \eqref{eq:group_coho} is that we are doing simplicial cohomology on the space $BG$ (which is known as classifying space in mathematics), see, \eg, Ref. \cite{Weibel_1994_simplicial} for more details.
\end{remark}

\subsection{Differentiable group cohomology}

It is tempting to generalize the above definition of group cohomology of discrete to group cohomology of Lie group. Naively, we should require group cochains to be smooth, \ie $\omega:G^{n}\to A$ should be a smooth function on $G^{n}$ where $A$ is an abelian Lie group. We denote the cohomology of smooth cochains by $\rH^{*}_{s}(G;A)$. However, this definition does not work due to the van Est theorem \cite{Stasheff1978ContinuousCO}, which says that for connected compact Lie group $G$ we have
\beq
\rH^{n}_{s}(G;A)=0,n>0
\eeq
Moreover, if $G$ is compact but not connected, we have
\beq
\rH^{n}_{s}(G;A)\simeq \rH^{n}(\pi_{0}(G);A),n\geqslant 0
\eeq
This means that this cohomology group does not capture the smooth structure of $G$ at all.

Roughly speaking, there are three different ways to define useful cohomology theory for Lie groups \cite{Wagemann2011cohomology}. 

\begin{enumerate}

    \item The first is to use measurable cochains rather than smooth ones. The resulting cohomology is known as the Borel group cohomology in the physics literature\footnote{We remark that this cohomology has nothing to do with the so-called Borel equivariant cohomology, which is often referred to as Borel cohomology in the mathematical literature.}, see Refs.~\cite{Chen2010a,Ogata_2021}. Following the convention in the physics literature, we denote this cohomology by $\rH^{*}_{\text{B}}(G;A)$.
    
    \item The second way is to replace smooth cochains by \textit{locally} smooth cochains. By locally smooth we mean that $\omega:G^{n}\to A$ is smooth in a neighborhood of $(1,1,...,1)$. We denote this cohomology theory by $\rH^{*}_{loc,s}(G;A)$.
    
    \item The last way is to use simplicial method. Intuitively, one fixes a set of charts $\{U_{i}\}_{i\in J}$ on the Lie group $G$, and cochains are defined as smooth functions on each of trivialization chart $U_{i}$. On intersections, one needs \textit{transition functions} to patch them together. The resulting cohomology theory is denoted as $\rH^{*}_{\diff}(G;A)$, which is exactly the same as $\rH^{*}_{\text{simp},s}(G;A)$ in Ref. \cite{Wagemann2011cohomology}.
    
\end{enumerate}

Let us look at an example of differential group cohomology.

\begin{example}
    Suppose $\rho$ is a smooth projective representation of $G$ on a finite dimensional Hilbert space $V$, \ie $\rho:G\to \mathrm{PU}(V):=\mathrm{U}(V)/\U$. On each trivialization chart $U_{i}$ of $G$, one can lift $\rho$ to be $\rho_{i}:U_{i}\to \mathrm{U}(V)$, which may not be a representation of $G$ in general. On each chart, 
    \beq
    \rho_{i}(g)\rho_{i}(h)=\rho_{i}(gh)\omega_{i}(g,h)
    \eeq
    where $\omega_{i}(g,h)\in\U$ and $g,h,gh\in U_{i}$. Of course, $\omega_{i}$ is further constrained by the usual 2-cocycle condition Eq. \eqref{eq:2-cocycle_condition}. On the intersection $U_{i}\cap U_{j}$, one notes that two different liftings at most differ by a phase, \ie
    \beq
    \rho_{i}(g)=\rho_{j}(g)\eta_{ij}(g)
    \eeq
    where $\eta_{ij}(g)\in\U$ and $g\in U_{i}\cap U_{j}$. Thus we have
\beq
\omega_{i}(g,h)=\delta \eta_{ij}(g,h)\omega_{j}(g,h)
\eeq
We say $[\omega,\eta]$ defines a differentiable group cohomology class in $\rH^{2}_{\diff}(G;\U)$. Note that here $\eta_{ij}$ plays the role of transition functions in usual bundle theory. 
\end{example}

A priori, these cohomology groups may not be the same. However, it turns out that they are isomorphic for finite dimensional Lie groups with suitable coefficients.

\begin{theorem}[Corollary IV.9 and remark IV.13 of Ref. \cite{Wagemann2011cohomology}]
    For finite dimensional Lie group $G$ which acts smoothly on $\U$, we have
    \beq
    \rH^{*}_{loc,s}(G;\U)\simeq \rH^{*}_{\mathrm{B}}(G;\U\simeq \rH^{*}_{\diff}(G;\U)
    \eeq
\end{theorem}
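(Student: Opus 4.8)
The plan is to prove a string of isomorphisms relating the three cochain complexes, following Ref.~\cite{Wagemann2011cohomology}; concretely, one exhibits natural comparison maps --- the inclusion of smooth into continuous into measurable cochains, the ``germ at the identity'' map to locally smooth cochains, and a \v{C}ech-type map out of the simplicial/differentiable complex --- and shows that each induces an isomorphism on cohomology. The first and organizing step is a reduction on the coefficient group: the exponential sequence $0\to\z\to\R\to\U\to 0$ is a short exact sequence of coefficient modules in each of the theories (with $\z$ discrete and $\R$ carrying its usual topology), so it yields long exact sequences in all of them, and by the five lemma it suffices to establish each isomorphism separately with coefficients $\R$ and with coefficients $\z$. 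The $\z$-coefficient case reduces to the classical fact that for a finite-dimensional Lie group $G$ the measurable, continuous, locally smooth, and simplicial group cohomologies with discrete coefficients all compute $\rH^{*}(BG;\z)$ of the classifying space --- the degree-one instance being that a measurable homomorphism $G\to\z$ is automatically continuous --- which we import from the cited literature. The real content is therefore the $\R$-coefficient statement.

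For the equivalence of $\rH^{*}_{\mathrm{B}}$ with the continuous and with the locally smooth theory over $\R$, the tool is a regularization argument. Fix a left Haar measure on $G$ (available since a finite-dimensional Lie group is locally compact) and a smooth approximate identity $\{\phi_{\epsilon}\}_{\epsilon>0}$, i.e.\ nonnegative compactly supported smooth functions with $\int_{G}\phi_{\epsilon}\,dg=1$ and supports shrinking to $\{1\}$. Convolving a measurable cocycle $\omega\colon G^{n}\to\R$ in each of its arguments with $\phi_{\epsilon}$ yields a smooth cochain $\omega_{\epsilon}$; one then constructs an explicit cochain homotopy --- via a van~Est--type auxiliary double complex with one measurable direction and one ``averaging'' direction --- exhibiting $\omega-\omega_{\epsilon}$ as a coboundary. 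The fact that $\R$ is a topological vector space is exactly what makes this work: convolution is defined, the integrals converge absolutely thanks to compact support, and Fubini together with dominated convergence justify the formal manipulations. Localizing the same construction to a neighborhood of the identity then identifies the continuous theory with the locally smooth one.

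For the equivalence of $\rH^{*}_{loc,s}$ with $\rH^{*}_{\diff}$, one works with the \v{C}ech-type bicomplex implicit in the definition of differentiable cohomology: after choosing a good open cover $\mathfrak{U}$ of $G$ (every finite intersection smoothly contractible; such covers exist on any manifold, e.g.\ geodesically convex balls for a left-invariant Riemannian metric) and the induced covers of the powers $G^{p}$, a differentiable $n$-cochain consists of smooth local pieces on the cover elements together with transition data on overlaps. Restricting such a cochain to a single chart containing the identity recovers an honest locally smooth cochain, and this restriction is compatible with the differentials; conversely, a locally smooth cocycle can be propagated to all of $G^{n}$ using a fixed trivializing chart around $1$ and the cover combinatorics. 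One then checks, using partitions of unity for $\mathfrak{U}$, that the two operations are mutually inverse up to cochain homotopy, giving $\rH^{*}_{\diff}(G;\U)\simeq\rH^{*}_{loc,s}(G;\U)$. Chaining this with the previous paragraph yields the three-way isomorphism.

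The main obstacle is the homotopy in the second paragraph --- producing a natural operator that certifies a measurable cocycle and its convolution smoothing differ by a coboundary, in a way uniform enough to descend to cohomology. This is the technical heart of the theorem (it originates with van~Est, with the measurable refinements due to Wigner and to Austin--Moore), and it is precisely the step that breaks down for coefficient modules that are not topological vector spaces; this is why reducing to $\R$ coefficients, and handling discrete coefficients separately by classical means, is not a mere convenience but the backbone of the argument.
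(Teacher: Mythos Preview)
The paper does not prove this theorem at all: it is stated as a citation of Corollary~IV.9 and Remark~IV.13 of Ref.~\cite{Wagemann2011cohomology}, with no argument given. There is therefore nothing in the paper to compare your proposal against.

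That said, your sketch is a reasonable outline of the strategy one finds in the cited reference and the surrounding literature. The reduction via the exponential sequence $0\to\z\to\R\to\U\to 0$ and the five lemma is indeed the organizing device, and the van~Est/Wigner regularization argument for $\R$-coefficients together with the \v{C}ech-type comparison for the simplicial model are the correct ingredients. One caveat: your third paragraph makes the passage from $\rH^{*}_{\diff}$ to $\rH^{*}_{loc,s}$ sound more direct than it is. In Ref.~\cite{Wagemann2011cohomology} the comparison is mediated by an intermediate ``globally smooth'' model and a spectral-sequence argument rather than by a single restriction-and-propagation homotopy as you describe; the claim that a locally smooth cocycle can simply be ``propagated to all of $G^{n}$ using a fixed trivializing chart'' glosses over the genuine obstruction theory that the simplicial machinery is there to handle. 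If you intend to write this up, that step needs to be fleshed out or replaced by a direct appeal to the cited corollary.
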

Despite of being isomorphic, $\rH^{*}_{\diff}(G;\U)$ is more convenient for constructing the anomaly index (see Ref.~\cite{kapustin2024anomalous}).

There are some useful properties of $\rH_{\diff}$. We list them here and the readers are referred to Refs. \cite{brylinski2000differentiable,Wagemann2011cohomology} for proofs.

\begin{proposition}
    If either $G$ or $A$ is discrete, then
    \beq
    \rH^{*}_{\diff}(G;A)\simeq \rH^{*}(BG;A)
    \eeq
    where the right hand side is the singular cohomology of the classifying space $BG$.
\end{proposition}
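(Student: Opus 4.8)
The plan is to unwind the definition of $\rH^{*}_{\diff}(G;A)$ recalled in this section as the total cohomology of a double complex built from the simplicial manifold $BG_{\bullet}=(G^{n})_{n\geqslant 0}$ with its face maps \eqref{eq:face_maps}: in the simplicial direction one has the alternating sum $\delta=d^{*}$ of the pullbacks along the $d_{k}$, and in the other direction, over each $G^{n}$ one takes a good open cover and the \emph{smooth} $A$-valued \v{C}ech cochains on it. I would then show that in each of the two cases of the hypothesis this double complex degenerates to a complex already known to compute $\rH^{*}(BG;A)=\rH^{*}(G;A)$, the latter being the simplicial/singular cohomology of the classifying space as noted in the remark below \eqref{eq:group_coho}.

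\textbf{Case $G$ discrete.} Here every $G^{n}$ is a $0$-dimensional manifold, so a trivializing chart is a single point and \emph{every} $A$-valued function is (vacuously) smooth. The cover by singletons has all nonempty intersections trivial, so the \v{C}ech direction of the double complex is concentrated in degree $0$, where it returns exactly the group of all functions $G^{n}\to A$. The surviving differential is $\delta=d^{*}$, i.e. the coboundary of the bar complex $C^{\bullet}(BG,A)$ introduced right after \eqref{eq:face_maps}. Thus the differentiable complex is \emph{literally} the bar complex, and $\rH^{*}_{\diff}(G;A)=\rH^{*}(BG;A)$. This case is essentially a tautology.

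\textbf{Case $A$ discrete.} Now $G$ may be positive-dimensional. I would choose a compatible system of good open covers $\mathcal{U}_{n}$ of the $G^{n}$ (all finite intersections empty or contractible, and behaving well under all face maps); such a system exists by the standard refinement argument for simplicial manifolds. Since $A$ carries the discrete topology, a smooth $A$-valued function on a contractible chart (or chart intersection) is locally constant, hence constant, so the smooth \v{C}ech cochains on $\mathcal{U}_{n}$ with values in $A$ coincide with the ordinary \v{C}ech cochains of the \emph{constant} presheaf $A$. Consequently the $p$-th row of the differentiable double complex is the \v{C}ech complex of a good cover of $G^{p}$ with constant coefficients, whose cohomology is the singular cohomology $\rH^{q}(G^{p};A)$. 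Running the spectral sequence of the double complex along this direction therefore yields $E_{1}^{p,q}=\rH^{q}(G^{p};A)$ with the simplicial differential on the $p$-index, which is exactly the $E_{1}$-page of the standard spectral sequence of the simplicial space $BG_{\bullet}$ abutting to the cohomology of its geometric realization $|BG_{\bullet}|\simeq BG$. Hence $\rH^{*}_{\diff}(G;A)\cong \rH^{*}(BG;A)$.

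\textbf{Main obstacle.} The first case is pure bookkeeping; the work is in the second. The delicate points are (i) producing face-map-compatible good covers of all the $G^{n}$ at once, and (ii) matching the resulting double-complex spectral sequence with the Segal-type spectral sequence $\rH^{q}(G^{p};A)\Rightarrow\rH^{p+q}(|BG_{\bullet}|;A)$ and invoking that the bar construction realizes $BG$. One must also observe that with $A$ discrete the constant sheaf $\underline{A}$ has no nontrivial higher smooth (de Rham--type) resolution, so no extra columns enter the double complex; this is immediate from discreteness, and it is precisely the feature that fails for $A=\U$ or $A=\R$, explaining why the statement is special to discrete coefficients. Alternatively one could shortcut the argument by citing $\rH^{*}_{\diff}(G;A)\cong\rH^{*}_{\mathrm{B}}(G;A)$ together with the Wigner--Moore identification of Borel cohomology with discrete coefficients with $\rH^{*}(BG;A)$, but the simplicial-space argument above is self-contained given the definitions recalled here.
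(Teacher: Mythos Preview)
The paper does not actually prove this proposition: immediately before it, the text says ``We list them here and the readers are referred to Refs.~\cite{brylinski2000differentiable,Wagemann2011cohomology} for proofs.'' So there is no in-paper argument to compare against.

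Your sketch is essentially the standard argument one finds in those references. The discrete-$G$ case is indeed a tautology once one observes that every function on a $0$-dimensional manifold is smooth and the \v{C}ech direction collapses. For the discrete-$A$ case, your reduction of smooth $A$-valued \v{C}ech cochains to the ordinary \v{C}ech complex of the constant sheaf is correct, and the spectral-sequence identification with $\rH^{*}(|BG_{\bullet}|;A)$ is the right strategy. One refinement worth flagging: merely observing that your double-complex spectral sequence and Segal's spectral sequence share the same $E_{1}$ page does not by itself yield an isomorphism of abutments; you need an actual map of double (or filtered) complexes inducing that identification on $E_{1}$. Here such a map exists---the inclusion of locally constant functions into singular cochains on each chart provides it---and once you make that explicit, the comparison theorem for spectral sequences finishes the job. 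The existence of face-map-compatible good covers on the simplicial manifold $G^{\bullet}$ is genuinely technical but is handled in the cited references (and is independent of the coefficient group), so citing it is reasonable.
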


\begin{proposition}
    If $G$ is compact
    \beq
    \rH^{n}_{\diff}(G;\R)=0,n>0
    \eeq
\end{proposition}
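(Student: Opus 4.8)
The plan is to reduce the claim to a standard Haar-averaging argument over the compact group $G$, which forces every positive-degree cocycle to be a coboundary. The first step is to pass from the simplicial/charts definition of $\rH_{\diff}$ to a cochain model in which integration over $G$ makes sense: by the real-coefficient analogue of the comparison theorem quoted above (see Refs.~\cite{Wagemann2011cohomology,brylinski2000differentiable}), $\rH^{*}_{\diff}(G;\R)\cong \rH^{*}_{\mathrm{B}}(G;\R)$, where the right-hand side is computed by the complex $C^{\bullet}$ of Borel-measurable maps $F\colon G^{\bullet+1}\to\R$ invariant under the diagonal left-translation action of $G$, with the simplicial differential $\partial$ assembled from the face maps of Eq.~\eqref{eq:face_maps}. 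This reduction is the only place where nontrivial machinery enters; everything afterwards is elementary homological algebra.

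Next I would introduce two operators on the full (non-invariant) complex of measurable maps $G^{\bullet+1}\to\R$. Let $\mu$ be the normalized Haar measure of $G$, which is bi-invariant since $G$ is compact, with $\mu(G)=1$. Define the averaging map $P\colon C^{n}\to C^{n}$ by $(PF)(g_{0},\dots,g_{n}):=\int_{G}F(hg_{0},\dots,hg_{n})\,\dd\mu(h)$; bi-invariance of $\mu$ and $\mu(G)=1$ imply that $P$ is a projection whose image is exactly the invariant subcomplex, and since $\partial$ is a $\bbC$-linear combination of $G$-equivariant face maps one has $P\partial=\partial P$. Define also the usual contracting homotopy of the bar resolution, $s\colon C^{n}\to C^{n-1}$, by $(sF)(g_{0},\dots,g_{n-1}):=F(e,g_{0},\dots,g_{n-1})$ with $e\in G$ the identity; a direct telescoping computation gives $\partial s+s\partial=\mathrm{id}$ on $C^{n}$ for every $n\geqslant 1$. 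This $s$ is not $G$-equivariant, which is why the full complex cannot be used directly.

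Finally I would set $\bar s:=P\circ s$ on the invariant complex and check it is a contracting homotopy there in positive degrees. For invariant $F$ of degree $n\geqslant 1$ one has $\bar s\partial F=Ps\partial F$ (as $\partial F$ is again invariant, so $P$ fixes it) and $\partial\bar s F=\partial PsF=P\partial sF$, hence $\partial\bar s F+\bar s\partial F=P(\partial s+s\partial)F=PF=F$. Therefore every $n$-cocycle with $n\geqslant 1$ is a coboundary, so $\rH^{n}_{\mathrm{B}}(G;\R)=0$, and thus $\rH^{n}_{\diff}(G;\R)=0$ for $n>0$ (in degree $0$ the homotopy fails and one recovers $\R$, consistent with the statement).

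I expect the main obstacle to be exactly the first step: the simplicial/charts definition of $\rH_{\diff}$ is not itself amenable to Haar averaging, so one must first know it agrees with the Borel (or locally smooth) model for $\R$-coefficients; once that is granted the vanishing is a one-line homotopy argument. A structural alternative that avoids measurable cochains is to combine the van Est vanishing recalled above, $\rH^{n}_{s}(G^{0};\R)=0$ for $n>0$ on the identity component $G^{0}$, with $\rH^{n}(\pi_{0}(G);\R)=0$ for $n>0$ (a finite group has no higher cohomology with uniquely divisible coefficients), and glue them over the normal subgroup $G^{0}\subset G$ via a Hochschild--Serre spectral sequence, together with the comparison results relating the differentiable, locally smooth, and smooth models for $\R$-coefficients; this trades integration for more homological input but handles the disconnected case just as uniformly.
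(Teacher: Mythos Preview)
The paper does not actually prove this proposition: it is stated in a list of ``useful properties of $\rH_{\diff}$'' with the explicit disclaimer that ``readers are referred to Refs.~\cite{brylinski2000differentiable,Wagemann2011cohomology} for proofs.'' So there is no paper proof to compare against; your write-up supplies what the paper omits.

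Your argument is the standard one and is essentially correct. The Haar-averaging contracting homotopy $\bar s=P\circ s$ on the homogeneous complex is exactly how one proves vanishing of continuous/measurable cohomology of a compact group with coefficients in a vector space, and the reduction $\rH^{*}_{\diff}(G;\R)\cong\rH^{*}_{\mathrm{B}}(G;\R)$ via Wagemann--Wockel is the right first move (the paper only states the comparison for $\U$-coefficients, but the cited reference covers more general smooth coefficient modules, so your invocation is legitimate). One minor technical point worth tightening: for the integral $\int_G F(hg_0,\dots,hg_n)\,\dd\mu(h)$ to be well-defined you need some integrability of the Borel cochain $F$; in practice one works with bounded (or $L^\infty$) measurable cochains, or first passes to continuous cochains, where this is automatic. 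With that caveat the homotopy identity $\partial\bar s+\bar s\partial=\mathrm{id}$ on invariant cochains of positive degree is clean and the conclusion follows. Your spectral-sequence alternative for the disconnected case is also sound but unnecessary here, since Haar averaging already handles all compact $G$ uniformly.
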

By Bockstein homomorphism,
\begin{corollary}
    For compact Lie group $G$, we have
    \beq
    \rH^{n}_{\diff}(G;\U)\simeq \rH^{n+1}(BG;\z),\,n\geqslant 1
    \eeq
\end{corollary}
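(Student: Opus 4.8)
The plan is to run the Bockstein argument attached to the coefficient short exact sequence $0 \to \z \to \R \to \U \to 0$ (with $\U = \R/\z$), and then collapse the resulting long exact sequence using the vanishing of $\rH^{*}_{\diff}(G;\R)$ for compact $G$, finishing by identifying the $\z$-coefficient differentiable cohomology with singular cohomology of $BG$.

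First I would recall that a short exact sequence of abelian Lie groups $0 \to A' \to A \to A'' \to 0$ in which the surjection $A \to A''$ admits local smooth sections induces a short exact sequence of differentiable cochain complexes $0 \to C^{*}_{\diff}(G;A') \to C^{*}_{\diff}(G;A) \to C^{*}_{\diff}(G;A'') \to 0$. This applies to $0 \to \z \to \R \to \U \to 0$ because $\R \to \R/\z$ is a smooth covering map, so locally-smooth $\U$-valued cochains lift locally to locally-smooth $\R$-valued cochains. Taking cohomology and letting $\beta$ denote the connecting (Bockstein) map yields
\beq
\cdots \to \rH^{n}_{\diff}(G;\R) \to \rH^{n}_{\diff}(G;\U) \xrightarrow{\ \beta\ } \rH^{n+1}_{\diff}(G;\z) \to \rH^{n+1}_{\diff}(G;\R) \to \cdots
\eeq

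Next I would invoke the proposition that $\rH^{m}_{\diff}(G;\R)=0$ for all $m>0$ when $G$ is compact. For $n\geqslant 1$ this annihilates both $\rH^{n}_{\diff}(G;\R)$ (since $n\geqslant 1>0$) and $\rH^{n+1}_{\diff}(G;\R)$ (since $n+1\geqslant 2>0$), so exactness forces $\beta\colon \rH^{n}_{\diff}(G;\U)\to \rH^{n+1}_{\diff}(G;\z)$ to be an isomorphism for every $n\geqslant 1$. Finally, since $\z$ carries the discrete topology, the proposition identifying differentiable group cohomology with discrete coefficients gives $\rH^{n+1}_{\diff}(G;\z)\simeq \rH^{n+1}(BG;\z)$; composing the two isomorphisms gives $\rH^{n}_{\diff}(G;\U)\simeq \rH^{n+1}(BG;\z)$ for $n\geqslant 1$.

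The main obstacle is the very first step: one must ensure that the chosen model for $\rH_{\diff}$ genuinely sits in a long exact sequence for $0 \to \z \to \R \to \U \to 0$. In the naive globally-smooth model this can fail, since $\R \to \U$ has no global smooth section; this is exactly why the locally-smooth (equivalently simplicial) model is used, and there the cochain-level surjectivity above does hold. Once that is in place, the remainder is the standard homological algebra of the snake lemma. Alternatively, one may transport the whole computation to $\rH^{*}_{\mathrm{B}}$ or $\rH^{*}_{loc,s}$ via the isomorphism theorem quoted above and run the identical Bockstein argument there.
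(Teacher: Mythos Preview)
Your argument is correct and matches the paper's approach exactly: the paper simply writes ``By Bockstein homomorphism'' before stating the corollary, and your proposal spells out precisely that Bockstein/long-exact-sequence computation combined with the vanishing of $\rH^{n}_{\diff}(G;\R)$ and the identification $\rH^{*}_{\diff}(G;\z)\simeq \rH^{*}(BG;\z)$.
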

\begin{proposition}[Kunneth formula, Appendix B of Ref.~\cite{Cheng2015}]
    Let $G$ and $H$ be finite-dimensional Lie groups (including discrete groups) then
    \beq
    \rH^{n}_{\diff}(G\times H;\U)\simeq \bigoplus_{p+q=n}\rH^{p}_{\diff}(G;\rH_{\diff}^{q}(H;\U))
    \eeq
\end{proposition}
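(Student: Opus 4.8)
The plan is to realize $\rH^{*}_{\diff}$ through a simplicial--\v{C}ech double complex, exploit that $B(G\times H)$ is the levelwise product $BG\times BH$, and then feed this into the standard Eilenberg--Zilber/K\"unneth machinery; the only genuinely new point is that every construction can be carried out with smooth (equivalently, locally smooth, or good-cover) cochains.

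First I would fix the computing complex: $\rH^{*}_{\diff}(G;A)$ is the cohomology of the total complex of a bicomplex $C^{p,q}(G;A)$ built from the simplicial manifold $BG_{\bullet}$ (with $BG_{n}=G^{n}$ and the face maps of Eq.~\eqref{eq:face_maps}) by taking, in the $q$-direction, the \v{C}ech complex of a fixed good cover of each $G^{n}$ with values in the sheaf of smooth $A$-valued functions -- for $A=\U$ this is precisely the model underlying the anomaly index. I would then observe that $B(G\times H)_{n}=(G\times H)^{n}=G^{n}\times H^{n}$ compatibly with all face maps, so that $B(G\times H)_{\bullet}=BG_{\bullet}\times BH_{\bullet}$ as simplicial manifolds, and that the good cover of $(G\times H)^{n}$ may be taken to be the product of good covers of $G^{n}$ and $H^{n}$. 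Hence the complex computing $\rH^{*}_{\diff}(G\times H;\U)$ is assembled, along a bi-simplicial index, from the complexes computing $\rH^{*}_{\diff}(G;-)$ and $\rH^{*}_{\diff}(H;-)$. Applying the Eilenberg--Zilber theorem -- whose shuffle and Alexander--Whitney maps are built only from reindexings and signed sums, hence preserve smoothness and the chosen product covers -- identifies the diagonal of this bi-simplicial object, up to quasi-isomorphism, with the tensor product of the two simplicial complexes. Feeding this through the algebraic K\"unneth / hypercohomology (Cartan--Eilenberg) formalism then produces a first-quadrant spectral sequence
\beq
E_{2}^{p,q}=\rH^{p}_{\diff}\big(G;\rH^{q}_{\diff}(H;\U)\big)\ \Longrightarrow\ \rH^{p+q}_{\diff}(G\times H;\U),
\eeq
in which the $G$-action on $\rH^{q}_{\diff}(H;\U)$ is trivial because $G$ acts trivially on $H$.

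It then remains to show that this spectral sequence degenerates at $E_{2}$ and that the associated extension problem is trivial, and this is the step I expect to be the real work. The key input is the product structure itself: the inclusion $\iota_{G}\colon G\hookrightarrow G\times H$, $g\mapsto(g,e)$, and the projection $\pi_{G}\colon G\times H\to G$ satisfy $\pi_{G}\circ\iota_{G}=\mathrm{id}_{G}$, so inflation $\pi_{G}^{*}$ exhibits $\rH^{*}_{\diff}(G;\U)$ as a split summand of $\rH^{*}_{\diff}(G\times H;\U)$ with retraction $\iota_{G}^{*}$, and symmetrically for $H$ via $\iota_{H},\pi_{H}$. By naturality of the spectral sequence, the two edge maps $E_{2}^{*,0}\to\rH^{*}_{\diff}(G\times H;\U)$ and $\rH^{*}_{\diff}(G\times H;\U)\to E_{2}^{0,*}$ are then split, forcing $E_{\infty}^{*,0}=E_{2}^{*,0}$ and $E_{\infty}^{0,*}=E_{2}^{0,*}$; the remaining task is to propagate this from the two axes to all of $E_{2}$, e.g.\ by exploiting that the spectral sequence is multiplicative and generated over $E_{2}^{*,0}$ and $E_{2}^{0,*}$ by external cup products, or by comparing directly with the external product map $\rH^{p}_{\diff}(G;\U)\otimes\rH^{q}_{\diff}(H;\U)\to\rH^{p+q}_{\diff}(G\times H;\U)$ and inducting on $p+q$. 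Once all differentials vanish and the filtration splits, the claimed direct-sum decomposition follows; the purely discrete-group case is simply the degenerate instance in which the \v{C}ech direction is trivial, covered by the same argument. A careful implementation of all of this, including the smoothness bookkeeping, is given in Appendix~B of Ref.~\cite{Cheng2016translation}; alternatively one may transport a K\"unneth formula for the Borel group cohomology $\rH^{*}_{\mathrm{B}}$ across the isomorphism $\rH^{*}_{\diff}\simeq\rH^{*}_{\mathrm{B}}$.
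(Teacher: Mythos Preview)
The paper does not supply its own proof of this proposition; it is simply quoted as a known result, with attribution to Appendix~B of Ref.~\cite{Cheng2016translation}. There is therefore nothing in the paper to compare your argument against.

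On the merits of your sketch: the overall architecture---the simplicial--\v{C}ech model for $\rH^{*}_{\diff}$, the identification $B(G\times H)_{\bullet}=BG_{\bullet}\times BH_{\bullet}$, and Eilenberg--Zilber leading to a first-quadrant spectral sequence with the stated $E_{2}$-page---is standard and correct. The step you yourself flag as ``the real work'' is indeed where the argument is incomplete: the splittings furnished by $\iota_{G},\pi_{G}$ and $\iota_{H},\pi_{H}$ only force $E_{\infty}=E_{2}$ on the two coordinate axes, and your proposed propagation to all of $E_{2}$ via multiplicativity does not go through as written, because the external product $E_{2}^{*,0}\otimes E_{2}^{0,*}\to E_{2}^{*,*}$ is generally not surjective (for instance $\rH^{p}_{\diff}(G;\rH^{q}_{\diff}(H;\U))$ typically contains contributions beyond $\rH^{p}_{\diff}(G;\U)\otimes\rH^{q}_{\diff}(H;\U)$). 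A cleaner route is the one you allude to at the very end: identify your spectral sequence with the Lyndon--Hochschild--Serre sequence of the split extension $1\to H\to G\times H\to G\to 1$ and use the section $G\to G\times H$ to produce a cochain-level retraction, or simply transport the Borel-cohomology K\"unneth formula across the isomorphism $\rH^{*}_{\diff}\simeq\rH^{*}_{\mathrm{B}}$.
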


\section{Construction of anomaly index}\label{sec:anomaly_index}

\subsection{Decomposition}

We now present the construction of the anomaly index in Ref. \cite{kapustin2024anomalous}. The spirit of this construction is similar to Ref. \cite{Else_2014}, \ie we need to cut our chain at the origin and decompose the symmetry action. This cut induces a factorization on $\A^{ql}$, such that
\beq\label{eq:algebra_decomp}
\A^{ql}\simeq \A_{<0}\otimes\A_{\geqslant 0}
\eeq

To study how symmetry action decomposes under this cut, it is useful to define some more subgroups of $\G^{lp}$, the group of all symmetry actions.

\begin{definition}\label{def:subgroups}
    We define the following useful subgroups of $\G^{lp}$,
    \begin{enumerate}
    
    \item $\G^{lp}_{0}$: The subgroup generated by $\Ad_{U}$, where $U$ is a quasi-local unitary operators.
    
    \item $\G^{lp}_{<0}$: The subgroup that acts trivially on $\A_{\geqslant 0}$ and maps $\A_{<0}$ to itself.
    
    \item $\G^{lp}_{\geqslant 0}$: The subgroup acts trivially on $\A_{<0}$ and maps $\A_{\geqslant 0}$ to itself.
    
    \item  $\G^{lp}_{+}$ (resp. $\G^{lp}_{-}$) is the subgroup generated by $\G^{lp}_{\geqslant 0}\G^{lp}_{0}$ (resp. $\G^{lp}_{<0}\G^{lp}_{0}$).
    
\end{enumerate}
\end{definition}

Note that $\G^{lp}_{\geqslant 0}$ and $\G^{lp}_{<0}$ act on $\G^{lp}_{0}$ by conjugation. More explicitly, for $\alpha\in\G_{\geqslant 0}^{lp}$ or $\alpha\in\G_{< 0}^{lp}$ and $U$ a quasi-local unitary,
\beq\label{eq:group_action}
\alpha\triangleright\Ad_{U}:=\alpha\Ad_{U}\alpha^{-1}=\Ad_{\alpha(U)}.
\eeq

\begin{figure}[h!]
    \centering
    \includegraphics[width=0.5\textwidth]{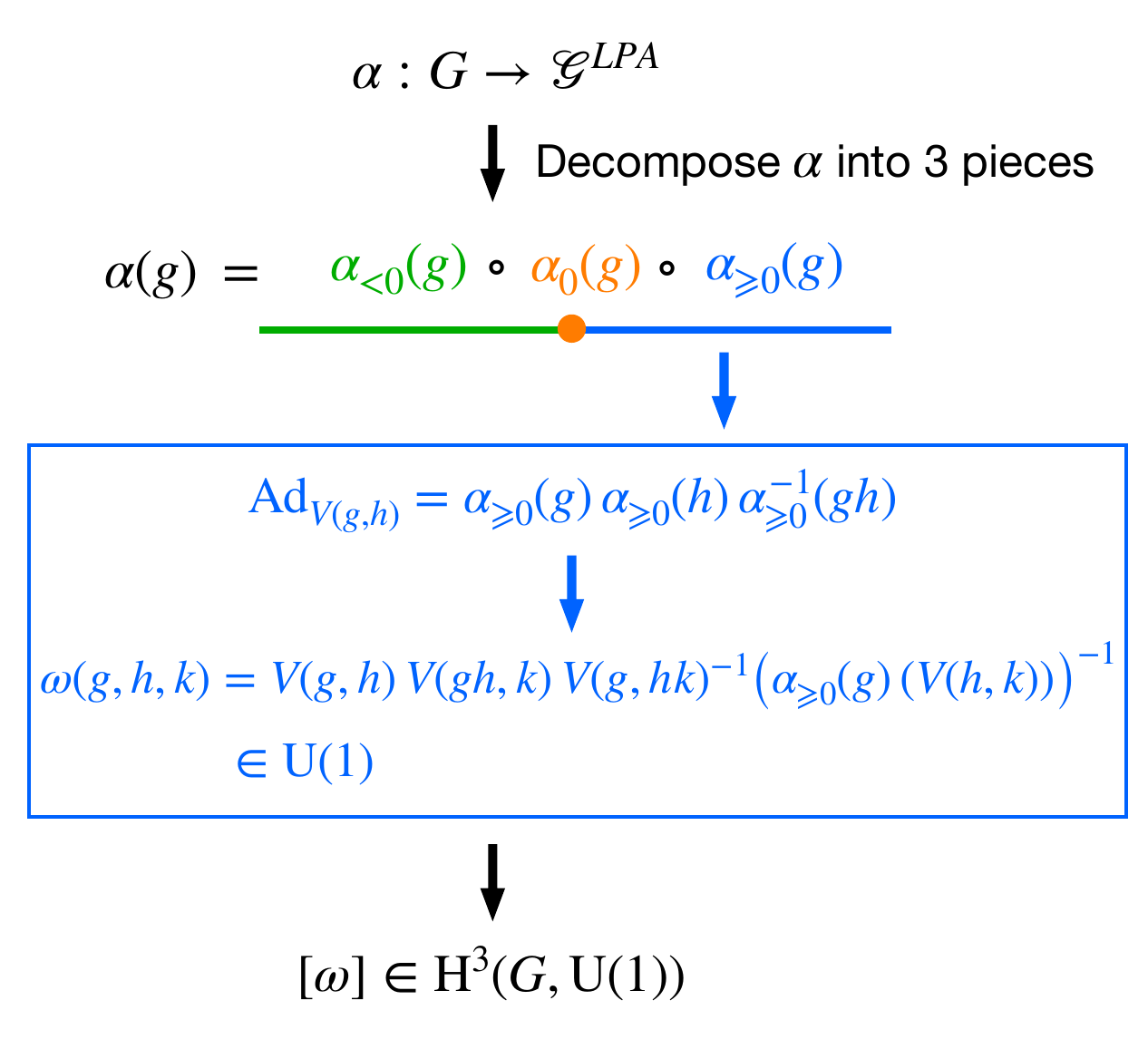}
    \caption{The anomaly index $\omega\in\rH^{3}(G;\U)$ from decomposing the symmetry action.}
    \label{fig:anomaly_index_LPA}
\end{figure}

To accomplish the decomposition in Fig.~\ref{fig:anomaly_index_LPA}, we need a few lemmas in Ref. \cite{kapustin2024anomalous}. Readers are referred to Ref. \cite{kapustin2024anomalous} for rigorous proofs of these lemmas. Here we only state the lemmas and comment on why some of these lemmas are intuitively true.

\begin{lemma}[Lemma 2.1 of Ref. \cite{kapustin2024anomalous}]\label{lemma:0_index_decomp}
    Let $\alpha\in\G^{lp}$ be an LPA, then the following statements are equivalent.
    \begin{enumerate}
        \item The GNVW index (see Appendix \ref{subsec:QCA_LPA} for a brief review) $\ind(\alpha)=0$.
        \item The element $\alpha$ admits following decomposition
        \beq\label{eq:decomp}
        \alpha=\alpha_{<0}\alpha_{0}\alpha_{\geqslant 0}.
        \eeq
    \end{enumerate}
\end{lemma}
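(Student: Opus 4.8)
The plan is to prove the two implications separately; the direction $(2)\Rightarrow(1)$ is short, and $(1)\Rightarrow(2)$ carries essentially all the content. For $(2)\Rightarrow(1)$ I would argue as follows: since the GNVW index $\ind$ is a group homomorphism on $\G^{lp}$, it suffices to check that it vanishes on each of the three factors of $\alpha=\alpha_{<0}\alpha_{0}\alpha_{\geqslant0}$. The middle factor is $\alpha_{0}=\Ad_{U}$ for some $U\in\cU^{ql}$; writing $U$ as a norm limit of local unitaries $U_{j}$, each $\Ad_{U_{j}}$ is a depth-one circuit, hence lies in $\G^{cir}=\ker(\ind)$, and since the index of an LPA is by definition the limit of the indices of any approximating sequence of QCAs, $\ind(\alpha_{0})=0$. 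For $\alpha_{<0}\in\G^{lp}_{<0}$, note that by definition it restricts to the identity on $\A^{ql}_{\geqslant0}$ and maps $\A^{ql}_{<0}$ into itself, so across the cut between sites $-1$ and $0$ it has product form $\beta\otimes\mathrm{id}$; a product automorphism transports nothing across that cut, so its GNVW index, which does not depend on the choice of cut, vanishes. Symmetrically $\ind(\alpha_{\geqslant0})=0$, and summing gives $\ind(\alpha)=0$.

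For $(1)\Rightarrow(2)$ the idea is to represent $\alpha$ as a finite-time evolution and then cut the generating Hamiltonian at the origin. Since $\ind(\alpha)=0$, the structure theory of one-dimensional LPAs (the semidirect product $\G^{lp}=\G^{loc}\rtimes\G^{T}$) puts $\alpha$ in $\G^{loc}$, so $\alpha$ is the time evolution over a unit interval generated by a local, possibly time-dependent, Hamiltonian $\delta_{H(s)}$ with fast-decaying tails. I would split the terms of $H(s)$ into those supported entirely in $(-\infty,0)$, those supported entirely in $[0,\infty)$, and the remaining ``straddling'' terms, writing $H(s)=H_{<0}(s)+H_{\geqslant0}(s)+H_{\partial}(s)$. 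Because $H_{<0}$ and $H_{\geqslant0}$ have disjoint supports, the evolutions $\tau_{<0}$ and $\tau_{\geqslant0}$ they generate commute; moreover $\delta_{H_{<0}}$ maps $\A^{ql}_{<0}$ into itself and annihilates $\A^{ql}_{\geqslant0}$, so $\tau_{<0}\in\G^{lp}_{<0}$, and likewise $\tau_{\geqslant0}\in\G^{lp}_{\geqslant0}$. It then remains to understand the defect $\sigma:=(\tau_{<0}\tau_{\geqslant0})^{-1}\circ\alpha$: passing to the interaction picture, $\sigma$ is itself the unit-time evolution generated by the straddling term transported backward by $\tau_{<0}\tau_{\geqslant0}$. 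The decay hypothesis on $H$ makes the straddling part summable, so $H_{\partial}(s)$ is a genuinely bounded element of $\A^{ql}$, and conjugation by an LPA preserves the norm, so this interaction-picture generator is uniformly bounded in $s$; hence its norm-convergent time-ordered exponential is an honest element $U\in\cU^{ql}$, i.e. $\sigma=\Ad_{U}$. Rearranging, $\alpha=\tau_{<0}\tau_{\geqslant0}\,\Ad_{U}=\tau_{<0}\,\Ad_{\tau_{\geqslant0}(U)}\,\tau_{\geqslant0}$, which is the desired decomposition $\alpha=\alpha_{<0}\alpha_{0}\alpha_{\geqslant0}$ with $\alpha_{0}=\Ad_{\tau_{\geqslant0}(U)}\in\G^{lp}_{0}$, using that the image of a quasi-local unitary under an LPA is again in $\cU^{ql}$.

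The main obstacle is the input ``$\ind(\alpha)=0\Rightarrow\alpha\in\G^{loc}$'', which is itself the substantive half of the 1d structure theory of LPAs. If a self-contained argument is wanted, one should instead approximate $\alpha$ in norm by circuits — possible because the index is continuous and takes values in the discrete group $\z[\{\log p_{j}\}]$, so the approximating QCAs eventually have vanishing index — then split each circuit into left, straddling, and right groups of gates to obtain $\beta_{n}=\beta_{n}^{<0}\,\Ad_{U_{n}}\,\beta_{n}^{\geqslant0}$ by exact bookkeeping of the finitely many gates near the cut, and finally pass to a limit. The difficulty then migrates into that limit: the approximating sequence must be chosen so that the half-chain factors $\beta_{n}^{<0},\beta_{n}^{\geqslant0}$ and the near-cut unitaries $U_{n}$ all converge, which requires uniform control of their tails. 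A secondary technical point, present in both approaches, is verifying carefully that $H_{\partial}$ is summable and that $\tau_{<0},\tau_{\geqslant0}$ genuinely belong to $\G^{lp}_{<0},\G^{lp}_{\geqslant0}$; a Lieb--Robinson estimate is what additionally guarantees that $U$ can be taken localized near the cut — not needed for the statement of the lemma, but what makes the decomposition useful later.
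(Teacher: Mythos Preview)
The paper does not actually prove this lemma: it is quoted as Lemma~2.1 of Ref.~\cite{kapustin2024anomalous}, and the surrounding text explicitly says ``Readers are referred to Ref.~\cite{kapustin2024anomalous} for rigorous proofs of these lemmas. Here we only state the lemmas and comment on why some of these lemmas are intuitively true.'' The sole comment offered for this particular lemma is the heuristic that restricting $\alpha$ to the two half-chains gives $\alpha_{<0}$ and $\alpha_{\geqslant0}$, while the leftover spreading across the cut is absorbed into $\alpha_{0}$. There is no argument for $(2)\Rightarrow(1)$ and no analysis of why the residual piece is inner.

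Your proposal therefore goes well beyond what the paper supplies, and its outline is sound. The $(2)\Rightarrow(1)$ direction via additivity of $\ind$ is correct. For $(1)\Rightarrow(2)$, your strategy of invoking $\G^{lp}=\G^{loc}\rtimes\G^{T}$ to place $\alpha$ in $\G^{loc}$, splitting the generating Hamiltonian at the cut, and using the interaction picture to identify the defect as $\Ad_{U}$ with $U\in\cU^{ql}$ is exactly how the result is obtained in the source you are re-deriving; you are right that the nontrivial input is $\ker(\ind)=\G^{loc}$, which the present paper simply quotes from Ref.~\cite{Ranard_2022}. Your alternative route through approximating circuits is also viable but, as you note, shifts the difficulty into controlling the limit of the half-chain factors. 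Either way, the paper itself treats this lemma as a black box, so there is nothing in the paper to compare against beyond the one-line heuristic, which your argument subsumes.
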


Given the decomposition Eq. \eqref{eq:algebra_decomp}, one may want to restrict the symmetry action $\alpha$ to each half chain, which gives the $\alpha_{\geqslant 0}$ and $\alpha_{<0}$ parts. However, generically $\alpha$ can expand the support of an operator. If an operator is supported on one of the two half chains, under $\alpha$ it will generically acquire some support on the other half chain. But $\alpha_{\geqslant 0}$ and $\alpha_{<0}$ cannot achieve this, so the $\alpha_{0}$ part is also expected. 

\begin{lemma}[Lemma 2.2 of Ref. \cite{kapustin2024anomalous}]\label{lemma:non_unique_decomp}
    Suppose $\alpha\in\G^{lp}$ has a vanishing GNVW index and it admits two different decompositions
    \beq
    \alpha=\alpha_{<0}\alpha_{0}\alpha_{\geqslant 0}=\tilde{\alpha}_{<0}\tilde{\alpha}_{0}\tilde{\alpha}_{\geqslant 0}
    \eeq
    then $\alpha_{<0}\tilde{\alpha}_{<0}^{-1}\in\G^{lp}_{0}$ and $\alpha_{\geqslant 0}\tilde{\alpha}_{\geqslant 0}^{-1}\in \G^{lp}_{0}$.
\end{lemma}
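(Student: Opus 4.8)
The plan is to repackage the two decompositions as a single identity in the quotient group $\G^{lp}/\G^{lp}_0$ and then read off the conclusion from the GNVW index. The first thing I would establish is that $\G^{lp}_0$ is a \emph{normal} subgroup of $\G^{lp}$: this follows at once from $\varphi\circ\Ad_U\circ\varphi^{-1}=\Ad_{\varphi(U)}$ together with the fact that an LPA carries a quasi-local unitary to a quasi-local unitary, so $\G^{lp}_0$ is the normal subgroup generated by the $\Ad_U$. Now set $\beta:=\tilde\alpha_{<0}^{-1}\alpha_{<0}\in\G^{lp}_{<0}$ and $\gamma:=\tilde\alpha_{\geqslant0}\alpha_{\geqslant0}^{-1}\in\G^{lp}_{\geqslant0}$. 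Left-multiplying $\alpha_{<0}\alpha_0\alpha_{\geqslant0}=\tilde\alpha_{<0}\tilde\alpha_0\tilde\alpha_{\geqslant0}$ by $\tilde\alpha_{<0}^{-1}$ and right-multiplying by $\alpha_{\geqslant0}^{-1}$ gives $\beta\,\alpha_0=\tilde\alpha_0\,\gamma$, i.e.\ $\beta=\tilde\alpha_0\,\gamma\,\alpha_0^{-1}$. Since $\alpha_0,\tilde\alpha_0\in\G^{lp}_0$, reducing modulo $\G^{lp}_0$ yields $\bar\beta=\bar\gamma$, where $\bar\beta$ lies in the image of $\G^{lp}_{<0}$ and $\bar\gamma$ in the image of $\G^{lp}_{\geqslant0}$.

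Next I would bring in the GNVW index. Using $\G^{lp}_0=\ker(\ind)$ — which holds because $\ker(\ind)=\G^{loc}$ in the splitting $\G^{lp}=\G^{loc}\rtimes\G^{T}$, and $\G^{loc}$ is exactly the set of adjoint actions of quasi-local unitaries — the homomorphism $\ind$ descends to an injection on $\G^{lp}/\G^{lp}_0$. Hence $\bar\beta=\bar\gamma$ forces $\ind(\beta)=\ind(\gamma)$, and it is enough to show this common value is $0$. But $\beta\in\G^{lp}_{<0}$ acts as the identity on all of $\A^{ql}_{\geqslant0}$ — it fixes an entire half-line pointwise — and an automorphism that fixes a half-line cannot contain a nontrivial generalized-translation component, so $\ind(\beta)=0$; likewise $\ind(\gamma)=0$ because $\gamma$ fixes $\A^{ql}_{<0}$. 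Therefore $\beta,\gamma\in\ker(\ind)=\G^{lp}_0$. Translating back, $\alpha_{<0}\tilde\alpha_{<0}^{-1}=\tilde\alpha_{<0}\,\beta\,\tilde\alpha_{<0}^{-1}\in\G^{lp}_0$ by normality, and $\alpha_{\geqslant0}\tilde\alpha_{\geqslant0}^{-1}=\gamma^{-1}\in\G^{lp}_0$, which is exactly the assertion of the lemma.

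The step I expect to be the real obstacle is the claim that any $\varphi\in\G^{lp}_{<0}$ (more generally, any LPA fixing a half-line pointwise) has $\ind(\varphi)=0$, equivalently that $\G^{lp}_{<0},\G^{lp}_{\geqslant0}\subseteq\ker(\ind)$. Morally a generalized translation restricted to the half-chain $(-\infty,0)$ would have to push degrees of freedom off its boundary and so fails to be an automorphism of $\A^{ql}_{<0}$, but a clean proof requires going back to the actual construction of the GNVW index on a half-infinite chain. A fallback route, which I would use if that turns out awkward, is to avoid the index altogether: absorbing $\tilde\alpha_0$ and $\alpha_0^{-1}$ via normality write $\beta=\Ad_{W_0}\circ\gamma$ for a single quasi-local unitary $W_0$; comparing the two sides on $\A^{ql}_{<0}$ (where $\gamma$ is trivial) and on $\A^{ql}_{\geqslant0}$ (where $\beta$ is trivial, giving $\gamma(B)=W_0^{*}BW_0\in\A^{ql}_{\geqslant0}$) shows that $\Ad_{W_0}$ preserves both $\A^{ql}_{<0}$ and $\A^{ql}_{\geqslant0}$, i.e.\ $W_0$ normalizes the factorization $\A^{ql}\simeq\A^{ql}_{<0}\otimes\A^{ql}_{\geqslant0}$; a normalizing unitary factorizes as $W_0=u v$ with $u\in\cU^{ql}_{<0}$, $v\in\cU^{ql}_{\geqslant0}$ (using the relative commutant identity $(\A^{ql}_{<0})'\cap\A^{ql}=\A^{ql}_{\geqslant0}$ once $\Ad_{W_0}|_{\A^{ql}_{<0}}$ is seen to be inner in $\A^{ql}_{<0}$), and then $\beta=\Ad_u$ and $\gamma=\Ad_{v^{-1}}$ exhibit both as elements of $\G^{lp}_0$.
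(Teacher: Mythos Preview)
Your main route through the GNVW index breaks at the identification $\G^{lp}_0=\ker(\ind)$. By definition~\ref{def:subgroups}, $\G^{lp}_0$ consists of the inner automorphisms $\Ad_U$ with $U\in\cU^{ql}$, whereas $\ker(\ind)=\G^{loc}$ consists of finite-time evolutions by local Hamiltonians; the latter is strictly larger. For instance, for $t\in(0,\pi)$ the on-site evolution $\alpha^t$ generated by $H=\sum_j Z_j$ is a depth-one circuit, so $\ind(\alpha^t)=0$; but $\alpha^t$ sends the all-$X$-up product state $\omega_+$ (with $\omega_+(X_j)=1$) to a state with $(\omega_+\circ\alpha^t)(X_j)=\cos 2t\neq 1$ at \emph{every} site, hence to an inequivalent state by the criterion in proposition~\ref{prop:equivalence_of_states}. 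Since any $\Ad_U$ with quasi-local $U$ preserves equivalence classes of states (as noted right after that proposition), $\alpha^t\notin\G^{lp}_0$. Thus from $\ind(\beta)=0$ you can only conclude $\beta\in\G^{loc}$, not $\beta\in\G^{lp}_0$, and the argument does not close.

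Your fallback is the correct direction and matches the paper's own treatment: the paper does not prove this lemma but gives exactly the heuristic that an element of $\G^{lp}_{<0}$ equal to one of $\G^{lp}_{\geqslant0}$ must be trivial, with the $\alpha_0$'s supplying the $\G^{lp}_0$ ambiguity, and defers the rigorous argument to Kapustin--Sopenko. Writing $\beta=\Ad_{W_0}\gamma$ and observing that $\Ad_{W_0}$ preserves both $\A^{ql}_{<0}$ and $\A^{ql}_{\geqslant0}$ is correct. The step you flag---that $\Ad_{W_0}|_{\A^{ql}_{<0}}$ is inner in $\A^{ql}_{<0}$, so that $W_0=uv$ with $u\in\cU^{ql}_{<0}$ and $v\in\cU^{ql}_{\geqslant0}$---is precisely the substantive content of the lemma and is not automatic for a half-infinite UHF algebra. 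Approximating $W_0$ by strictly local $W_n$ does not immediately help, since $\Ad_{W_n}$ need not preserve the half-chain algebras exactly and the resulting finite-dimensional factorizations need not converge; a genuine argument exploiting the quasi-locality of $W_0$ is still required here.
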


The following lemma is also often useful when dealing with $\G_{+}^{lp}$ and $\G_{-}^{lp}$.

\begin{lemma}[Corollary 2.1 of Ref.~\cite{kapustin2024anomalous}]
    The intersection $\G^{lp}_{+}\cap \G^{lp}_{-}=\G^{lp}_{0}$
\end{lemma}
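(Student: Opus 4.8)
The plan is to establish the two inclusions $\G^{lp}_{0}\subseteq\G^{lp}_{+}\cap\G^{lp}_{-}$ and $\G^{lp}_{+}\cap\G^{lp}_{-}\subseteq\G^{lp}_{0}$ separately. The first is immediate: the identity automorphism $\mathrm{id}$ lies in both $\G^{lp}_{\geqslant 0}$ and $\G^{lp}_{<0}$ (it acts trivially on both half-chains and preserves each half-chain algebra), so any $\alpha_{0}\in\G^{lp}_{0}$ can be written as $\mathrm{id}\cdot\alpha_{0}\in\G^{lp}_{\geqslant 0}\G^{lp}_{0}\subseteq\G^{lp}_{+}$, and likewise $\alpha_{0}\in\G^{lp}_{-}$.

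For the nontrivial inclusion I would first record a structural fact. By Eq.~\eqref{eq:group_action}, conjugating an element $\Ad_{U}\in\G^{lp}_{0}$ by any $\alpha\in\G^{lp}_{\geqslant 0}$ or $\alpha\in\G^{lp}_{<0}$ gives $\Ad_{\alpha(U)}$, which is again the adjoint action of a quasi-local unitary; hence $\G^{lp}_{\geqslant 0}$ and $\G^{lp}_{<0}$ normalize $\G^{lp}_{0}$. Consequently the product sets $\G^{lp}_{\geqslant 0}\G^{lp}_{0}$ and $\G^{lp}_{<0}\G^{lp}_{0}$ are already subgroups, so $\G^{lp}_{+}=\G^{lp}_{\geqslant 0}\G^{lp}_{0}$ and $\G^{lp}_{-}=\G^{lp}_{<0}\G^{lp}_{0}$, and moreover any $\beta\in\G^{lp}_{+}$ can be brought into either of the forms $\beta=\alpha_{\geqslant 0}\alpha_{0}=\alpha_{0}'\alpha_{\geqslant 0}$ with $\alpha_{\geqslant 0}\in\G^{lp}_{\geqslant 0}$ and $\alpha_{0},\alpha_{0}'\in\G^{lp}_{0}$; similarly every element of $\G^{lp}_{-}$ is of the form $\alpha_{<0}\alpha_{0}''$ with $\alpha_{<0}\in\G^{lp}_{<0}$ and $\alpha_{0}''\in\G^{lp}_{0}$.

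Now I would take $\beta\in\G^{lp}_{+}\cap\G^{lp}_{-}$. Since $\mathrm{id}\in\G^{lp}_{<0}$, writing $\beta=\alpha_{0}'\alpha_{\geqslant 0}=\mathrm{id}\cdot\alpha_{0}'\cdot\alpha_{\geqslant 0}$ exhibits $\beta$ in the decomposed form of Eq.~\eqref{eq:decomp}, so Lemma~\ref{lemma:0_index_decomp} gives $\ind(\beta)=0$. On the other hand, since $\mathrm{id}\in\G^{lp}_{\geqslant 0}$, the equality $\beta=\alpha_{<0}\cdot\alpha_{0}''\cdot\mathrm{id}$ is a second decomposition of the same shape. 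Applying Lemma~\ref{lemma:non_unique_decomp} to these two decompositions of $\beta$ yields $\alpha_{<0}\cdot\mathrm{id}^{-1}=\alpha_{<0}\in\G^{lp}_{0}$ and $\mathrm{id}\cdot\alpha_{\geqslant 0}^{-1}=\alpha_{\geqslant 0}^{-1}\in\G^{lp}_{0}$; since $\G^{lp}_{0}$ is a group this forces $\alpha_{\geqslant 0}\in\G^{lp}_{0}$. Hence $\beta=\alpha_{\geqslant 0}\alpha_{0}$ is a product of two elements of $\G^{lp}_{0}$, so $\beta\in\G^{lp}_{0}$, completing the proof.

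I do not expect a genuine analytic obstacle here once Lemmas~\ref{lemma:0_index_decomp} and~\ref{lemma:non_unique_decomp} are granted; the work is entirely organizational. The key observations are that membership in $\G^{lp}_{\pm}$ secretly furnishes a three-factor decomposition with one outer factor equal to $\mathrm{id}$, that the normality statement Eq.~\eqref{eq:group_action} lets one shuffle inner factors so elements of $\G^{lp}_{\pm}$ take the required canonical form from either side, and that one must line up the $<0$ and $\geqslant 0$ labels correctly when invoking the uniqueness lemma. The single point deserving an explicit sentence of justification is that conjugation by $\G^{lp}_{\geqslant 0}$ or $\G^{lp}_{<0}$ preserves $\G^{lp}_{0}$, which is exactly Eq.~\eqref{eq:group_action}.
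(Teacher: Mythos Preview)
Your proof is correct and follows essentially the same approach as the paper's sketch: write an element of the intersection in the forms $\alpha_{0}'\alpha_{\geqslant 0}$ and $\alpha_{<0}\alpha_{0}''$, view these as two three-factor decompositions (with one outer factor trivial in each), and invoke Lemma~\ref{lemma:non_unique_decomp} to force the surviving outer factors into $\G^{lp}_{0}$. The paper's account is a terse two-line sketch (``compare these two''), whereas you have supplied the missing organizational details---in particular the normalization statement from Eq.~\eqref{eq:group_action} that shows $\G^{lp}_{\pm}$ really equal the product sets, and the observation that Lemma~\ref{lemma:0_index_decomp} gives $\ind(\beta)=0$ so that Lemma~\ref{lemma:non_unique_decomp} is applicable.
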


Note that any $\alpha_{+}\in\G^{lp}_{+}$ can be uniquely written as 
\beq
\alpha_{+}=\alpha_{0}\alpha_{\geqslant 0},\ 
\alpha_{0}\in\G^{lp}_{0},\ 
\alpha_{\geqslant 0}\in \G^{lp}_{\geqslant 0}
\eeq
Note $\Ad_{U}\alpha_{\geqslant 0}=\alpha_{\geqslant 0}\Ad_{\alpha_{\geqslant0}^{-1}(U)}$. Similarly, one can always write $\beta_- \in \G^{lp}_{-}$ as $\beta_{0}\beta_{<0},\beta_{0}\in\G^{lp}_{0},\beta_{<0}\in\G^{lp}_{<0}$. So one conclude the third lemma by comparing these two.

\begin{lemma}[Remark 2.3 of  Ref. \cite{kapustin2024anomalous}]\label{lemma:normal_subgroup}
    The subgroups $\G^{lp}_{+}$ and $\G^{lp}_{-}$ are both normal subgroups of $\G^{lp}$.
\end{lemma}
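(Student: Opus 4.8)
The strategy is to reduce normality to two special kinds of conjugation and then recombine. Since $\G^{lp}_{+}$ and $\G^{lp}_{-}$ are already known to be subgroups, each containing $\G^{lp}_{0}$, and since conjugation by a fixed element of $\G^{lp}$ is a group automorphism, it suffices to show $\gamma\,\G^{lp}_{+}\,\gamma^{-1}\subseteq\G^{lp}_{+}$ (and the analogous statement for $\G^{lp}_{-}$) for every $\gamma$ in a generating set of $\G^{lp}$. Because $\ind$ is a homomorphism and its restriction to $\G^{T}$ already surjects onto $\ind(\G^{lp})$, every $\gamma$ factors as $\gamma=\gamma_{0}t$ with $\gamma_{0}\in\ker(\ind)$ and $t\in\G^{T}$, so it is enough to treat (a) $\gamma_{0}\in\ker(\ind)$ and (b) $t$ a generalized translation $\tau_{i}^{\pm1}$. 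The prerequisite observation is $\G^{lp}_{0}\triangleleft\G^{lp}$: for a quasi-local unitary $U$ one has $\gamma\,\Ad_{U}\,\gamma^{-1}=\Ad_{\gamma(U)}$ with $\gamma(U)\in\cU^{ql}$, and this extends to finite products.

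For (a) I would bypass the generators and argue via an internal direct product. Operators in $\G^{lp}_{\geqslant0}$ and in $\G^{lp}_{<0}$ act on disjoint halves of the chain, hence elements of these two subgroups commute, and together with $\G^{lp}_{0}\triangleleft\G^{lp}$ this gives $[\G^{lp}_{+},\G^{lp}_{-}]\subseteq\G^{lp}_{0}$. Every element of $\G^{lp}_{\geqslant0}$, $\G^{lp}_{<0}$, or $\G^{lp}_{0}$ has vanishing GNVW index (it acts trivially on a half-line, so there is no flow across a cut placed deep inside the fixed region; quasi-local unitary conjugations are norm-limits of circuits), so Lemma~\ref{lemma:0_index_decomp} gives $\ker(\ind)=\G^{lp}_{+}\G^{lp}_{-}$. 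Passing to $K:=\ker(\ind)/\G^{lp}_{0}$, the images of $\G^{lp}_{+}$ and $\G^{lp}_{-}$ commute, generate $K$, and intersect trivially (using $\G^{lp}_{+}\cap\G^{lp}_{-}=\G^{lp}_{0}$), so $K$ is their internal direct product; in particular both images are normal in $K$, i.e.\ $\G^{lp}_{\pm}\triangleleft\ker(\ind)$.

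For (b), conjugation by $\tau_{i}$ again sends $\Ad_{U}$ to $\Ad_{\tau_{i}(U)}\in\G^{lp}_{0}$, so only a generator $\alpha_{\geqslant0}\in\G^{lp}_{\geqslant0}$ requires work, and one compares $\tau_{i}^{\pm1}(\A^{ql}_{<0})$ with $\A^{ql}_{<0}$. For the shift that carries the $V_{i}$-factors into the left half-line, $\tau_{i}(\A^{ql}_{<0})\supseteq\A^{ql}_{<0}$, so $\tau_{i}\alpha_{\geqslant0}\tau_{i}^{-1}$ fixes $\A^{ql}_{<0}$ pointwise and lies in $\G^{lp}_{\geqslant0}\subseteq\G^{lp}_{+}$. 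For the opposite shift, $\tau_{i}^{-1}(\A^{ql}_{<0})$ misses exactly one on-site $p_{i}$-dimensional tensor factor of $\A^{ql}_{<0}$, so $\beta:=\tau_{i}^{-1}\alpha_{\geqslant0}\tau_{i}$ fixes all of $\A^{ql}_{<0}$ outside a finite region; invoking the localization estimates of Ref.~\cite{kapustin2024anomalous} (the same ones underlying Lemma~\ref{lemma:0_index_decomp}), the approximately localized finite matrix subalgebra $\beta(\cdot)$ on that finite factor can be rotated onto the exactly local one by a quasi-local unitary $W$ taken in the commutant of the fixed part, and a further on-site unitary $u$ then makes $\Ad_{u^{-1}W}\beta$ fix $\A^{ql}_{<0}$ pointwise, so $\beta\in\G^{lp}_{0}\G^{lp}_{\geqslant0}=\G^{lp}_{+}$. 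Since $\G^{lp}_{+}$ is a group generated by $\G^{lp}_{\geqslant0}$ and $\G^{lp}_{0}$, this yields $\tau_{i}^{\pm1}\,\G^{lp}_{+}\,\tau_{i}^{\mp1}\subseteq\G^{lp}_{+}$, and exchanging the two half-lines gives the statement for $\G^{lp}_{-}$. Finally, for $\gamma=\gamma_{0}t$ we get $\gamma\,\G^{lp}_{\pm}\,\gamma^{-1}=\gamma_{0}\bigl(t\,\G^{lp}_{\pm}\,t^{-1}\bigr)\gamma_{0}^{-1}\subseteq\gamma_{0}\,\G^{lp}_{\pm}\,\gamma_{0}^{-1}\subseteq\G^{lp}_{\pm}$, proving the lemma. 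I expect the localized-rotation step in (b) — showing that an LPA which is trivial on $\A^{ql}_{<0}$ away from a finite region already lies in $\G^{lp}_{0}\G^{lp}_{\geqslant0}$ — to be the only real obstacle; everything else is elementary group theory together with cut-independence of the GNVW index.
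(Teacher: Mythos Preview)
The paper does not prove this lemma; it is stated as Remark~2.3 of Ref.~\cite{kapustin2024anomalous}, and the surrounding text explicitly says ``Readers are referred to Ref.~\cite{kapustin2024anomalous} for rigorous proofs of these lemmas.'' Unlike the three preceding lemmas in that subsection, no intuitive sketch is offered for this one. There is therefore nothing in this paper to compare your argument against.

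On its own merits: your part~(a), establishing $\G^{lp}_{\pm}\triangleleft\ker(\ind)$ via the internal-direct-product structure of $\ker(\ind)/\G^{lp}_{0}$, is correct and efficient; it uses exactly the three preceding lemmas together with the normality of $\G^{lp}_{0}$ in $\G^{lp}$. Your part~(b), however, carries the gap you yourself flag. The assertion that an LPA $\beta$ which fixes $\A^{ql}_{<0}$ pointwise outside one on-site tensor factor already lies in $\G^{lp}_{0}\G^{lp}_{\geqslant0}$ is not proved: $\beta$ applied to that finite factor need not land back inside $\A^{ql}_{<0}$ exactly (only approximately, with quasi-local tails into $\A^{ql}_{\geqslant0}$), and producing a quasi-local unitary $W$ in the commutant of the fixed part that rotates this approximately-local image onto the exact one is precisely the kind of approximate-factorization argument that constitutes the substantive technical content of Ref.~\cite{kapustin2024anomalous}. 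As written, you have reduced the lemma to a special case of the same difficulty rather than eliminated it.
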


\subsection{Anomaly index from decomposition}

With the above background, now we can perform the decomposition (see Fig.~\ref{fig:anomaly_index_LPA}) and derive the anomaly index.

Let $\alpha:G\to \G^{lp}$ be a symmetry action implemented by LPA{\footnote{When $G$ is a Lie group, it is tempting to demand $\alpha$ to be a smooth map. However, there is no obvious smooth structure on $\G^{lp}$. Instead, it is noted in Ref.~\cite{kapustin2024anomalous} one should work with a smooth subgroup $\G^{al}$ of $\G^{lp}$}}, \ie
\beq
\alpha(g)\alpha(h)=\alpha(gh),\forall\,g,h\in G
\eeq
We first assume that $\ind(\alpha(g))=0,\forall\,g\in G$. Thus, for each $g$, it admits the following decomposition as in Eq. \eqref{eq:decomp}
\beq
\alpha(g)=\alpha(g)_{<0}\alpha(g)_{0}\alpha(g)_{\geqslant 0}
\eeq
where $\alpha(g)_{<0}\in\G^{lp}_{<0},\alpha(g)_{0}\in\G^{lp}_{0}$ and $\alpha(g)_{\geqslant 0}\in\G^{lp}_{\geqslant 0}$. On the other hand, note that $\alpha(g)_{\geqslant 0}$ trivially commutes with $\alpha(h)_{<0}$ because they act on disjoint domains,
\beq
\alpha(g)_{<0}\alpha(g)_{0}\alpha(g)_{\geqslant 0}\alpha(h)_{<0}\alpha(h)_{0}\alpha(h)_{\geqslant 0}=\alpha(g)_{<0}\alpha(h)_{<0}\beta_{0}(g,h)\alpha(g)_{\geqslant 0}\alpha(h)_{\geqslant 0}
\eeq
where $\beta(g,h)_0:=(\alpha(h)_{<0}^{-1}\triangleright \alpha(g)_0)(\alpha(g)_{\geqslant 0}\triangleright\alpha(h)_{0})\in\G^{lp}_{0}$ (see Eq.~\eqref{eq:group_action} for definition of group action $\triangleright$). According to lemma \ref{lemma:non_unique_decomp}, there exists a map $V:G\times G\to\cU^{ql}$ (which may {\it not} be a homomorphism) such that
\beq\label{eq:non-homomorphism}
\alpha(g)_{\geqslant 0}\alpha(h)_{\geqslant 0}=\Ad_{V(g,h)}\alpha(gh)_{\geqslant 0}
\eeq
Furthermore, this $V$ is constrained by the associativity of $\alpha_{\geqslant 0}$ as follows
\beq
\alpha(g)_{\geqslant 0}(\alpha(h)_{\geqslant 0}\alpha(k)_{\geqslant 0})=(\alpha(g)_{\geqslant 0}\alpha(h)_{\geqslant 0})\alpha(k)_{\geqslant 0}
\eeq
The above equation further simplifies to
\beq
\Ad_{\omega(g,h,k)}=1
\eeq
where
\beq\label{eq:anomaly_index}
\omega(g,h,k)=V(g,h)V(gh,k)V(g,hk)^{-1}(\alpha(g)_{\geqslant 0}(V(h,k)))^{-1}
\eeq
Since $\omega(g,h,k)$ commutes with all quasi-local operators, it must be a pure phase, \ie $\omega(g,h,k)\in \U$. It can be checked that $\omega$ satisfies the 3-cocycle condition (a.k.a pentagon identity) and shifting $V(g,h)$ by a phase $\rho(g,h)\in\U$ will change $\omega$ by a 3-coboundary (see appendix B of Ref. \cite{kapustin2024anomalous} for more details). Thus $\omega:G^{3}\to\U$ is a well-defined degree 3 group cohomology class, \ie $[\omega]\in\rH^{3}(G;\U)$ (see Sec. \ref{sec:group_cohomology} for a review of group cohomology).

If GNVW index of $\alpha(g)$ is nontrivial, $\alpha$ induces a map $\tau:G\stackrel{\alpha}{\to} \G^{lp}\to\G^{T}$ where $\G^{T}$ is the group of generalized translations. In this case, we stack our system with another (decoupled) copy, on which the $G$-symmetry acts as $\tau(g)^{-1}$. On the composite system, this $G$-symmetry acts as
\beq
\beta(g):=\alpha(g)\otimes\tau(g)^{-1}
\eeq
This is again a group action thanks to the fact that $\G^{T}$ is abelian. Besides, $\ind(\beta(g))=0$, which allows us to define
\beq
\omega_{\alpha}:=\omega_{\beta}
\eeq
This construction \textit{internalizes} the translation symmetry. 

We call $\omega$ the anomaly index of the symmetry action $\alpha$. If $\omega\neq 1\in\rH^3(G; U(1))$, we say that the symmetry action $\alpha$ is anomalous, otherwise $\alpha$ is said to be anomaly-free or non-anomalous.

Although the cohomology class of $\omega$ does not depend on the choice of $\alpha_{\geqslant 0}$ and $V$ \cite{kapustin2024anomalous}, in the previous work such as Refs. \cite{Else_2014,kapustin2024anomalous}, it is not clear if the cohomology class of $\omega$ depends on where we cut the chain. For example, one can also cut the chain at $1$ rather than $0$, do they yield the same anomaly index? Below we show that the anomaly index is independent of the choice of the cut.

\begin{proposition}\label{prop:site_independence}
        Assume the dimension of the local Hilbert space $\cH_{k}$ is bounded by a constant $K$, \ie $\dim\cH_{k}<K$ for any $k\in\z$. Then the anomaly index $\omega$ constructed above does not depend on the choice of the cut.
\end{proposition}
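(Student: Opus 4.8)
The plan is to reduce the statement, via the internalization of translations, to an anomaly-free symmetry action and to adjacent cuts, and then to exhibit a single family of half-chain automorphisms that computes the index for both cuts at once.

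First I would reduce to the case $\ind(\alpha(g))=0$ for all $g$. When the GNVW index is non-trivial, $\omega_\alpha$ was \emph{defined} as $\omega_\beta$ for the internalized action $\beta(g)=\alpha(g)\otimes\tau(g)^{-1}$, with the \emph{same} cut applied to both copies of the doubled chain; since this doubling is canonical, cut-independence of $\omega_\beta$ gives cut-independence of $\omega_\alpha$. Next, since any two cuts differ by finitely many one-site shifts, it suffices to compare the cut between sites $m-1$ and $m$ with the cut between $m$ and $m+1$; by the evident left--right symmetry of the construction the choice of $m$ is immaterial. So fix $m$ and use the cut-at-$m$ decomposition $\alpha(g)=\alpha(g)_{<m}\,\alpha(g)_0\,\alpha(g)_{\geqslant m}$ afforded by Lemma~\ref{lemma:0_index_decomp}.

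The crux is the following peeling statement: \emph{every $\beta\in\G^{lp}_{\geqslant m}$ can be written as $\Ad_U\circ\beta'$ with $U\in\cU^{ql}$ and $\beta'\in\G^{lp}_{\geqslant m+1}$.} I would prove this first for a QCA $\beta$: if $\beta\in\G^{lp}_{\geqslant m}$ is a QCA of range $r_\beta$, then (using that $\beta$ maps $\A^{ql}_{\geqslant m}$ into itself) $\beta(\A^{ql}_{\{m\}})$ is a unital full-matrix subalgebra of the finite-dimensional algebra $\A^{ql}_{[m,m+r_\beta]}$, isomorphic to $M_{\dim\cH_m}$ — here the uniform bound $\dim\cH_k<K$ is used, so that $\A^{ql}_{[m,m+r_\beta]}$ is genuinely finite-dimensional with the matching commutant dimension. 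By Skolem--Noether for finite-dimensional $C^{*}$-algebras, some unitary $u\in\cU(\A^{ql}_{[m,m+r_\beta]})$ conjugates $\beta(\A^{ql}_{\{m\}})$ back onto $\A^{ql}_{\{m\}}$; then $\Ad_{u^{-1}}\circ\beta$ preserves $\A^{ql}_{\{m\}}$ and acts on it by $\Ad_w$ for some $w\in\cU(\cH_m)$, and $\Ad_{(uw)^{-1}}\circ\beta$ fixes $\A^{ql}_{\{m\}}$ elementwise, hence preserves the relative commutant of $\A^{ql}_{\{m\}}$ in $\A^{ql}_{\geqslant m}$ — namely $\A^{ql}_{\geqslant m+1}$ — while remaining trivial on $\A^{ql}_{<m+1}$; thus it lies in $\G^{lp}_{\geqslant m+1}$ and $U:=uw$ works. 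For a general $\beta\in\G^{lp}_{\geqslant m}$ one has $\ind(\beta)=0$ (compute the index using a bipartition deep inside $(-\infty,m)$, where $\beta$ acts trivially), so $\beta$ is a limit of QCAs; the peeling is then obtained by a limiting argument, and the one delicate point — ensuring the peeled unitaries $U_j$ stay Cauchy in $\cU^{ql}$ and that $\beta'_j$ converges inside $\G^{lp}_{\geqslant m+1}$ — is exactly the kind of stability estimate carried out in Ref.~\cite{kapustin2024anomalous}; this is the main obstacle of the proof.

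Granting the peeling lemma, I would finish as follows. Write $\alpha(g)_{\geqslant m}=\Ad_{U(g)}\circ\beta(g)_{\geqslant m+1}$ with $\beta(g)_{\geqslant m+1}\in\G^{lp}_{\geqslant m+1}$. Then
\beq
\alpha(g)=\alpha(g)_{<m}\,\big(\alpha(g)_0\,\Ad_{U(g)}\big)\,\beta(g)_{\geqslant m+1},
\eeq
and because $\alpha(g)_{<m}\in\G^{lp}_{<m}\subseteq\G^{lp}_{<m+1}$, $\alpha(g)_0\,\Ad_{U(g)}\in\G^{lp}_0$, and $\beta(g)_{\geqslant m+1}\in\G^{lp}_{\geqslant m+1}\subseteq\G^{lp}_{\geqslant m}$, this single expression is at once a valid cut-at-$m$ decomposition (with right-half family $\{\beta(g)_{\geqslant m+1}\}_{g\in G}$) and a valid cut-at-$(m+1)$ decomposition of $\alpha$. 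By Eqs.~\eqref{eq:non-homomorphism}--\eqref{eq:anomaly_index} the cocycle $V$ and the class $[\omega]$ depend only on the chosen right-half family (and on the lift $V$, up to a coboundary), and by the choice-independence of the construction established in Ref.~\cite{kapustin2024anomalous} the class $[\omega]$ for a fixed cut is unchanged if the right-half family is replaced by any other admissible one. Applying this to the common family $\{\beta(g)_{\geqslant m+1}\}_{g\in G}$ gives $[\omega_{\mathrm{cut}\,m}]=[\omega(\{\beta(g)_{\geqslant m+1}\})]=[\omega_{\mathrm{cut}\,m+1}]$, which is the assertion; all steps other than the limiting argument in the peeling lemma are routine bookkeeping with the subgroups of Definition~\ref{def:subgroups} and the cited choice-independence.
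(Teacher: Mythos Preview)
Your approach is correct but genuinely different from the paper's. The paper does not prove a peeling lemma at all: instead it pads the local Hilbert spaces (this is where the bound $K$ is actually used) so that a global translation $\tau$ exists, and then observes that decomposing $\alpha$ at cut $1$ is the same as decomposing $\tilde\alpha(g):=\tau\alpha(g)\tau^{-1}$ at cut $0$. Conjugating each piece of a cut-at-$0$ decomposition of $\alpha$ by $\tau$ and invoking the normality of $\G^{lp}_\pm$ (Lemma~\ref{lemma:normal_subgroup}) produces a second cut-at-$0$ decomposition of $\tilde\alpha$; the two decompositions are then compared via Lemma~\ref{lemma:non_unique_decomp} and the choice-independence in Ref.~\cite{kapustin2024anomalous}. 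Your route instead peels one site off $\alpha(g)_{\geqslant m}$ by hand to manufacture a single decomposition valid for two adjacent cuts simultaneously, which is a nice observation and yields the same conclusion once the peeling lemma is granted. The trade-off: the paper's argument is shorter because it outsources the hard step entirely to the normality lemma (proved in Ref.~\cite{kapustin2024anomalous}), whereas you reprove a special case of that normality by Skolem--Noether plus a limiting argument. Your limiting step---getting the peeled unitaries $U_j$ to converge for a general LPA---is exactly the place where the real analysis lives, and deferring it to ``the kind of stability estimate in Ref.~\cite{kapustin2024anomalous}'' is fair but means you have not really avoided that reference. One minor correction: your stated use of the bound $K$ is misplaced, since $\A^{ql}_{[m,m+r_\beta]}$ is finite-dimensional for any QCA range $r_\beta$ regardless of $K$ (the standing assumption already makes each $\cH_k$ finite-dimensional); in the paper $K$ enters only to make the translation $\tau$ well-defined.
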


\begin{proof}[Proof of Proposition \ref{prop:site_independence}]
    If all local Hilbert spaces are of the same dimension, we have an operation which is translation by $+1$, denoted by $\tau$. Otherwise, we add some decoupled degrees of freedom at each site, on which our symmetry acts trivially. This step is to ensure that all local Hilbert spaces have the same dimension. After this prescription, we again have a well defined translation operation.

    Define $\tilde{\alpha}(g):=\tau\alpha(g)\tau^{-1}$, then the decomposition of $\tilde{\alpha}$ at 0 gives a decomposition of $\alpha$ at 1. We fix a decomposition of $\tilde{\alpha}$ at $0$ as
    \beq
    \tilde{\alpha}=\tilde{\alpha}_{<0}\tilde{\alpha}_{0}\tilde{\alpha}_{\geqslant 0}
    \eeq
    Similarly, we fix a decomposition of $\alpha$ at 0 as
    \beq
    \alpha=\alpha_{<0}\alpha_{0}\alpha_{\geqslant 0}
    \eeq
    We only have to show that they give rise to the same anomaly index. To this end, note that by lemma \ref{lemma:normal_subgroup}, there exist $\beta_{0}'\in\G^{lp}_{0}$ and $\beta_{\geqslant 0}\in\G^{lp}_{\geqslant 0}$ such that 
    \beq
    \tau\alpha_{\geqslant 0}\tau^{-1}=\beta'_{0}\beta_{\geqslant 0}
    \eeq
    Similarly,
    \beq
    \begin{split}
        \tau\alpha_{<0}\tau^{-1}&=\beta_{<0}\beta''_{0}\\
        \tau\alpha_{0}\tau^{-1}&=\beta_{0}'''
    \end{split}
    \eeq
    So
    \beq
    \tilde{\alpha}_{<0}\tilde{\alpha}_{0}\tilde{\alpha}_{\geqslant 0}=\tilde{\alpha}=\beta_{<0}\beta_{0}\beta_{\geqslant 0}
    \eeq
    where $\beta_{0}:=\beta_{0}''\beta_{0}'''\beta_{0}'$. Thus, we have two different decompositions for $\tilde{\alpha}$ at 0. Combining this result, the lemma 2.2 of Ref. \cite{kapustin2024anomalous} and proposition 3.1 of Ref. \cite{kapustin2024anomalous}, we conclude that the two decompositions give the same anomaly index.
\end{proof}

\begin{remark}
    Actually one can replace the lattice translation in above proof by any locality-preserving automorphism. Similar argument shows that the resulting anomaly index is the same as the original one.
\end{remark}

\subsection{Consequences of an anomalous symmetry} \label{subapp: consequence of anomaly}

Now we discuss some consequences of an anomalous symmetry. We first prove a generalized version of the main theorem of Ref.~\cite{kapustin2024anomalous} (Remark 4.1 therein).
\begin{lemma}\label{lemma:typeI_KS}
    Let $\psi$ be a type-I factor state with the split property and $\alpha:G\to\G^{lp}$ be a symmetry action. If $\psi\circ\alpha(g)=\psi$ for all $g\in G$, then the anomaly index Eq.~\eqref{eq:anomaly_index} of $\alpha$ must be trivial.
\end{lemma}
\begin{proof}
    Here we will prove this result for the case where $\tilde\alpha(g)$ have a vanishing GNVW index for all $g\in G$. To extend our proof to the case where some $\tilde\alpha(g)$ may have a non-vanishing GNVW index, one just needs to 1) add another copy of the system, 2) demand that the symmetry acts as the translation in the opposite direction in the added copy, 3) let this added copy to be in a translation invariant pure product state, and 4) repeat the proof for the case where all symmetry actions have a vanishing GNVW index.

    Recall that for a general state, we say $\psi$ splits if $\psi\sim\psi_{<0}\otimes\psi_{\geqslant0}$, where $\sim$ means quasi-equivalence. We then have
    \beq
    \psi_{<0}\otimes \psi_{\geqslant0}\sim\psi=\psi\circ\alpha(g)\sim (\psi_{<0}\circ \alpha(g)_{<0})\otimes(\psi_{\geqslant0}\circ\alpha(g)_{\geqslant0})
    \eeq
    One then deduces from Lemma \ref{lemma:separate_equivalence} that $\psi_{\geqslant 0}\sim\psi_{\geqslant 0}\circ\alpha(g)_{\geqslant 0}$. So $\psi\sim\psi_{<0}\otimes\psi_{\geqslant 0}\sim\psi_{<0}\otimes\psi_{\geqslant 0}\circ\alpha(g)_{\geqslant0}\sim\psi\circ\alpha(g)_{\geqslant 0}$. 
    
    Let $(\pi_{\psi},\cH_{\psi},|\psi\ra)$ be the GNS triple of $\psi$. Thus the GNS representation of $\psi\circ\alpha(g)_{\geqslant0}$ is given by $\pi_{\psi}\circ\alpha(g)_{\geqslant 0}$ acting on $\cH_{\psi}$.
    By Proposition \ref{prop:quasi-eq}, there exists a $*$-isomorphism $F_{g}:\cM_{\psi}\to\cM_{\psi\circ\alpha(g)_{\geqslant0}}$ such that
    \beq
    F_{g}(\pi_{\psi}(a))=\pi_{\psi}(\alpha(g)_{\geqslant 0}(a))
    \eeq
    Moreover, $\pi_{\psi}(\A^{ql})'=\pi_{\psi}(\alpha(g)_{\geqslant0}\A^{ql})'$ since $\alpha(g)_{\geqslant0}$ is a $*$-automorphism. This in turn means $\cM_{\psi}=\cM_{\psi\circ\alpha(g)_{\geqslant0}}$ as an abstract von Neumann algebra. Thus, $F_{g}$ can be thought of as a $*$-automorphism on $\cM_{\psi}$. By assumption, $\cM_{\psi}$ is a type-I factor acting on $\cH_{\psi}$, which means $\cM_{\psi}=\B(\cH)\otimes 1_{\cK}$ with $\cK$ another Hilbert space such that $\cH_{\psi}=\cH\otimes \cK$.

    Recall that all $*$-automorphisms of $\B(\cH)$ are inner (Lemma \ref{lemma:automoprhism_typeI}). This ensures that there exists a unitary operator $U_{g}$ on $\cH$ such that
    \beq
    \pi_{\psi}(\alpha(g)_{\geqslant0}(A))=U_{g}\pi_{\psi}(A)U_{g}^{\dagger}
    \eeq
    where $U_{g}$ should be understood as $U_{g}\otimes 1_{\cK}$. We also note that $U_{g}$ can fail to be a homomorphism in general. On the other hand, by Eq.~\eqref{eq:non-homomorphism},
    \beq
    \pi_{\psi}(V(g,h)AV(g,h)^{-1})=\pi_{\psi}(\alpha(g)_{\geqslant0}\alpha(h)_{\geqslant0}\alpha(gh)_{\geqslant0}^{-1}(A))=U_{g}U_{h}U_{gh}^{\dagger}\pi_{\psi}(A)U_{gh}U_{h}^{\dagger}U_{g}^{\dagger}.
    \eeq
    This means $U_{gh}U_{h}^{\dagger}U_{g}^{\dagger}\pi_{\psi}(V(g,h))$ commutes with $\pi_{\psi}(\A^{ql})$ for each $A\in\A^{ql}$, \ie
    \beq
    U_{gh}U_{h}^{\dagger}U_{g}^{\dagger}\pi_{\psi}(V(g,h))\in\pi_{\psi}(\A^{ql})'=\cM_{\psi}'
    \eeq
    On the other hand, $U_{gh}U_{h}^{\dagger}U_{g}^{\dagger}\pi_{\psi}(V(g,h))$ lies in $\cM_{\psi}=\B(\cH)\otimes1_{\cK}$, since $U_{gh}U_h^\dag U_g^\dag\in\cM_\psi$ and $\pi_\psi(V(g, h))\in\pi_\psi(\A^{ql})\subseteq\pi_\psi(\A^{ql})''=\cM_\psi$. Therefore, we have
    \beq
    \eta(g,h):=U_{gh}U_{h}^{\dagger}U_{g}^{\dagger}\pi_{\psi}(V(g,h))\in \U
    \eeq
    where we have used $\cM_{\psi}\cap\cM_{\psi}'=\bbC$. By Eq.~\eqref{eq:anomaly_index}, this implies
    \beq
    \omega(g,h,k)=(\delta\eta)(g,h,k)
    \eeq
    Therefore, $\omega$ is trivial in $\rH^{3}(G;\U)$.
\end{proof}

\begin{remark}
    In Ref. \cite{kapustin2024anomalous}, a similar result is obtained, but with ``type-I factor state" in Lemma \ref{lemma:typeI_KS} replaced by ``pure state" (see Remark 4.1 therein and Lemma \ref{lemma:Kapustin_Sopenko} in the present paper). Because pure states are all type-I factor states, our version is more general than the version in Ref. \cite{kapustin2024anomalous}. Besides being more general, another advantage of our version is that Lemma \ref{lemma:generalized_KS} immediately follows, since a 1D state that is clustering and satisfies the entanglement area law must be a type-I factor state with the split property (due to Proposition \ref{prop:factor_entropy} and Lemma \ref{lemma:split again}). Lemma \ref{lemma:generalized_KS} is easier to use than the original statement in Ref. \cite{kapustin2024anomalous}, since purity of a quantum state is typically harder to check compared with the clustering property and area law.
    The latter two properties are often more accessible from a practical point of view.
\end{remark}

Another consequence of an anomalous symmetry is related to edgeability.

\begin{corollary}[Edgeablity \cite{Senthil2013edge,Han2017edge}]
    If $\omega\not=1\in \rH^{3}(G;\U)$, then this spin chain is not edgeable, \ie one cannot put this chain on a manifold with boundaries while preserving the $G$-symmetry.
\end{corollary}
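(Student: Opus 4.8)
The plan is to prove the contrapositive: if the spin chain is edgeable, then its anomaly index is trivial, $\omega=1\in\rH^{3}(G;\U)$. Here we take ``edgeable'' to mean that the bulk symmetry action $\alpha:G\to\G^{lp}$ can be realized on the half-chain algebra $\A^{ql}_{\geqslant 0}$ by a \emph{genuine} group homomorphism $\hat\alpha:G\to\G^{lp}$, i.e. an assignment $g\mapsto\hat\alpha(g)$ with $\hat\alpha(g)$ acting trivially on $\A^{ql}_{<0}$, mapping $\A^{ql}_{\geqslant 0}$ to itself, satisfying $\hat\alpha(g)\hat\alpha(h)=\hat\alpha(gh)$, and agreeing with $\alpha(g)$ up to conjugation by a quasi-local unitary away from the cut at the origin (this last requirement encodes that the boundary theory ``looks like'' the bulk deep inside). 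The idea is to use $\hat\alpha(g)$ in place of the piece $\alpha(g)_{\geqslant 0}$ appearing in the decomposition \eqref{eq:decomp} from which the anomaly index is built.

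First I would note that edgeability forces $\ind\alpha(g)=0$ for every $g\in G$: the GNVW index of any LPA of the half-chain $\A^{ql}_{\geqslant 0}$ vanishes, since the half-line carries no nontrivial generalized translation, and $\hat\alpha(g)$ is asymptotic to $\alpha(g)$ in the bulk so the two have the same index. Hence no internalization of translations is needed and lemma \ref{lemma:0_index_decomp} applies directly. I would then set $\alpha(g)_{\geqslant 0}:=\hat\alpha(g)$ and show that the complementary factor $\alpha(g)\hat\alpha(g)^{-1}$ lies in $\G^{lp}_{-}=\G^{lp}_{<0}\G^{lp}_{0}$, i.e. decomposes as $\alpha(g)_{<0}\,\alpha(g)_{0}$ with $\alpha(g)_{<0}\in\G^{lp}_{<0}$ and $\alpha(g)_{0}\in\G^{lp}_{0}$. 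This is because $\alpha(g)$ and $\hat\alpha(g)$ act the same way on $\A^{ql}_{\geqslant 0}$ far from the origin, so their ``difference'' is supported, up to tails and a quasi-local unitary, near the cut together with the whole left half-chain, which is exactly the content of $\G^{lp}_{-}$. This produces a legitimate decomposition $\alpha(g)=\alpha(g)_{<0}\,\alpha(g)_{0}\,\alpha(g)_{\geqslant 0}$ in the sense of \eqref{eq:decomp} whose $\geqslant 0$ part is an honest homomorphism.

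With such a decomposition the construction of the anomaly index collapses: since $\alpha(g)_{\geqslant 0}\alpha(h)_{\geqslant 0}=\hat\alpha(g)\hat\alpha(h)=\hat\alpha(gh)=\alpha(gh)_{\geqslant 0}$, equation \eqref{eq:non-homomorphism} holds with $V(g,h)=I$ for all $g,h\in G$, so \eqref{eq:anomaly_index} gives $\omega(g,h,k)=I$. Since proposition \ref{prop:site_independence} (together with the standard arguments showing independence of the choices of $\alpha_{\geqslant 0}$ and of $V$) guarantees that the class $[\omega]$ is intrinsic, this computation yields $[\omega]=1\in\rH^{3}(G;\U)$, contradicting the hypothesis $\omega\neq1$. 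Therefore an edgeable chain is non-anomalous, which is the contrapositive of the corollary.

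The main obstacle is the localization step: making the informal notion of edgeability precise and verifying that the boundary action $\hat\alpha(g)$ can be chosen to coincide with the bulk action $\alpha(g)$ up to a $\G^{lp}_{0}$ element away from the boundary, equivalently $\alpha(g)\hat\alpha(g)^{-1}\in\G^{lp}_{-}$. One must argue that any $G$-symmetric realization on a manifold with boundary can, sufficiently far from the boundary, be brought onto the bulk symmetry action by a quasi-local unitary, using that both are LPAs inducing the same $G$-action on the quasi-local algebra of the deep bulk, plus a rigidity statement for symmetric LPAs. Once this localization statement is in hand, the rest is bookkeeping with lemmas \ref{lemma:0_index_decomp} and \ref{lemma:non_unique_decomp} and proposition \ref{prop:site_independence}.
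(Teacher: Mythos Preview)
Your approach is essentially the same as the paper's: show the contrapositive by arranging that the half-chain piece $\alpha_{\geqslant 0}$ of the decomposition is a genuine homomorphism, so that the $V(g,h)$ in \eqref{eq:non-homomorphism} is central and $\omega$ becomes a coboundary. The paper's execution is more direct, however. It simply \emph{defines} edgeability as the existence of a decomposition in which $\alpha_{\geqslant 0}$ is a group homomorphism; then $\Ad_{V(g,h)}=1$ forces $V(g,h)\in\U$, and \eqref{eq:anomaly_index} reads $\omega=\delta V$, an explicit coboundary. No ``localization step'' is needed because it is baked into the definition.

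Your more elaborate reading of edgeability---a half-chain action $\hat\alpha$ that asymptotically matches the bulk $\alpha$---creates the obstacle you flag: you must then check $\alpha(g)\hat\alpha(g)^{-1}\in\G^{lp}_{-}$, which amounts to proving your $\hat\alpha$ is an admissible choice of $\alpha_{\geqslant 0}$ in \eqref{eq:decomp}. This is extra work that the paper's framing avoids. Your side observation that edgeability forces $\ind\alpha(g)=0$ is a nice point the paper leaves implicit. One small cleanup: you may set $V(g,h)=I$, but the paper only needs $V(g,h)$ to be a phase; either way $[\omega]=1$.
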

\begin{proof}
    We imagine the chain with boundary is a half chain obtained by cutting an infinite chain at some point (say, 0). Then the symmetry acts on the half chain by
    \beq
    \alpha(g)_{\geqslant 0}\alpha(h)_{\geqslant 0}=\Ad_{V(g,h)}\alpha(gh)_{\geqslant 0}
    \eeq
    To preserve the $G$ symmetry on this half chain, one needs $\alpha_{\geqslant 0}$ to be a group homomorphism, \ie $\Ad_{V(g,h)}$ is trivial. Hence $V(g,h)$ commutes with all local operators, and it must be a phase. Thus, by definition
    \beq
    \omega(g,h,k)=(\delta V)(g,h,k)
    \eeq
    This means the anomaly index must vanish. In other words, if $\omega\neq 1$, the chain cannot have a symmetric boundary condition.
    
    However, one can still refine $V(g,h)\to V(g,h)\eta(g,h)$ where $\eta\in\rH^{2}(G;\U)$, which leaves the relation $\omega=\delta V$ invariant. That is, the symmetric boundary conditions is a module over $\rH^{2}(G;\U)$. One can understand this shift as stacking a 1d SPT phase to our spin chain.
\end{proof}
\begin{remark}
    It is recently shown in Refs. \cite{Seifnashri2025anomaly,Bols2025sym,Kapustin2025higher} that the converse is also true, \ie if $\omega=1\in\rH^{3}(G;\U)$, then one can chooses $V$ properly to obtain a symmetry action (approximately) localized on right-half chain.
\end{remark}

\section{Proof of Theorem \ref{thm:main}} \label{sec: theorem 1}

In this appendix, we give a proof of our Theorem \ref{thm:main} in the main text. This proof is slightly different from the one in the main text.

\begin{theorem}[Theorem \ref{thm:main} in the main text]
    Let the $G$-symmetry act on a quantum spin chain via a locality-preserving automorphisms (see Eq. \eqref{eq:LPA}). Suppose this symmetry action has a non-vanishing anomaly index and $H$ is an admissible Hamiltonian (see Eq. \eqref{eq:admissible_H}) that commutes with the $G$-action. Then there cannot be a locally-unique gapped $G$-symmetric ground state of $H$.
\end{theorem}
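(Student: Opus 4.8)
The plan is to argue by contradiction: suppose $\psi$ is a locally-unique gapped $G$-symmetric ground state of the admissible Hamiltonian $H$, and derive a contradiction with the non-vanishing of the anomaly index $\omega \in \rH^3(G;\U)$. The first step is to invoke theorem~\ref{Thm:pure_gs}: since $\psi$ is a locally-unique gapped ground state of an admissible Hamiltonian, it is pure and non-degenerate in its GNS Hilbert space $\cH_\psi$. This is the crucial input that promotes $\psi$ from an abstract functional to something with rigid structure. Next, because $H$ commutes with the $G$-action $\alpha: G \to \G^{lp}$, the ground state $\psi$ is $G$-invariant (each $\alpha(g)$ maps ground states to ground states, and by local uniqueness it must fix $\psi$ up to the superselection sector, then exactly). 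By proposition~\ref{prop:uniqueness_GNS} we then get unitaries $U(g) \in \B(\cH_\psi)$ implementing $\alpha(g)$ on $\cH_\psi$, i.e. $\pi_\psi(\alpha(g)(A)) = U(g)\pi_\psi(A)U(g)^{-1}$.

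The heart of the argument is to show that these $U(g)$ assemble into a projective representation of $G$ whose $2$-cocycle, suitably localized to one half-chain, reconstructs the anomaly $3$-cocycle $\omega$, forcing $\omega$ to be trivial. Concretely, I would first reduce to the case $\ind(\alpha(g)) = 0$: if the GNVW index is nontrivial, stack a decoupled copy with the inverse translation action as in the construction of $\omega_\alpha$; the stacked Hamiltonian is still admissible (a translation-invariant product of on-site terms) and still has a locally-unique gapped symmetric ground state (the tensor product with the corresponding product state on the ancilla), so without loss of generality $\ind = 0$. Then use the decomposition $\alpha(g) = \alpha(g)_{<0}\,\alpha(g)_0\,\alpha(g)_{\geqslant 0}$ from lemma~\ref{lemma:0_index_decomp} together with $\alpha(g)_{\geqslant 0}\alpha(h)_{\geqslant 0} = \Ad_{V(g,h)}\alpha(gh)_{\geqslant 0}$ from Eq.~\eqref{eq:non-homomorphism}. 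The point is that $U(g)$ — being a global implementer — can be factored, using the $G$-invariance of $\psi$ and the product structure of $\psi$ across the cut (from purity plus the split property, lemma~\ref{lemma:area_law_split}, since a gapped ground state of an admissible Hamiltonian satisfies an area law), into a left piece and a right piece, $U(g) = U(g)_{<0}\,U(g)_{\geqslant 0}$ up to a phase, with $U(g)_{\geqslant 0}$ implementing $\alpha(g)_{\geqslant 0}$ on the half-chain sector. The failure of $g \mapsto U(g)_{\geqslant 0}$ to be a genuine representation is exactly measured by $V(g,h)$ up to a phase $\mu(g,h) \in \U$; but $U(g)$ \emph{is} an honest representation (it implements a genuine $G$-action on $\cH_\psi$, and the overall projective phase of $U(g)$ can be absorbed because $\cH_\psi$ has a distinguished non-degenerate ground vector $|\psi\ra$ on which $U(g)|\psi\ra = |\psi\ra$). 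Tracking phases through the associativity relation then shows $\omega(g,h,k) = (\delta\mu)(g,h,k)$, i.e. $[\omega] = 1$ in $\rH^3(G;\U)$, contradicting the hypothesis.

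The main obstacle I anticipate is making the factorization $U(g) = U(g)_{<0}\,U(g)_{\geqslant 0}$ rigorous: $U(g)$ lives in $\B(\cH_\psi)$ (really in $\pi_\psi(\A^{ql})''$), and one must show it respects the tensor decomposition $\cH_\psi \simeq \cH_\psi^{<0} \otimes \cH_\psi^{\geqslant 0}$ induced by the split property, with the half-chain factors implementing $\alpha(g)_{<0}$ and $\alpha(g)_{\geqslant 0}$ respectively. This requires the split property of $\psi$ (hence the area law, hence admissibility is used a second time, via lemma~\ref{lemma:area_law_split} and lemma~\ref{lemma:separate_equivalence}) and a careful comparison of the restricted states $\psi|_{<0}$, $\psi|_{\geqslant 0}$ with their images under $\alpha(g)_{<0}$, $\alpha(g)_{\geqslant 0}$ — one needs these restrictions to again be locally-unique gapped-like so that their implementers are unique up to phase. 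A secondary subtlety is checking that the normal-subgroup structure (lemma~\ref{lemma:normal_subgroup}) and the phase ambiguities are compatible so that the $2$-cochain $\mu$ is globally well-defined on $G \times G$ rather than only locally (this is where, for a Lie group $G$, one would pass to the differentiable cochain framework of Sec.~\ref{sec:group_cohomology} and track smoothness of $g \mapsto U(g)$). Modulo these analytic points, the contradiction $[\omega] = 1$ is forced, proving the theorem.
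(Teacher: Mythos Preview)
Your approach is essentially the same as the paper's: establish that $\psi$ is pure (theorem~\ref{Thm:pure_gs}) and splits (area law for admissible Hamiltonians plus lemma~\ref{lemma:area_law_split}, i.e.\ corollary~\ref{corollary:GS_splits}), then conclude that the anomaly index must vanish, contradicting the hypothesis. The paper compresses your entire second and third paragraphs into a single citation of lemma~\ref{lemma:Kapustin_Sopenko} (Theorem~2 and Remark~4.1 of Ref.~\cite{kapustin2024anomalous}), which takes precisely the inputs ``pure, split, $G$-symmetric'' and outputs $[\omega]=1$; the factorization of the GNS implementers across the cut that you correctly flag as the main obstacle is exactly the content of that lemma, so there is no need to redo it here.
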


To prove this theorem, we use the following established lemmas.

\begin{lemma}[Theorem 1.5 of Ref. \cite{matsui2011boundedness}]\label{lemma:split again}
If the entanglement entropy of a factor state $\psi$ satisfies the area law, then it splits (see definition \ref{definition:split_property}).
\end{lemma}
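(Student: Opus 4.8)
The plan is to translate the split property into a statement about von Neumann algebras in the GNS representation and then to build the required interpolating type~$I$ factor from the entanglement data, using the area law only through a tightness estimate on reduced spectra. Since $\psi$ is pure, $\pi_\psi$ is irreducible (Schur, lemma~\ref{lemma:Schur}), so $\pi_\psi(\A^{ql})''=\B(\cH_\psi)$. Setting $\mathcal{M}_{<0}:=\pi_\psi(\A^{ql}_{<0})''$ and $\mathcal{M}_{\geqslant 0}:=\pi_\psi(\A^{ql}_{\geqslant 0})''$, locality forces $[\mathcal{M}_{<0},\mathcal{M}_{\geqslant 0}]=0$, hence $\mathcal{M}_{<0}\subseteq\mathcal{M}_{\geqslant 0}'$, while irreducibility gives $\mathcal{M}_{<0}\vee\mathcal{M}_{\geqslant 0}=\B(\cH_\psi)$. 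The condition $\psi\simeq\psi_{<0}\otimes\psi_{\geqslant 0}$ of definition~\ref{definition:split_property} is equivalent to the inclusion $\mathcal{M}_{<0}\subseteq\mathcal{M}_{\geqslant 0}'$ being \emph{split}, \ie to the existence of a type~$I$ factor $\mathcal{N}$ with $\mathcal{M}_{<0}\subseteq\mathcal{N}\subseteq\mathcal{M}_{\geqslant 0}'$. Indeed, such an $\mathcal{N}$ yields a factorization $\cH_\psi\cong\cH_1\otimes\cH_2$ carrying $\mathcal{M}_{<0}$ into $\B(\cH_1)\otimes 1$ and $\mathcal{M}_{\geqslant 0}$ into $1\otimes\B(\cH_2)$, from which a pure product state quasi-equivalent to $\psi$ is read off. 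So the first step is to reduce the lemma to producing $\mathcal{N}$.

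The only place the area law enters is through the reduced density matrices. In one dimension definition~\ref{def:area_law} gives $S(\psi|_\Gamma)\leqslant C$ uniformly over intervals $\Gamma$, since $|\partial\Gamma|\leqslant 2$. For each $n$ I would form the density matrix $\rho_n$ of $\psi|_{[-n,n-1]}$ on $\cH_{[-n,-1]}\otimes\cH_{[0,n-1]}$ (example~\ref{example:denstiy_matrix}) and extract the Schmidt/eigenvalue data across the cut; the attendant entropies are all bounded by $2C$ uniformly in $n$. The key quantitative lemma is that a uniform entropy bound forces uniform tightness of the spectra: if $\lambda_1\geqslant\lambda_2\geqslant\cdots$ are the decreasingly ordered eigenvalues with $\sum_i\lambda_i=1$ and $-\sum_i\lambda_i\log\lambda_i\leqslant 2C$, then $\lambda_N\leqslant 1/N$ forces $\sum_{i>N}\lambda_i\leqslant 2C/\log N$. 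Thus the entanglement weight beyond the top $N$ channels is below $2C/\log N$, uniformly in $n$, and this is exactly the mechanism by which the area law prevents spectral weight from escaping in the limit $n\to\infty$.

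Armed with this tightness, I would construct $\mathcal{N}$ as a limit of finite-dimensional pieces. The uniform tail bound allows truncation to a fixed number of Schmidt channels across the cut with error uniform in $n$; compactness of the trace-norm unit ball restricted to this tight family then lets me extract a limiting normal density matrix $\varrho$ that encodes the cross-cut entanglement. Its spectral projections, together with the commuting halves $\mathcal{M}_{<0}$ and $\mathcal{M}_{\geqslant 0}$, generate the candidate type~$I$ factor $\mathcal{N}$; I would use the double commutant theorem (theorem~\ref{theorem:vN_double}) to verify $\mathcal{M}_{<0}\subseteq\mathcal{N}$ and $\mathcal{N}\subseteq\mathcal{M}_{\geqslant 0}'$. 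Finally, reading $\psi_{<0}\otimes\psi_{\geqslant 0}$ off the induced factorization, I would confirm $\psi\simeq\psi_{<0}\otimes\psi_{\geqslant 0}$ directly against the criterion of proposition~\ref{prop:equivalence_of_states}: the two states differ by at most $\epsilon\|A\|$ on operators $A$ localized outside a finite neighborhood of the cut, the neighborhood size being fixed by the truncation order $N(\epsilon)$ from the tightness estimate.

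The main obstacle is the passage to the limit $n\to\infty$: each finite window factorizes trivially, but $\cH_\psi$ itself need not, so the entire content is to show that these finite factorizations converge to a genuine one. The delicate point is the \emph{normality} of the limiting object---guaranteeing that the limiting spectrum stays discrete and summable with no continuous part emerging, and that $\mathcal{N}$ is genuinely a type~$I$ factor rather than a more general von Neumann algebra. This is precisely what the uniform entropy bound, through the tail estimate $2C/\log N$, is designed to secure, but turning the compactness extraction and the verification that $\mathcal{N}$ is type~$I$ into a rigorous argument is the technical heart of the proof; for the full measure-theoretic control of the limit I would follow Sec.~2 of Ref.~\cite{matsui2011boundedness}.
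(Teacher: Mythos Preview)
The paper does not give its own proof of this lemma; it is quoted as Theorem~1.5 of Ref.~\cite{matsui2011boundedness}, and the reader is directed to Sec.~2 of that reference for the argument (see the comment following lemma~\ref{lemma:area_law_split}). Your sketch outlines the right circle of ideas---recast splitting as the existence of a type~$I$ interpolating factor in the GNS picture and use the uniform entropy bound to force spectral tightness across the cut---and you likewise defer the delicate limit step to \cite{matsui2011boundedness}, so your treatment and the paper's are aligned.
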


From Theorem 2.8 of Ref. \cite{Hastings2005}, we know that a locally unique gapped ground state of an adimissible Hamiltonian is a factor state. Combining this result, Theorem \ref{thm:area_law} and Lemma \ref{lemma:split again}, one deduces
\begin{corollary}\label{corollary:GS_splits}
If $\psi$ is a locally unique gapped ground state of an admissible Hamiltonian $H$, then it splits.
\end{corollary}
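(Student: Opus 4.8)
The plan is to obtain this simply by composing the two preceding lemmas, with no extra work. First I would invoke Lemma~\ref{lemma:area_law}: since $\psi$ is a (locally-unique) gapped ground state of the admissible Hamiltonian $H$, its reduced density matrices on finite intervals have entanglement entropy obeying the area law in the sense of Definition~\ref{def:area_law} --- in one dimension this just means $S(\rho_\psi)$ is bounded by a constant uniformly over all finite $\Gamma$, since $|\partial\Gamma|$ is bounded. Then I would feed this conclusion directly into Lemma~\ref{lemma:split again}: any 1d state whose entanglement entropy satisfies the area law splits in the sense of Definition~\ref{definition:split_property}, i.e.\ $\psi\simeq\psi_{<0}\otimes\psi_{\geqslant 0}$ with $\psi_{<0}$, $\psi_{\geqslant 0}$ pure states of $\A^{ql}_{<0}$, $\A^{ql}_{\geqslant 0}$ respectively. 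That is the desired statement.

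The one thing I would double-check is that the hypotheses match up verbatim: the area-law conclusion of Lemma~\ref{lemma:area_law} is already phrased exactly as Definition~\ref{def:area_law}, which is precisely the hypothesis of Lemma~\ref{lemma:split again}, so no reformulation is needed; and the "gapped ground state" appearing in Lemma~\ref{lemma:area_law} is the same locally-unique gapped ground state of Definition~\ref{definition:locally-unique_gs} used in the corollary. Once these identifications are made, the argument is a single line.

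I do not expect any obstacle here: the corollary is a one-step concatenation of two imported results, namely the Kuwahara--Saito area law for long-range (power-law decay with exponent $\mathfrak{a}>2$) Hamiltonians in Ref.~\cite{Kuwahara2020area} and Matsui's theorem that the 1d area law implies the split property in Ref.~\cite{matsui2011boundedness}. All the genuine content lives inside those two inputs, which are taken as black boxes and not reproved; the corollary itself requires only that we record their composition.
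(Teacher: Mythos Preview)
Your proposal is correct and matches the paper's own argument exactly: the corollary is obtained by composing Lemma~\ref{lemma:area_law} (area law for gapped ground states of admissible Hamiltonians) with Lemma~\ref{lemma:split again} (area law implies the split property), with no additional ingredients. There is nothing to add or correct.
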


The final lemma is

\begin{lemma}[Theorem 2 and remark 4.1 of Ref. \cite{kapustin2024anomalous}]\label{lemma:Kapustin_Sopenko}
Given a symmetry action $\alpha:G\to \G^{lp}$ (or $\G^{al}$ for continuous $G$) on a quantum spin chain, if there exists a pure state $\psi$ which splits and is $G$-symmetric, then the associated anomaly index $\omega=1$.
\end{lemma}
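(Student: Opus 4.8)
The plan is to prove directly that the $3$-cocycle $\omega$ of Eq.~\eqref{eq:anomaly_index} is a coboundary whenever a pure, split, $G$-symmetric state $\psi$ exists. The mechanism is that the split structure of $\psi$ supplies a \emph{unitary} implementation $T(g)$ of the half-chain action $g\mapsto\alpha(g)_{\geqslant 0}$ on a single irreducible GNS space, and the failure of $g\mapsto T(g)$ to be a homomorphism is precisely a $2$-cochain $c$ trivializing $\omega$. First I would reduce to the case $\ind(\alpha(g))=0$: if some $\alpha(g)$ has nonzero GNVW index I internalize the translation exactly as in the definition $\omega_\alpha:=\omega_\beta$ with $\beta(g)=\alpha(g)\otimes\tau(g)^{-1}$, and stack $\psi$ with a uniform product state $\omega_{\mathrm{prod}}$ on the auxiliary copy; by lemma~\ref{lemma:product_split} the product state splits and it is invariant under the generalized translation, so $\psi\otimes\omega_{\mathrm{prod}}$ is still pure, split and $\beta$-symmetric, and $\beta$ has vanishing index. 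Assuming $\ind=0$, each $\alpha(g)$ decomposes as in Eq.~\eqref{eq:decomp}, $\alpha(g)=\alpha(g)_{<0}\alpha(g)_0\alpha(g)_{\geqslant 0}$ with $\alpha(g)_0=\Ad_{U_0(g)}$, and Eq.~\eqref{eq:non-homomorphism} produces $V(g,h)\in\cU^{ql}$. Since $\alpha(g)_{\geqslant0}\alpha(h)_{\geqslant0}\alpha(gh)_{\geqslant0}^{-1}=\Ad_{V(g,h)}$ lies in $\G^{lp}_{\geqslant0}$, it acts trivially on $\A^{ql}_{<0}$, so $V(g,h)$ commutes with $\A^{ql}_{<0}$ and hence $V(g,h)\in\A^{ql}_{\geqslant0}$. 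Writing $\psi\simeq\psi_{<0}\otimes\psi_{\geqslant0}$ with $\psi_{\geqslant0}$ a pure state of $\A^{ql}_{\geqslant0}$ (definition~\ref{definition:split_property}), its GNS representation $(\pi_{\geqslant0},\cH_{\geqslant0},|\psi_{\geqslant0}\ra)$ is irreducible and $\pi_{\geqslant0}(V(g,h))$ is a well-defined bounded operator.

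The heart of the argument, and the step I expect to be the main obstacle, is to show $\psi_{\geqslant0}\circ\alpha(g)_{\geqslant0}\simeq\psi_{\geqslant0}$. Here I use that $\alpha(g)_{<0}$ and $\alpha(g)_{\geqslant0}$ act on disjoint factors and hence commute, while $\alpha(g)_0$ is inner. From $\psi=\psi\circ\alpha(g)^{-1}$ and $\alpha(g)^{-1}=\alpha(g)_{\geqslant0}^{-1}\alpha(g)_0^{-1}\alpha(g)_{<0}^{-1}$, I commute $\alpha(g)_{\geqslant0}$ through $\alpha(g)_{<0}^{-1}$ to obtain $\psi\circ\alpha(g)_{\geqslant0}=\psi\circ\big(\Ad_{W(g)}\circ\alpha(g)_{<0}^{-1}\big)$, where $W(g)=\alpha(g)_{\geqslant0}^{-1}(U_0(g)^{-1})\in\cU^{ql}$. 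Since inner automorphisms preserve the equivalence class of states (proposition~\ref{prop:equivalence_of_states}) and equivalence is stable under composition with an automorphism, $\psi\circ\alpha(g)_{\geqslant0}\simeq\psi\circ\alpha(g)_{<0}^{-1}$. Restricting both sides through the split factorization (and using that $\alpha(g)_{<0}$, $\alpha(g)_{\geqslant0}$ respect the two halves) gives $\psi_{<0}\otimes\big(\psi_{\geqslant0}\circ\alpha(g)_{\geqslant0}\big)\simeq\big(\psi_{<0}\circ\alpha(g)_{<0}^{-1}\big)\otimes\psi_{\geqslant0}$, and lemma~\ref{lemma:separate_equivalence} then cancels the left factors to yield the desired $\psi_{\geqslant0}\circ\alpha(g)_{\geqslant0}\simeq\psi_{\geqslant0}$.

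With this equivalence in hand, the unitary equivalence of GNS representations $\pi_{\geqslant0}\circ\alpha(g)_{\geqslant0}\cong\pi_{\geqslant0}$ (definition~\ref{def:inequivalence_GNS}) provides unitaries $T(g)\in\B(\cH_{\geqslant0})$ with $\pi_{\geqslant0}(\alpha(g)_{\geqslant0}(A))=T(g)\pi_{\geqslant0}(A)T(g)^{-1}$ for all $A\in\A^{ql}_{\geqslant0}$. Feeding Eq.~\eqref{eq:non-homomorphism} into this relation shows that $T(g)T(h)$ and $\pi_{\geqslant0}(V(g,h))T(gh)$ implement the same automorphism of the irreducible algebra $\pi_{\geqslant0}(\A^{ql}_{\geqslant0})$, so Schur's lemma~\ref{lemma:Schur} yields a phase $c(g,h)\in\U$ with $T(g)T(h)=c(g,h)\,\pi_{\geqslant0}(V(g,h))\,T(gh)$, i.e. $\pi_{\geqslant0}(V(g,h))=c(g,h)^{-1}T(g)T(h)T(gh)^{-1}$.

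Finally I substitute this identity (and its analogues for $V(gh,k)$, $V(g,hk)$, $V(h,k)$) together with $\pi_{\geqslant0}(\alpha(g)_{\geqslant0}(V(h,k)))=T(g)\pi_{\geqslant0}(V(h,k))T(g)^{-1}$ into $\pi_{\geqslant0}$ applied to Eq.~\eqref{eq:anomaly_index}. All eight $T$-factors telescope to the identity, leaving $\omega(g,h,k)=c(h,k)\,c(gh,k)^{-1}\,c(g,hk)\,c(g,h)^{-1}=(\delta c)(g,h,k)$, so $[\omega]=1$ in $\rH^{3}(G;\U)$, which is the claim. The remaining delicate points are purely technical: the bookkeeping of smoothness/measurability of $c$ for continuous $G$, handled within the differentiable cohomology framework of Sec.~\ref{sec:group_cohomology}, and the reduction to $\ind=0$; the final algebraic cancellation is routine.
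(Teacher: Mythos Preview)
The paper does not supply its own proof of this lemma; it is quoted verbatim as Theorem~2 and Remark~4.1 of Ref.~\cite{kapustin2024anomalous} and used as a black box in the proof of theorem~\ref{thm:main}. So there is no ``paper's proof'' to compare against beyond the cited reference.

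Your reconstruction is correct and is essentially the argument one expects (and, as far as one can tell from the citation, the one in Ref.~\cite{kapustin2024anomalous}). The logical skeleton---reduce to $\ind=0$, observe $V(g,h)\in\A^{ql}_{\geqslant0}$, use $G$-invariance plus the split factorization together with lemma~\ref{lemma:separate_equivalence} to get $\psi_{\geqslant0}\circ\alpha(g)_{\geqslant0}\simeq\psi_{\geqslant0}$, then implement $\alpha(g)_{\geqslant0}$ by unitaries $T(g)$ on the irreducible GNS space $\cH_{\geqslant0}$ and read off $\omega=\delta c$ via Schur---is sound, and your telescoping computation in the last step is right. Two small points worth tightening: (i) in the $\ind\neq0$ reduction you need the auxiliary product state to be invariant under \emph{every} generalized translation in the image $\tau(G)$, which is guaranteed if you choose a uniform product state that further factorizes over the prime-factor decomposition $V\simeq\bigotimes_j V_{i_j}$ of the on-site space; (ii) when invoking lemma~\ref{lemma:separate_equivalence} you should note explicitly that $\psi_{\geqslant0}\circ\alpha(g)_{\geqslant0}$ and $\psi_{<0}\circ\alpha(g)_{<0}^{-1}$ remain pure (composition with an automorphism preserves purity), so the hypotheses of that lemma are met. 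With those remarks your argument is complete for discrete $G$; for continuous $G$ the remaining issue is exactly the smoothness of $g\mapsto T(g)$ and hence of $c$, which is the content of Remark~4.1 in the cited reference.
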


\begin{proof}[Proof of Theorem \ref{thm:main}]
    By Corollary \ref{corollary:GS_splits} and Lemma \ref{lemma:Kapustin_Sopenko}, we only have to show that the locally unique gapped ground state of an admissible Hamiltonian is pure, which is exactly the statement of Theorem \ref{Thm:pure_gs}.
\end{proof}

\section{Connection between infinite systems and sequences of finite systems} \label{sec: thermodynamic limit}

Many of our theorems in the previous appendices are proved for infinite systems, but real physical systems are all of finite sizes. In this appendix, we bridge finite and infinite systems together and use our results to obtain some important implications on finite systems. 

For now we will mostly work with finite-size systems with open boundary conditions. See Appendix~\ref{sec:periodic} for a discussion on periodic boundary conditions.

\subsection{Thermodynamic limits of states, automorphisms and Hamiltonians}

We start with a general discussion of the thermodynamic limits of states, automorphisms and Hamiltonians.

Let $\Lambda\simeq\z$ be the infinite lattice and fix an increasing sequence of finite subsets,
\beq\label{eq:increasing}
\Gamma_{1}\subseteq\Gamma_{2}\dots \subseteq\Lambda,\quad |\Gamma_{L}|<\infty
\eeq
indexed by an integer $L$, which can be thought of as the size of a finite system. We assume each $\Gamma_{L}$ is connected and it eventually exhausts $\Lambda$, \ie
\beq\label{eq:exhausting}
\bigcup_{L=1}^{\infty}\Gamma_{L}=\Lambda
\eeq

Let $\psi_{L}:\A_{\Gamma_{L}}\to\bbC$ be a sequence of quantum states. we formally extend them to be a state of $\A^{ql}$ as follows. Let $\Omega$ be a fixed pure product state of $\A^{ql}$ (which models the environment), as shown in Lemma \ref{lemma:product_split}, $\Omega_{L}:=\Omega|_{\Gamma_{L}^{c}}$ is another pure product state on $\A_{\Gamma_{L}^{c}}$. We define $\tilde{\psi}_{L}:=\psi_{L}\otimes\Omega_{L}$, which is a state on $\A^{ql}$. Formalizing this idea, we have
\begin{definition}\label{eq:sequence_states}
     Fix a pure product state $\Omega:\A^{ql}\to \bbC$, a sequence of finite-size states $\{\tilde{\psi}_{L}\}_{L=1,2,\dots}$ means
    \beq
    \tilde{\psi}_{L}:=\psi_{L}\otimes\Omega|_{\Gamma_{L}^{c}}
    \eeq
    where $\psi_{L}$ is a state of $\A_{\Gamma_{L}}$.
\end{definition}

Having a sequence of states, a natural task is to define the notion of convergence. In the following, we will only use the so-called {\it weak-$*$ convergence} of sequence of states, which physically means that local observables converge in the thermodynamic limit.

\begin{definition}[weak-$*$ convergence of states]\label{def:weak-$*$}
    Let $\tilde\psi_{L}$ be a sequence of states of $\A^{ql}$ defined as above, we say it weak-$*$ converges to a state $\psi$ if for any $\epsilon>0$ and $A\in \A^{l}$, there exists a constant $L_{\epsilon,A}$ such that $\forall\,L>L_{\epsilon,A}$, we have
    \beq
    |\tilde\psi_{L}(A)-\psi(A)|<\epsilon ||A||
    \eeq
    In this case, we say that $\psi$ is the weak-$*$ limit of $\psi_{L}$ as $L\to\infty$ and write $\lim_{L\to\infty}\psi_{L}=\psi$.
\end{definition}
\begin{remark}
    It is easy to see the weak-$*$ limit is unique if it exists.
    An equivalent way to formulate weak-$*$ convergence is: For any fixed finite subset $\Gamma\subseteq\Lambda$, the reduced density matrix of $\{\psi_{L}\}$ on $\Gamma$ converges in the trace norm. For example, let us first take $\Gamma_L=[-L, L]$. Then the weak-$*$ convergence of a sequence of states $\psi_{L}$ on $\A_{\Gamma_L}$ means the following. For any $N\in\z^+$ and $\epsilon>0$, there exists an integer $L>N$, such that for any $L_{1,2}>L$ and $A\in\A_{[-N, N]}$,  we have $|\psi_{L_1}(A)-\psi_{L_2}(A)|<\epsilon||A||$. Note that in this formulation we do not need to introduce the product state of the environment, $\Omega|_{\Gamma^c_L}$.
\end{remark}

By slightly abusing notations, we will simply write $\psi_{L}$ rather than $\tilde{\psi}_{L}$, although one should keep in mind that $\tilde{\psi}_{L}$ depends on the choice of $\Omega$ (\ie the environment).

We emphasize that weak-$*$ limit is strictly weaker than norm convergence, \ie $\lim_{L\to\infty}\psi_{L}=\psi$ does not imply $\lim_{L\to\infty}||\psi_{L}-\psi||=0$, where the norm is taken on the whole $\A^{ql}$, although we do have $\lim_{L\to\infty}||(\psi_{L}-\psi)|_{\Gamma}||=0$ for any $|\Gamma|<\infty$.

Given a sequence of states, it may or may not admit a weak-$*$ limit. However, a deep theorem in functional analysis shows, for any sequence of states, it is always possible to pick up a weak-$*$ convergent subsequence.
\begin{theorem}[Banach-Alaoglu-Bourbaki theorem 3.16 of Ref.~\cite{brezis2010functional}]\label{thm:BAB}
    Any sequence of states $\{\psi_{L}\}_{L=1,2,\dots}$ has a weak-$*$ convergent subsequence.
\end{theorem}
The proof requires Tychonoff theorem in general topology, which is equivalent to the axiom of choice.

\begin{remark}
    Earlier, we have indicated that the GHZ state (Eq.~\eqref{eq:finite_GHZ}) becomes mixed in the thermodynamic limit (see Eq.~\eqref{eq:mixed_GHZ}). In terms of weak-$*$ convergence, we say that the weak-$*$ limit of GHZ state is a mixed state.
\end{remark}

It will also be useful to define the notion of convergence for Hamiltonians and automorphisms. Let $\alpha_{L}$ be a $*$-automorphism of $\A_{\Gamma_{L}}$, and we formally extend it to $\A^{ql}$ by viewing it as $\alpha_{L}\otimes\mathrm{id}_{\Gamma_{L}^{c}}$. Again, by abusing notations, we will write $\alpha_{L}$ rather than $\alpha_{L}\otimes \mathrm{id}_{\Gamma_{L}^{c}}$. Therefore we have a natural notion of finite-size sequence of automorphisms (and similarly, sequence of finite-size Hamiltonians).
\begin{definition}\label{def:sequence_aut}
    A sequence of finite-size $*$-automorphisms on $\A^{ql}$ means $\{\alpha_{L}\otimes\mathrm{id}_{\Gamma_{L}^{c}}\}_{L=1,2,\dots}$, where each $\alpha_{L}$ is a $*$-automorphism of $\A_{\Gamma_{L}}$. Similarly, a sequence of finite-size Hamiltonians $\{H_{L}\}_{L=1,2,\dots}$ means $\{H_{L}\otimes \mathrm{id}_{\Gamma^{c}_{L}}\}_{L=1,2,\dots}$, where each $H_{L}$ is a self-adjoint operator on $\Gamma_{L}$.
\end{definition}
However, merely a sequence does not buy us much. We now define the strong convergence of such finite-size sequences of $*$-automorphisms and finite-size Hamiltonians.
\begin{definition}\label{def:strong_convergence}
    If for any fixed $A\in\A^{l}$ and $\epsilon>0$, there exists $\alpha\in\Aut(\A^{ql})$ and $L_{\epsilon,A}>0$, such that for any $L>L_{\epsilon,A}$,
    \beq
    ||\alpha_{L}(A)-\alpha(A)||<\epsilon ||A||
    \eeq
    Then we say $\{\alpha_{L}\}$ strongly converges to $\alpha$ as $L\to\infty$ and write $\lim_{L\to\infty}\alpha_{L}=\alpha$.

    Similarly, if for any fixed $A\in\A^{\ell}$ and $\epsilon>0$, there exists a densely defined derivation $\delta_{H}$ and $L_{\epsilon,A}>0$, such that for any $L>L_{\epsilon,A}$,
    \beq
    ||\delta_{H}(A)-[H_{L},A]||<\epsilon||A||
    \eeq
    Then we say $\ad_{H_{L}}:=[H_{L},\cdot]$ strongly converge to $\delta_{H}$ and write $\lim_{L\to\infty}\ad_{H_{L}}=\delta_{H}$. We say $H_{L}$ is admissible if $H$ is admissible.
\end{definition}

The following lemma shows these notions of convergences fit nicely.
\begin{lemma}\label{lemma:composition}
    Consider sequences of finite-size states $\{\psi_{L}\}_{L=1,2,\dots}$, finite-size $*$-automorphisms $\{\alpha_{L}\}_{L=1,2,\dots}$ and finite-size Hamiltonians $\{H_{L}\}_{L=1,2,\dots}$. Assuming that 
    \beq
    \begin{split}
        \lim_{L\to\infty}\psi_{L}&=\psi\\
        \lim_{L\to\infty}\alpha_{L}&=\alpha\\
        \lim_{L\to\infty}[H_{L},\cdot]&=\delta_{H}
    \end{split}
    \eeq
    Then
    \begin{enumerate}
        \item $\lim_{L\to\infty}\psi_{L}(A)=\psi(A),\quad\forall\,A\in\A^{ql}$.
        \item $\lim_{L\to\infty}\psi_{L}\circ\alpha_{L}=\psi\circ\alpha$.
        \item $\lim_{L\to\infty}\psi_{L}(B\,\ad_{H_{L}}(A))=\psi(B\,\delta_{H}(A)),\quad\forall\,A\in\A^{\ell},\,B\in\A^{ql}$.
    \end{enumerate}
\end{lemma}
\begin{proof}
    For property 1, we choose a norm-convergent sequence of local operators $a_{n}\in A^{l}$ such that $\lim_{n\to\infty}a_{n}=A$, \ie for any $\epsilon>0$, there exists $n_{\epsilon}$ such that $||a_{n}-A||<\epsilon ||A||$ for all $n>n_{\epsilon}$. For a fixed $n$ and $\epsilon$, there exists  an $L_{\epsilon,n}$ such that $||\psi_{L}(a_{n})-\psi(a_{n})||<\epsilon ||a_{n}||$ for all $L>L_{\epsilon,n}$. Therefore,
    \beq
    \begin{split}
        |\psi_{L}(A)-\psi(A)|&\leqslant |\psi_{L}(A)-\psi_{L}(a_{n})|+|\psi_{L}(a_{n})-\psi(a_{n})|+|\psi(a_{n})-\psi(A)|\\
        &\leqslant ||\psi_{L}||\cdot||A-a_{n}||+|\psi_{L}(a_{n})-\psi(a_{n})|+||\psi||\cdot ||a_{n}-A||\\
        &\leqslant 2\epsilon ||A||+\epsilon ||a_{n}||\\
        &\leqslant2\epsilon||A||+\epsilon(1+\epsilon)||A||\\
        &\leqslant\epsilon(3+\epsilon)||A||,\quad\forall\,L>L_{n,\epsilon}
    \end{split}
    \eeq
    So property 1 is proved.
    
    To show property 2, we note that for any $A\in\A^{\ell}$,
    \beq
    \begin{split}
        |\psi_{L}(\alpha_{L}(A))-\psi(\alpha(A))|&\leqslant|\psi_{L}(\alpha_{L}(A))-\psi_{L}(\alpha(A))|+|\psi_{L}(\alpha(A))-\psi(\alpha(A))|\\
        &\leqslant||\alpha_{L}(A)-\alpha(A)||+|\psi_{L}(\alpha(A))-\psi(\alpha(A))|
    \end{split}
    \eeq
    The first term goes to 0 as $L\rightarrow\infty$, since $\lim_{L\to\infty}\alpha_{L}=\alpha$. And we use the property 1 so the second term also vanishes as $L\to\infty$. This proves property 2.

    To see property 3, note
    \beq
    \begin{split}
        |\psi_{L}(B[H_{L},A])-\psi(B\,\delta_{H}(A))|&\leqslant|\psi_{L}(B[H_{L},A])-\psi_{L}(B\,\delta_{H}(A))|+|\psi_{L}(B\,\delta_{H}(A))-\psi(B \,\delta_{H}(A))|\\
        &\leqslant ||B||\cdot||[H_{L},A]-\delta_{H}(A)||+|\psi_{L}(B\,\delta_{H}(A))-\psi(B \,\delta_{H}(A))|
    \end{split}
    \eeq
    As $L\rightarrow\infty$, the first term goes to 0 since $\lim_{L\to\infty}\ad_{H_{L}}=\delta_{H}$, and the second term also vanishes in that limit because of property 1. This proves property 3.

\end{proof}
\begin{remark}
    We believe these properties are well-known to experts. For instance, the second property of \ref{lemma:composition} can be found in the Lemma 5.6 of Ref.~\cite{Bachmann2012automorphic}. However, we include the complete proof here for the convenience of readers.
\end{remark}

After clarifying the meanings of convergence of states, automorphisms and Hamiltonians, in the rest of this appendix, we will derive the finite-size analogs of our three main results, the entanglement area law, the LSM theorem and symmetry-enforced long-range entanglement.

\subsection{LSM theorem for sequences of finite-size systems}

It turns out that the LSM theorem is the simplest to be extended to sequences of finite-size systems.

Let us consider a sequence of finite-size systems indexed by a positive integer $L$ with finite-size Hamiltonian $H_{L}$, such that $\lim_{L\to\infty}\ad_{H_{L}}=\delta_{H}$ for some admissible Hamiltonian $H$ defined on $\Lambda$, where the limit is in the sense of Definition~\ref{def:strong_convergence}.
If for each $L$, $H_{L}$ has a uniquely gapped ground state $|\psi_{L}\ra$ with a gap $\Delta_{L}>0$, then one can define a state $\psi$ on infinite chain
\beq \label{eq: state in thermodynamic limit}
\psi(A):=\lim_{L\to\infty}\la\psi_{L}|A|\psi_{L}\ra,\quad\forall\,A\in\A^{\ell}
\eeq
Of course, this limit may not exist. But Theorem \ref{thm:BAB} ensures that there always exists a convergent subsequence, \ie a sequence $L_{n}$ which increases to $\infty$ such that $\lim_{L_n\rightarrow\infty}\la\psi_{L_{n}}|A|\psi_{L_{n}}\ra$ exists for all local operators $A$ \cite{Tasaki2022topological}. Below we only focus on a weak-$*$ convergent subsequence.

Our goal here is to show the following (Theorem \ref{theorem: finite and infinite} in the main text).

\begin{theorem}\label{thm:limit_ad_gs}
    The state $\psi$ defined by Eq. \eqref{eq: state in thermodynamic limit} is a locally unique gapped ground state of $\delta_{H}$, if $|\psi_{L}\ra$ is the unique gapped ground state of $H_{L}$ with $\Delta_{L}>\Delta>0$ for all $L$, and $\{H_L\}$ strongly converges to $\delta_H$.
\end{theorem}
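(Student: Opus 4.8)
The plan is to transfer the ground-state inequality and the gap inequality from each finite system $H_L$ up to the limit $\psi$; the only genuinely delicate point is an interchange of two limits. First I would normalize each $H_L$ by subtracting its ground-state energy $E_L$, so that $H_L\geqslant 0$ and $H_L|\psi_L\ra=0$ (note $E_L$ need not converge, but we shift each $H_L$ individually and commutators ignore $c$-number shifts, so Eq.~\eqref{eq:admissible_limit_state} and the gap $\Delta_L$ are untouched). Write $\omega_L:=\la\psi_L|\,\cdot\,|\psi_L\ra$, a state of $\A^{ql}$, and recall that by construction $\psi$ is the weak-$*$ limit of the $\omega_{L}$ along the chosen subsequence $L_n$; since $\|\omega_L\|=1$, the defining relation Eq.~\eqref{eq: state in thermodynamic limit} upgrades (by density) to $\omega_{L_n}(B)\to\psi(B)$ for \emph{every} $B\in\A^{ql}$, not merely for local $B$. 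The one auxiliary fact I will use is: for $A\in\A^{l}$ and $C\in\A^{ql}$,
\beq
\omega_{L_n}\!\bigl(C\,[H_{L_n},A]\bigr)\longrightarrow \psi\!\bigl(C\,\delta_H(A)\bigr),
\eeq
which follows by writing $\bigl|\omega_{L_n}(C[H_{L_n},A])-\omega_{L_n}(C\delta_H(A))\bigr|\leqslant\|C\|\,\bigl\|[H_{L_n},A]-\delta_H(A)\bigr\|\to 0$ using Eq.~\eqref{eq:admissible_limit_state}, together with $\omega_{L_n}(C\delta_H(A))\to\psi(C\delta_H(A))$.

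Next I would verify that $\psi$ is a ground state of $\delta_H$ in the sense of definition~\ref{def: ground state}. For each $L$ and $A\in\A^{l}$, using $H_L|\psi_L\ra=0$,
\beq
\omega_L\!\bigl(A^\dagger[H_L,A]\bigr)=\la\psi_L|A^\dagger H_L A|\psi_L\ra=\la A\psi_L|H_L|A\psi_L\ra\geqslant 0,
\eeq
because $H_L\geqslant 0$. Letting $n\to\infty$ and applying the auxiliary fact with $C=A^\dagger$ gives $\psi(A^\dagger\delta_H(A))\geqslant 0$, so $\psi$ is a ground state of $\delta_H$.

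For the gap, fix $A\in\A^{l}$ with $\psi(A)=0$ and put $\lambda_L:=\omega_L(A)$; since $A$ is local, $\lambda_{L_n}\to\psi(A)=0$. The shifted operator $A_L:=A-\lambda_L I$ obeys $\la\psi_L|A_L|\psi_L\ra=0$, i.e.\ $A_L|\psi_L\ra\perp|\psi_L\ra$, so uniqueness of the ground state and $\Delta_L>\Delta$ give $\la A_L\psi_L|H_L|A_L\psi_L\ra\geqslant\Delta\,\la A_L\psi_L|A_L\psi_L\ra$. A short computation using $H_L|\psi_L\ra=0$ shows the left-hand side equals $\omega_L(A^\dagger[H_L,A])$ and the right-hand side equals $\Delta\bigl(\omega_L(A^\dagger A)-|\lambda_L|^2\bigr)$, so for every $L$,
\beq
\omega_L\!\bigl(A^\dagger[H_L,A]\bigr)\geqslant\Delta\bigl(\omega_L(A^\dagger A)-|\lambda_L|^2\bigr).
\eeq
Passing to the limit along the subsequence — the left-hand side tends to $\psi(A^\dagger\delta_H(A))$ by the auxiliary fact, $\omega_{L_n}(A^\dagger A)\to\psi(A^\dagger A)$, and $|\lambda_{L_n}|^2\to 0$ — yields $\psi(A^\dagger\delta_H(A))\geqslant\Delta\,\psi(A^\dagger A)$ for all local $A$ with $\psi(A)=0$. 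By definition~\ref{definition:locally-unique_gs} this is exactly the statement that $\psi$ is a locally-unique gapped ground state of $\delta_H$ with gap at least $\Delta$.

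I expect the main obstacle to be precisely the interchange of limits in the last two displays: one is simultaneously exploiting the norm convergence $[H_{L_n},A]\to\delta_H(A)$ — available only on $\A^{l}$, which forces the restriction $A\in\A^{l}$ throughout — and the weak-$*$ convergence $\omega_{L_n}\to\psi$, and one must control the error of replacing $[H_{L_n},A]$ by $\delta_H(A)$ uniformly in $n$; this is exactly the estimate isolated in the first paragraph. A secondary technical point is that $\psi$ must be a bona fide state on all of $\A^{ql}$ (so that $\psi(A^\dagger\delta_H(A))$ is even defined, since $\delta_H(A)$ need not be local), which is handled by extending the $\omega_L$ by continuity and invoking Banach--Alaoglu, as already set up in the text.
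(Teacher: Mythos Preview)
Your proof is correct and follows essentially the same route as the paper: shift $A$ by $\omega_L(A)$, use the finite-size gap inequality $\omega_L(A_L^\dagger[H_L,A_L])\geqslant\Delta\,\omega_L(A_L^\dagger A_L)$, and pass both sides to the limit using $[H_L,A]\to\delta_H(A)$ together with $\omega_{L_n}\to\psi$. Your presentation is in fact slightly more careful than the paper's --- you normalize the ground-state energy to zero, verify the ground-state inequality separately (which definition~\ref{definition:locally-unique_gs} technically requires), and isolate the limit interchange as a clean ``auxiliary fact'' rather than writing the potentially misleading $\|\psi-\psi_L\|$ that appears in the paper's estimate --- but the substance of the argument is the same.
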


\begin{proof}
    Without loss of generality, we assume $A\in\A^{\ell}$ satisfies $\psi(A)=0$ and define $A_{L}:=A-\psi_{L}(A)$. Therefore $A_{L}|\psi_{L}\ra$ is orthogonal to $|\psi_{L}\ra$. Since $H_{L}$ has a unique gapped ground state, we then have
    \beq
    \la\psi_{L}|A_{L}^{\dagger}H_{L}A_{L}|\psi_{L}\ra\geqslant (E_{L,0}+\Delta_{L})\la\psi_{L}|A_{L}^{\dagger}A_{L}|\psi_{L}\ra
    \eeq
    where $E_{L,0}:=\la\psi_{L}|H_{L}|\psi_{L}\ra$ is the ground state energy of $H_{L}$. Therefore,
    \beq
    \psi_{L}(A_{L}^{\dagger}[H_{L},A_{L}])\geqslant \Delta_{L}\psi_{L}(A_{L}^{\dagger}A_{L})
    \eeq
    where we have defined $\psi_{L}(\cdot):=\la\psi_{L}|\cdot|\psi_{L}\ra$.
    By using $A_L=A-\psi_L(A)$, this is equivalent to 
    \beq
    \psi_{L}(A^{\dagger}[H_{L},A])\geqslant\Delta_{L}\psi_{L}(A^{\dagger}A)+\psi_{L}(A)^{*}\psi_{L}(\ad_{H_{L}}(A))-\Delta_{L}|\psi_{L}(A)|^{2}
    \eeq
    Taking the limit $L\rightarrow\infty$ on both sides, by Lemma \ref{lemma:composition},
    \beq
    \psi(A^{\dagger}\delta_{H}(A))\geqslant \Delta\psi(A^{\dagger}A),\quad\,\forall\,A\in\A^{\ell}
    \eeq
    where we have used $\lim_{L\to\infty}\psi_{L}(A)=\psi(A)=0$, $\lim_{L\to\infty}\psi_{L}(\ad_{H_{L}}(A))=\psi(\delta_{H}(A))$ and $\lim_{L\to\infty}\psi_{L}(A^{\dagger}\ad_{H_{L}}(A))=\psi(A^{\dagger}\delta_{H}(A))$.

    Therefore, according to Definition \ref{definition:locally-unique_gs}, $\psi$ is a locally unique gapped ground state of $\delta_H$.
     
\end{proof}
\begin{remark}
    The above result holds for any strongly-convergent sequence of finite-size Hamiltonians.
    The local Hamiltonian version is proved in Ref.~\cite{Tasaki2022topological} (see Theorem 2.6 therein), which is a special case of the above theorem. Also, the same proof shows the unique ground state of $H_{L}$ (without assuming the gap) becomes a ground state of $\delta_{H}$ in the sense of Definition~\ref{def: ground state}.
\end{remark}

Hence by Theorem \ref{thm:area_law}, Lemma \ref{lemma:split again} and Theorem \ref{Thm:pure_gs}, we obtain

\begin{corollary} \label{coroapp: pure and split}
    The state $\psi$ defined by Eq. \eqref{eq: state in thermodynamic limit} is pure and splits, if $\delta_H$ is an admissible Hamiltonian.
\end{corollary}

Let $G$ be a symmetry group and $\alpha_{L}:G\to \Aut(\A_{\Gamma_{L}})$ be a sequence of homomorphisms.
For a finite chain with size $L$, we write $\alpha_{L}(g)(A)=\rho_{L}(g)A\rho_{L}^{-1}(g)$,where $\rho_{L}$ is a (projective) representation of $G$. If $\lim_{L\to\infty}\alpha_{L}(g)=\alpha(g),\forall g\in G$ where $\alpha:G\to\G^{LPA}$ is a homomorphism, then we say $\alpha_{L}$ is anomalous iff $\alpha$ is anomalous.

\begin{theorem}\label{thm:finite_size_LSM}
    Let $\alpha_{L}:G\to\Aut(\A_{\Gamma_{L}})$ be a strongly convergent sequence of symmetry actions and $H_{L}$ be a strongly convergent sequence of Hamiltonians. If 
    \begin{enumerate}
        \item $\alpha:=\lim_{L\to\infty}\alpha_{L}$ is a symmetry action described by LPA.
        \item $\delta_{H}:=\lim_{L\to\infty}\ad_{H_{L}}$ is an admissible Hamiltonian.
        \item $H_{L}$ is invariant under $\alpha_{L}(g)$ for all $L$ and $g\in G$.
        \item $H_{L}$ has a unique gapped ground state with uniformly lower bounded energy gap $\Delta_{L}\geqslant \Delta>0$.
    \end{enumerate}
    Then $\alpha$ has trivial anomaly index.
\end{theorem}
\begin{proof}
    Let $\psi_{L}$ be the unique ground state of $H_{L}$. By Theorem \ref{thm:limit_ad_gs} and Corollary \ref{coroapp: pure and split}, its weak-$*$ accumulation point\footnote{It means a possible weak-$*$ limit of some subsequences of $\psi_{L}$ since $\psi_{L}$ may not weak-$*$ converge itself.} $\psi$ is a locally unique gapped ground state of $\delta_{H}$, hence it is pure and splits. Besides, by Lemma \ref{lemma:composition}, $\psi$ must be symmetric under $\alpha$. By Lemma \ref{lemma:Kapustin_Sopenko}, $\alpha$ is anomaly free.
\end{proof}
The proof of this theorem is simple, but it has the following important corollary, which is Corollary \ref{corollary: LSM finite} in the main text.
\begin{corollary}
    If a sequence of finite-size spin chains have admissible Hamiltonians $H_{L}$ which are symmetric under an anomalous symmetry, then $H_{L}$ cannot have a unique gapped ground state for sufficiently large $L$.
\end{corollary}
Note that the maximal size of such systems that allow a unique gapped ground state depends on the details of $H_{L}$ and $\alpha_{L}$, and is hence non-universal.

\subsection{Entanglment area law in finite-size systems with a spontaneously broken discrete symmetry} \label{subapp: SSB area law}

In Appendix \ref{sec:area_law}, we have proved the entanglement area law for locally unique gapped ground states of an infinite-size system with an admissible Hamiltonian. In this subsection, we will extend the area law to sequences of finite-size systems that may have multiple almost degenerate ground states due to a spontaneously broken discrete symmetry.

We first define the entanglement area law for a sequence of finite-size states below.
\begin{definition}\label{def:finite_area_law}
    Let $\{\psi_{L}\}_{L=1,2,\dots,}$ be a sequence of finite-size states and $\Gamma\subseteq\Lambda$ be a fixed finite, connected subset. We say that $\{\psi_{L}\}$ satisfies the area law of entanglement entropy if
    \beq\label{eq:finite_area_law}
    S_{L}(\Gamma):=-\tr(\rho_{L,\Gamma}\log(\rho_{L,\Gamma}))<\const
    \eeq
    where $\rho_{L,\Gamma}$ is the density matrix of the state $\psi_{L}|_{\Gamma}$ and the $\const$ does not depend on the choice of $\Gamma$ or $L$.
\end{definition}

Definition~\ref{def:finite_area_law} and Definition~\ref{def:area_law} can be related by the following result.

\begin{lemma}\label{lemma:finite_area_law}
    Let $\{\psi_{L}\}_{L=1,2,\dots}$ be a sequence of finite-size states satisfying the area law (\ie Definition~\ref{def:finite_area_law}), if $\psi_{L}$ weak-$*$ converges to $\psi$, then $\psi$ satisfies the entanglement area law (in the sense of Definition~\ref{def:area_law}).
\end{lemma}
\begin{proof}
    Let $\rho$ and $\psi$ be two states on $\B(\cH)$ with $\dim\cH=d_{\cH}$. Then the Fannes-Audenaert inequality (Theorem 5.39 of Ref.~\cite{Watrous_2018}) shows
    \beq
    |S(\rho)-S(\psi)|\leqslant T \log(d_{\cH}-1)-T\log T-(1-T)\log(1-T)
    \eeq
    where $T:=\frac{1}{2}||\psi-\rho||$. We apply this inequality to $\psi_{L}|_{\Gamma},\psi|_{\Gamma}$ and $\cH_{\Gamma}$. By the definition of weak-$*$ convergence, we obtain $||\psi_{L}|_{\Gamma}-\psi|_{\Gamma}||\to 0$ as $L\to\infty$. So we conclude that 
    \beq
    S(\psi,\Gamma)\leqslant \const
    \eeq
    where $\const$ is defined in Eq.~\eqref{eq:finite_area_law}.
\end{proof}

Next, we clarify the meaning of almost degenerate gapped ground states.
\begin{definition} \label{def: almost generate gs}
    Let $H_{L}$ be a self-adjoint operator with $\sigma(H_{L})\subseteq[0,\delta E_{L}]\cup [\delta E_{L}+\Delta_{L},\infty)$ on $\cH_{\Gamma_{L}}:=\bigotimes_{j\in\Gamma_{L}}\cH_{j}$ for some $\delta E_{L}\geqslant0,\Delta_{L}>0$. We say that a sequence of Hamiltonians $\{H_L\}$ has almost degenerate gapped ground states if $\lim_{L\to\infty}\delta E_{L}=0$ and $\lim_{L\to\infty}\Delta_{L}\geqslant \Delta>0$. Let $\Pi_{0,L}$ be the spectral projection of $H_{L}$ with energy $0\leqslant E\leqslant\delta E_{L}$, then the image $\cH_{0,L}:=\im(\Pi_{0,L})$ is called the ground state subspace of $H_{L}$.
\end{definition}

Almost degenerate gapped ground states are ubiquitous in systems with spontaneously broken discrete symmetries \cite{Tasaki2019SSB,Beekman2019SSB}. In this subsection, to simplify notations, we will write $|\psi\ra$ instead of $|\psi_{L}\ra$ and its weak-$*$ limit (if exists) will be denoted by $\psi^{\infty}$.

Another ingredient is the notion of superselection rule in finite-size systems, which often appears in systems with a spontaneously broken discrete symmetry.
\begin{definition} \label{def: superselection rule}
    Let $z\in\mathbb{N}$, and let $\{|\psi_{j}\ra\}$ be $z$ sequences of finite size states with $j=1,2,\dots,z$, with each of $|\psi_{j}\ra$ depending on $L$ (implicitly). We say these sequences satisfy superselection rule if for any $i\not=j$
    \beq\label{eq:sss}
    \lim_{L\to\infty}|\la\psi_{i}|A|\psi_{j}\ra|=0,\quad\forall A\in\A^{\ell}.
    \eeq
    Note that in the above equation $|\psi_{i}\ra$ means the state in the $i$-th sequence and it is defined in a system with $L$ sites. We say that a subspace $\cH_{0,L}$ satisfies the superselection rule if it has a basis satisfying superselection rule.
\end{definition}

The relation between this superselection rule and the superselection sectors studied in Appendix~\ref{subsec:Superselection_sectors} is made clear by Lemma \ref{lemma:weak*_lsc} and Corollary \ref{coro:SSR} below.
\begin{lemma}\label{lemma:weak*_lsc}
    The norm of states are weak-$*$ lower semi-continuous, \ie for any weak-$*$ convergent sequences $\psi\to\psi^{\infty}$ and $\phi\to\phi^{\infty}$, we have
    \beq
    ||\psi^{\infty}-\phi^{\infty}||\leqslant\lowlim_{L\to\infty}||\psi-\phi||:=\lim_{L\to\infty}\inf_{k\geqslant L}||\psi-\phi||
    \eeq
\end{lemma}
Loosely speaking, this lemma means if two finite-size states are close to each other, then so are their weak-$*$ limits\footnote{We emphasize that it can happen that two drastically different states (in finite-size) converge to the same state in the weak-$*$ topology (\eg the product state and the $W$-state).}.
This lemma is a special version of a theorem in convex analysis, which states that an $\R$-valued convex function is weak-$*$ lower semi-continuous iff it is lower semi-continuous in the norm topology. We refer to Ref.~\cite{normlsc} for the proof of this general version.

To see the superselection rules, we have
\begin{corollary}\label{coro:SSR}
    Let $\psi\to\psi^{\infty}$ and $\phi\to\phi^{\infty}$ be two weak-$*$ convergent sequences. If $\psi^{\infty}$ and $\phi^{\infty}$ are inequivalent pure states (\ie they are in different superselection sectors), then $\psi$ and $\phi$ satisfy superselection rule,
    \beq\label{eq:ssr}
    \lim_{L\to\infty}|\la\psi|A|\phi\ra|=0,\quad\forall\,A\in\A^{\ell}
    \eeq
\end{corollary}
\begin{proof}
    We first show Eq.~\eqref{eq:ssr} for the special case $A=1$. We will reduce the general case to $A=1$ later.
    
    By Proposition~\ref{prop:orthogonal_sectors}, we have $||\psi^{\infty}-\phi^{\infty}||=2$. Therefore,
    \beq
    2\geqslant\lowlim_{L\to\infty}||\psi-\phi||\geqslant 2\Rightarrow \lim_{L\to\infty}||\psi-\phi||=2
    \eeq
    And the special case of Eq.~\eqref{eq:ssr} with $A=1$ follows by noting
    \beq
    \lim_{L\to\infty}|\la\psi|\phi\ra|=\lim_{L\to\infty}\sqrt{2-||\psi-\phi||}=0
    \eeq
    
    For the general case with a generic $A\in \A^l$, we note that Eq.~\eqref{eq:ssr} is linear in $A$, so we can expand $A$ into Pauli basis{\footnote{For a $d$-dimensional Hilbert space with orthonormal basis denoted by $\{|j\ra\}$, where $j=1, 2, \cdots d$, the Pauli basis of operators are generated by $\sum_{j=1}^{d-1}|j+1\ra\la j|+|1\ra\la d|$ and $\sum_j^de^{\frac{2\pi i j}{d}}|j\ra\la j|$.}}, where each basis operator is a local unitary operator. Therefore, we can restrict ourselves to unitary $A$. In this case, we note that $\phi\circ\Ad_{A}$ weak-$*$ converges to $\phi^{\infty}\circ\Ad_{A}\simeq\phi^{\infty}$. In particular, $\phi^{\infty}\circ\Ad_{A}$ falls into a different superselection sector than $\psi^{\infty}$.
    Then the same argument as above applied to $\psi$ and $\phi\circ\Ad_{A}$ proves Eq.~\eqref{eq:ssr}.
\end{proof}

As discussed in the main text, locally unique gapped ground states can be viewed as gapped ground states in a superselection sector. It is natural to expect that almost degenerate gapped ground states of finite-size systems with the superselection rule can converge to locally unique gapped ground states, as the size increases. This is made precise below.

\begin{corollary}\label{coro:almost_degeneracy}
    Let $H_{L}$ be a sequence of finite-size Hamiltonians with $\lim_{L\to\infty}H_{L}=H$. If $H_{L}$ has almost degenerate gapped ground states, and its ground state subspace $\cH_{0,L}$ has an $L$-independent dimension $z$ and satisfies the superselection rule, then there exists a subsequence $L_{n}$ such that $\psi_{i}:=\lim_{L_{n}\to\infty}\la\psi_{i}|\cdot|\psi_{i}\ra$ are locally unique gapped ground states of $H$, where $|\psi_i\ra$ is the orthornormal basis of $\cH_{0,L}$ that satisfies the superselection rule.
\end{corollary}
\begin{proof}
    Let $|\psi_{i}\ra$ with $i=1,2,\dots,z$ be a basis of the ground state subspace of $\cH_{0,L}$ that satisfies the superselection rule. We first show there exists a subsequence $L^{(n)}$ such that $\la\psi_{i}|\cdot|\psi_{i}\ra$ are weak-$*$ convergent for all $i$. This proceeds as follows. Focus on $|\psi_{1}\ra$ and by remark \ref{remark:BBA} there exists a subsequence $L^{(1)}$ such that $\la\psi_{1}|\cdot|\psi_{1}\ra$ is weak-$*$ convergent. Next, similar argument shows there exists a subsequence $L^{(2)}$ of $L^{(1)}$ such that $\la\psi_{2}|\cdot|\psi_{2}\ra$ is weak-$*$ convergent. Repeating this construction, one ends up with a subsequence $L^{(n)}$ such that $\psi_{i}^{\infty}:=\lim_{L^{(n)}\to\infty}\la\psi_{i}|\cdot|\psi_{i}\ra$ is weak-$*$ convergent for any $i=1,\dots,z$. Below we only work with this subsequence and rename $L^{(n)}$ as $L$.

    Next, we show that these weak-$*$ limits are actually locally unique gapped ground states of $H$. Namely, according to Definition \ref{definition:locally-unique_gs}, we will show that $\psi_i^\infty(A^\dag\delta_H(A))\geqslant\Delta\psi_i^\infty(A^\dag A)$ for any $A\in\A^l$ such that $\psi^\infty_i(A)=0$, where $\Delta$ is defined in Definition \ref{def: almost generate gs}.
    
    To this end, we will discuss two cases separately. Let us first assume that $A$ satisfies $\lim_{L\rightarrow\infty}|A|\psi_i\ra|=0$. In this case, we have $\psi^\infty_i(A^\dag A)=0$ and $\psi^\infty_i(A^\dag\delta_H(A))=0$. So the condition $\psi^\infty_i(A^\dag\delta_H(A))\geqslant\Delta\psi^\infty_i(A^\dag A)$ is satisfied.

    Next, we consider the case where $|A|\psi_i\ra|$ does not converge to 0. In this case, there is at least one subsequence such that $|A|\psi_i\ra|>v>0$ for large enough sizes. For our purpose, it suffices to focus on this subsequence and consider large enough sizes, and we can expand
    \beq
    \frac{A|\psi_{i}\ra}{|A|\psi_{i}\ra|}=\sum_{j=1}^{z}c_{ij}|\psi_{j}\ra+\sqrt{1-\sum_{j=1}^{z}|c_{ij}|^{2}}|A^{\perp}\ra
    \eeq
    where $c_{ij}:=\la\psi_{j}|A|\psi_{i}\ra/|A|\psi_i\ra|$ and $|A^{\perp}\ra\in\cH_{0,L}^{\perp}$. Therefore,
    \beq \label{eq: average energy}
    \begin{split}
        \frac{\la\psi_{i}|A^{\dagger}H_{L}A|\psi_{i}\ra}{|A|\psi_{i}\ra|^{2}}&\geqslant 0+(1-\sum_{j=1}^{z}|c_{ij}|^{2})\la A^{\perp}|H_{L}|A^{\perp}\ra\\
        &\geqslant (1-\sum_{j=1}^{z}|c_{ij}|^{2})(\delta E_{L}+\Delta_{L})
    \end{split}
    \eeq
    where we have used $\la \phi|H_{L}|\phi\ra\geqslant 0$ for any $|\phi\ra$ and $\la A^{\perp}|H_{L}|A^{\perp}\ra\geqslant \Delta_{L}+\delta E_{L}$.

    According to Eq.~\eqref{eq:sss}, we have $\lim_{L\to\infty}|c_{ij}|=0$ for any $j\neq i$. Moreover, $\lim_{L\rightarrow\infty} |c_{ii}|=0$ because $\psi^\infty_i(A)=0$ and $|A|\psi_\ra|$ has a nonzero lower bound.
    Taking $L\to\infty$ in Eq. \ref{eq: average energy}, we have
    \beq
    \psi^{\infty}_{i}(A^{\dagger}\delta_{H}(A))\geqslant \Delta \psi^{\infty}_{i}(A^{*}A) 
    \eeq

    In summary, $\psi_i^\infty$ is a locally unique gapped ground state of $H$.
\end{proof}

Combining Corollary \ref{coro:almost_degeneracy} and Lemma \ref{lemma:finite_area_law}, we end up with
\begin{proposition}\label{prop:degenerate_area_law}
    Let $H_{L}$ be a sequence of finite-size Hamiltonians such that
    \begin{enumerate}
        \item $\lim_{L\to\infty}H_{L}=H$ is an admissible Hamiltonian.
        \item $H_{L}$ has almost degenerate gapped ground states and its ground state subspace $\cH_{0,L}$ has a constant dimension.
        \item The ground state subspace satisfies the superselection rule.
    \end{enumerate}
    Then any state $|\psi\ra\in\cH_{0,L}$ satisfies area law of entanglement entropy for sufficiently large $L$, in the sense of Definition \ref{def:finite_area_law}.
\end{proposition}
\begin{proof}
    Let $|\psi\ra:=\sum_{i=1}^{z}c_{i}|\psi_{i}\ra\in\cH_{0,L}$ be a normalized state and $\Gamma\in\Lambda$ be a finite connected subset. We write $D_{\Gamma}:=\dim\cH_{\Gamma}$ in this proof, where $\cH_{\Gamma}$ is the local Hilbert space on $\Gamma$. On $\Gamma$ we have
    \beq
    ||(\psi|_{\Gamma}-\sum_{i=1}^{z}|c_{i}|^{2}\psi_{i}|_{\Gamma})||
   =\sup_{||A||=1} |\tr(|\psi\ra\la\psi|A)-\sum_{i=1}^z|c_i|^2\tr(|\psi_i\ra\la\psi_i|A)|
   =\sup_{||A||=1}|\sum_{i\neq j}c_i\bar c_j\tr(|\psi_i\ra\la\psi_j|A)|
    \eeq
    Therefore, by Definition \ref{def: superselection rule}, for any $\epsilon>0$, there exists $L_{\epsilon}$ such that for any $L>L_{\epsilon}$, we have
    \beq
     ||(\psi|_{\Gamma}-\sum_{i=1}^{z}|c_{i}|^{2}\psi_{i}|_{\Gamma})||< \epsilon
    \eeq
    And by Fannes–Audenaert inequality, we have
    \beq
    \begin{split}
        S(\psi|_{\Gamma})&< \sum_{i=1}^{z}|c_{i}|^{2}S(\psi_{i}|_{\Gamma})-\sum_{i=1}|c_{i}|^{2}\log(|c_{i}|^{2})+\frac{\epsilon}{2}\log(D_{\Gamma}-1)\\
        &-\frac{\epsilon}{2}\log(\frac{\epsilon}{2})-(1-\frac{\epsilon}{2})\log(1-\frac{\epsilon}{2})
    \end{split}
    \eeq
    The first summation is bounded since $\psi_{i}$ weak-$*$ converges to locally unique gapped ground state of an admissible Hamiltonian and they satisfy area law by Theorem \ref{thm:area_law}. The remaining terms are $\Gamma$-independent except for $\frac{\epsilon}{2}\log(D_{\Gamma}-1)$. However, we note that this term is arbitrarily small when $\epsilon\to 0$ and $L\to\infty$. Therefore we conclude that
    \beq
    S(\psi|_{\Gamma})<\const
    \eeq
    where $\const$ does not depend on $\Gamma$ for sufficiently large\footnote{However, we must remember that the minimal system size for this area law bound depends on the choice of $\Gamma$. In more mathematical terms, the convergence of entropy is not uniform in the choice of $\Gamma$.} $L$.
\end{proof}

Proposition \ref{prop:degenerate_area_law} gives a criterion for us to check if a set of almost degenerate gapped ground states satisfy the entanglement area law, and the superselection rule is important there. It would be helpful to know when the superselection rule is present. To this end, we introduce the notion of complete distinguishability.

In the following, to simplify the notations, we drop the size-dependence in the Hamiltonians.
Let $H$ be a Hamiltonian of a system of size $L$ with almost degenerate gapped ground states, which span a subspace $\cH_{0}$. The spectrum $\sigma(H)\subseteq[0,\delta E_{L}]\cup[\delta E_{L}+\Delta_{L},\infty)$, where $\lim_{L\to\infty}\delta E_{L}=0$ and $\Delta_{L}\geqslant\Delta>0$ for any $L$.

Denote by the projector $P:=\Pi_{\leqslant\delta E_{L}}$ the spectral projection of $H$. Let $\{O_{\beta}\}_{\beta\in J}$ be a set of local Hermitian operators where $J$ is an indexing set. The notion of complete distinguishability is defined below.

\begin{definition}\label{def:comp_disting}
   If $\{O_{\beta}\}_{\beta\in J}$ satisfies the following
\begin{enumerate}
    \item For any $\beta\in J$, the lowest eigenvalue of $PO_{\beta}P|_{\im P}$ is unique with a uniformly lower-bounded spectral gap $\epsilon_{\beta}$. The unique eigenvector is denoted by $|\psi_{\beta}\ra$.
    \item There exists $I\subseteq J$ such that $\{|\psi_{\alpha}\ra\}_{\alpha\in I}$ span $\cH_{0}$. This implies $|I|\geqslant z$.
    \item Fixing such $I\subseteq J$, for $\alpha,\beta\in I$ with $\alpha\not=\beta$, there exists system-size independent $\eta_{\alpha,\beta}>0$, such that for any finite region $\Gamma$, there exists $\gamma\in J$ satisfying $\supp(O_{\gamma})\cap\Gamma=\emptyset$, such that
    \beq
    |\psi_{\alpha}(O_{\gamma})-\psi_{\beta}(O_{\gamma})|\geqslant \eta_{\alpha,\beta}||O_{\gamma}||
    \eeq
\end{enumerate}
then we say that $\cH_{0}$ is completely distinguishable by local operators $\{O_{\beta}\}_{\beta\in J}$.
\end{definition}

Let us explain the intuition and motivation behind this notion of complete distinguishability. Property 1 essentially means that if the Hermitian operator $O_\beta$ is added to the original Hamiltonian, then a unique $|\psi_\beta\ra$ can be selected from the almost degenerate gapped ground states. Property 2 roughly means that a complete basis in the ground state subspace can be selected out by adding a perturbation of the form $O_\beta$ to the Hamiltonian. Property 3 means that there are many local operators that can be used to distinguish any two states $|\psi_\alpha\ra$ and $|\psi_\beta\ra$, so that their weak-$*$ limits cannot be quasi-equivalent.

In the situation with a spontaneously broken discrete symmetry, the set $\{O_{\beta}\}_{\beta\in J}$ can often be provided by local order parameters (or more precisely, their certain Hermitian combinations).

Below we shift $\{O_{\beta}\}_{\beta\in J}$ by constants so that $\inf(\sigma(PO_{\beta}P|_{\im(P)}))=0$ for all $\beta\in J$. Besides, we rescale them so that $||O_{\beta}||=1,\forall\,\beta\in J$ for later convenience. Therefore, after all these redefinitions, the spectrum of $O_\beta$ is contained in $\{0\}\cup[\epsilon_\beta, 1]$.

The main theorem of this section is the following.
\begin{theorem}\label{thm:ssr}
    Let $H$ be an admissible Hamiltonian with uniformly finite-dimensional almost degenerate gapped ground state subspace $\cH_{0}$, if these ground states are completely distinguishable by local operators $\{O_{\beta}\}_{\beta\in J}$, then there exists a choice of basis $\{|\psi_{\alpha}\ra\}_{\alpha\in I}$ in $\cH_{0}$ that satisfies superselection rule Eq.~\eqref{eq:ssr}.
\end{theorem}
This theorem becomes obvious by combining Corollary \ref{coro:SSR} with the following lemma.

\begin{lemma}\label{lemma:purity}
    Assuming $\{O_{\alpha}\}_{\alpha\in J}$ satisfies property 1 in the definition \ref{def:comp_disting}, then for any $\alpha\in J$, the weak-$*$ limit of $|\psi_{\alpha}\ra$ is pure.
\end{lemma}
\begin{proof}
    The desired result follows if we show that the weak-$*$ limit of $\psi_{\alpha}$ is pure.

    Let us consider a perturbed Hamiltonian $H':=H+\kappa O_{\alpha},\alpha\in I$, where $\kappa>0$ is a coupling constant. Below we show that for a sufficiently small $\kappa$, $H'$ has a unique gapped ground state whose energy gap is proportional to $\kappa$. Furthermore, its unique ground state $|\psi_{\alpha}(\kappa)\ra$ is close to $|\psi_{\alpha}\ra$, and their distance is also controlled by $\kappa$. The argument is similar to the analysis in Lemma \ref{lemma:spec_stability}.

    By the standard resolvent argument, it can be shown that the spectrum of $H'$ is contained in $[0,\delta E_{L}+\kappa]\cup[\Delta_{L}+\delta E_{L},\infty)$. Below we assume $\kappa<\frac{\Delta}{8}$, so these two parts of spectrum remain disjoint.

    We utilize the so-called Feshbach-Schur map \cite{Gustafson2020} to show that $H'$ actually has a unique gapped ground state. For any Hermitian operator $a$ and projector $\Pi$ (whose orthogonal complement is $\Pi^{\perp}:=1-\Pi$), if $a^{\perp}:=\Pi^{\perp}a\Pi^{\perp}|_{\im(\Pi^{\perp})}$ is invertible, we define its Feshbach-Schur transformation by
    \beq
    F_{\Pi}(a):=(\Pi(a-aRa)\Pi)|_{\im \Pi}
    \eeq
    where $R:=\Pi^{\perp}(a^{\perp})^{-1}\Pi^{\perp}$. It is shown in Ref.~\cite{Gustafson2020} that, assuming $a^{\perp}|_{\im \Pi}$ is invertible, then the followings are equivalent,
    \begin{enumerate}
        \item $\lambda\in\sigma(a)$.
        \item $0\in \sigma(F_{\Pi}(a-\lambda))$.
    \end{enumerate}
    
   We now apply this transformation to $H'=H+\kappa O_{\alpha}$ and projector $P$. After some calculation, we find
   \beq
   F_{P}(H'-\lambda)=(PH'P-\kappa^{2}S_{\lambda}-\lambda)|_{\im P}
   \eeq
    where $S_{\lambda}:=PO_{\alpha}R_{\lambda}O_{\alpha}P|_{\im P}$, $R_{\lambda}:=P^{\perp}((P^{\perp}H'P^{\perp}-\lambda)|_{\im P^{\perp}})^{-1}P^{\perp}$. The above form suggests an eigenvalue problem for $PH'P-\kappa^{2}S_{\lambda}$ can tell us the spectrum of $F_P(H'-\lambda)$. However, it is not really the case, since $S_{\lambda}$ depends on $\lambda$ nonlinearly. Since we aim at studying the ground state and ground state energy of $H'$, we assume $0\leqslant\lambda\leqslant\delta E_{L}+\kappa$ below. For this $\lambda\in\sigma(H')$, we have
   \beq
   \lambda\in\sigma(PH'P|_{\im P}-\kappa^{2}S_{\lambda})
   \eeq
    By expanding $PH'P=PHP+\kappa PO_{\alpha}P$ and noting $||PHP||\leqslant\delta E_{L}$, we can treat $PHP$ as a perturbation to $PO_{\alpha}P$. On the other hand, on $\im P^{\perp}$ we have
    \beq
    \begin{split}
        P^{\perp}H'P^{\perp}|_{\im P^{\perp}}-\lambda &=P^{\perp}HP^{\perp}|_{\im P^{\perp}}+\kappa P^{\perp}O_{\alpha}P^{\perp}|_{\im P^{\perp}}-\lambda\\
        &\geqslant (\Delta_{L}+\delta E_{L})-\kappa-\delta E_{L}-\kappa\\
        &\geqslant \Delta-2\kappa>\frac{3}{4}\Delta
    \end{split}
    \eeq
    Thus it is invertible on $\im P^{\perp}$. In addition,
    \beq
    ||S_{\lambda}||\leqslant||(P^{\perp}H'P^{\perp}|_{\im P^{\perp}}-\lambda)^{-1}||<\frac{4}{3\Delta}
    \eeq
    The gap of $\kappa PO_{\alpha}P$ is not closed by $PHP$ and $-\kappa^{2}S_{\lambda}$ if
    \beq\label{eq:choice_kappa}
    \begin{split}
        \delta E_{L}&<\frac{1}{12}\kappa \epsilon_{\alpha}\\
        \frac{4\kappa^{2}}{3\Delta}&<\frac{1}{24}\kappa\epsilon_{\alpha}
    \end{split}
    \eeq
    To summarize, we require,
    \beq
    \frac{12\delta E_{L}}{\epsilon_{\alpha}}<\kappa<\min\{\frac{1}{32}\epsilon_{\alpha}\Delta,\frac{\Delta}{8}\}
    \eeq
    This can always be achieved for large $L$ since $\delta E_{L}\to0$. With this choice, the spectrum of $PH'P|_{\im P}$ is contained in $[0, \delta E_L+\frac{4\kappa^2}{3\Delta})\cup(\kappa\epsilon_\alpha-\frac{4\kappa^2}{3\Delta}, \kappa+\delta E_L+\frac{4\kappa^2}{3\Delta})$.

    Next, we show that the ground state of $H'$ is unique and give more precise estimations of the ground state energy and gap. Fix a parameter $r$ satisfying $\frac{1}{4}\kappa\epsilon_{\alpha}<r<\frac{1}{2}\kappa\epsilon_{\alpha}$.
    Consider the following projections,
    \beq
    \begin{split}
        \Pi&:=\oint_{|t|=r}\frac{1}{t-\kappa  PO_{\alpha}P}\frac{\dd t}{2\pi i}\\
        \Pi'&:=\oint_{|t|=r}\frac{1}{t-\kappa PO_{\alpha}P-PHP+\kappa^{2}S_{\lambda}}\frac{\dd t}{2\pi i}
    \end{split}
    \eeq
    By assumption, $PO_{\alpha}P$ has a unique ground state $|\psi_{\alpha}\ra$, and thus $\Pi=|\psi_{\alpha}\ra\la\psi_{\alpha}|$. We aim to bound $||\Pi-\Pi'||$ from above. To this end, note
    \beq
        \Pi-\Pi'&=\oint_{|t|=r}(t-\kappa PO_{\alpha}P)^{-1}(\kappa^{2}S_{\lambda}-PHP)(t-\kappa PO_{\alpha}P-PHP+\kappa^{2}S_{\lambda})^{-1}\frac{\dd t}{2\pi i}
    \eeq
    Thus we have
    \beq
        ||\Pi-\Pi'||&\leqslant r||(r-\kappa PO_{\alpha}P)^{-1}||\times||(\kappa^{2}S_{\lambda}-PHP)||\times||(r-\kappa PO_{\alpha}P-PHP+\kappa^{2}S_{\lambda})^{-1}||
    \eeq
    With our choice of $\frac{1}{4}\kappa\epsilon_{\alpha}<r<\frac{1}{2}\kappa\epsilon_{\alpha}$,
    \beq\label{eq:resolvent_bound_3}
    \begin{split}
        ||(r-\kappa PO_{\alpha}P)^{-1}||&=\frac{1}{r}\\
        ||(\kappa^{2}S_{\lambda}-PHP)||&\leqslant \frac{4\kappa^{2}}{3\Delta}+\delta E_{L}\\
        ||(r-\kappa PO_{\alpha}P-PHP+\kappa^{2}S_{\lambda})^{-1}||&\leqslant\max\{(r-\delta E_{L}-\frac{4\kappa^{2}}{3\Delta})^{-1},(\kappa\epsilon_{\alpha}-\frac{4\kappa^{2}}{3\Delta}-r)^{-1}\}\leqslant\frac{8}{\kappa\epsilon_{\alpha}}
    \end{split}
    \eeq
    where we have used Eq.~\eqref{eq:choice_kappa} in the last inequality.
    Consequently,
    \beq
    ||\Pi-\Pi'||\leqslant\frac{8}{\kappa\epsilon_{\alpha}}(\frac{4\kappa^{2}}{3\Delta}+\delta E_{L})<1
    \eeq
    This implies $\rk(\Pi')=\rk(\Pi)=1$.
    Therefore, $H'$ has a unique gapped ground state $|\psi_{\alpha}(\kappa)\ra$, and the ground state energy lies in $[0, \delta E_L+\frac{4\kappa^2}{3\Delta})$ with a gap lower bounded by
    \beq
    \Delta'>\kappa\epsilon_\alpha-\delta E_L-\frac{8\kappa^2}{3\Delta}
    \eeq
    
    Moreover, the trace norm distance between $\psi_{\alpha}$ and $\psi_{\alpha}(\kappa)$ is given by
    \beq
        \begin{split}
            ||\psi_{\alpha}-\psi_{\alpha}(\kappa)||&=||\Pi-\Pi'||_{1}\\
            &=2\sqrt{1-\tr(\Pi'\Pi)}
        \\&=2||\Pi-\Pi'||\\&\leqslant\frac{16}{\kappa\epsilon_{\alpha}}(\frac{4\kappa^{2}}{3\Delta}+\delta E_{L})
        \end{split}
    \eeq
    where $||A||_{1}:=\sup_{||B||=1}|\tr(AB)|$ is the trace norm.
    
    By Lemma \ref{lemma:weak*_lsc},
    \beq
     ||\psi_{\alpha}^{\infty}-\psi_{\alpha}^{\infty}(\kappa)||\leqslant\lowlim_{L\to\infty}||\psi_{\alpha}-\psi_{\alpha}(\kappa)||\leqslant\frac{64\kappa}{3\Delta\epsilon_{\alpha}}
    \eeq
    where $\psi^{\infty}_{\alpha}$ and $\psi^{\infty}_{\alpha}(\kappa)$ are weak-$*$ limits of $\psi_{\alpha}$ and $\psi_{\alpha}(\kappa)$ respectively.

    We note that $\psi_{\alpha}(\kappa)$ is a unique gapped ground state of an admissible Hamiltonian $H+\kappa O_{\alpha}$, so  by Theorem \ref{thm:limit_ad_gs} its weak-$*$ limit $\psi^{\infty}_{\alpha}(\kappa)$ is a locally unique gapped ground state, and hence pure by Theorem \ref{Thm:pure_gs}. By Corollary 4.8 of Ref.~\cite{Kadison1982limits}, $\psi^{\infty}_{\alpha}=\lim_{\kappa\to0^{+}}\psi^{\infty}_{\alpha}(\kappa)$ is also pure and equivalent to $\psi_{\alpha}^{\infty}(\kappa)$ for small $\kappa$.
\end{proof}

\subsection{Symmetry-enforced long-range entanglement in finite-size systems} \label{subsec: thermodynamic limit of SRE}

In this appendix, we establish Theorem \ref{thm:infinite_SRE_main} and Theorem \ref{thm:finite_SRE_main} on symmetry-enforced long-range entanglement. We start with the infinite-size version and turn to the finite-size version next.

In the thermodynamic limit, short-range entangled states are defined in Ref.~\cite{Kapustin2020invertible} via locally-generated automorphisms (LGA), which will be reviewed below. In order to discuss LGAs, we need to define the notion of almost local operators introduced in Ref.~\cite{Kapustin2020invertible}, which is obtained by completing $\A^{\ell}$ with the following family of semi-norms\footnote{A semi-norm satisfies all axioms of a norm except non-degeneracy. Non-degeneracy of a norm $||\cdot||$ means $||x||=0\Rightarrow x=0$.},
\beq
||A||_{n,j}:=||A||+\sup_{r\in\R_{>0}}(1+r)^{n}\inf_{a\in\A_{B(j,r)}}||a-A||,\quad A\in\A^l
\eeq
where $n\in\mathbb{N}$ and $B(j,r)\subseteq\Lambda$ means the ball centered at site $j\in\Lambda$ with radius $r$. We denote this completion as $\A^{al}$ and call its elements almost-local operators.
It is shown in Ref.~\cite{Kapustin2022Noether} that the completion $\A^{al}$ does not depend on the choice of $j$ above.
Roughly speaking, elements in $\A^{al}$ are quasi-local operators with $O(r^{-\infty})$ tails when approaching infinity. This feature makes it particularly useful to describe gapped quantum phases. 

Let us denote by $\cF_{\infty}^{+}$ all monotonically decreasing non-negative functions on $\R_{\geqslant 0}$, such that
\beq
\lim_{r\to\infty}f(r)=O(r^{-\infty}),\quad\forall f\in\cF_{\infty}^{+}
\eeq
Given $A\in\A^{al}$ and $f\in\cF_{\infty}^{+}$,  we say $A$ is $f$-localized at site $j\in\Lambda$ if 
\beq
\inf_{a\in B(j,r)}||a-A||<f(r)
\eeq

An almost local Hamiltonian \cite{Kapustin2020invertible,Kapustin2022Noether} is defined to be the following formal combination,
\beq
H=\sum_{j\in\Lambda}h_{j}
\eeq
where each $h_{j}\in\A^{al}$ with $||h_{j}||<C$ for some constant $C$. In general, $H\not\in\A^{al}$ but it makes sense as a derivation on $\A^{al}$, \ie $\delta_{H}(A)=\sum_{j}[h_{j},A]\in\A^{al}$ for any $A\in\A^{al}$. We say this Hamiltonian is uniformly almost local (UAL) if there is a function $f\in\cF_{\infty}^{+}$ such that each $h_{j}$ is $f$-localized at $j\in\Lambda$. It is shown in appendix C of Ref.~\cite{Kapustin2022Noether} that every UAL Hamiltonian has the following equivalent decomposition,
\beq
H=\sum_{|Y|<\infty}H^{Y}
\eeq
where $Y\subseteq\Lambda$ and $H^{Y}\in \A^{\ell}_{Y}$ satisfies the following condition
\beq\label{eq:canonical_decomposition}
\la a^{*}H^{Y}\ra_{\infty}=0,\quad\forall a\in\A_{Z}, Z \subsetneqq Y
\eeq
where $\la\cdot\ra_{\infty}$  is the tracial state (Eq.~\eqref{eq:tracial_state}). Eq.~\eqref{eq:canonical_decomposition} essentially means $H^{Y}$ is supported on $Y$ but not on any proper subset of $Y$. The UAL condition then implies that there exists $b\in\cF_{\infty}^{+}$ such that $||H^{Y}||<b(\diam(Y))$. The relation between $b$ and $f$ is given by (see Eq.~(122) of Ref.~\cite{Kapustin2022Noether})
\beq
b(r)=2^{2d+1}f(\max\{0,\frac{r}{2\sqrt{d}}-2\})
\eeq
where $d$ is the spatial dimension ($d=1$ for us). In this case, we say $H$ is $b$-localized. Similarly, if a Hamiltonian is time-dependent, \ie $H=H(t)=\sum_{|Y|<\infty}H^{Y}(t)$, then we say it is UAL if $||H^{Y}(t)||<b(\diam(Y))$ for some $b\in\cF_{\infty}^{+}$ for any $Y$ and $t$.

It is also shown in Ref.~\cite{Kapustin2020Hall} (see Lemma A.2 therein) that every UAL Hamiltonian can be exponentiated to be a strongly continuous family $\alpha^{t}$ of $*$-automorphism of $\A^{al}$, \ie it satisfies
\beq
\frac{\dd }{\dd t}\alpha^{t}(A)=i\alpha^{t}([H(t),A])
\eeq
where $H(t)$ is a UAL Hamiltonian and $\alpha_{t=0}=\id$.

A $*$-automorphism $\alpha\in\Aut(\A^{al})$ is called a locally generated automorphism (LGA) if $\alpha=\alpha^{t=1}$ for some family of $*$-automorphisms generated by UAL Hamiltonians. In particular, from the Lieb-Robinson bound (see Ref.~\cite{Nachtergaele_2006}),
an LGA is an LPA with $O(r^{-\infty})$ tail and the tail function only depends on the tail of its generating UAL Hamiltonian.

With the notion of LGA, short-range entangled (SRE) states in infinite-size systems are then defined in Ref.~\cite{Kapustin2020invertible} as follows.
\begin{definition}\label{def:SRE}
    A state $\psi$ is called short-range entangled (SRE) if 
    \beq
    \psi=\Omega\circ\alpha
    \eeq
    where $\Omega$ is a pure product state while $\alpha$ is an LGA.
\end{definition}

Because $\Omega$ is pure, $\psi$ is also pure{\footnote{To see it, suppose $\psi$ is mixed. By Definition \ref{def:alternative_pure}, there is a positive linear functional $\rho$ such that $\psi-\rho=\Omega\circ\alpha-\rho$ is still positive and $\rho\neq \lambda\psi=\lambda\Omega\circ\alpha$ with $\lambda\in[0, 1]$. This contradicts the assumption that $\Omega$ is pure, since there would be a positive linear functional $\rho\circ\alpha^{-1}$ such that $\Omega-\rho\circ\alpha^{-1}$ is positive, and $\rho\circ\alpha^{-1}\neq\lambda\Omega$ with $\lambda\in[0, 1]$.}} (and hence a type-I factor). It is proved in Ref.~\cite{Kapustin2020invertible} that all SRE states satisfy the area law of entanglement entropy (see Lemma 4.2 in Ref. \cite{Kapustin2020invertible}) and therefore splits by Lemma \ref{lemma:split again}. Thus, by Lemma \ref{lemma:Kapustin_Sopenko}, we have
\begin{corollary}\label{coro:SRE}
    An SRE state $\psi$ cannot be symmetric under an anomalous symmetry action.
\end{corollary}
We believe this corollary is known to authors of Ref.~\cite{kapustin2024anomalous}, although it is not explicitly mentioned in their paper.

Corollary \ref{coro:SRE} establishes the symmetry-enforced long-range entanglement in infinite-size systems. However, it is also necessary to discuss the finite-size analog of Corollary \ref{coro:SRE}, since all physical systems have a finite size in practice. It is important to recognize that in the finite-size case, short-range entanglement is not a property of a single state, but a property of a sequence of states.

To define finite-size SRE states, we first extend the notions of UAL Hamiltonians and LGA to sequences of finite-size systems. For a sequence of finite-size systems defined in the beginning of Appendix~\ref{sec: thermodynamic limit}, we first define a sequence of UAL Hamiltonians as follows.
\begin{definition}
    Let $H_{L}\in\A_{\Gamma_{L}}$ has the following decomposition,
    \beq
    H_{L}=\sum_{Y\subseteq \Gamma_{L}}H^{Y}_{L}
    \eeq
    We say $\{H_{L}\}_{L=1,2,\dots}$ is a sequence of UAL Hamiltonians if there exists $b\in\cF_{\infty}^{+}$, such that
    \beq\label{eq:UAL}
    ||H_{L}^{Y}||<b(\diam(Y)),\quad\forall L, Y\subseteq \Gamma_{L}
    \eeq
    If $H_{L}$ is time-dependent, \ie $H_{L}=H_{L}(t)$, it is called UAL if it satisfies Eq.~\eqref{eq:UAL} for all $t$.
\end{definition}
Similarly, the finite-size version of LGA can be obtained by exponentiating $H_{L}$ above.
\begin{definition}
    A sequence $\alpha_{L}\in\Aut(\A_{\Gamma_{L}})$ is called a sequence of (finite-size) LGA, if $\alpha_{L}$ is the time evolution generated by some UAL $H_{L}(t)$ within $t\in[0,1]$.
\end{definition}
We therefore arrive at the following definition of a sequence of SRE states in finite-size systems.
\begin{definition}\label{def:sequence_SRE}
    Let $\{\psi_{L}\}_{L=1,2,\dots}$ be a sequence of finite-size states. It is an SRE sequence if there exists a sequence of finite-size LGA $\beta_{L}$ such that
    \beq
    \psi_{L}=\Omega\circ\beta_{L}
    \eeq
    and $\beta_{L}$ is called a disentangler of $\psi_{L}$.
\end{definition}

\begin{remark}
    It is important to note that we do \textbf{NOT} require the convergence of $\psi_{L}$ or $\beta_{L}$.
\end{remark}

Again, we set $\alpha_{L}:G\to \Aut(\A^{\ell}_{\Gamma_{n}})$ to be the symmetry action in finite-size systems, which strongly converges to some $\alpha:G\to\G^{lp}$. Then, our main theorem is the following.
\begin{theorem}\label{thm:finite_SRE}
    If $\lim_{L\to\infty}\alpha_{L}=\alpha$ is an anomalous symmetry action and $\{\psi_{L}\}$ is an SRE sequence, then the symmetry condition $\psi_{L}\circ\alpha_{L}(g)=\psi_{L},\,\forall g\in G$ can be satisfied for only finitely many $L$.
\end{theorem}
\begin{proof}
    Suppose the symmetry condition is true for infinitely many $L$, then Theorem \ref{thm:BAB} implies there is a weak-$*$ convergent subsequence. We focus on such a subsequence and rename it as $\{\psi_{L}\}$, and write $\lim_{L\to\infty}\psi_{L}=\psi$. Below we show that $\psi$ is clustering (hence it is a factor state by Theorem \ref{thm:clustering}) and satisfies area law of entanglement entropy. We then conclude from Proposition~\ref{prop:factor_entropy} that $\psi$ is of type-I. Moreover, $\psi$ splits due to Lemma \ref{lemma:split again}.

    On the other hand, $\psi_{L}\circ\alpha_{L}(g)=\psi_{L}$ for any $L$, so by lemma~\ref{lemma:composition}, we have $\psi\circ\alpha(g)=\psi$. It follows from Lemma \ref{lemma:typeI_KS} that $\alpha:G\to\G^{lp}$ has vanishing anomaly index, a contradiction.

    Now we first show that $\psi$ satisfies the entanglement area law. For a fixed, finite and connected region $\Gamma\subseteq\Lambda$, let us assume $L$ is large enough such that $\Gamma\subseteq\Gamma_{L}$. It is shown in Lemma A.4 of Ref.~\cite{Kapustin2020Hall} that $\alpha_{L}$ has the following approximate factorization:
    \beq
    \alpha_{L}=\alpha_{\Gamma}\Ad_{U}\alpha_{\Gamma^{c}}
    \eeq
    where $\Gamma^{c}:=\Gamma_{L}\setminus\Gamma$, $\alpha_{\Gamma}$ acts trivially on $\Gamma^{c}$, $\alpha_{\Gamma^{c}}$ acts trivially on $\Gamma$ and $U$ is an almost local unitary which is $h$-localized on $\partial\Gamma$ for some $h\in\cF_{\infty}^{+}$. Importantly, it is shown in Appendix A of Ref.~\cite{kapustin2024anomalous} (see the proof of Proposition A.1 and Lemma A.3 therein) that the tail function $h(r)$ depends only on the tail of the generating UAL Hamiltonians and is hence independent of $L$.
    
    Note $\alpha_{\Gamma}$ and $\alpha_{\Gamma^{c}}$ do not change the entropy on $\Gamma$, so we have
    \beq
    S(\psi_{L}|_{\Gamma})=S(\Omega\circ\Ad_{U}|_{\Gamma})
    \eeq
    We can bound the right hand side as follows. Note that for a site $j\in\Lambda$ and $A\in\A_{j}$, we have
    \beq
    |\Omega\circ\Ad_{U}(A)-\Omega(A)|<h(d(j,\partial\Gamma))||A||
    \eeq
    where $d(j,\partial\Gamma)$ means the distance from $j$ to $\partial\Gamma$. By the Fannes-Audenaert inequality (Theorem 5.39 of Ref.~\cite{Watrous_2018}),
    \beq
    S(\Omega\circ\Ad_{U}|_{j})=|S(\Omega\circ\Ad_{U}|_{j})-S(\Omega|_{j})|<a(d(j,\partial\Gamma))
    \eeq
    where $a\in\cF_{\infty}^{+}$ depends only on the tail function $h$ and the dimension of local Hilbert space.
    By the subaddtivity of entanglement entropy, we have
    \beq
    \begin{split}
        S(\Omega\circ\Ad_{U}|_{\Gamma})&\leqslant \sum_{j\in\Gamma}S(\Omega\circ\Ad_{U}|_{j})\\
        &\leqslant \sum_{j\in\Gamma}a(d(j,\partial\Gamma))\\
        &\leqslant \sum_{j\in\Gamma}\sum_{p\in\partial\Gamma}a(d(j,p))\\
        &\leqslant\sum_{p\in\partial\Gamma}\sum_{r=0}^{\infty}a(r)
        =C|\partial\Gamma|
    \end{split}
    \eeq
    where we have used $C:=\sum_{r=0}^{\infty}a(r)<\infty$ since $a\in\cF_{\infty}^{+}$ and $C$ does not depend on $L$. This implies that $\{\psi_{L}\}$ satisfies the area law in the sense of Definition~\ref{def:finite_area_law}. By Lemma \ref{lemma:finite_area_law}, $\psi$ satisfies the area law as well.

    Next, we show that $\psi$ is clustering. Let $A,B$ be local operators supported on finite subsets $X,Y$ respectively, and we assume $L$ is large enough such that $X,Y\subseteq\Gamma_{L}$.
    We set $l:=d(X,Y)$. According to the Lieb-Robinson bound \cite{Kapustin2022Noether,Ranard_2022} and the assumption that $\alpha_{L}$ is an LGA, there exists $A'\in\A_{B(X,l/3)}$, $B'\in\A_{B(Y,l/3)}$ and $f\in\cF_{\infty}^{+}$ such that
    \beq
    \begin{split}
        ||\alpha_{L}(A)-A'||<f(\frac{l}{3})||A||\\
        ||\alpha_{L}(B)-B'||<f(\frac{l}{3})||B||
    \end{split}
    \eeq
    Note that $B(X,\frac{l}{3})\cap B(Y,\frac{l}{3})=\emptyset$, so $\Omega(A'B')=\Omega(A')\Omega(B')$. Therefore,
    \beq
    \begin{split}
        |\psi_{L}(AB)-\psi_{L}(A)\psi_{L}(B)|&=|\Omega(\alpha_{L}(A)\alpha_{L}(B))-\Omega(\alpha_{L}(A))\Omega(\alpha_{L}(B))|\\
        &= |\Omega(\alpha_{L}(A)(\alpha_{L}(B)-B'))+\Omega((\alpha_{L}(A)-A')B')\\
        &+\Omega(A'B')-\Omega(A')\Omega(\alpha_{L}(B))+\Omega(A')\Omega(\alpha_{L}(B))-\Omega(\alpha_{L}(A))\Omega(\alpha_{L}(B))|\\
        &\leqslant||A||\cdot ||\alpha_{L}(B)-B'||+||\alpha_{L}(A)-A'||\cdot ||B'||\\&+||A'||\cdot||B'-\alpha_{L}(B)||+||B||\cdot||A'-\alpha_{L}(A)||\\
        &\leqslant2f(\frac{l}{3})(2+f(\frac{l}{3}))||A||\cdot ||B||
    \end{split}
    \eeq
    Note the right hand side does not depend on $L$, thus we can take $L\to\infty$ and use the definition of weak-$*$ limit. We end up with
    \beq
    |\psi(AB)-\psi(A)\psi(B)|\leqslant 2f(\frac{l}{3})(2+f(\frac{l}{3}))||A||\cdot||B||,\quad l:=d(\supp(A),\supp(B))
    \eeq
    This shows $\psi$ is clustering and completes the proof.
\end{proof}
\begin{remark}
    We remark that it is currently not known if our definition on SRE sequences ensures its weak-$*$ limit (if exists) is an SRE state in the sense of Definition~\ref{def:SRE}, since the convergence of states does not imply the convergence of disentanglers (\ie $\beta_L$ in Definition \ref{def:sequence_SRE}) even if we restrict ourselves to a subsequence.
\end{remark}

\subsection{Periodic boundary conditions}\label{sec:periodic}

Although the theory of thermodynamic limit is easier to establish for open boundary conditions, finite systems with periodic boundary conditions are also relevant and sometimes more interesting, especially when one considers anomalous symmetries. For example, translations cannot even be defined under open boundary conditions.

An intuitive approach to generalize our discussion to periodic chains is to cut the periodic chain somewhere and then one can embed this chain to the infinite lattice $\Lambda$ isometrically. However, special care is needed when doing so, since some originally local operators become non-local when the periodic chain is cut. Below we discuss this procedure in detail. In particular, we will extend Theorem \ref{thm:finite_SRE} to systems under periodic boundary condition.

Let $\hat{\Gamma}_{L}$ be a periodic chain with $L$ sites. We label the sites by integers in $\{0,1,\dots,L-1\}$ with the identification $n\sim n+L$ for any integer $n$, and we denote their equivalence class by $\bar{n}$. The metric on $\hat{\Gamma}_{L}$ is given by
\beq
\hat{d}(\bar{m},\bar{n}):=\min_{a\in\bar{m},b\in\bar{n}}\,|a-b|
\eeq
Similarly, let $Y\subseteq\hat{\Gamma}_{L}$, and we define its diameter as
\beq
\widehat{\diam}(Y):=\max_{p,q\in Y}\hat{d}(p,q)
\eeq
With this notation, $\widehat{\diam}(\hat{\Gamma}_{L})=\lfloor\frac{L}{2}\rfloor$. One can define a UAL Hamiltonian as 
\beq
\hat{H}_{L}=\sum_{Y\subseteq\hat{\Gamma}_{L}}\hat{H}^{Y}_{L}
\eeq
with $||\hat{H}^{Y}||<b(\widehat{\diam}(Y))$ for some $b\in\cF_{\infty}^{+}$. Similarly, one can define LGAs and SRE states on periodic chains, by following Appendix \ref{subsec: thermodynamic limit of SRE}. By the Lieb-Robinson bound, the LGA $\beta_{L}$ generated by $\hat{H}_{L}$ is an LPA with tail $f\in\cF_{\infty}^{+}$ which only depends on $b$. Following the proof of Theorem \ref{thm:finite_SRE}, one can show that if $\psi_{L}$ is an SRE whose disentangler has tail $f\in\cF_{\infty}^{+}$, then
\begin{enumerate}
    \item For a finite, connected subset $\hat{\Gamma}\subseteq\hat{\Gamma}_{L}$, $S(\psi_{L}|_{\Gamma})<\const$ where the constant depends on the tail $f$ and the dimension of local Hilbert space only.
    \item For local operators $A,B$ with disjoint supports, we have
    \beq\label{eq:finite_clustering}
    |\psi_{L}(AB)-\psi_{L}(A)\psi_{L}(B)|<g(\hat{d}(\supp(A),\supp(B))||A||\cdot||B||
    \eeq
    where $g\in\cF^{+}_{\infty}$ depends on $f$ only.
\end{enumerate}

However, on a closed chain, it is not clear how to embed $\hat{\Gamma}_{L}\hookrightarrow\hat{\Gamma}_{L'}$ when $L'>L$ or how to view $\A_{\hat{\Gamma}_{L}}$ as a subalgebra of $\A^{ql}$. This makes the thermodynamic limit less obvious.
To address this, let us choose a bijective map $f_{L}:\hat{\Gamma}_{L}\to\Gamma_{L}\subseteq\Lambda$, where $\{\Gamma_{L}\}$ is an increasing and exhausting sequence of finite connected subsets in $\Lambda$. Although $f_{L}$ cannot be an isometry on $\hat{\Gamma}_{L}$, we do require the following,
\begin{enumerate}
    \item $\partial\Gamma_{L}=\{f_{L}(\bar{0}),f_{L}(\overline{L-1})\}$.
    \item $f_{L}$ is an isometry on $\hat{\Gamma}\subseteq\hat{\Gamma}_{L}$ if $\hat{\Gamma}$ is concatenate, “sits in the bulk of $\hat{\Gamma}_{L}$”, \ie $\bar{0},\overline{L-1}\not\in \hat{\Gamma}$ and $\hat{\Gamma}$ is “small enough”, \ie $\widehat{\diam}(\hat{\Gamma})<\lfloor\frac{L}{2}\rfloor$. We call such a subet $\hat{\Gamma}$ a good subset.
\end{enumerate}
We write $\hat{\Gamma}_{L}\hookrightarrow\hat{\Gamma}_{L'}$ if $\Gamma_{L}\subseteq\Gamma_{L'}$ and identify $\A_{\bar{n}}$ with $\A_{f(\bar{n})}$. In this way, one is able to identify $\A_{\hat{\Gamma}_{L}}$ as a $*$-subalgebra of $\A^{ql}$. Fixing $\Omega$ to be a pure product state of $\A^{ql}$ and letting $\psi_{L}$ be a quantum state of $\A_{\hat{\Gamma}_{L}}$, we extend it to be a state of $\A^{ql}$ as follows,
\beq
\tilde{\psi}_{L}:=\psi_{L}\otimes \Omega|_{\Gamma_{L}^{c}}
\eeq
where $\Gamma_{L}^{c}$ is the complement of $\Gamma_{L}$ in $\Lambda$. We call the sequence $\{\tilde{\psi}_{L}\}$ an SRE sequence iff $\psi_{L}$ has an LGA disentangler with uniform tail function $f\in\cF_{\infty}^{+}$.
Below we will drop the $\Omega$-dependence and not distinguish $\tilde{\psi}_{L}$ and $\psi_{L}$.

For a fixed, finite and connected $\Gamma\subseteq\Lambda$, there exists $L_{0}$ such that $f_{L}^{-1}(\Gamma)$ is a good subset in $\hat{\Gamma}_{L}$ for any $L>L_{0}$. In this case, if an SRE sequence $\psi_{L}$ is weak-$*$ convergent (which is always possible by passing to a subsequence) with limit $\psi$, then
\begin{enumerate}
    \item $S(\psi_{L}|_{\Gamma})<\const$, where $\const$ depends on the tail $f\in\cF_{\infty}^{+}$ of disentanglers and local Hilbert space dimension only. Therefore, by Lemma \ref{lemma:finite_area_law}, $\psi$ satisfies the area law as well.
    \item For any local operator $A,B\in\A^{l}$ on $\Lambda$ with disjoint supports and $L$ large enough such that $f_{L}^{-1}(\supp(A)\cup\supp(B))$ is contained in a good subset, we obtain from Eq.~\eqref{eq:finite_clustering}
    \beq
    |\psi_{L}(AB)-\psi_{L}(A)\psi_{L}(B)|<g(d(\supp(A),\supp(B)))||A||\cdot||B||
    \eeq
    where $g\in\cF_{\infty}^{+}$ depends on $f$ only. By taking $L\to\infty$ and the definition of weak-$*$ convergence, this implies
    \beq
    |\psi(AB)-\psi(A)\psi(B)|<g(d(\supp(A),\supp(B)))||A||\cdot||B||
    \eeq
    In particular, $\psi$ is clustering.
\end{enumerate}
Thus, we establish the analogue of Theorem \ref{thm:finite_SRE} under periodic boundary conditions. The thermodynamic limit of unique gapped ground state can also be achieved similarly for periodic boundary conditions.

\newpage
\bibliography{lib.bib}

\printunsrtglossary[type=symbols, style=long, title=List of symbols]

\end{document}